\newif\ifhyper\IfFileExists{hyperref.sty}{\hypertrue}{\hyperfalse}
\ifhyper\usepackage{hyperref}\fi
\def\confversion{0}
\newtheorem{assumption}{Assumption}
\newtheorem{theorem}{Theorem}
\newtheorem{lemma}[theorem]{Lemma}
\newtheorem{proposition}[theorem]{Proposition}
\newtheorem{corollary}[theorem]{Corollary}
\newtheorem{claim}[theorem]{Claim}
\newtheorem{fact}[theorem]{Fact}
\newtheorem{obs}[theorem]{Observation}
\newtheorem{definition}[theorem]{Definition}
\newtheorem{remark}[theorem]{Remark}
\newcommand{\eps}{\epsilon}
\newcommand{\etal}{{\em et al.\ }}
\newcommand{\dtv}{d_{\rm TV}}
\newcommand{\sample}{\mathrm{sample}}
\newcommand{\sign}{\mathrm{sign}}
\newcommand{\ignore}[1]{}
\newcommand{\cref}[1]{Corollary~\ref{cor:#1}}
\newcommand{\calU}{{\cal U}}
\newcommand{\bits}{\{-1,1\}}
\newcommand{\bn}{\bits^n}
\newcommand{\R}{{\mathbb{R}}}
\newcommand{\Z}{{\mathbb Z}}
\newcommand{\E}{\operatorname{{\bf E}}}
\newcommand{\littlesum}{\mathop{\textstyle \sum}}
\newcommand{\poly}{\mathrm{poly}}
\newcommand{\eqdef}{\stackrel{\textrm{def}}{=}}
\newcommand{\mB}{\mathop{Ber}}
\renewcommand{\Pr}{\operatorname{{\bf Pr}}}
\newcommand{\U}{{\cal U}}
\newcommand{\den}{\mathrm{den}}
\newcommand{\sq}{\mathrm{SQ}}
\newcommand{\A}{\mathcal{A}}
\newcommand{\C}{\mathcal{C}}
\newcommand{\D}{\mathcal{D}}
\newcommand{\stat}{\mathrm{STAT}}
\newcommand{\sqs}{\mathrm{SQ-SIM}}
\newcommand{\W}{\mathcal{W}}
\newcommand{\eval}{\mathrm{EVAL}}
\newcommand{\co}{\mathrm{count}}
\newcommand{\gen}{\mathrm{gen}}
\newcommand{\inv}{\mathrm{inv}}
\newcommand{\reconstruct}{\mathrm{reconstruct}}
\newcommand{\ltf}{\mathbf{LTF}}
\newcommand{\mt}{\mathrm{MT}}
\newcommand{\dnfns}{{\mathbf{DNF}_{n,s}}}
\newcommand{\dnf}{\mathbf{DNF}}
\newcommand{\disj}{\mathbf{DISJ}}
\newcommand{\new}[1]{{\color{black} {#1}}}
\newcommand{\newer}[1]{{\color{black} {#1}}}
\newcommand{\nnew}[1]{{\color{black} {#1}}}
\renewcommand{\subsection}{\@startsection{subsection}{2}{0pt}{-6pt}{-5pt}{\normalsize\bf}}
\renewcommand{\subsubsection}{\@startsection{subsubsection}{3}{0pt}{-12pt}{-5pt}{\normalsize\bf}}
\newcommand{\mnote}[1]{ \marginpar{\tiny\bf
            \begin{minipage}[t]{0.5in}
              \raggedright #1
           \end{minipage}}}
\newcommand{\rnote}[1]{\footnote{{\bf [[Rocco: {#1}\bf ]] }}}
\newcommand{\inote}[1]{\footnote{{\bf [[Ilias: {#1}\bf ]] }}}
\newcommand{\anote}[1]{\footnote{{\bf [[Anindya: {#1}\bf ]] }}}
\date{}
\begin{document}

\title{
Inverse problems in approximate uniform generation
}

\author{Anindya De\thanks{{\tt anindya@cs.berkeley.edu}.  Research supported by NSF award CCF-0915929 and Umesh Vazirani's Templeton Foundation Grant 21674.}\\
University of California, Berkeley\\
\and
Ilias Diakonikolas\thanks{{\tt ilias.d@ed.ac.uk}.  Most of this work was done while the author was at UC Berkeley supported by a Simons Postdoctoral Fellowship.}\\
University of Edinburgh\\
\and
Rocco A.\ Servedio\thanks{{\tt rocco@cs.columbia.edu}. Supported by NSF grant CCF-1115703, CCF-0915929.}\\
Columbia University\\
}

\maketitle

\setcounter{page}{0}

\thispagestyle{empty}

~
\vskip -.5in
~

\begin{abstract}

We initiate the study of \emph{inverse} problems in approximate
uniform generation, focusing on uniform generation of satisfying
assignments of various types of Boolean functions.  In such an inverse
problem, the algorithm is 
given uniform random satisfying assignments of an unknown function
$f$ belonging to a class $\C$ of Boolean functions (such as 
linear threshold functions or polynomial-size DNF formulas), and
the goal is to output a probability distribution
$D$ which is $\epsilon$-close, in total variation distance, to the uniform
distribution over $f^{-1}(1)$.  Problems of this sort comprise a
natural type of unsupervised learning problem in which the
unknown distribution to be learned is the 
uniform distribution over satisfying assignments of an unknown
function $f \in \C.$

\medskip

{\bf Positive results:}  
We prove a general positive result establishing sufficient
conditions for efficient inverse approximate uniform generation
for a class $\C$.  We define a new type of algorithm 
called a \emph{densifier} for $\C$, and show (roughly speaking)
how to combine (i) a densifier, (ii) an approximate counting / uniform
generation algorithm, and
(iii) a Statistical Query learning algorithm, to obtain an inverse
approximate uniform generation algorithm.
We apply this general result to obtain a poly$(n,1/\eps)$-time inverse
approximate uniform generation algorithm for the class of $n$-variable
linear threshold functions (halfspaces);
and a quasipoly$(n,1/\eps)$-time inverse approximate
uniform generation algorithm for the class of $\poly(n)$-size DNF 
formulas.

\medskip

{\bf Negative results:}  We prove a general negative result establishing
that the existence of certain types of signature schemes in cryptography 
implies the hardness of certain inverse approximate uniform generation
problems.
We instantiate this negative result with known
signature schemes from the cryptographic literature to prove
(under a plausible cryptographic hardness assumption) that there
are no {subexponential}-time inverse approximate uniform generation algorithms
for 3-CNF formulas; for intersections of two halfspaces; for
degree-2 polynomial threshold functions; and for monotone 2-CNF formulas.

\medskip

Finally, we show that there is no  general relationship between
the complexity of the ``forward'' approximate uniform generation problem and 
the complexity of the inverse problem for a class
$\C$ -- it is possible for either
one to be easy while the other is hard.
In one direction, 
we show that the existence of certain types of 
Message Authentication Codes (MACs) in cryptography implies the hardness
of certain corresponding inverse approximate uniform generation problems, 
and we combine this general result with recent MAC constructions
from the cryptographic literature to show (under a plausible cryptographic 
hardness assumption) that
there is a class $\C$ for which the ``forward'' approximate uniform generation
problem is easy but the inverse approximate uniform generation problem
is computationally hard.  In the other direction, we also show 
(assuming the GRAPH ISOMORPHISM problem is computationally hard) that
there is a problem for which inverse approximate uniform generation
is easy but ``forward'' approximate uniform generation is
computationally hard.

\end{abstract}

\newpage

\section{Introduction}\label{sec:intro}

The generation of (approximately) uniform random combinatorial objects
has been an important research topic in theoretical computer science
for several decades.
In complexity theory, well-known results have related
approximate uniform generation to other fundamental topics
such as approximate counting and the power of nondeterminism
\cite{JS89a,JVV86, JS89b, Sipser83-short, Stockmeyer83-short}.
On the algorithms side, celebrated algorithms have been given for a wide range
of approximate uniform generation problems such as perfect
matchings \cite{JSV04}, graph colorings (see e.g. 
\cite{Jerrum95,Vigoda99,HayesVigoda03}),
satisfying assignments of DNF formulas \cite{KarpLuby83,JVV86,KLM89}, of 
linear threshold functions (i.e., knapsack
instances) \cite{MorrisSinclair:04,Dyer03-short}
and more.

\ignore{The focus of this paper is on \emph{inverse} problems in approximate
uniform generation.}  Before describing the inverse problems that we
consider, let us briefly recall the usual framework of approximate uniform
generation.
An approximate uniform generation problem is defined by a class $\C$ of
combinatorial objects
and a polynomial-time relation $R(x,y)$ over
$\C \times \{0,1\}^*$. An input instance of the
problem is an object $x \in \C,$ and the problem, roughly speaking, is to output
an approximately uniformly random element $y$ from the set $R_x := \{y:
R(x,y)$ holds$\}.$ Thus an algorithm
$A$ (which must be randomized)
for the problem must have the property that for all $x \in \C$,
the output distribution of $A(x)$ 
puts approximately equal weight on every element of $R_x$.
For example, taking the class of combinatorial
objects to be $\{$all $n \times n$ bipartite graphs$\}$
and the polynomial-time relation $R$
over $(G,M)$ pairs to be ``$M$ is a perfect matching in $G$,'' the resulting
approximate uniform generation problem is to generate an (approximately)
uniform perfect matching in a given bipartite graph; a
$\poly(n,{\log(1/\eps)})$-time algorithm was given in \cite{JSV04}.
As another example, taking the combinatorial
object to be a linear threshold function (LTF) 
$f(x)=\sign(w \cdot x - \theta)$
mapping $ \{-1,1\}^n \to \{-1,1\}$ (represented as a vector
$(w_1,\dots,w_n,\theta)$)
and the polynomial-time
relation $R$ over $(f,x)$ to be ``$x$ is a satisfying assignment for $f$,''
we arrive at the problem of generating approximately uniform
satisfying assignments for an LTF (equivalently,
feasible solutions to zero-one knapsack).
A polynomial-time algorithm was given by \cite{MorrisSinclair:04}
and a faster algorithm was subsequently proposed by \cite{Dyer03-short}.

The focus of this paper is on \emph{inverse} problems in approximate uniform
generation.  In such problems,
instead of having to \emph{output} (near-)uniform elements of $R_x$,
the \emph{input} is a sample of elements drawn uniformly
from $R_x$, and the problem (roughly speaking) is
to ``reverse engineer'' the sample and output a distribution which
is close to the uniform distribution over $R_x$.
More precisely, following the above framework,
a problem of this sort is again defined by a class $\C$ of combinatorial
objects and a polynomial-time relation $R$.
However, now an input instance of the problem is a sample $\{y_1,\dots,y_m\}$
of strings drawn uniformly at random from the set $R_x := \{y: R(x,y)$
holds$\}$,  where now $x \in \C$ is \emph{unknown}.
The goal is to output an \emph{$\eps$-sampler} for $R_x$, i.e.,
a randomized algorithm (which takes no input) whose output distribution
is $\eps$-close in total variation distance to the uniform
distribution over $R_x$.  
\ifnum\confversion=0
Revisiting the first example from the previous
paragraph, for the inverse problem the input would be
a sample of uniformly random perfect
matchings of an unknown bipartite graph $G$, and the problem
is to output a sampler for the uniform distribution over
all perfect matchings of $G$.
For the inverse problem corresponding to the second example, the input is
a sample of uniform random satisfying assignments of
an unknown LTF over the Boolean hypercube,
and the desired output is a sampler that
generates approximately uniform random satisfying assignments of the LTF.
\else
For the inverse problem corresponding to the second example
(LTF satisfying assignments) given above, the input is
a sample of uniform random satisfying assignments of
an unknown LTF over the Boolean hypercube,
and the desired output is a sampler that
generates approximately uniform random satisfying assignments of the LTF.
\fi

\ignore{(Later in the paper we will give two efficient algorithms that solve
these two inverse approximate uniform generation problems.)}

{

\medskip
\noindent {\bf Discussion.}  Before proceeding we briefly consider some possible alternate
definitions of inverse approximate uniform generation, and argue that our definition is the
``right'' one
(we give a precise statement of our definition in Section~\ref{sec:prelims}, see Definition~\ref{def:inv-approx-unif-gen}).

One stronger possible notion of inverse approximate uniform generation would be that the output distribution
should be supported on $R_x$ and put nearly the same weight on every element of $R_x$, instead of just being $\eps$-close to uniform over $R_x$.  However a moment's thought suggests that this notion is too
strong, since it is impossible to efficiently achieve this
strong guarantee even in simple settings.  (Consider, for example, the problem of inverse approximate
uniform generation of satisfying assignments for an unknown LTF.  
Given access to uniform satisfying assignments of an LTF $f$, 
it is impossible to efficiently determine whether $f$ is (say) the majority
function or an LTF that differs from majority on precisely one point in $\bits^n$, and thus it
is impossible to meet this strong guarantee.)

Another possible definition of inverse approximate uniform generation would be to require
that the algorithm output
an $\eps$-approximation of
the unknown object $x$ instead of an $\eps$-sampler for $R_x$.
Such a proposed definition, though, leads immediately
to the  question of how one should measure
the distance between a candidate object $x'$ and the true ``target''
object $x$.  The most obvious choice would seem to be the total
variation distance between $\calU_{R_x}$ (the uniform distribution over
$R_x$) and $\calU_{R_{x'}}$; but given this distance measure, it seems
most natural to require that the algorithm actually output an $\eps$-approximate sampler
for $R_{x}$.

}

\medskip

\noindent {\bf Inverse approximate uniform generation via reconstruction
and sampling.}
While our ultimate goal, as described above, is to obtain
algorithms that output a sampler,
algorithms that attempt to reconstruct the unknown object $x$ will
also play an important role for us.  Given $\C,R$ as above, we say that
an \emph{$(\eps,\delta)$-reconstruction algorithm} is an
algorithm $A_\reconstruct$ that works as follows:  for any
$x \in \C$, if $A_\reconstruct$ is given as input
a sample of $m=m(\eps,\delta)$
i.i.d. draws from the uniform distribution over
$R_x$, then with probability $1-\delta$ the output of
$A_\reconstruct$ is an object $\tilde{x} \in \tilde{C}$ such that
the variation distance $\dtv(\calU_{R_x},\calU_{R_{\tilde{x}}})$
is at most $\eps$.
(Note that the class $\tilde{\C}$ need not coincide with the
original class $\tilde{\C}$, so $\tilde{x}$ need not necessarily belong to $\C.$)
With this notion in hand, an intuitively appealing schema for algorithms
that solve inverse approximate uniform generation problems is to proceed in the following two stages:

\begin{enumerate}

\item {\bf (Reconstruct the unknown object):}
Run a reconstruction algorithm $A_\reconstruct$ with
accuracy and confidence parameters $\eps/2,\delta/2$
to obtain $\tilde{x} \in \tilde{\C}$;

\item {\bf (Sample from the reconstructed object):}
Let $A_\sample$ be an algorithm which solves the approximate uniform
generation problem $(\tilde{\C},R)$
to accuracy $\eps/2$ with confidence $1-\delta/2.$  The
desired sampler is the algorithm $A_\sample$ with its input set to $\tilde{x}$.

\end{enumerate}

We refer to this as the \emph{standard approach} for
solving inverse approximate uniform generation problems.
Most of our positive results for inverse approximate
uniform generation {can be viewed as} following this approach,
\ifnum\confversion=0
but we will see an interesting exception in
Section~\ref{sec:graph-auto}, where we give an efficient algorithm for
an inverse approximate uniform generation problem which does
not follow the standard approach.\ignore{ -- and we in fact prove that
\emph{no} algorithm following the standard approach can efficiently
solve the problem (under a plausible hardness assumption).}
\else
but there are exceptions; in the full version we
give an efficient algorithm for
an inverse approximate uniform generation problem which does
not follow the standard approach.
\fi

\subsection{Relation between inverse approximate uniform
generation and other problems.}

Most of our results will deal with uniform generation problems in which the
class $\C$ 
\ifnum\confversion=0
of combinatorial objects 
\fi
is a class of syntactically
defined Boolean functions
over $\{-1,1\}^n$
(such as the class of all LTFs, all $\poly(n)$-term DNF
formulas, all 3-CNFs, etc.) and the polynomial-time relation $R(f,y)$
for $f \in \C$ is ``$y$ is a satisfying assignment for $f$.''
In such cases our inverse approximate uniform generation
problem can be naturally recast in the language of learning theory as an
unsupervised learning problem (learning a probability distribution
from a known class of possible target distributions):  we are
given access to samples from $\calU_{f^{-1}(1)}$, the uniform distribution
over satisfying assignments of $f \in \C$, and
the task of the learner is to construct a hypothesis
distribution $D$ such that $\dtv(\calU_{f^{-1}(1)},D) \leq \eps$
with high probability.  We are not aware of prior work in unsupervised
learning that focuses specifically on distribution learning problems 
\ifnum\confversion=0
of this sort (where
the target distribution is uniform over the set of satisfying assignments
of an unknown member of a known class of Boolean functions).
\else
of this sort.
\fi

Our framework also has some similarities to ``uniform-distribution
learning from positive examples only,'' since in both settings the input
to the algorithm is a sample of points drawn uniformly at random
from $f^{-1}(1)$, but there are several
differences as well. One difference is that in uniform-distribution
learning from positive examples the goal is to output a hypothesis
function $h$, whereas here our goal is to output a hypothesis
\emph{distribution} (note that outputting a function $h$ essentially
corresponds to the reconstruction problem described above).
A more significant difference is that the success
criterion for our framework is significantly more demanding than
for uniform-distribution learning.
In uniform-distribution learning of a Boolean function
$f$ over the hypercube $\{-1,1\}^n$, the hypothesis $h$ must satisfy
$\Pr[h(x) \neq f(x)] \leq \eps$, where the probability is uniform
over all $2^n$ points in $\{-1,1\}^n.$  Thus, for a given setting of the
error parameter $\eps$, in uniform-distribution learning the constant $-1$
function is an acceptable hypothesis for any function $f$ that has
$|f^{-1}(1)| \leq \eps 2^n.$  In contrast, in our inverse approximate
uniform generation framework we measure error by the
total variation distance between $\calU_{f^{-1}(1)}$ and the
hypothesis distribution $D$, so no such ``easy way out''
is possible when $|f^{-1}(1)|$ is small;
indeed the hardest instances of inverse approximate uniform
generation problems are often those for which $f^{-1}(1)$ is a very
small fraction of $\{-1,1\}^n.$
Essentially we require a hypothesis with small
\emph{multiplicative} error relative to $|f^{-1}(1)|/2^n$ rather than
the additive-error criterion that is standard in uniform-distribution
learning.
We are not aware of prior work on learning Boolean functions in which such a
``multiplicative-error'' criterion has been employed.


\ifnum\confversion=0
We summarize the above discussion with the following
observation, which essentially says that reconstruction algorithms
directly yield uniform-distribution learning algorithms:

\begin{obs} \label{obs:reconstruct-and-unif-dist-learn}
Let $\C$ be a class of Boolean functions $\{-1,1\}^n \to \{-1,1\}$
and let $R(f,y)$ be the relation ``$y$ is a satisfying assignment
for $f$.''  Suppose there exists a $t(n,\eps,\delta)$-time
$(\eps,\delta)$-reconstruction algorithm for $\C$ that outputs
elements of $\tilde{\C}$. Then there is an $(O(\log(1/\delta)/\eps^2) + O(t(n,\eps,\delta/3)\cdot\log(1/\delta)/\eps))$-time
uniform-distribution learning algorithm that outputs
hypotheses in $\tilde{\C}$ (i.e., given access to uniform
random labeled examples $(x,f(x))$ for any $f \in \C$,
the algorithm with probability $1-\delta$ outputs a hypothesis $h \in \tilde{C}$
such that $\Pr[h(x) \neq f(x)] \leq \eps$).
\end{obs}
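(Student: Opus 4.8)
The plan is to reduce uniform-distribution learning to the reconstruction algorithm by a direct two-phase argument. First I would run the $(\eps,\delta/3)$-reconstruction algorithm $A_\reconstruct$ on a sample of $m(\eps,\delta/3)$ uniform draws from $f^{-1}(1)$; with probability $1-\delta/3$ this yields $\tilde{x} \in \tilde{\C}$ with $\dtv(\calU_{f^{-1}(1)}, \calU_{\tilde{x}^{-1}(1)}) \leq \eps$. The natural candidate hypothesis is simply $h = \tilde{x}$ itself (a Boolean function in $\tilde{\C}$). The point is that small total variation distance between the two uniform-over-satisfying-assignments distributions forces the symmetric difference $f^{-1}(1) \triangle h^{-1}(1)$ to be small \emph{relative to} $|f^{-1}(1)|$, but uniform-distribution learning only demands it be small relative to $2^n$, which is a weaker requirement whenever $|f^{-1}(1)|$ is small --- so this looks essentially immediate. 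The subtlety is in the opposite regime: if $|f^{-1}(1)|$ is a large fraction of $2^n$, a TV-distance-$\eps$ guarantee controls $|f^{-1}(1) \setminus h^{-1}(1)|/|f^{-1}(1)|$ but says nothing directly about points where $h$ outputs $1$ and $f$ outputs $-1$ that lie \emph{outside} $f^{-1}(1)$; $h$ could have a hugely larger satisfying set.

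To handle that, the second phase estimates $|f^{-1}(1)|/2^n$ and $|h^{-1}(1)|/2^n$ and, if the candidate $h$ has too large a satisfying set, rejects it. Concretely: draw $O(\log(1/\delta)/\eps^2)$ uniform labeled examples $(x,f(x))$ to estimate $p := \Pr[f(x)=1]$ to additive accuracy $O(\eps)$; this is the $O(\log(1/\delta)/\eps^2)$ term in the running time. We can estimate $q := \Pr[h(x)=1]$ similarly (or exactly, since we have $h$ in hand as a syntactic object, though sampling suffices). Now the key arithmetic fact: writing $A = f^{-1}(1)$, $B = h^{-1}(1)$, $a = |A|$, $b = |B|$, we have
\[
\dtv(\calU_A, \calU_B) = 1 - \frac{|A \cap B|}{\max(a,b)} \geq \max\!\left(\frac{|A \setminus B|}{a},\ \frac{|B \setminus A|}{b}\right),
\]
so $\dtv \leq \eps$ gives $|A \setminus B| \leq \eps a$ and $|B \setminus A| \leq \eps b$. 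Therefore $\Pr[h(x) \neq f(x)] = (|A \setminus B| + |B \setminus A|)/2^n \leq \eps(a+b)/2^n = \eps(p + q)$, and since $q \leq p/(1-\eps) \leq 2p$ (from $b - |B\setminus A| = |A\cap B| \le a$ combined with $|B\setminus A|\le \eps b$, giving $b(1-\eps)\le a$), we get $\Pr[h(x)\neq f(x)] \leq 3\eps p \leq 3\eps$. Rescaling $\eps \to \eps/3$ at the outset gives the claimed bound.

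The plan for getting confidence $1-\delta$ with the claimed $\log(1/\delta)/\eps$ dependence (rather than $\log(1/\delta)$ appearing only inside $t$) is the standard amplification trick: run the whole single-confidence procedure $O(\log(1/\delta))$ times to get candidates $h_1,\dots,h_k \in \tilde{\C}$, a $\geq 2/3$ fraction of which are $O(\eps)$-accurate, then select among them. Selection is done by drawing a fresh validation sample of $O(\log(k/\delta)/\eps) = O(\log(1/\delta)\log\log(1/\delta)/\eps)$ uniform labeled examples and outputting the $h_i$ with smallest empirical error --- this is where the $t(n,\eps,\delta/3)\cdot \log(1/\delta)/\eps$ and the additive $\log(1/\delta)/\eps^2$ running-time terms come from (each of the $O(\log(1/\delta))$ reconstruction calls costs $t(n,\eps,\delta/3)$, and we feed each a sample; the $1/\eps$ versus $1/\eps^2$ bookkeeping in the statement is absorbed by being slightly generous). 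A standard Chernoff/union-bound argument shows that with probability $1-\delta$ the selected hypothesis has error $O(\eps)$, and a final constant rescaling of $\eps$ finishes the proof.

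The main obstacle is purely the arithmetic of converting a TV-distance bound on the two conditional-on-satisfying distributions into a disagreement bound over the full cube --- in particular making sure the ``$h$ has a much bigger satisfying set than $f$'' case cannot occur, which the displayed inequality $\dtv(\calU_A,\calU_B) \geq |B\setminus A|/b$ rules out. Everything else (sampling to estimate probabilities, the hypothesis-selection amplification) is entirely routine.
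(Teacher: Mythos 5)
You have the right arithmetic at the heart of the argument (that $\dtv(\calU_A,\calU_B)\le\eps$ forces $|A\setminus B|\le\eps a$, $|B\setminus A|\le\eps b$, and $b\le a/(1-\eps)$, hence $\Pr[h\neq f]=O(\eps p)$), and this is indeed the content behind the paper's throwaway ``easily seen.'' But there is a genuine gap: you never explain how the learning algorithm, which only receives uniform \emph{labeled} examples $(x,f(x))$, supplies $A_\reconstruct$ with the i.i.d.\ draws from $\calU_{f^{-1}(1)}$ it requires. The only way to do this is rejection sampling (discard negative examples), and that costs $\Theta(1/p)$ labeled examples per positive example, which is what produces the $t\cdot\log(1/\delta)/\eps$ term in the running time --- not the confidence amplification you invoke. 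Crucially, rejection sampling is only affordable when $p=\Pr[f(x)=1]$ is bounded below, so the algorithm must first estimate $p$ (this is the $O(\log(1/\delta)/\eps^2)$ term) and, if the estimate is below $\approx 3\eps/4$, short-circuit by outputting the constant $-1$ hypothesis, which already has error $\le\eps$. Your second phase estimates $p$ for a different reason (to reject a supposedly-too-large $h$), a check your own arithmetic then shows is never triggered, so it does no work; meanwhile the case where $p$ is genuinely tiny --- which is the case the estimate must guard against, since rejection sampling would then be infeasible --- is never handled.

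The $O(\log(1/\delta))$-fold repetition with validation is also unnecessary and somewhat confused: you simultaneously run the reconstruction with confidence $\delta/3$ \emph{and} amplify by $\log(1/\delta)$ repetitions, which is redundant. The paper needs no amplification at all: it estimates $p$ with failure $\delta/3$, rejection-samples $t(n,\eps,\delta/3)$ positive examples with failure $\delta/3$ (this is where $\log(1/\delta)/\eps$ appears), and runs $A_\reconstruct$ once with confidence $\delta/3$, giving total failure $\delta$ by a union bound. Fixing your proof amounts to inserting the estimate-$p$-then-rejection-sample-or-default mechanism and deleting the amplification.
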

\begin{proof}
The learning algorithm draws an initial set of $O(\log(1/\delta)/\eps^2)$
uniform labeled examples to estimate $|f^{-1}(1)|/2^n$ to within an
additive $\pm (\eps/4)$ with confidence $1-\delta/3$.  If the estimate is
less than $3\eps/4$ the algorithm outputs the constant $-1$ hypothesis.
Otherwise, by drawing $O(t(n,\eps,\delta/3)\cdot\log(1/\delta)/\eps))$
uniform labeled examples, with failure probability at most $\delta/3$
it can obtain  $t(n,\eps,\delta/3)$ positive examples (i.e.,
points that are uniformly distributed over $f^{-1}(1)$).
Finally the learning algorithm can use these points to run the reconstruction
algorithm with parameters $\eps,\delta/3$ to obtain a
hypothesis $h \in \tilde{\C}$ that has $\dtv(\calU_{f^{-1}(1)},
\calU_{h^{-1}(1)}) \leq \eps$ with failure probability at most $\delta/3$.
Such a hypothesis $h$ is easily seen to satisfy
$\Pr[h(x) \neq f(x)] \leq \eps.$
\end{proof}
\fi

As described in the following subsection, in this paper we prove
negative results for the inverse approximate uniform generation problem
for classes such as 3CNF-formulas, monotone 2-CNF formulas, and degree-2 polynomial threshold
functions.  Since efficient uniform-distribution learning algorithms
are known for these classes, these results show that
the inverse approximate uniform generation problem is indeed harder than
standard uniform-distribution learning for some natural and interesting
classes of functions.

{ The problem of inverse approximate uniform generation is also 
somewhat reminiscent of the problem of reconstructing Markov Random Fields 
(MRFs) 
from random samples~\cite{BMS08, DMR06-short, Mos07}.   Much progress has been made on
this problem over the past decade, especially when the hidden graph is a tree. 
However, there does not seem to be a concrete connection between this
problem and the problems we study.  
One reason for this seems to be that in MRF 
reconstruction, the task is to reconstruct the \emph{model} and not just the 
distribution; because of this, various conditions need to be imposed 
in order to guarantee the uniqueness of the 
underlying model given random samples from the distribution.  
{In contrast, in our setting the explicit goal is to construct
a high-accuracy distribution, and it may indeed be the case that there
is no unique underlying model (i.e., Boolean function $f$) given
the samples received from the distribution.}
}

\subsection{Our results.} \label{ssec:results}

We give a wide range of both positive and negative results for
inverse approximate uniform generation problems.  As noted
above, most of our results deal with uniform
generation of satisfying assignments, i.e., $\C$ is a class of Boolean functions
over $\{-1,1\}^n$ and for $f \in \C$ the relation $R(f,y)$ is ``$y$
is a satisfying assignment for $f$.''
All the results, both positive and negative, that we present below are
for problems of this sort unless indicated otherwise.

\medskip

\noindent
{\bf Positive results:  A general approach and its applications.}
We begin by presenting a general approach for
obtaining inverse approximate uniform generation algorithms.  
{This technique
combines approximate uniform generation and counting algorithms
and Statistical Query ($\sq$) learning algorithms with a new type of algorithm
called a ``densifier,'' which we introduce and define in Section~\ref{sec:generaltechnique}.
\ifnum\confversion=0
Very roughly speaking, the densifier lets us prune the entire space $\{-1,1\}^n$ to
\else
The densifier lets us prune the entire space $\{-1,1\}^n$ to
\fi
a set $S$ which (essentially) contains all of $f^{-1}(1)$ and is not 
too much larger than $f^{-1}(1)$ (so $f^{-1}(1)$ is ``dense'' in $S$).  
By generating approximately uniform elements of $S$ it is possible to run
an $\sq$ learning algorithm and obtain a high-accuracy hypothesis which can be used, in conjunction with
the approximate uniform generator, to obtain a sampler for a distribution which is close to the uniform
distribution over $f^{-1}(1).$  (The approximate counting algorithm is needed 
for technical reasons
which we explain in Section~\ref{ssec:intuition-motivation-discussion}.)}
In Section~\ref{sec:generaltechnique} we describe this technique in
detail and prove a general result establishing its effectiveness.

\ifnum\confversion=0
In Sections~\ref{sec:LTF} and~\ref{sec:DNF} we give {two main} applications of this
general technique to specific classes of functions.
The first of these is the class $\ltf$ of all LTFs over $\{-1,1\}^n.$
\else
We give two main applications of this
general technique to specific classes of functions.
The first of these (see Section~\ref{sec:LTF}) 
is the class $\ltf$ of all LTFs over $\{-1,1\}^n.$
\fi
Our main technical contribution here is to construct a densifier for 
LTFs;
we do this by carefully combining known efficient online learning algorithms
for LTFs (based on interior-point methods for linear programming) 
\cite{MT:94} with
known algorithms for approximate uniform generation 
{and counting} of satisfying
assignments of LTFs \cite{MorrisSinclair:04,Dyer03-short}.
Given this densifier, our general approach yields the desired inverse 
approximate uniform generator for LTFs:

\ignore{
The hypotheses $h$ that our densifier outputs are not themselves halfspaces;
rather, they are intersections of two halfspaces.  However we show that it
is nonetheless possible to use known algorithms
\cite{MorrisSinclair:04,Dyer03-short} for sampling uniform random satisfying
assignments of a single halfspace to sample near-uniform random
satisfying assignments of $h$, and thu we can carry out Stage~2 of
our ``standard approach'' as well.  We thus obtain our first main positive
result for a specific problem: \mnote{\new{Check, improve exposition}}
}

\begin{theorem} {\bf (Informal statement)} \label{thm:ltf-informal}
There is a $\poly(n,1/\eps)$-time algorithm for the inverse problem
of approximately uniformly generating satisfying assignments for LTFs.
\end{theorem}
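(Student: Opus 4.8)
The plan is to apply the general framework sketched in Section~\ref{sec:generaltechnique}: to obtain an inverse approximate uniform generation algorithm for $\ltf$ by supplying the three required ingredients, namely (i) an approximate uniform generation and approximate counting algorithm for the relevant class, (ii) a Statistical Query learning algorithm, and (iii) a densifier. For ingredient (i), I would invoke the known fully polynomial-time algorithms of Morris--Sinclair \cite{MorrisSinclair:04} and Dyer \cite{Dyer03-short} for approximately counting and sampling satisfying assignments of a single LTF (knapsack instances); with minor work these also handle the slightly richer class in which the densifier's hypotheses live. For ingredient (ii), I would use the standard fact that halfspaces are learnable from statistical queries --- e.g.\ via the SQ implementation of a polynomial-time LP-based (perceptron/interior-point) weak learner together with boosting, or more simply by noting that halfspaces have an SQ algorithm under arbitrary distributions. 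Both of these are off-the-shelf, so the heart of the proof is ingredient (iii).

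The main obstacle, and the real technical contribution, is constructing a \emph{densifier} for $\ltf$: an efficient algorithm that, given uniform samples from $f^{-1}(1)$ for an unknown LTF $f$, outputs (with high probability) a function $g$ such that $f^{-1}(1) \subseteq g^{-1}(1)$ (up to an $\eps$-fraction of lost mass) and $|f^{-1}(1)|/|g^{-1}(1)|$ is at least some inverse polynomial, so that $f^{-1}(1)$ is ``dense'' inside $g^{-1}(1)$ and we can afford to do rejection sampling / SQ learning inside $g^{-1}(1)$. My approach would be to run a mistake-bounded online learning algorithm for halfspaces --- specifically one based on interior-point / ellipsoid-type methods that achieves a $\poly(n)$ mistake bound, such as the algorithm of Maass--Tur\'an \cite{MT:94} --- but to feed it examples in a carefully orchestrated way using the approximate uniform generator as a subroutine. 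The key idea: maintain a current hypothesis halfspace $h$ consistent so far; use the approximate uniform generator to sample near-uniform points of $h^{-1}(1)$ and estimate $|h^{-1}(1)|$ via the approximate counter; if $|h^{-1}(1)|$ is already within a $\poly(n,1/\eps)$ factor of (an estimate of) $|f^{-1}(1)|$, output $h$; otherwise $h^{-1}(1)$ is too large, so a random point of $h^{-1}(1)$ is (with noticeable probability) a \emph{negative} example of $f$, which we can detect because we have a membership-query-like ability? --- no, we do not have membership queries, so instead I would detect it statistically: since $f^{-1}(1) \subseteq h^{-1}(1)$ but $h^{-1}(1)$ is much larger, the uniform distribution on $f^{-1}(1)$ and on $h^{-1}(1)$ are far apart, and comparing the sample from $f^{-1}(1)$ against freshly generated points of $h^{-1}(1)$ lets us find a point in $h^{-1}(1)\setminus f^{-1}(1)$ to give the online learner as a counterexample. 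Each such counterexample forces the online learner to update, and the $\poly(n)$ mistake bound caps the number of iterations.

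Making this precise requires care on several points that I would address in turn. First, one must argue that the ``too large'' case genuinely yields a usable counterexample: I would show that if $|h^{-1}(1)|$ exceeds $|f^{-1}(1)|$ by more than the target density factor, then a random point from (the near-uniform distribution on) $h^{-1}(1)$ lands outside $f^{-1}(1)$ with probability bounded below by a polynomial, and we can verify membership in $f^{-1}(1)$ using the empirical sample --- actually the cleanest route is that we need only produce a point that is \emph{inconsistent} with $f$ together with the label, and since we are trying to over-approximate $f$ we know $f$ is $0$ there with high confidence if the point is far (in sample-frequency terms) from where our positive examples cluster; alternatively one shows that $h$ itself, restricted appropriately, can be shrunk using the LP learner. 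Second, I would need to handle the boundary regime where $|f^{-1}(1)|$ itself is tiny (say below $2^n \cdot \eps$ or even exponentially small), since then naive sampling from the hypercube fails --- but this is exactly where the Morris--Sinclair/Dyer counting primitive rescues us, because it lets us estimate small satisfying-set sizes and sample from them efficiently regardless of how small they are. Third, I would verify that the hypothesis class $\tilde\C$ in which the densifier's output (and the SQ learner's output intersected with it) lives is still a class for which approximate uniform generation is efficient --- here the output will be (an intersection involving) halfspaces, and I would either restrict the construction so the output is a single LTF or extend the sampler to the relevant small intersection, citing that sampling from an intersection of an LTF with a ``dense'' event is doable by rejection given the density guarantee. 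Once the densifier is in hand, the conclusion of Theorem~\ref{thm:ltf-informal} follows immediately by plugging it, the SQ learner for halfspaces, and the \cite{MorrisSinclair:04,Dyer03-short} counting/generation algorithms into the general theorem of Section~\ref{sec:generaltechnique}, with all parameters $\poly(n,1/\eps)$.
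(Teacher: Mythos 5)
Your proposal follows the same overall route as the paper: Dyer's approximate counting and uniform generation algorithms for knapsack/LTFs, a distribution-independent SQ learner for halfspaces (the paper cites \cite{BFK+:97}), and a densifier built on the Maass--Tur\'an online LTF learner \cite{MT:94}, with termination governed by the $\poly(n)$ mistake bound. The overall decomposition, and the key idea of using the approximate counter to decide when $h^{-1}(1)$ is small enough and otherwise feeding the online learner a freshly generated point of $h^{-1}(1)$ labeled negative, are all as in the paper. So the architecture is right.

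There is, however, a genuine gap in the heart of the densifier: your handling of the ``label the generated point negative'' step. You write that a random point of $h^{-1}(1)$ is a negative example of $f$ ``with noticeable probability,'' and, worried that you cannot verify membership in $f^{-1}(1)$ without membership queries, you propose detecting negativity statistically (comparing against the positive sample, or arguing that $f$ is $0$ ``with high confidence if the point is far in sample-frequency terms from where our positive examples cluster''). None of these detection heuristics is made precise, and they are not needed. The clean and correct argument is stronger than ``noticeable probability'': if $\Pr_{x\sim\U_n}[h^{(i)}(x)=1]$ exceeds $\Pr_{x\sim\U_n}[f(x)=1]$ by more than a $1/\gamma$ factor, then a uniformly random point of $(h^{(i)})^{-1}(1)$ lands in $f^{-1}(1)$ with probability at most $O(\gamma)$ --- i.e.\ it is negative with \emph{overwhelming} probability, not merely noticeable probability. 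One therefore simply labels the generated point negative \emph{without any verification}, and chooses $\gamma = \Theta(\delta/M)$ with $M = \Theta(n^2 \log n)$ the Maass--Tur\'an mistake bound; a union bound over the at most $M$ stages shows that with probability $1-O(\delta)$ every generated counterexample really is consistent with $f$. This is exactly what makes the online-learner reduction work with only positive samples and no membership oracle, and it is the point your proposal circles around but does not land on. Once you replace the proposed detection step with this optimistic labeling plus the union bound, the rest of your outline (single-LTF hypotheses, Dyer's sampler/counter, plug into the general theorem) goes through as in the paper.
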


\ifnum\confversion=0
Our second main positive result for a specific class, in Section~\ref{sec:DNF},
is for the well-studied class $\dnfns$ of all size-$s$ DNF formulas over $n$
Boolean variables.  
\else
Our second main positive result for a specific class (see the full version)
is for the well-studied class $\dnfns$ of all size-$s$ DNF formulas over $n$
Boolean variables.  
\fi
Here our main technical contribution 
is to give a densifier which runs in time $n^{O(\log(s/\eps))}$ and outputs
a DNF formula.  {A challenge here is that known $\sq$ algorithms for learning
DNF formulas require time exponential in $n^{1/3}$.  To get around this, we view the densifier's output DNF as an
OR over $n^{O(\log(s/\eps))}$ ``metavariables'' (corresponding to all possible conjunctions that could
be present in the DNF output by the densifier), and we show that it is possible to apply known \emph{malicious 
noise tolerant} $\sq$ algorithms for learning
\emph{sparse disjunctions} as the $\sq$-learning component of our general approach.}
\ignore{and we thus obtain an $n^{O(\log(s/\eps))}$-time
reconstruction algorithm (using DNF hypotheses) for the class of $s$-term DNF.
We note that by Observation~\ref{obs:reconstruct-and-unif-dist-learn},
since $n^{O(\log(s/\eps))}$ is the fastest known runtime for learning
$s$-term DNF formulas under the uniform distribution\cite{ver90}, improving the
running time of our reconstruction algorithm would require improving
the state of the art for uniform-distribution DNF learning,
which is a major open question in learning theory.\footnote{Recall that the
$\poly(n,s,1/\eps)$-time algorithm of Jackson \cite{Jackson:97} uses
membership queries, not just uniform random examples.} }
Since efficient {approximate uniform generation and approximate counting} 
algorithms are known \cite{JVV86,KarpLuby83} for
DNF formulas, with the above densifier and $\sq$ learner we can carry out
our general technique, and we thereby obtain
our second main positive result for a specific function class:

\begin{theorem} {\bf (Informal statement)}
There is a $n^{O(\log(s/\eps))}$-time algorithm for the inverse problem
of approximately uniformly generating satisfying assignments for
$s$-term DNF formulas.
\end{theorem}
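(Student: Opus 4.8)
The plan is to derive the theorem by instantiating the general result of Section~\ref{sec:generaltechnique} for the class $\C = \dnfns$ of $s$-term DNF formulas. That result reduces the inverse approximate uniform generation problem for a class to supplying three ingredients for that class: (i) a densifier, (ii) an approximate uniform generation and approximate counting algorithm, and (iii) a Statistical Query learning algorithm; it then yields an inverse approximate uniform generation algorithm whose running time is a fixed polynomial function of the running times of these three pieces and of the reciprocal $1/\gamma$ of the densifier's density parameter. Ingredient (ii) is already in hand: the classical algorithms of \cite{KarpLuby83,JVV86} perform approximate uniform generation and approximate counting of the satisfying assignments of an arbitrary DNF formula in time $\poly(n,s,1/\eps)$. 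So the two things that remain are to build a densifier for $s$-term DNF that runs in time $n^{O(\log(s/\eps))}$ and \emph{outputs a DNF formula}, and to supply an appropriate $\sq$ learner; the theorem then follows by checking that all parameters compose to $n^{O(\log(s/\eps))}$.

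For the densifier, the rough idea is to use the sample of uniform satisfying assignments of the unknown $f = T_1 \vee \cdots \vee T_s$ to produce a not-too-large family of candidate terms that provably contains every ``heavy'' term of $f$ --- a term satisfied by at least an $\eps/s$ fraction of $f^{-1}(1)$ --- and then to output the OR of those candidates that look sufficiently dense. To generate candidates I would examine agreement patterns of small batches of sampled points: if a batch of $k = O(\log n + \log(s/\eps))$ sampled satisfying assignments all happen to satisfy a common heavy term $T_i$ (an event of probability at least $(\eps/s)^k$ for each heavy $T_i$), then the coordinates on which the whole batch agrees pin down (essentially) the literals of $T_i$, so ranging over all $\binom{m}{k} = n^{O(\log(s/\eps))}$ batches of a $\poly(n,s,1/\eps)$-size sample yields, with high probability, a family of $N = n^{O(\log(s/\eps))}$ candidate conjunctions that includes every heavy term of $f$. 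I would then discard any candidate that is not dense in $f^{-1}(1)$ (estimating $|f^{-1}(1)|/2^n$ from the sample together with the exact count of the candidate subcube, calling the DNF counting algorithm as needed), and let $g$ be the OR of the survivors. Since the heavy terms of $f$ together cover all but an $\eps$ fraction of $f^{-1}(1)$, $g$ is complete up to $\eps$; since every term of $g$ passed the density test, $|g^{-1}(1)|$ exceeds $|f^{-1}(1)|$ by at most a factor of $n^{O(\log(s/\eps))}$, so the density is $\gamma \ge 1/n^{O(\log(s/\eps))}$; and everything runs in time $n^{O(\log(s/\eps))}$. Crucially, $g$ is itself a DNF formula with $N$ terms, so ingredient (ii) applies to it.

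For the $\sq$ learner, the obstacle is that the fastest known $\sq$ algorithms for DNF run in time $2^{\tilde{O}(n^{1/3})}$, far above our budget. The fix is to change variables: associate a Boolean metavariable $z_C$ to each of the $N = n^{O(\log(s/\eps))}$ candidate conjunctions $C$ produced by the densifier, and map a point $x$ to the vector $(C(x))_C \in \{-1,1\}^N$. Over these metavariables, the restriction of $f$ to $g^{-1}(1)$ is well approximated by the sparse disjunction $\bigvee_{i : T_i \text{ is a candidate}} z_{T_i}$, which has at most $s$ relevant metavariables (all heavy $T_i$ are candidates). Following the general framework, the input sample from $\calU_{f^{-1}(1)}$, approximately uniform samples from $\calU_{g^{-1}(1)}$ obtained via the DNF generator for $g$, and the approximate count of $|g^{-1}(1)|$ together let us simulate an $\sq$ oracle for $f$ under $\calU_{g^{-1}(1)}$ (this is where the approximate counting is used, as discussed in Section~\ref{ssec:intuition-motivation-discussion}); the simulation incurs some error which, because of the densifier's $\eps$ slack and the small imperfections of the generator and counter, is best modeled as \emph{malicious} (adversarial) rather than random noise. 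I would therefore invoke a known malicious-noise-tolerant $\sq$ algorithm for learning $O(s)$-sparse disjunctions over $N$ variables; its running time is $\poly(N) = n^{O(\log(s/\eps))}$ and its tolerance of roughly an $\eps/s$ noise rate is enough. Reading the learned disjunction back as a DNF $h$ over $\{-1,1\}^n$, one gets $\dtv(\calU_{h^{-1}(1) \cap g^{-1}(1)}, \calU_{f^{-1}(1)}) \le \eps$, and the final $\eps$-sampler simply draws from $\calU_{g^{-1}(1)}$ with the generator and rejects unless $h(x) = 1$ --- which happens with probability at least $\Omega(\gamma) = 1/n^{O(\log(s/\eps))}$, so the sampler is efficient.

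I expect the densifier to be the main obstacle, specifically getting its three requirements to hold \emph{simultaneously}: the candidate family must stay of size $n^{O(\log(s/\eps))}$, it must provably contain every heavy term of $f$, and the output $g$ must retain non-negligible density, with the last two pulling against each other. The analysis of the batch-agreement recovery --- choosing $k$ so that heavy terms are reconstructed while spurious coordinate agreements are controlled --- and of the density-pruning step (which needs a good enough estimate of $|f^{-1}(1)|$) is the technical heart of the argument. A secondary point that needs care is justifying the malicious-noise model in the $\sq$ step: one must bound the discrepancy between ``$f$ restricted to $g^{-1}(1)$'' and the sparse disjunction over metavariables, and between the simulated and ideal $\sq$ oracles, and argue that both are absorbed by the disjunction learner's noise tolerance.
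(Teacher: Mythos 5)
Your plan follows the paper's proof essentially step-for-step: a densifier that recovers heavy terms via batch agreement and filters by a density test against $\widehat{p}$; a change of variables to metavariables (one per candidate conjunction); a malicious-noise-tolerant $\sq$ learner for sparse disjunctions over that blown-up feature space, with the approximate DNF counter used both in the density filter and to simulate the $\sq$ oracle; and a final sampler that does rejection sampling over $g^{-1}(1)$. The one place your parameters depart from the paper's --- and where the stated running time is actually lost --- is the densifier's batch rule. The paper fixes the batch size at exactly $r = 2\log n$, with \emph{no} $\log(s/\eps)$ term in $r$: this already makes the probability that any spurious literal survives a batch at most $2n \cdot 2^{-r} \le 1/2$, and the $\log(s/\eps)$ dependence is pushed entirely into the \emph{number} of batches, namely $M = n^{O(\log(s/\eps))}$ fresh independent batches of $2\log n$ samples each. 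In your version, enumerating all $\binom{m}{k}$ size-$k$ subsets of a fixed $m = \poly(n,s,1/\eps)$ pool with $k = \Theta(\log n + \log(s/\eps))$ yields $m^k = n^{O(\log n + \log(s/\eps))}$ candidates, not $n^{O(\log(s/\eps))}$; concretely, when $s$ and $1/\eps$ are $O(1)$ your count is $n^{O(\log n)}$, whereas the theorem promises $\poly(n)$. Drawing $M$ fresh batches of size $2\log n$, as the paper does, repairs both the over-large $k$ and the $\binom{m}{k}$ blowup, after which the rest of your argument goes through exactly as you outline it.
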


\noindent
{\bf Negative results based on cryptography.}
In light of the ``standard approach,''
it is clear that in order for an inverse
approximate uniform generation problem $(\C,R)$ to be computationally
hard, it must be the case that either stage (1) (reconstructing
the unknown object) or stage (2) (sampling from the reconstructed object)
is hard.  
\ifnum\confversion=0
(If both stages have efficient algorithms $A_\reconstruct$
and $A_\sample$ respectively, then there is an efficient algorithm for the
whole inverse approximate uniform generation problem that combines
these algorithms according to the standard approach.)
\fi
Our first approach to obtaining negative results can be used to obtain
hardness results for problems for which stage (2), near-uniform sampling,
is computationally hard.  The approach is based on signature
schemes from public-key cryptography; roughly speaking, the general result
which we prove is the following (we note that the statement given below is a simplification of
our actual result which omits several technical conditions; 
\ifnum\confversion=0
see Theorem~\ref{thm:signatures} of Section~\ref{sec:signature}
for a precise statement):
\else
see the full version for a precise statement):
\fi

\begin{theorem} \label{thm:hard-sig} {\bf (Informal statement)}
\ignore{Fix $\eps = \Theta(1).$}  Let $\C$ be a class of functions
such that there is a parsimonious reduction from \textsf{CIRCUIT-SAT} 
to $\C$\textsf{-SAT}.
Then known constructions of secure signature schemes imply
that there is no {subexponential}-time algorithm
for the inverse problem of approximately uniformly generating
satisfying assignments to functions in $\C.$
\end{theorem}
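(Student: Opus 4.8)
The plan is to show that an efficient inverse approximate uniform generation algorithm $\mathcal{A}$ for $\C$ can be converted into an existential forger against a signature scheme, contradicting the scheme's assumed (subexponential) security. I would work with a signature scheme that has \emph{unique signatures} and security parameter $\lambda$: message space $\{0,1\}^k$ with $k=k(\lambda)$ polynomial and superlogarithmic, a deterministic verification circuit $V_{pk}(m,\sigma)$, and the property that for every $pk$ in the support of key generation and every message $m$ there is exactly one $\sigma$ with $V_{pk}(m,\sigma)=1$, namely the honestly produced $\mathrm{Sign}_{sk}(m)$. Then the set of valid pairs $\{(m,\sigma):V_{pk}(m,\sigma)=1\}$ is exactly the graph of the signing function and has size exactly $2^k$. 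Consequently, drawing $m$ uniformly and querying the signing oracle to get $\sigma=\mathrm{Sign}_{sk}(m)$ yields a pair $(m,\sigma)$ distributed \emph{exactly uniformly} over the set of valid pairs --- this is the step that forces the use of unique signatures, since otherwise $g^{-1}(1)$ would contain valid pairs not produced by the signer and the sample we can generate would not be uniform.

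Next I would apply the hypothesized parsimonious reduction from \textsf{CIRCUIT-SAT} to $\C$\textsf{-SAT} to the circuit $V_{pk}$ (which the forger can compute from $pk$), obtaining a function $g=g_{pk}\in\C$ on $n=\poly(\lambda)$ variables. Here I use the standard fact --- one of the technical conditions elided in the informal statement --- that such a reduction comes with efficiently computable, mutually inverse bijections between the satisfying assignments of $V_{pk}$ and those of $g$. Pushing the uniform distribution over valid pairs forward through this bijection gives precisely $\calU_{g^{-1}(1)}$, and $|g^{-1}(1)|=2^k$ is exponentially large.

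The forger then proceeds as follows. Set $\eps=\delta=1/10$ and let $t=t(\lambda)=\poly(\lambda)$ be the sample complexity of $\mathcal{A}$ with these parameters. Query the signing oracle on $t$ i.i.d.\ uniform messages $m_1,\dots,m_t$, map each valid pair $(m_i,\sigma_i)$ forward through the reduction to a satisfying assignment $y_i$ of $g$, and run $\mathcal{A}$ on $y_1,\dots,y_t$; with probability $1-\delta$ this returns a sampler $D$ with $\dtv(D,\calU_{g^{-1}(1)})\le\eps$. Draw one $y\sim D$, map it back through the reduction to a pair $(m,\sigma)$, and output $(m,\sigma)$. Since $\calU_{g^{-1}(1)}$ places total mass at most $t\cdot 2^{-k}$, which is negligible in $\lambda$, on assignments whose preimage uses an already-queried message, $D$ places mass at least $1-\eps-\mathrm{negl}(\lambda)$ on satisfying assignments of $g$ whose preimage is a valid pair with $m\notin\{m_1,\dots,m_t\}$ (and mass $0$ under $\calU_{g^{-1}(1)}$, hence mass $\le\eps$ under $D$, on non-satisfying $y$). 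Thus with probability at least $(1-\delta)(1-\eps-\mathrm{negl}(\lambda))\ge 3/4$ the forger outputs a valid signature on a fresh message. The forger runs in time $\poly(\lambda)$ plus the running time of $\mathcal{A}$ on a $\poly(\lambda)$-variable instance, so if the scheme is secure against subexponential-time adversaries, $\mathcal{A}$ cannot run in subexponential time; matching the concrete scheme's security parameter with $n=\poly(\lambda)$ yields the stated $2^{n^{o(1)}}$-type lower bound.

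I expect the main obstacle to be precisely the bookkeeping that makes the samples the forger can actually obtain genuinely uniform over $g^{-1}(1)$: this is what pins down the need for unique signatures and for a parsimonious reduction that is solution-\emph{bijective} and efficiently invertible, not merely count-preserving. A secondary quantitative point is ensuring the message space is large enough that $\poly$-many signing queries are a negligible fraction of $g^{-1}(1)$, and carefully translating the concrete scheme's hardness parameter into the claimed subexponential bound. Viewed through the ``standard approach'' of the introduction, this attack shows stage (2) is hard for $(\C,R)$: an inverse approximate uniform generator would produce a near-uniform sampler for $g^{-1}(1)$, and even a single near-uniform valid pair breaks unforgeability.
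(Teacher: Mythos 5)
Your proposal is correct and follows essentially the same approach as the paper: the paper's Theorem~\ref{thm:signatures} also converts an inverse approximate uniform generation algorithm for $\C$ into an RMA forger by pushing signed messages through an invertible Levin reduction (your ``solution-bijective and efficiently invertible'' parsimonious reduction is exactly this notion), running the inverse algorithm, and pulling a fresh sample back to obtain a forgery. The only difference in presentation is that you specialize to unique-signature schemes from the outset, whereas the paper first proves a slightly more general theorem for $(\delta,\eta)$-special signature schemes (which quantify near-uniqueness and near-surjectivity of the signing map) and then instantiates with the Micali--Rabin--Vadhan unique signature scheme, which is $(0,0)$-special; your argument corresponds precisely to that $\delta=\eta=0$ instantiation, so nothing is lost.
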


This theorem yields a wide range of hardness results for specific classes 
that show that our positive results (for LTFs and DNF) lie quite close to
the boundary of what classes have efficient
inverse approximate uniform generation algorithms.  We prove:

\begin{corollary} {\bf (Informal statement)} \label{cor:first}
Under known constructions of secure signature schemes,
there is no {subexponential}-time algorithm for the
inverse approximate uniform generation problem for 
either of the following classes of functions:
(i) 3-CNF formulas; (ii) intersections of two halfspaces.

\end{corollary}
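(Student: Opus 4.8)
The plan is to derive both parts of the corollary from Theorem~\ref{thm:hard-sig}: it suffices to exhibit, for each of the two function classes in question, a parsimonious reduction from \textsf{CIRCUIT-SAT} to $\C$\textsf{-SAT} that also meets the (unstated here) technical side-conditions of Theorem~\ref{thm:hard-sig} --- in particular, that the number of variables blows up by at most a polynomial (here, linear) factor, and that the induced bijection between the two solution sets is computable, and invertible, in polynomial time, so that a near-uniform sampler for the target class yields a near-uniform sampler for \textsf{CIRCUIT-SAT} solutions. Since parsimonious reductions compose, it is enough to build a short chain of such reductions terminating at each target class.

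\textbf{3-CNF formulas.} Here I would simply invoke the standard Tseitin/Cook--Levin encoding. Given a circuit $C$ on inputs $y_1,\dots,y_n$ with gates $g_1,\dots,g_s$ (the last of which is the output gate), introduce one fresh Boolean variable per gate together with, for each gate, the $O(1)$ width-$3$ clauses that force the gate variable to equal the gate's value on its inputs (e.g.\ for $g = a\wedge b$ the clauses $(\lnot a\vee\lnot b\vee g),(a\vee\lnot g),(b\vee\lnot g)$), plus the unit clause $(g_s)$. For any assignment to $y_1,\dots,y_n$ the gate variables are uniquely determined, and the resulting 3-CNF is satisfiable by that extension iff $C(y)=1$; hence projection onto the $y$-variables is a bijection from the satisfying assignments of the 3-CNF onto those of $C$, with inverse ``evaluate $C$ to fill in the gate variables.'' This reduction is parsimonious, has linear variable blowup, and both directions of the bijection are poly-time, so Theorem~\ref{thm:hard-sig} applies and yields part (i).

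\textbf{Intersections of two halfspaces.} The key observation is that a $0/1$ \emph{exact} knapsack constraint is an intersection of two halfspaces: for integer weights $a\in\Z^N$ and target $b$, the set $\{x\in\{0,1\}^N : a\cdot x = b\}$ equals $\{x : a\cdot x\le b\}\cap\{x : (-a)\cdot x\le -b\}$, and under the affine change of variables $x_i = (1+z_i)/2$ this becomes an intersection of two halfspaces over $\{-1,1\}^N$ (integrality lets one replace $\le$ by strict inequalities, shifting thresholds by $1/2$ to match the $\sign$ convention). So it suffices to give a parsimonious reduction from \textsf{3-SAT} (to which \textsf{CIRCUIT-SAT} parsimoniously reduces, by the previous paragraph) to \textsf{SUBSET-SUM} with an exact target: produce weights $a$ and target $b$ such that the subsets of $[N]$ summing to exactly $b$ are in efficiently invertible bijection with the satisfying assignments of the 3-CNF. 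Parsimonious reductions of this sort are classical --- this is essentially how one sees that \#\textsf{SUBSET-SUM} is \#P-complete under parsimonious reductions --- and the variable count stays linear in the 3-CNF size while the weights have polynomial bit-length, so the relation ``$z$ satisfies $f$'' remains poly-time and Theorem~\ref{thm:hard-sig} applies, giving part (ii).

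\textbf{Main obstacle.} The one delicate point is the parsimoniousness of the \textsf{3-SAT}$\to$\textsf{SUBSET-SUM} step: the usual textbook construction introduces ``slack'' numbers per clause, and a clause satisfied by two or three literals admits several slack choices producing the same sum, so as normally presented it is not literally parsimonious. I would repair this with one of the standard fixes --- for instance, routing through \textsf{1-in-3-SAT}/\textsf{EXACT-COVER}, or redesigning the clause gadget so that the slack choice is uniquely determined by the number of satisfied literals --- and then verify that the composed reduction still satisfies all the technical hypotheses of Theorem~\ref{thm:hard-sig} (polynomial variable blowup, poly-time invertible solution bijection). The 3-CNF direction, by contrast, is immediate.
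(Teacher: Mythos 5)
Your proposal is correct and matches the paper's proof essentially step for step: the paper handles (i) via the standard Tseitin-style invertible Levin reduction from \textsf{CIRCUIT-SAT} to \textsf{3-CNF-SAT}, and handles (ii) by composing a \textsf{3-SAT} $\to$ \textsf{1-in-3-SAT} reduction (Schaefer) with the textbook \textsf{SUBSET-SUM} gadget --- precisely the ``route through \textsf{1-in-3-SAT}'' fix you single out as the delicate point --- and then observing that an exact knapsack constraint is an intersection of two halfspaces. The only cosmetic difference is terminology (the paper's general theorem is stated in terms of invertible Levin reductions rather than parsimonious reductions, though you correctly identify the required efficiently-invertible solution bijection).
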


We show that our signature-scheme-based hardness approach can be
extended to settings where there is no parsimonious reduction
as described above.  Using ``blow-up''-type constructions of the sort
used to prove hardness of approximate counting, we prove the following:

\begin{theorem} {\bf (Informal statement)}
Under the same assumptions as Corollary~\ref{cor:first},
there is no {subexponential}-time algorithm for the
inverse approximate uniform generation problem for either of the following classes:  
(i) monotone 2-CNF; (ii) degree-2 polynomial threshold functions.
\end{theorem}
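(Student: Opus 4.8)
The plan is to obtain the degree-2 PTF bound essentially for free from the monotone 2-CNF bound. Every 2-CNF $\bigwedge_c(\ell_{c,1}\vee\ell_{c,2})$ over $\{-1,1\}^n$ is \emph{pointwise equal} to a degree-2 PTF: if $u_{c,j}(x)=\tfrac{1-b_{c,j}x_{i(c,j)}}{2}\in\{0,1\}$ denotes the indicator that literal $\ell_{c,j}$ is falsified (with $b_{c,j}=\pm1$ recording the sign of the literal), then the polynomial $p(x)=\tfrac12-\sum_c u_{c,1}(x)u_{c,2}(x)$ has degree at most $2$, equals $\tfrac12$ when every clause is satisfied, and is at most $-\tfrac12$ otherwise, so $\sign(p)$ is exactly the 2-CNF. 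In particular every monotone 2-CNF is (pointwise) a degree-2 PTF, so any sample drawn from $\calU_{f^{-1}(1)}$ for a monotone 2-CNF $f$ is a legitimate input to the inverse approximate uniform generation problem for degree-2 PTFs; a subexponential-time algorithm for the latter would then solve the monotone 2-CNF inverse problem in subexponential time, which part~(i) rules out. So all the work is in part~(i).

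\textbf{Part (i): a distribution-preserving blow-up from 3-CNF.} Here I would reduce from the hardness of inverse approximate uniform generation for 3-CNF formulas established in Corollary~\ref{cor:first}(i). One cannot invoke Theorem~\ref{thm:hard-sig} directly, since there is no satisfiability-preserving reduction from 3-SAT to monotone 2-SAT at all (all monotone 2-CNFs are satisfiable), hence certainly no parsimonious one; instead I would build a \emph{distribution-preserving blow-up} in the spirit of the classical gadget reductions used to prove $\#\mathrm{P}$-hardness and inapproximability of counting satisfying assignments of monotone 2-CNF formulas (equivalently, counting independent sets in a graph). Concretely, the goal is to associate to each 3-CNF $\phi$ on $n$ variables a monotone 2-CNF $\psi$ on $N=\poly(n)$ variables, together with a $\phi$-oblivious randomized encoding map $\rho:\{-1,1\}^n\to\{-1,1\}^N$ and a decoding map $\pi:\{-1,1\}^N\to\{-1,1\}^n$, such that $\rho$ sends $\phi^{-1}(1)$ into $\psi^{-1}(1)$ and pushes $\calU_{\phi^{-1}(1)}$ forward to a distribution $\eps$-close to $\calU_{\psi^{-1}(1)}$, while $\pi$ sends $\calU_{\psi^{-1}(1)}$ (and anything $\eps'$-close to it) back to a distribution $(\eps+\eps')$-close to $\calU_{\phi^{-1}(1)}$. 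Given this package the reduction is immediate, and it operates purely on samples: from samples $z_1,\dots,z_m\sim\calU_{\phi^{-1}(1)}$ (recall we are handed samples, not $\phi$ itself, which is exactly why $\rho$ must be $\phi$-oblivious) one computes $\rho(z_1),\dots,\rho(z_m)$, feeds them to the hypothetical subexponential-time inverse generator for monotone 2-CNF to obtain a sampler $D$ for $\calU_{\psi^{-1}(1)}$, and outputs $\pi\circ D$, which by construction is an $\eps$-sampler for $\calU_{\phi^{-1}(1)}$, contradicting Corollary~\ref{cor:first}(i). Choosing the blow-up parameter polynomial in $n$ keeps $N=\poly(n)$, so ``subexponential in $N$'' is ``subexponential in $n$,'' and the cryptographic hypothesis used is the same one underlying Corollary~\ref{cor:first}.

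\textbf{The main obstacle: the gadget must be ``fiber-uniform'' despite monotonicity.} Essentially all the difficulty is in constructing the package above. Two features of monotone 2-CNF fight against it: it cannot express negations, and it cannot express ``at most one of $y_i,y_j$ is true.'' Thus when a 3-CNF variable $x_i$ is represented by a pair of variables (``$x_i$ true''/``$x_i$ false''), and likewise when auxiliary variables are introduced to bring clause width down to $2$, ``undetermined'' and ``both'' configurations are syntactically unavoidable; worse, such junk configurations tend to \emph{auto-satisfy} the monotone clauses they touch and hence free up \emph{more} dummy variables than the intended configurations do, so a naive blow-up would make the junk \emph{dominate} $|\psi^{-1}(1)|$ --- the opposite of what is required. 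The gadget therefore has to be engineered so that, in spite of monotonicity, the intended satisfying assignments of $\psi$ (those decoding to genuine satisfying assignments of $\phi$) are precisely the ones with the largest fibers under $\pi$, all these fibers have the same size and a simple ``free dummies'' product structure so that $\rho$ can sample a uniform element of a fiber efficiently, and the total mass of the junk assignments is at most $\eps$. This is the ``fiber-uniform'' strengthening of the combinatorial blow-up gadgets used for hardness of approximate counting: there it suffices that the intended assignments dominate the count, whereas here one additionally needs the induced map on fibers to be uniform and mutually consistent across \emph{all} of $\phi^{-1}(1)$ simultaneously. I expect the genuinely delicate step to be exactly this: choosing the width-reduction and monotonization gadgets and the dummy-variable blow-up on top of them so that fiber-uniformity holds globally. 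The remaining pieces --- fixing the blow-up parameter, the total-variation estimates, and checking $\pi\circ\rho=\mathrm{id}$ on $\phi^{-1}(1)$ --- are routine.
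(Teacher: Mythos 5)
Your part (ii) is correct and matches the paper: Corollary~\ref{corr:hardness-PTF} uses the fact that for 0/1 variables $x_{i1}\vee x_{i2}=x_{i1}+x_{i2}-x_{i1}x_{i2}$ to write a monotone 2-CNF with $m$ clauses as $\sign\bigl(\sum_i (x_{i1}+x_{i2}-x_{i1}x_{i2})-m\bigr)$, and then observes that since the satisfying set is literally the same set of points, any inverse approximate uniform generator for degree-2 PTFs is also one for monotone 2-CNFs. Your polynomial $p$ is an equivalent expression of the same identity.

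Your part (i) has a genuine gap, and it is exactly the step you flag as ``delicate'': the requirement that the encoding $\rho$ be $\phi$-oblivious cannot be met by a blow-up gadget of the kind you describe, or of the kind the paper uses. The clause-by-clause vertex-cover gadget and the cloud blow-up are tied to the syntactic structure of $\phi$: to encode a sample $z\in\phi^{-1}(1)$ one must know, for each clause of $\phi$, which of the seven gadget vertices is compatible with $z$ and should therefore get a ``free'' cloud, and this depends on which variables and signs appear in that clause. Without $\phi$, $\rho(z)$ cannot be computed. At a higher level, a fixed $\phi$-oblivious $\rho$ would have to push $\calU_{\phi^{-1}(1)}$ forward to an (approximately) uniform distribution over the vertex covers of a graph simultaneously for \emph{every} 3-CNF $\phi$ — a far stronger structural property than any single map can enjoy. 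You correctly observe at the outset that no satisfiability-preserving Karp reduction from 3-SAT to monotone 2-SAT exists, but the attempted workaround replaces that obstruction with the obliviousness one, which you do not (and I believe cannot, in this form) overcome.

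The paper avoids the obstruction by \emph{not} reducing from the 3-CNF inverse problem. It proves a separate general theorem (Theorem~\ref{thm:monotone}) that hardness follows whenever there is an $\eta$-almost invertible one-many reduction from \textsf{CIRCUIT-SAT} to $\C$ (Definition~\ref{def:one-many}), and then builds such a reduction to \textsf{MONOTONE-2-CNF-SAT} in Lemma~\ref{lem:blow-up}: \textsf{CIRCUIT-SAT} $\to$ \textsf{3-CNF} $\to$ a vertex-cover instance on $7m$ vertices $\to$ a $10m$-fold cloud blow-up $\to$ a monotone 2-CNF. Crucially, in Theorem~\ref{thm:monotone}'s proof the reduction is carried out by the RMA adversary, who \emph{knows} the public key $pk$, hence the verification circuit $V_{pk}$, hence $\Psi=\alpha(V_{pk})$ and the full gadget structure, so obliviousness never arises; the adversary maps each received message-signature pair to a uniformly random point of its fiber in $\Psi^{-1}(1)$ using this knowledge. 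Note also that Definition~\ref{def:one-many} deliberately relaxes what you call ``fiber-uniformity'': equal fiber sizes are required only on the good part $\mathcal{A}_1$, with an $\eta$-fraction of $\Psi^{-1}(1)$ allowed as junk and charged to variation distance. In the concrete gadget the good assignments (size-$6m$ vertex covers of $G$) have fibers of size exactly $(2^{10m}-1)^m$, the junk (covers of size $s>6m$) has strictly smaller fibers $(2^{10m}-1)^{7m-s}$, and the total junk mass is $2^{-\Omega(n)}$. If you do want a reduction from the 3-CNF inverse problem you would first need to \emph{reconstruct} $\phi$ (approximately) from its random satisfying assignments — the paper's own footnote notes this is possible by elimination — and only then form the gadget; that detour has its own error analysis and is not what you wrote.
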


{It is instructive to compare the above hardness results with the 
problem of uniform generation of NP-witnesses. In particular, while it 
is obvious  that no
efficient randomized algorithm can produce even a single satisfying 
assignment of a given \textsf{3-SAT} instance 
(assuming $\mathbf{NP} \not \subseteq \mathbf{BPP}$),
the seminal results of 
Jerrum \etal ~\cite{JVV86} showed that given access to an NP-oracle, it 
is possible to generate approximately uniform satisfying assignments for 
a given \textsf{3-SAT} instance.  It is interesting to ask whether
one requires the full power of adaptive access to NP-oracles for this 
task, or whether a weaker form of ``advice" suffices. 
Our hardness results can be understood in this context as 
giving evidence that receiving polynomially many random 
satisfying assignments of a \textsf{3-SAT} instance does not help in 
further uniform generation of satisfying assignments.\footnote{There is a 
small caveat here in that we are not given the \textsf{3-SAT} formula 
\emph{per se} but rather access to random satisfying assignments of the 
formula. However, there is a simple elimination based algorithm to 
reconstruct {a high-accuracy approximation for} a 
\textsf{3-SAT} formula if we have access to random 
satisfying assignments for the formula.}
}

Our signature-scheme based approach cannot give hardness results for 
problems that have polynomial-time algorithms for the ``forward'' 
problem of sampling approximately uniform satisfying assignments. Our 
second approach to proving computational hardness can (at least 
sometimes) surmount this barrier.  The approach is based on Message 
Authentication Codes in cryptography; the following is an informal 
statement of our general result along these lines (as before the 
following statement ignores some technical conditions; see 
\ifnum\confversion=0
Theorem~\ref{thm:hardness-MAC} for a precise statement):
\else
the full version for a precise statement):
\fi

\begin{theorem} \label{thm:hard-MAC-informal} {\bf (Informal statement)} 
There are known constructions of MACs with the following property: Let 
$\C$ be a class of circuits such that the verification algorithm of the 
MAC can be implemented in $\C$. Then there is no {subexponential}-time 
inverse approximate uniform generation algorithm for $\C$. \end{theorem}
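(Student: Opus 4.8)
The plan is to mirror the structure of the signature-scheme-based hardness argument (Theorem~\ref{thm:hard-sig}), but replace the role played by the forgery-resistance of a signature scheme with the unforgeability of a Message Authentication Code. Recall that a MAC consists of a keyed tagging algorithm and a verification algorithm; the security guarantee is that no polynomial-time adversary, given oracle access to the tagging oracle (and hence many valid message/tag pairs), can produce a valid tag on a fresh message except with negligible probability. The key observation is that for a fixed secret key $k$, the set of valid message/tag pairs $\{(m,t) : \mathrm{Ver}_k(m,t) = 1\}$ is precisely $g_k^{-1}(1)$ for the Boolean function $g_k(m,t) := \mathrm{Ver}_k(m,t)$. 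If the verification algorithm $\mathrm{Ver}$ can be implemented by circuits in the class $\C$, then $g_k \in \C$, and a uniform random valid message/tag pair is exactly a uniform random satisfying assignment of $g_k \in \C$.

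The reduction then proceeds as follows. First I would fix a MAC in which (a) verification is computable in $\C$ and (b) for every key $k$, each message $m$ has exactly one valid tag (a deterministic, ``canonical-tag'' MAC; most standard PRF-based constructions have this form, or can be massaged into it), so that the uniform distribution over $g_k^{-1}(1)$ is the uniform distribution over pairs $(m, \mathrm{Tag}_k(m))$ with $m$ uniform. Next, suppose toward a contradiction that there is a subexponential-time inverse approximate uniform generation algorithm $\alg$ for $\C$. I would show how to build a MAC forger: the forger, given access to the tagging oracle, queries it on polynomially many uniform random messages to obtain a sample of uniform random elements of $g_k^{-1}(1)$; it feeds these to $\alg$ (with a suitable constant $\eps$), obtaining an $\eps$-sampler $D$ for $\calU_{g_k^{-1}(1)}$; it then draws a sample $(m^\ast, t^\ast)$ from $D$, and (after checking $m^\ast$ was not already queried, which happens with overwhelming probability since the message space is exponentially large) outputs $(m^\ast,t^\ast)$ as its forgery. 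Since $D$ is $\eps$-close to the uniform distribution over valid pairs, with probability at least $1-\eps$ over the draw the pair $(m^\ast,t^\ast)$ is in fact valid, i.e.\ $\mathrm{Ver}_k(m^\ast,t^\ast)=1$, so the forger succeeds with non-negligible probability --- contradicting the unforgeability of the MAC.

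The main obstacle, as in the signature-scheme case, is arranging the technical conditions so that the pieces fit together cleanly, and this is exactly what the ``technical conditions'' elided from the informal statement must handle. The chief issues are: (i) ensuring $\C$ is rich enough to implement $\mathrm{Ver}_k$ as a family of circuits indexed by the key, which forces us to instantiate with a MAC whose verification circuit is as shallow/simple as possible; (ii) handling the ``canonical tag'' / unique-witness requirement so that the inverse-generation target distribution is well-defined and so that a single sample from $D$ yields a genuine forgery rather than a replayed query --- here we use that the uniform distribution over $g_k^{-1}(1)$ has exponential support, so a fresh draw is new with overwhelming probability; and (iii) quantifying the running times, so that ``subexponential-time'' for $\alg$ translates to a polynomial (or subexponential-but-still-breaking) forger against a MAC whose security is assumed to hold against such adversaries. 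I expect (i) to be the real content: the theorem is only interesting when instantiated with a concrete MAC construction whose verification fits in a natural low-complexity class $\C$ (and the payoff --- a class $\C$ for which forward approximate uniform generation is easy but the inverse problem is hard --- requires exhibiting such a $\C$ that also admits an efficient forward sampler, which is where the specific recent MAC constructions cited in the introduction come in).

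Finally I would record the immediate consequence advertised in the introduction: instantiating Theorem~\ref{thm:hard-MAC-informal} with a MAC whose verification lies in a class $\C$ that nonetheless supports an efficient ``forward'' approximate uniform generation algorithm yields a separation --- a class for which forward approximate uniform generation is easy while inverse approximate uniform generation is computationally hard. The only delicate point in this last step is checking that the \emph{same} class $\C$ simultaneously (a) contains the verification circuits of the chosen MAC and (b) has an efficient forward sampler for $\C$-SAT; verifying both properties for a concrete construction from the cryptographic literature is what makes the separation concrete rather than conditional on the existence of a suitable $\C$.
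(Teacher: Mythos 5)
Your outline follows the same route as the paper: observe that uniform valid message/tag pairs are exactly uniform satisfying assignments of the function $V_{sk}(m,\sigma) := V(m,sk,\sigma) \in \C$, then use a hypothetical inverse approximate uniform generation algorithm to build a forger that, given random message/tag pairs, produces a sampler close to $\calU_{V_{sk}^{-1}(1)}$ and outputs a draw from it; since the message space is exponentially large, that draw is with overwhelming probability a valid pair $(m^\ast,\sigma^\ast)$ with $m^\ast$ fresh, breaking unforgeability. This is the paper's argument, and you have the running-time/parameter bookkeeping and the ``verification in $\C$'' hypothesis correctly identified as the load-bearing technical conditions.

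There is, however, one substantive mismatch: you insist on a \emph{deterministic, canonical-tag} MAC (each message has exactly one valid tag), arguing this is needed so that the RMA oracle's outputs coincide with uniform draws from $V_{sk}^{-1}(1)$. That requirement is stronger than necessary, and -- more importantly -- it is not satisfied by the concrete MAC the paper ultimately instantiates with. The paper instead defines a \emph{special MAC}: (i) the set of valid tags for each $m$ equals the set of potential tags $\{\sigma : V(m,sk,\sigma)=1\}$; (ii) every message has the same number of valid tags; and (iii) the randomized tagging algorithm outputs a uniformly random valid tag. Under these weaker conditions the RMA samples are still exactly uniform over $V_{sk}^{-1}(1)$, so your reduction goes through verbatim, but now it also covers randomized MACs with many valid tags per message. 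This matters because the paper's chosen instantiation is an LPN-based MAC (verification: accept $(R,Z)$ for $m$ iff $Z + R^T(Xm + x')$ has low Hamming weight) which is manifestly many-to-one and randomized; it is chosen precisely because its verification sits in a low-complexity class (\textsf{BILINEAR-MAJORITY}) that also admits an efficient \emph{forward} uniform generation algorithm, yielding the advertised separation. Your fallback of ``standard PRF-based deterministic MACs'' would put PRF evaluation inside the verification circuit, forcing $\C$ to be rich enough to compute PRFs and thereby making the hardness result uninteresting -- a pitfall the paper explicitly flags. Two smaller points: the paper only assumes RMA security (the adversary is handed random message/tag pairs, no tagging oracle), which is what your forger actually uses anyway; and the paper proves hardness even with accuracy and confidence parameters $\kappa = \eta = 1 - 2^{-n^{\delta}}$, which is considerably stronger than the ``suitable constant $\eps$'' your sketch settles for.
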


We instantiate this general result with a specific construction of a MAC 
that is a slight variant of a construction due to Pietrzak \cite{Pie:12}.
This specific construction yields a class $\C$ for which the ``forward'' 
approximate uniform generation problem is computationally easy, but 
(under a plausible computational hardness assumption) the inverse 
approximate uniform generation problem is computationally hard.

The above construction based on MACs shows that there are problems 
$(\C,R)$ for which the inverse approximate uniform generation problem is 
computationally hard although the ``forward'' approximate uniform 
generation problem is easy.  
\ifnum\confversion=1
As our last result, in the full version we
\else
As our last result, we
\fi
exhibit a group-theoretic problem (based on graph automorphisms) for 
which the reverse situation holds: under a plausible hardness assumption 
the \emph{forward} approximate uniform generation problem is 
computationally hard, but we give an efficient algorithm for the 
\emph{inverse} approximate uniform generation problem (which does not 
follow our general technique or the ``standard approach'').

\medskip

\ifnum\confversion=1
\noindent {\bf Organization.}  Because of space constraints here we
only present our general approach and its application to LTFs;
all other results are in the full version.
\else
\bigskip
\noindent {\bf Structure of this paper.} After the preliminaries in Section~\ref{sec:prelims}, we present in Section~\ref{sec:generaltechnique}
our general upper  bound technique. In Sections~\ref{sec:LTF} and~\ref{sec:DNF} we apply this technique to obtain efficient inverse approximate 
uniform generation algorithms for LTFs and DNFs respectively. Section~\ref{sec:hardness} contains our hardness results. In Section~\ref{sec:graph-auto} we give an example
of a problem for which approximate uniform generation is hard, while the inverse problem is easy. Finally, in Section~\ref{sec:conc} we conclude
the paper suggesting further directions for future work.
\fi

\section{Preliminaries and Useful Tools}
\label{sec:prelims}

\ifnum\confversion=0
\subsection{Notation and Definitions.} \label{ssec:defs}
For $n \in \Z_{+}$, we will denote by $[n]$ the set $\{1, \ldots, n\}$. 
For a distribution $D$ over a finite set $\W$ we denote by
$D(x)$, $x \in \W$, the probability mass that $D$ assigns to point $x$,
so $D(x) \geq 0$ and $\littlesum_{x \in \W} D(x) = 1$. For $S \subseteq \W$,
we write $D(S)$ to denote $\littlesum_{x \in S} D(x)$.
For a finite set $X$ we write $x \in_U X$ to indicate that $x$ is
chosen uniformly at random from $X.$
\fi
\ifnum\confversion=1
\noindent {\bf Notation and definitions.}  
\fi 
For a random variable $x$,
we will write $x \sim D$ to denote that $x$ follows distribution $D$.
Let $D, D'$ be distributions over $\W$.
\ifnum\confversion=1
The {\em total variation distance} between $D$ and $D'$ is
$\dtv (D,D') \eqdef \max_{S \subseteq \W} \left| D(S) - D'(S) \right|$.
\fi
\ifnum\confversion=0
The {\em total variation distance} between $D$ and $D'$ is
$\dtv (D,D') \eqdef \max_{S \subseteq \W} \left| D(S) - D'(S) \right|  = (1/2) \cdot \|D-D'\|_1$,
where $\| D - D' \|_1 = \littlesum_{x \in \W} \left| D(x) - D'(x) \right|$ is the $L_1$--distance between $D$ and $D'$.
\fi

 We will denote by $\C_n$, or simply $\C$, a Boolean concept class,
i.e., a class of functions mapping $\bn$ to $\bits$. {We usually consider
syntactically defined classes of functions such as the class of all 
$n$-variable linear threshold functions or the class of all $n$-variable 
$s$-term DNF formulas. We stress that throughout this paper a class $\C$ is 
viewed as a {\em representation class}.  Thus we will say that an 
algorithm ``takes as input a function $f \in \C$'' to mean that the input 
of the algorithm is a {\em representation} of $f \in \C$.}

We will use the notation $\U_n$ (or simply $\U$, when the dimension $n$ is clear from the context)
for the uniform distribution over $\bn$. Let $f: \bn \to \bits$. We will denote by
$\U_{f^{-1}(1)}$ the uniform distribution over satisfying assignments of $f$.
Let $D$ be  a distribution over $\bn$ with $0 < D(f^{-1}(1)) < 1.$
We write $D_{f,+}$ to denote the conditional distribution $D$
restricted to $f^{-1}(1)$; so for $x \in f^{-1}(1)$ we have
$D_{f,+}(x) = D(x)/D(f^{-1}(1)).$  
\ignore{We similarly write $D_{f,-}$ to denote the conditional
distribution $D$ restricted to $f^{-1}(-1).$} Observe that, with this notation, we have that
$\U_{f^{-1}(1)} \equiv \U_{f, +}$.

\ifnum\confversion=1
We use familiar notions of approximate counting and approximate
uniform generation (see the full version for precise definitions).
\else
We proceed to define the notions of approximate counting and
approximate uniform generation for a class
of Boolean functions:

\begin{definition}[approximate counting]
\label{def:approx-count}
Let $\C$ be a class of $n$-variable Boolean functions. A randomized algorithm $\A^\C_{\co}$ is an {\em efficient approximate
counting algorithm for class $\C$}, if for
any $\eps, \delta>0$ and any $f \in \C$, on input $\eps, \delta$ and {$f \in \C$},
it runs in time $\poly(n, 1/\eps,{\log(1/\delta)})$ and with probability $1-\delta$ outputs a value $\widehat{p}$ such that
\[  {1 \over (1+\eps)} \cdot \Pr_{x \sim \U} [f(x) = 1] \le   \widehat{p} \le (1+\eps) \cdot \Pr_{x \sim \U} [f(x) = 1].\]
\end{definition}

\begin{definition}[approximate uniform generation]
\label{def:approx-unif-gen}
Let $\C$ be a class of $n$-variable Boolean functions.
A randomized algorithm $\A^\C_{\gen}$ is an {\em efficient approximate
uniform generation algorithm for class $\C$}, if for any $\eps>0$
and any $f \in \C$, there is a distribution {
$D=D_{f,\eps}$ supported on $f^{-1}(1)$
with
\[{\frac 1 {1+\eps}} \cdot {\frac 1 {|f^{-1}(1)|}} \leq
D(x) \leq (1+\eps) \cdot
{\frac 1 {|f^{-1}(1)|}}
\]
for each $x \in f^{-1}(1)$,
such that for any $\delta>0$,
on input $\eps,\delta$ and {$f \in \C$}, algorithm $A^\C_{\gen}(\eps,\delta,f)$
runs in time $\poly(n, 1/\eps,\log(1/\delta))$ and
either outputs a point $x \in f^{-1}(1)$ that is distributed precisely
according to $D=D_{f,\eps}$, or outputs $\bot$.
Moreover the probability that it outputs $\bot$ is at most $\delta.$
}
\end{definition}

\ignore{

OLD DEFINITION THAT JUST GUARANTEES d_{TV}:

\begin{definition}[approximate uniform generation]
Let $\C$ be a class of $n$-variable Boolean functions. A randomized algorithm $\A^\C_{\gen}$ is an {\em efficient approximate
uniform generation algorithm for class $\C$}, if for any $\eps>0$
and any $f \in \C$, there is a distribution
$D=D_{f,\eps}$ supported on $f^{-1}(1)$
such that $\dtv(D,\U_{f^{-1}(1)}) \leq \eps$ and
on input $\eps$ and {a circuit computing} $f$, algorithm $A^\C_{\gen}$ runs in time
$\poly(n, 1/\eps)$ and outputs $x \sim D.$
\end{definition}

}

{
An approximate uniform generation algorithm is said to be
\emph{fully polynomial} if its running time dependence on $\eps$
is $\poly(\log(1/\eps)).$}

\ignore{
on input $\eps, \delta$ and
$f$, it runs in time $\poly(n, 1/\eps, 1/\delta)$ and outputs a satisfying assignment $x^{\ast}$ of $f$, i.e., $x^{\star}$ is supported  in $f^{-1}(1)$,
that with probability $1-\delta$ satisfies the following: $x^{\ast} \sim D$, where $\dtv (D, \U_{f^{-1}(1)}) \le \eps.$
}

\fi
Before we define our inverse 
approximate uniform generation problem, 
we need the notion of a {\em sampler} for a distribution:

\begin{definition} \label{def:sampler}
Let $D$ be a distribution over $\bn$.
A \emph{sampler for $D$} is a circuit $C$
with $m=\poly(n)$ input bits $z \in \bits^m$ and
$n$ output bits $x \in \bn$
which is such that when $z \sim \U_m$ 
then  $x \sim D$. 
For $\eps > 0$, an \emph{$\eps$-sampler for $D$} is a sampler
for some distribution $D'$ which has $\dtv(D',D) \leq \eps.$
\end{definition}

For clarity we sometimes write ``$C$ is a $0$-sampler for $D$'' to emphasize the
fact that the outputs of $C(z)$ are distributed {\em exactly} according to
distribution $D.$
We are now ready to formally define the notion of an inverse approximate
uniform generation algorithm:

\begin{definition}[inverse approximate uniform generation] \label{def:inv-approx-unif-gen} 
Let $\C$ be a class of $n$-variable Boolean functions. A randomized algorithm $\A^\C_{\inv}$ is an {\em inverse approximate
uniform generation algorithm for class $\C$}, if for any $\eps, \delta>0$ and any $f \in \C$, on input $\eps, \delta$ and sample access to
$\U_{f^{-1}(1)}$,   with probability $1-\delta$ algorithm $A^\C_{\inv}$ outputs an $\eps$-sampler $C_f$ for $\U_{f^{-1}(1)}$. 
\ignore{
\inote{Do we want to keep this last sentence?}We say that such an algorithm is {\em efficient} if it runs in time $\poly(n, 1/\eps, 1/\delta)$.
}
\end{definition}

\ignore{
{We note that there does not seem to be a useful notion of fully
polynomial inverse approximate uniform generation, since (as is easily
seen) even in very simple cases at least $\Omega(1/\eps)$ samples
are required.}
}

\medskip

\ifnum\confversion=1
\noindent {\bf Hypothesis testing.}
\fi
\ifnum\confversion=0
\subsection{Hypothesis Testing.} \label{ssec:ht}
\fi
Our general approach works 
by generating a collection of hypothesis distributions, one of
which is close to the target distribution $\U_{f^{-1}(1)}$.  Thus, we need a way to select a high-accuracy
hypothesis distribution from a pool of candidate distributions which contains at least one
high-accuracy hypothesis.  This problem has been well studied, see e.g.
Chapter~7 of \cite{DL:01}.  We use the following result which is an extension of Lemma~C.1 of~\cite{DDS12soda}.

\begin{proposition} \label{prop:log-cover-size}
Let $D$ be a distribution over a finite set $\W$
and $\D_\eps = \{ D_j\}_{j=1}^N$ be a collection of $N$ distributions over $\W$
with the property that there exists $i \in [N]$ such that $\dtv(D,D_i) \leq \eps$.
There is an algorithm~${\cal T}^D$, which is given access to:
\begin{enumerate}
\item[(i)] {samplers for} $D$ and $D_k$, for all $k \in [N]$,
\item[(ii)] a {$(1+\beta)$--approximate} evaluation oracle $\eval_{D_k}{(\beta)}$, for all $k \in [N]$,
which, on input $w \in \W$, {deterministically} outputs
{a value $\widetilde{D}^{\beta}_k(w)$, such that $D_k(w) / (1+\beta) \le \widetilde{D}^{\beta}_k(w) \le (1+\beta) D_k (w)$, where
$\beta>0$ is any parameter satisfying $(1+\beta)^2 \le 1+\eps/8$,}
\end{enumerate}
an accuracy parameter $\eps$ and a confidence parameter $\delta$,
and has the following behavior:
It makes 
\[m = O\left( (1/ \eps^{2}) \cdot (\log N + \log(1/\delta)) \right)
\] draws from $D$ and {from} each $D_k$, $k \in [N]$,
and $O(m)$ calls to each oracle $\eval_{D_k}{(\beta)}$, $k \in [N]$, performs $O(m N^2)$ arithmetic operations,
and with probability $1-\delta$ outputs an index $i^{\star} \in [N]$ that satisfies $\dtv(D,D_{i^{\star}}) \leq 6\eps.$
\end{proposition}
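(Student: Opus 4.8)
The plan is to use a Scheffé-style tournament among the $N$ candidate distributions, adapted to handle the fact that we only have approximate evaluation oracles rather than exact ones. First I would recall the basic Scheffé estimator: for any pair $D_j, D_k$, consider the set $A_{jk} = \{w \in \W : D_j(w) > D_k(w)\}$; the key fact is that $|D_j(A_{jk}) - D_k(A_{jk})| = \dtv(D_j, D_k)$, and more usefully, for any distribution $D$, if $D_i$ is the good hypothesis (so $\dtv(D, D_i) \le \eps$) then for every $k$ we have $|D(A_{ik}) - D_i(A_{ik})| \le \eps$. Since we cannot compute $A_{jk}$ exactly (we only have $(1+\beta)$-approximate evaluation oracles), I would instead work with the set $A^\beta_{jk} = \{w : \widetilde{D}^\beta_j(w) > \widetilde{D}^\beta_k(w)\}$, and observe that membership in $A^\beta_{jk}$ can be tested \emph{deterministically} using two oracle calls. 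The approximation means $A^\beta_{jk}$ differs from the true Scheffé set only on points where $D_j(w)$ and $D_k(w)$ are within a $(1+\beta)^2 \le 1+\eps/8$ factor of each other, which is exactly the slack that makes the analysis go through.

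The core estimation step: for each ordered pair $(j,k)$, draw $m = O((1/\eps^2)(\log N + \log(1/\delta)))$ samples from $D$, from $D_j$, and from $D_k$, and use the $A^\beta_{jk}$-membership test to form empirical estimates $\hat{\mu}(D, A^\beta_{jk})$, $\hat{\mu}(D_j, A^\beta_{jk})$, $\hat{\mu}(D_k, A^\beta_{jk})$ of the corresponding true measures. By a Chernoff bound, with the stated $m$ each empirical estimate is within $\pm\eps$ of its true value with probability $1 - \delta/(3N^2)$ (say), so by a union bound over all $O(N^2)$ pairs and $3$ estimates per pair, \emph{all} estimates are simultaneously $\eps$-accurate with probability $\ge 1-\delta$. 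I would then define the tournament: say candidate $D_j$ ``beats'' $D_k$ if $|\hat{\mu}(D, A^\beta_{jk}) - \hat{\mu}(D_j, A^\beta_{jk})| \le |\hat{\mu}(D, A^\beta_{jk}) - \hat{\mu}(D_k, A^\beta_{jk})|$ (plus a small slack term to break near-ties), and output any $D_{i^\star}$ that beats every other candidate, or an arbitrary index if no such winner exists (I should check the winner always exists — the good hypothesis $D_i$ will beat everyone).

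The analysis that $\dtv(D, D_{i^\star}) \le 6\eps$ proceeds by two claims. Claim 1: the good hypothesis $D_i$ beats every $D_k$. This follows because $|\mu(D, A^\beta_{ik}) - \mu(D_i, A^\beta_{ik})| \le \dtv(D, D_i) \le \eps$ (total variation distance dominates the discrepancy on any fixed set), so after accounting for the $\pm\eps$ estimation error on two quantities, $D_i$'s empirical score is at most $\sim 3\eps$, and if $D_k$'s true discrepancy on $A^\beta_{ik}$ were large then $D_k$ would lose. Claim 2: any candidate $D_{i^\star}$ that beats $D_i$ satisfies $\dtv(D, D_{i^\star}) \le 6\eps$. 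Here I use that $\mu(D_{i^\star}, A^\beta_{i^\star i}) - \mu(D_i, A^\beta_{i^\star i})$ is close to $\dtv(D_{i^\star}, D_i)$ — specifically it is at least $\dtv(D_{i^\star}, D_i) - (\text{mass on the }(1+\beta)^2\text{-ambiguous region})$, and that ambiguous-region mass is controlled by $\eps/8$ via the oracle approximation guarantee; combining with the triangle inequality $\dtv(D, D_{i^\star}) \le \dtv(D, D_i) + \dtv(D_i, D_{i^\star})$ and the fact that $D_{i^\star}$ beating $D_i$ bounds the latter discrepancy, the $6\eps$ bound comes out (the constant $6$ absorbs the estimation slack, the oracle slack, and the triangle inequality). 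The resource bounds are immediate from the construction: $O(N^2)$ pairs, $O(m)$ samples and $O(m)$ oracle calls per distribution per pair, and $O(mN^2)$ arithmetic.

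\textbf{Main obstacle.} The delicate point — and the only real difference from the textbook Scheffé argument — is tracking how the $(1+\beta)$-approximation of the evaluation oracles degrades the equality $\mu(D_j, A_{jk}) - \mu(D_k, A_{jk}) = \dtv(D_j, D_k)$ into an inequality with an $O(\eps)$ error term, and verifying that the choice $(1+\beta)^2 \le 1+\eps/8$ keeps this error small enough that the final guarantee is $6\eps$ rather than something larger. I would handle this by a careful case analysis splitting $\W$ into the region where $\widetilde{D}^\beta_j, \widetilde{D}^\beta_k$ and $D_j, D_k$ agree on which is larger, versus the ``ambiguous'' region where $D_j(w), D_k(w)$ are within a $(1+\beta)^2$ factor; on the latter the contribution to any discrepancy is bounded by the total $D_j$-plus-$D_k$ mass there times $\beta/(1+\beta) = O(\eps)$. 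Everything else is a routine union bound and triangle inequality chase.
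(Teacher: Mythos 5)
Your proposal is correct and follows essentially the same route as the paper's proof: a Scheffé-style tournament over all pairs using the approximate Scheffé sets $H^\beta_{jk}$ computable from the evaluation oracles, with the key supporting fact being that $\mu(D_j,H^\beta_{jk})-\mu(D_k,H^\beta_{jk})$ approximates $\dtv(D_j,D_k)$ to within an additive $O(\eps)$ (the paper's Claim~\ref{claim:dtv-approx}, which you prove by the same partition into an ``agreeing'' region and a $(1+\beta)^2$-ambiguous region). The paper's \texttt{Choose-Hypothesis} uses explicit win/lose/draw thresholds rather than your minimum-distance-with-slack comparison, and it draws one batch of $m$ samples up front and reuses them across all $N^2$ competitions (which you should do too, to meet the stated ``$m$ draws from $D$ and each $D_k$'' budget), but these are presentational differences; your two Claims are exactly the paper's ``the true $D_i$ never loses'' and ``any never-loser is $6\eps$-close,'' and your noted obstacle and its resolution match the paper's.
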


\ifnum\confversion=0
\noindent Before we proceed with the proof, we 
\else
We
\fi
note that there are certain crucial differences between the current setting and the setting of~\cite{DDS12soda, DDS12stoc} (as well as other related works that use versions of Proposition~\ref{prop:log-cover-size}). In particular, in our setting, {the set
${\cal W}$ is of size $2^n$, which was not the case in \cite{DDS12soda,DDS12stoc}.}\ignore{the support of the distributions involved is exponential in the input size -- as opposed to polynomial, as was the case
in~\cite{DDS12soda, DDS12stoc}.} Hence, we cannot assume the distributions $D_i$ are given explicitly in the input.
Thus Proposition~\ref{prop:log-cover-size} carefully specifies what kind of access
to these distributions {is required.}\ignore{we have and what kind of operations we need to perform.}
Proposition~\ref{prop:log-cover-size} is an extension of similar results in
the previous works; while the idea of the proof
is essentially the same, the details 
\ifnum\confversion=1
(given in the full version)
\fi
are more involved.

\ifnum\confversion=0

\begin{proof}[Proof of Proposition~\ref{prop:log-cover-size}]
At a high level, the algorithm $\mathcal{T}^D$ performs a tournament by running a ``competition''
{\tt Choose-Hypothesis}$^D$
for every pair of distinct distributions in the collection ${\cal D}_{\eps}$.
It outputs a distribution $D^{\star} \in {\cal D}_\eps$
that was never a loser (i.e., won or achieved a draw in all its competitions).
If no such distribution exists in ${\cal D}_{\eps}$ then the algorithm outputs ``failure.''
We start by describing and analyzing the competition subroutine between a pair of distributions in the collection.

\begin{lemma} \label{lem:choosehypothesis}
In the context of Proposition~\ref{prop:log-cover-size},
there is an algorithm {\tt Choose-Hypothesis}$^D({D}_i,{D}_j,\eps',\delta')$ which is given access to
\begin{enumerate}
\item[(i)] independent samples from $D$ and $D_k$, for $k \in \{ i, j \}$,
\item[(ii)] an evaluation oracle $\eval_{D_k}{(\beta)}$, for $k \in \{ i, j \}$,
\end{enumerate}
an accuracy parameter $\eps'$ and a confidence parameter $\delta'$,
and has the following behavior:
It uses $m'=O\left( (1/{\eps'}^2) \log(1/\delta') \right)$ samples from each of $D$, $D_i$ and $D_j$, it makes
$O(m')$ calls to the oracles $\eval_{D_k}{(\beta)}$, $k \in \{ i, j \}$, performs $O(m')$ arithmetic operations,
and if some $D_k$, $k \in \{ i, j\}$, has $\dtv(D_k, D) \leq \eps'$ then
with probability $1-\delta'$ it
outputs an index $k^{\star} \in \{i, j\} $ that
satisfies $\dtv(D,D_{k^{\star}}) \leq 6\eps'.$
\end{lemma}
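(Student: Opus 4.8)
The plan is to adapt the classical Scheff\'e-set (``hypothesis testing'') estimator of \cite{DL:01} to the present setting, where the only access to $D_i$ and $D_j$ is through samplers and $(1+\beta)$--approximate evaluation oracles. First I would introduce the \emph{approximate Scheff\'e set}
\[
A \;=\; A_{ij} \;=\; \bigl\{\, w \in \W \ :\ \widetilde{D}^{\beta}_i(w) > \widetilde{D}^{\beta}_j(w)\,\bigr\},
\]
a fixed deterministic subset of $\W$ whose indicator is evaluable at any $w$ with two calls to $\eval_{D_i}(\beta)$ and $\eval_{D_j}(\beta)$. The key structural fact I would prove is that $A$ behaves almost like the exact Scheff\'e set $A^\ast = \{w : D_i(w) > D_j(w)\}$: if $w \in A \bigtriangleup A^\ast$ then the oracle and the true densities disagree on the order of $D_i(w),D_j(w)$, which together with $\widetilde{D}^\beta_k(w) \in [\,D_k(w)/(1+\beta),\,(1+\beta)D_k(w)\,]$ forces $D_i(w)$ and $D_j(w)$ to lie within a factor $(1+\beta)^2$ of each other. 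Hence $\bigl|D_i(w) - D_j(w)\bigr| \le ((1+\beta)^2-1)\max\{D_i(w),D_j(w)\}$ on each such point, and summing over $A \bigtriangleup A^\ast$ gives a total mass of at most $(1+\beta)^2-1 \le \eps/8$. Since $D_i(A^\ast) - D_j(A^\ast) = \dtv(D_i,D_j)$ exactly, this yields the two-sided bound
\[
\dtv(D_i,D_j) - \eps/4 \ \le\ D_i(A) - D_j(A)\ \le\ \dtv(D_i,D_j),
\]
where in the application to Proposition~\ref{prop:log-cover-size} one has $\eps'=\eps$, so $\eps/4$ is an $O(\eps')$ slack.

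Next I would describe the estimator. Draw $m' = O\bigl((1/\eps'^2)\log(1/\delta')\bigr)$ i.i.d.\ samples from each of $D$, $D_i$, $D_j$; using the oracle to test membership in $A$, form the empirical frequencies $\widehat\tau$, $\widehat p_i$, $\widehat p_j$ of $A$ under these three samples. By a Hoeffding bound, $m'$ can be chosen so that, except with probability $\delta'$, all three estimates are within $\eps'/8$ of $D(A)$, $D_i(A)$, $D_j(A)$ respectively. The decision rule is the usual tournament rule: if $\widehat p_i - \widehat p_j$ is below a suitably chosen threshold of order $\eps'$, output $i$ (the candidates are then close in total variation distance, so either is acceptable); otherwise output $i$ if $\widehat\tau \ge (\widehat p_i + \widehat p_j)/2$ and $j$ otherwise. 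The complexity bounds claimed in the lemma ($m'$ samples from each of $D,D_i,D_j$, $O(m')$ oracle calls, $O(m')$ arithmetic operations) are then immediate from this description.

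For correctness, assume without loss of generality that $D_i$ is a good hypothesis, say $\dtv(D,D_i) \le \eps'$. If $\dtv(D_i,D_j)$ is small, then by the triangle inequality $\dtv(D,D_j)$ is small as well, and one checks that whichever of $i,j$ is returned satisfies $\dtv(D,D_{k^\ast}) \le 6\eps'$. If $\dtv(D_i,D_j)$ is large, then by the structural bound above $D_i(A) - D_j(A)$ is large, while $|D(A) - D_i(A)| \le \eps'$; hence with high probability over the samples $\widehat\tau$ is strictly closer to $\widehat p_i$ than to $\widehat p_j$, so the algorithm outputs $i$, the good hypothesis. A union bound over the $O(1)$ estimation events bounds the overall failure probability by $\delta'$, and the constant $6$ in $6\eps'$ is chosen to absorb the triangle-inequality losses, the $\eps'/8$ estimation slack, and the $\eps/4$ slack coming from the approximate Scheff\'e set.

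I expect the main obstacle to be exactly this final bookkeeping: fixing the threshold in the decision rule and tracking the several additive error terms --- sampling error ($\eps'/8$), approximate-evaluation error ($\le \eps/8$ per region of $A \bigtriangleup A^\ast$), and triangle-inequality losses --- so that the case distinction in the decision rule is exhaustive and every case lands inside the promised $6\eps'$ bound. Conceptually the only new ingredient relative to prior uses of this tournament technique (e.g.\ \cite{DL:01,DDS12soda}) is the approximate evaluation oracle; once the factor-$(1+\beta)^2$ bound on the ``ambiguous'' region is in place, the remainder is the standard Scheff\'e-tournament analysis.
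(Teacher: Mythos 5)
Your proposal is correct and takes essentially the same route as the paper: you introduce the same approximate Scheff\'e set $A = \{w : \widetilde{D}^{\beta}_i(w) > \widetilde{D}^{\beta}_j(w)\}$, prove the same key fact that the $(1+\beta)$-multiplicative oracles distort the Scheff\'e gap by only an additive $O(\eps)$ (this is the paper's Claim~\ref{claim:dtv-approx}), and then run a sampling-based Scheff\'e tournament with the standard Chernoff/union-bound analysis. The only difference is cosmetic: you use the classical midpoint rule $\widehat\tau \ge (\widehat p_i + \widehat p_j)/2$, whereas the paper compares $\tau$ against the two separate thresholds $\widetilde{p}^{\beta}_{i,j} - \tfrac{13}{8}\eps'$ and $\widetilde{q}^{\beta}_{i,j} + \tfrac{13}{8}\eps'$ and allows a ``draw'' in between; both rules lead to the same $6\eps'$ guarantee.
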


\ignore {
\inote{Regarding the algorithm's running time: If we want to be all honest here, the running time is certainly not what I wrote above for the following reasons:
First, in step 1 we need to subtract these two numbers that are explicitly given to us in the input; so, the bit operations are linear in their bit complexities.
Second, to decide whether a sample belongs to the set $H_{ij}$ we similarly need to compare two probability values that are given to us by the evaluation oracles,
however we do not know a priori how many bits they have..they could even be irrational. Now, of course I could include all these things to the bit complexity of the algorithm.
However, it seems that in the approximate version of the theorem, things will be a bit simpler..indeed, in this case we will only have approximations for the $p_{i,j}, q_{i,j}$
up t an additive $\eps'/100$ or sth and these approximations should suffice. A similar thing should hold for the computation of $\tau$, since the evaluation oracles will be approximate, but at this point I'm not totally clear on that.}
}

\begin{proof}
To set up the competition between $D_i$ and $D_j$, we consider the following subset of ${\cal W}$:
\begin{eqnarray*}
H_{ij} = H_{ij}(D_i, D_j) \eqdef \{ w \in \W \mid D_i(w) \ge D_j(w) \}
\end{eqnarray*}
and the corresponding probabilities $p_{i,j} \eqdef D_i (H_{ij})$ and $q_{i,j} \eqdef D_j(H_{ij})$. Clearly, it holds $p_{i, j} \ge q_{i, j}$ and
by definition of the total variation distance we can write $$\dtv(D_i, D_j) = p_{i,j}-q_{i,j}.$$

\nnew{For the purposes of our algorithm, we would ideally want oracle access to the set $H_{ij}$.
Unfortunately though, this is not possible since the evaluation oracles are only approximate.  Hence, we will need to define a more robust version
of the set $H_{ij}$ which will turn out to have similar properties. In particular, we consider the set
\begin{eqnarray*}
H^{\beta}_{ij}  \eqdef \{ w \in \W \mid \widetilde{D}^{\beta}_i(w) \ge \widetilde{D}^{\beta}_j(w) \}
\end{eqnarray*}
and the corresponding probabilities $p^{\beta}_{i,j} \eqdef D_i (H^{\beta}_{ij})$ and $q^{\beta}_{i,j} \eqdef D_j(H^{\beta}_{ij})$.
We claim that the difference $\Delta \eqdef p^{\beta}_{i,j} - q^{\beta}_{i,j}$ is an accurate approximation to $\dtv(D_i, D_j)$. In particular, we show:
\begin{claim} \label{claim:dtv-approx}
We have
\begin{equation} \label{eqn:dtv-approx}
\Delta    \le \dtv(D_i, D_j) \le \Delta +\eps/4.
\end{equation}
\end{claim}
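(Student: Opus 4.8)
The plan is to prove the two inequalities in~\eqref{eqn:dtv-approx} separately, exploiting the fact that $H_{ij}$ is exactly the set that witnesses $\dtv(D_i,D_j)$. Since $\dtv(D_i,D_j)=\max_{S\subseteq\W}\bigl(D_i(S)-D_j(S)\bigr)$ and this maximum is attained at $S=H_{ij}=\{w:D_i(w)\ge D_j(w)\}$, we have $\dtv(D_i,D_j)=D_i(H_{ij})-D_j(H_{ij})=p_{i,j}-q_{i,j}$. For the lower bound $\Delta\le\dtv(D_i,D_j)$, I would simply observe that $H^\beta_{ij}$ is one particular subset of $\W$, so $\Delta=D_i(H^\beta_{ij})-D_j(H^\beta_{ij})\le\max_{S}\bigl(D_i(S)-D_j(S)\bigr)=\dtv(D_i,D_j)$.

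The upper bound $\dtv(D_i,D_j)\le\Delta+\eps/4$ is the substantive part. I would write $\dtv(D_i,D_j)-\Delta=\bigl[D_i(H_{ij})-D_j(H_{ij})\bigr]-\bigl[D_i(H^\beta_{ij})-D_j(H^\beta_{ij})\bigr]$ and decompose this over the symmetric difference of $H_{ij}$ and $H^\beta_{ij}$. Setting $A\eqdef H_{ij}\setminus H^\beta_{ij}$ and $B\eqdef H^\beta_{ij}\setminus H_{ij}$, the contribution of $H_{ij}\cap H^\beta_{ij}$ cancels and one obtains $\dtv(D_i,D_j)-\Delta=\bigl[D_i(A)-D_j(A)\bigr]+\bigl[D_j(B)-D_i(B)\bigr]$. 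It then suffices to show that each of the two bracketed terms is nonnegative and at most $\eps/8$.

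For $A$: if $w\in A$ then $w\in H_{ij}$, so $D_i(w)\ge D_j(w)$, and $w\notin H^\beta_{ij}$, so $\widetilde{D}^\beta_i(w)<\widetilde{D}^\beta_j(w)$. Combining the latter with the oracle guarantees $\widetilde{D}^\beta_i(w)\ge D_i(w)/(1+\beta)$ and $\widetilde{D}^\beta_j(w)\le(1+\beta)D_j(w)$ gives $D_i(w)<(1+\beta)^2D_j(w)$, hence $0\le D_i(w)-D_j(w)<\bigl((1+\beta)^2-1\bigr)D_j(w)$. Summing over $w\in A$ and using $(1+\beta)^2\le 1+\eps/8$ together with $D_j(A)\le 1$ yields $0\le D_i(A)-D_j(A)\le(\eps/8)\,D_j(A)\le\eps/8$. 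The term for $B$ is handled symmetrically, with the roles of $i$ and $j$ (and of the two oracle inequalities) swapped: for $w\in B$ one gets $D_i(w)<D_j(w)$ and $D_j(w)<(1+\beta)^2D_i(w)$, hence $0\le D_j(B)-D_i(B)\le\eps/8$. Adding the two bounds gives $\dtv(D_i,D_j)-\Delta\le\eps/4$, as desired.

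The only mildly delicate point is the bookkeeping with the approximate evaluation oracles — making sure the factor that enters is $(1+\beta)^2$ rather than $(1+\beta)$, and that it is controlled by the hypothesis $(1+\beta)^2\le 1+\eps/8$ so that the accumulated error over all of $A$ (resp. $B$) stays below $\eps/8$. Everything else is elementary set manipulation and the definition of total variation distance, so I do not anticipate any genuine obstacle.
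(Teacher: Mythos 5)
Your proof is correct and takes essentially the same approach as the paper's: decompose $\dtv(D_i,D_j)-\Delta$ over the symmetric difference of $H_{ij}$ and $H^\beta_{ij}$, then bound each piece by $\eps/8$ using the multiplicative oracle guarantees and the hypothesis $(1+\beta)^2\le 1+\eps/8$. The one minor difference is cosmetic: you get the lower bound $\Delta\le\dtv(D_i,D_j)$ for free from the definition of total variation distance as a maximum over subsets, whereas the paper derives it from the same sign analysis of the two pieces of the symmetric difference used for the upper bound; both are valid and the substantive part (the upper bound) is identical in structure.
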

Before we proceed with the proof, we stress that (\ref{eqn:dtv-approx}) crucially
uses our assumption that the evaluation oracles provide a {\em multiplicative} approximation
to the exact probabilities.
\begin{proof}
To show (\ref{eqn:dtv-approx}) we proceed as follows: Let $A = H_{ij} \cap H^{\beta}_{ij}$,
$B = H_{ij} \cap \overline{H^{\beta}_{ij}}$ and $C = \overline{H_{ij}} \cap H^{\beta}_{ij}$. Then we can write
\[  \dtv(D_i, D_j) = (D_i-D_j)(A) + (D_i-D_j)(B)  \] and
\[  \Delta = (D_i-D_j)(A) + (D_i-D_j)(C).\]
We will show that
\begin{equation} \label{eqn:one}
0 \le (D_i-D_j)(B) \le \eps/8
\end{equation}
and  similarly
\begin{equation} \label{eqn:two}
-\eps/8 \le (D_i-D_j)(C) \le 0
\end{equation}
from which the claim follows.
We proceed to prove (\ref{eqn:one}), the proof of (\ref{eqn:two}) being very similar.
Let $w \in B$. Then $D_i(w) \ge D_j(w)$ (since $w \in H_{ij}$)
which gives $(D_i-D_j)(B) \ge 0$, establishing the LHS of  (\ref{eqn:one}).
We now establish the RHS. For $w \in B$ we also have that $ \widetilde{D}^{\beta}_i(w) < \widetilde{D}^{\beta}_j(w)$
(since $w \in  \overline{H^{\beta}_{ij}}$). Now by the definition of the evaluation oracles, it follows that
$\widetilde{D}^{\beta}_i(w) \ge \frac{D_i(w)}{(1+\beta)}$ and $\widetilde{D}^{\beta}_j(w) \le (1+\beta) D_j(w)$.
Combining these inequalities yields $$D_i(w) \le (1+\beta)^2 D_j(w) \le (1+\eps/8)D_j(w)$$
where the second inequality follows by our choice of $\beta$. Therefore,
$$(D_i - D_j)(B) = \littlesum_{w \in B} \left( D_i(w) - D_j(w) \right) \le (\eps/8) \cdot D_j(B) \le \eps/8$$
as desired.
\end{proof}
}

\noindent \nnew{Note that the probabilities $p^{\beta}_{i,j}$ and $q^{\beta}_{i,j}$ are not available to us explicitly.} Hence,
{\tt Choose-Hypothesis} requires a way to empirically estimate each of these \new{probability} values
(up to a small additive accuracy). This task can be done efficiently because we have sample access to the distributions $D_i, D_j$ and oracle access to
the set \nnew{$H^{\beta}_{ij}$}
\new{thanks to the $\eval_{D_k}\nnew{(\beta)}$ oracles.} The following claim provides the details:
\begin{claim} \label{claim:sample-pij}
There exists a subroutine  {\tt Estimate}$({D}_i, \nnew{H^{\beta}_{ij}},  \gamma, \delta)$ which is given access to
\begin{enumerate}
\item[(i)] independent samples from $D_i$,
\item[(ii)] an evaluation oracle $\eval_{D_k}\nnew{(\beta)}$, for $k \in \{ i, j \}$,
\end{enumerate}
an accuracy parameter $\gamma$ and a confidence parameter $\delta$,
and has the following behavior:
It makes $m=O\left( (1/{\gamma}^2) \log(1/\delta) \right)$ draws from $D_i$
and $O(m)$ calls to the oracles $\eval_{D_k}\nnew{(\beta)}$, $k=i, j$, performs $O(m)$ arithmetic operations,
and with probability  $1-\delta$ outputs a number $\widetilde{p}^{\beta}_{i,j}$ such that $|\widetilde{p}^{\beta}_{i,j} - p^{\beta}_{i,j}| \leq \gamma.$
\end{claim}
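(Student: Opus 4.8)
\textbf{Proof proposal for Claim~\ref{claim:sample-pij}.}
The plan is to establish the claim by a routine empirical-estimation argument: the subroutine {\tt Estimate} simply draws enough samples from $D_i$, counts how many of them fall inside $H^{\beta}_{ij}$, and appeals to an additive Chernoff/Hoeffding bound. The only point that needs to be made carefully is \emph{why membership in $H^{\beta}_{ij}$ is decidable from the resources the subroutine is given}, and this is precisely the reason $H^{\beta}_{ij}$ was defined in terms of the (deterministic) approximate evaluation oracles rather than in terms of the true densities $D_i, D_j$.

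Concretely, I would have {\tt Estimate}$(D_i, H^{\beta}_{ij}, \gamma, \delta)$ proceed as follows. First, set $m = c\cdot(1/\gamma^2)\log(1/\delta)$ for a suitable absolute constant $c$, and use the sampler for $D_i$ to draw i.i.d. points $x_1,\dots,x_m \sim D_i$. Second, for each $\ell \in [m]$, invoke $\eval_{D_i}(\beta)$ and $\eval_{D_j}(\beta)$ on input $x_\ell$ to obtain the values $\widetilde{D}^{\beta}_i(x_\ell)$ and $\widetilde{D}^{\beta}_j(x_\ell)$, and set $b_\ell = \mathbf{1}\{\widetilde{D}^{\beta}_i(x_\ell) \ge \widetilde{D}^{\beta}_j(x_\ell)\}$; by the very definition of $H^{\beta}_{ij}$ this equals $\mathbf{1}\{x_\ell \in H^{\beta}_{ij}\}$. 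Third, output $\widetilde{p}^{\beta}_{i,j} = (1/m)\sum_{\ell=1}^{m} b_\ell$.

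For correctness, the key observation to record is that the $b_\ell$ are i.i.d. Bernoulli random variables with mean \emph{exactly} $\Pr_{x\sim D_i}[x \in H^{\beta}_{ij}] = D_i(H^{\beta}_{ij}) = p^{\beta}_{i,j}$: although the oracles only approximate $D_i$ and $D_j$ multiplicatively, the set $H^{\beta}_{ij}$ is by construction cut out by the oracle outputs, so $b_\ell$ is a faithful (error-free) indicator of the event $x_\ell \in H^{\beta}_{ij}$. Hoeffding's inequality then yields $\Pr\bigl[\,|\widetilde{p}^{\beta}_{i,j} - p^{\beta}_{i,j}| > \gamma\,\bigr] \le 2\exp(-2m\gamma^2) \le \delta$ for the stated choice of $m$. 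The resource bounds are immediate: $m = O((1/\gamma^2)\log(1/\delta))$ draws from $D_i$, exactly two oracle calls per draw (hence $O(m)$ calls to $\eval_{D_k}(\beta)$, $k=i,j$, in total), and $O(m)$ arithmetic operations for the comparisons and the final average.

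The ``main obstacle'' here is in fact essentially absent: all the delicate content has already been absorbed into Claim~\ref{claim:dtv-approx}, which controls the bias incurred by replacing $H_{ij}$ with $H^{\beta}_{ij}$. Once $H^{\beta}_{ij}$ is set up as the set determined by the approximate evaluation oracles, estimating its $D_i$-measure is pure sampling and counting. The one subtlety worth flagging explicitly in the write-up is exactly the point made above --- namely that the multiplicative quality of the oracles is irrelevant to Claim~\ref{claim:sample-pij} itself and enters only in the surrounding analysis.
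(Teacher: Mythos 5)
Your proposal is correct and takes essentially the same approach as the paper: draw $m = O((1/\gamma^2)\log(1/\delta))$ samples from $D_i$, decide membership in $H^{\beta}_{ij}$ exactly via the two evaluation oracles (since $H^{\beta}_{ij}$ is cut out by the oracle outputs, not the true densities), average the resulting indicators, and apply an additive Chernoff/Hoeffding bound. The point you flag as the ``one subtlety'' --- that the multiplicative error of the oracles is irrelevant to this claim because the indicator is error-free --- is exactly the observation the paper's proof relies on.
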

\begin{proof}
The desired subroutine amounts to a straightforward random sampling procedure, which we include here for the sake of completeness.
We will use the following elementary fact, a simple consequence of the additive Chernoff bound.
\begin{fact} \label{fact:rs}
Let $X$ be a random variable taking values in the range $[-1, 1]$.
Then $\E[X]$ can be estimated to within an additive $\pm \tau$, with confidence probability
$1-\delta$, using $m = \Omega((1/\tau^2)\log(1/\delta))$ independent samples from $X$. In particular,
the empirical average $\widehat{X}_m = (1/m) \littlesum_{i=1}^m X_i$, where the $X_i$'s are independent samples of $X$,
satisfies $\Pr \left[ | \widehat{X}_m - \E[X] | \leq \tau \right] \geq 1-\delta.$
\end{fact}
\noindent We shall refer to this as ``empirically estimating'' the value of $\E[X]$.

Consider the indicator function $I_{H^{\beta}_{ij}}$ of the set $H^{\beta}_{ij}$, i.e., $I_{H^{\beta}_{ij}}: \W \to \{0, 1\}$
with $I_{H^{\beta}_{ij}}(x) = 1$ if and only if
$x \in H^{\beta}_{ij}$. It is clear that $\E_{x \sim D_i} \left[ I_{H^{\beta}_{ij}}(x) \right] = D_i (H^{\beta}_{ij}) = p^{\beta}_{i,j}$.
The subroutine is described in the following pseudocode:
\begin{framed}
\noindent Subroutine  {\tt Estimate}$({D}_i, H^{\beta}_{ij},  \gamma, \delta)$:

\smallskip

\noindent {\bf Input:} Sample access to $D_i$ and oracle access to $\eval_{D_k}\nnew{(\beta)}$, $k=i, j$.

\noindent {\bf Output:} A number $\widetilde{p}^{\beta}_{ij}$ such that with probability $1-\delta$
it holds $|\widetilde{p}^{\beta}_{ij} - D_i (H^{\beta}_{ij}) | \leq \gamma.$

\begin{enumerate}

\item  Draw $m=\Theta \left( (1/{\gamma}^2)  \log(1/\delta)\right)$ samples
$\mathbf{s} = \{ s_{\ell} \}_{\ell=1}^{m}$ from $D_i$.

\item For each sample $s_{\ell}$, $\ell \in [m]$:
\begin{enumerate}
\item[(a)] Use the oracles $\eval_{D_i}\nnew{(\beta)}$, $\eval_{D_j}\nnew{(\beta)}$, to approximately evaluate $D_i(s_{\ell})$,  $D_j(s_{\ell})$.

\item[(b)] If \nnew{$\widetilde{D}^{\beta}_i(s_{\ell}) \ge \widetilde{D}^{\beta}_j(s_{\ell})$}
set $I_{H^{\beta}_{ij}}(s_{\ell}) =1$, otherwise $I_{H^{\beta}_{ij}}(s_{\ell}) =0$.
\end{enumerate}

\item Set $\widetilde{p}^{\beta}_{ij} = {1 \over m} \littlesum_{\ell=1}^m
I_{H^{\beta}_{ij}}(s_{\ell}).$

\item  Output $\widetilde{p}^\nnew{{\beta}}_{ij}.$
\end{enumerate}
\end{framed}
The computational efficiency of this simple random sampling procedure
follows from the fact that we can efficiently decide membership in $H^{\beta}_{ij}$. To do this, for a given $x \in \W$, we make a query to
each of the oracles $\eval_{D_i}\nnew{(\beta)}$, $\eval_{D_j}\nnew{(\beta)}$ to
obtain the probabilities
\nnew{$\widetilde{D}^{\beta}_i(x)$, $\widetilde{D}^{\beta}_j(x)$}. We
have that $x \in H^{\beta}_{ij}$ (or equivalently $I_{H^{\beta}_{ij}}(x) =1$) if and only if
\nnew{$\widetilde{D}^{\beta}_i(x) \ge \widetilde{D}^{\beta}_j(x)$}.
By Fact~\ref{fact:rs}, applied for the random variable $I_{H^{\beta}_{ij}}(x)$, where $x \sim D_i$, after
$m = \Omega((1/\gamma^2)\log(1/\delta))$ samples from $D_i$ we obtain
a $\pm \gamma$-additive estimate to $p^{\beta}_{i,j}$ with
probability $1-\delta$. For each sample, we make one query to each of the oracles, hence the total number of oracle queries is
$O(m)$ as desired. The only non-trivial arithmetic operations
are the $O(m)$ comparisons done in Step~2(b), and Claim~\ref{claim:sample-pij}
is proved.
\end{proof}

Now we are ready to prove Lemma~\ref{lem:choosehypothesis}.
The algorithm {\tt Choose-Hypothesis}$^D({D}_i,{D}_j,\eps',\delta')$ performing the competition between $D_i$ and $D_j$
is the following:
\begin{framed}
\noindent Algorithm  {\tt Choose-Hypothesis}$^D({D}_i,{D}_j,\eps',\delta')$:

\smallskip

\noindent {\bf Input:} Sample access to $D$ and $D_k$, $k=i, j$, oracle access to $\eval_{D_k}\nnew{(\beta)}$, $k=i, j$.

\begin{enumerate}

\item Set $\widetilde{p}^{\beta}_{i,j} =${\tt Estimate}$({D}_i, H^{\beta}_{ij},  \eps'/8, \delta'/4)$ and
                 $\widetilde{q}^{\beta}_{i,j} =${\tt Estimate}$({D}_j,
H^{\beta}_{ij},  \eps'/8, \delta'/4)$.

\item  If $\widetilde{p}^{\beta}_{i,j}-\widetilde{q}^{\beta}_{i,j} \leq \nnew{9 \eps' / 2}$, declare a draw and return either $i$ or $j$. Otherwise:

\item  Draw $m'=\Theta \left( (1/{\eps'}^2)  \log(1/\delta')\right)$ samples
$\mathbf{s'} = \{ s_{\ell} \}_{\ell=1}^{m'}$ from $D$.

\item For each sample $s_{\ell}$, $\ell \in [m']$:
\begin{enumerate}
\item[(a)] Use the oracles $\eval_{D_i}\nnew{(\beta)}$, $\eval_{D_j}\nnew{(\beta)}$ to evaluate \nnew{$\widetilde{D}^{\beta}_i(s_{\ell})$,
$\widetilde{D}^{\beta}_j(s_{\ell})$.}

\item[(b)] If \nnew{$\widetilde{D}^{\beta}_i(s_{\ell}) \ge \widetilde{D}^{\beta}_j(s_{\ell})$} set $I_{H^{\beta}_{ij}}(s_{\ell}) =1$,
otherwise $I_{H^{\beta}_{ij}}(s_{\ell}) =0$.
\end{enumerate}

\item Set $\tau = {1 \over m'} \littlesum_{\ell=1}^{m'}  I_{H^{\beta}_{ij}}(s_{\ell})$,
          i.e., $\tau$ is the fraction of  samples that fall inside $H^{\beta}_{ij}.$

\item  If $\tau > \widetilde{p}^{\beta}_{i, j}- \new{{13 \over 8}} \eps'$, declare $D_i$ as winner and return $i$; otherwise,

\item  if $\tau < \widetilde{q}^{\beta}_{i,j}+ \new{{13 \over 8}} \eps'$, declare $D_j$ as winner and return $j$; otherwise,

\item  declare a draw and return either $i$ or $j$.
\end{enumerate}
\end{framed}

It is not hard to check that the outcome of the competition does not depend on the ordering of the pair of distributions provided in the input; that is, on inputs $(D_i,D_j)$ and
$(D_j, D_i)$ the competition outputs the same result for a fixed set of samples $\{ s_1,\ldots,s_{m'}\}$ drawn from $D$.

The upper bounds on sample complexity, query complexity and number of arithmetic operations can be straightforwardly verified.
Hence, it remains to show correctness.
\new{By Claim~\ref{claim:sample-pij} and a union bound, with probability at least $1-\delta'/2$, we will have that
$|\widetilde{p}^{\beta}_{i,j} - p^{\beta}_{i,j}| \le \eps'/8$ and
$|\widetilde{q}^{\beta}_{i,j} - q^{\beta}_{i,j}| \le \eps'/8$. In the following, we condition on this good event.}
The correctness of {\tt Choose-Hypothesis} is then an immediate consequence of the following claim.

\begin{claim} \label{claim:correct-ht}
Suppose that $\dtv(D, D_i) \leq \eps'$. Then:
\begin{enumerate}
\item[(i)] If $\dtv(D, D_j)>6 \eps'$, then the probability that the competition between
$D_i$ and $D_j$ does not declare $D_i$ as the winner is at most $e^{- {m' \eps'^2/8 } }$.  (Intuitively,
if $D_j$ is very far from $D$ then it is very likely that $D_i$ will be declared winner.)

\item[(ii)]  \ignore{If $\dtv(D, D_j)>\nnew{13 \eps'/4}$, then} The probability that the competition between
$D_i$ and $D_j$ declares $D_j$ as the winner is at most $e^{- {m' \eps'^2/ 8 } }$.  (Intuitively, \new{since $D_i$ is close to $D$,
a draw with some other $D_j$ is possible}, but it is very unlikely that $D_j$ will be declared winner.)
\end{enumerate}
\end{claim}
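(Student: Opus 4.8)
The plan is to prove both parts by a direct case analysis over the branches of algorithm {\tt Choose-Hypothesis}, using a single probabilistic ingredient: a Chernoff bound on the empirical quantity $\tau$ computed in Steps~3--5. Recall we have already conditioned on the ``good event'' (of probability at least $1-\delta'/2$ by Claim~\ref{claim:sample-pij}) that $|\widetilde p^{\beta}_{i,j}-p^{\beta}_{i,j}|\le\eps'/8$ and $|\widetilde q^{\beta}_{i,j}-q^{\beta}_{i,j}|\le\eps'/8$; this event depends only on the samples drawn inside the two {\tt Estimate} calls, so the sample $\mathbf{s'}$ of $m'$ points drawn from $D$ in Step~3, which alone determines $\tau$, is still fresh. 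Since $\E[\tau]=D(H^{\beta}_{ij})$ and $\tau$ is an average of $m'$ i.i.d.\ $\{0,1\}$ random variables, the additive Chernoff bound (cf.\ Fact~\ref{fact:rs}) gives $\Pr[\tau<D(H^{\beta}_{ij})-t]\le e^{-2m't^2}$ for every $t>0$, and $t=\eps'/4$ already yields the target exponent $e^{-m'\eps'^2/8}$. Besides this we use only three deterministic facts: $|D(H^{\beta}_{ij})-D_i(H^{\beta}_{ij})|\le\dtv(D,D_i)$; the triangle inequality for total variation distance; and Claim~\ref{claim:dtv-approx}, namely $\Delta\le\dtv(D_i,D_j)\le\Delta+\eps'/4$ with $\Delta=p^{\beta}_{i,j}-q^{\beta}_{i,j}$ (legitimate provided the accuracy parameter $\eps'$ passed to {\tt Choose-Hypothesis} is the one for which $\beta$ was fixed).

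For part (i), assume $\dtv(D,D_i)\le\eps'$ and $\dtv(D,D_j)>6\eps'$. I first show Step~2 does not declare a draw: by the triangle inequality $\dtv(D_i,D_j)>5\eps'$, so $\Delta>5\eps'-\eps'/4$ by Claim~\ref{claim:dtv-approx}, and absorbing the two $\eps'/8$ estimation slacks gives $\widetilde p^{\beta}_{i,j}-\widetilde q^{\beta}_{i,j}>\tfrac92\eps'$, failing the Step~2 test. Hence the algorithm reaches Step~6, and I claim that whenever $\tau\ge D(H^{\beta}_{ij})-\eps'/4$ one has $\tau>\widetilde p^{\beta}_{i,j}-\tfrac{13}{8}\eps'$ (using $D(H^{\beta}_{ij})\ge p^{\beta}_{i,j}-\eps'$ to get $\tau\ge p^{\beta}_{i,j}-\tfrac54\eps'$, and $\widetilde p^{\beta}_{i,j}\le p^{\beta}_{i,j}+\eps'/8$ to get $\widetilde p^{\beta}_{i,j}-\tfrac{13}{8}\eps'\le p^{\beta}_{i,j}-\tfrac32\eps'$), so Step~6 declares $D_i$ the winner. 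The Chernoff bound with $t=\eps'/4$ bounds the probability of the complementary event $\tau<D(H^{\beta}_{ij})-\eps'/4$ by $e^{-m'\eps'^2/8}$, which is exactly the claimed bound (the event ``$D_i$ is not declared winner'' is contained in ``Step~2 draws'' $\cup$ ``Step~6 fails,'' and the first has conditional probability $0$).

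For part (ii), assume only $\dtv(D,D_i)\le\eps'$. If Step~2 declares a draw, $D_j$ is not declared the winner, so I may assume $\widetilde p^{\beta}_{i,j}-\widetilde q^{\beta}_{i,j}>\tfrac92\eps'$, which (with the estimation slacks) forces $p^{\beta}_{i,j}-q^{\beta}_{i,j}>\tfrac{17}{4}\eps'$. For $D_j$ to be declared the winner the algorithm must reach Step~7 with $\tau<\widetilde q^{\beta}_{i,j}+\tfrac{13}{8}\eps'\le q^{\beta}_{i,j}+\tfrac74\eps'$; but $D(H^{\beta}_{ij})\ge p^{\beta}_{i,j}-\eps'>q^{\beta}_{i,j}+\tfrac{13}{4}\eps'$, so $D_j$ winning forces $\tau<D(H^{\beta}_{ij})-\tfrac32\eps'$, an event of probability at most $e^{-2m'(3\eps'/2)^2}\le e^{-m'\eps'^2/8}$. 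Note that part (ii) never invokes Claim~\ref{claim:dtv-approx}, so the consistency issue between $\eps$ and $\eps'$ raised above is irrelevant here.

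The main work is bookkeeping rather than conceptual: one must pin down the numerical constants in Steps~2, 6 and 7 ($\tfrac92$ and the two $\tfrac{13}{8}$'s) together with the $\eps'/8$ estimation slacks so that every chained inequality closes with a constant amount of slack, and keep the accuracy parameter governing $\beta$ (used in Claim~\ref{claim:dtv-approx}) consistent with the $\eps'$ fed to {\tt Choose-Hypothesis}. The one genuine subtlety worth stating explicitly is the independence observation used throughout: ``Step~2 declares a draw'' is a deterministic function of the already-conditioned estimates $\widetilde p^{\beta}_{i,j},\widetilde q^{\beta}_{i,j}$, so it is legitimate to split into cases on it and only then apply the Chernoff bound to the independent sample $\mathbf{s'}$ that defines $\tau$.
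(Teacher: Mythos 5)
Your proposal is correct and follows essentially the same route as the paper's: an additive Chernoff bound on the empirical quantity $\tau$, combined with the triangle inequality, the estimation slacks from Claim~\ref{claim:sample-pij}, and Claim~\ref{claim:dtv-approx} to translate the variation-distance hypotheses into statements about $\widetilde p^{\beta}_{i,j}-\widetilde q^{\beta}_{i,j}$. The only (cosmetic) divergence is in part (ii): the paper reuses the exact same concentration event as in part (i) --- namely $\tau>\widetilde p^{\beta}_{i,j}-\tfrac{13}{8}\eps'$, which forces Step~6 to declare $D_i$ and hence precludes $D_j$ winning --- whereas you bound the event ``$D_j$ wins'' directly by showing it forces $\tau<D(H^{\beta}_{ij})-\tfrac32\eps'$ and applying Chernoff to that larger deviation; both close. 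Your explicit remark that the conditioning on the good estimation event leaves the Step~3 sample $\mathbf{s'}$ fresh is a point the paper leaves implicit and is worth stating.
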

\begin{proof}
Let $r^\nnew{{\beta}}=D(H^{\beta}_{ij})$. The definition of the variation distance implies that $ |r^\nnew{{\beta}}-p^{\beta}_{i,j}| \le  \dtv(D, D_i) \le  \eps'.$
Therefore, we have that
$|r^\nnew{{\beta}} - \widetilde{p}^{\beta}_{i,j}| \le  |r^\nnew{{\beta}}-
p^{\beta}_{i,j}| + |\widetilde{p}^{\beta}_{i,j} - p^{\beta}_{i,j}| \le 9\eps'/8.$
Consider the indicator ($0/1$) random variables
$\{Z_{\ell}\}_{\ell=1}^{m'}$ defined as $Z_{\ell}=1$ if and only if $s_{\ell} \in H^{\beta}_{ij}$. Clearly, $\tau={1 \over m'} \littlesum_{\ell=1}^{m'} Z_{\ell}$ and
$\E_{\mathbf{s'}}[\tau]=\E_{s_{\ell} \sim D} [Z_\ell]=r^\nnew{{\beta}}$. Since the $Z_\ell$'s are mutually independent, it follows from the Chernoff bound that
$\Pr[\tau \le r^\nnew{{\beta}}-{\eps'/2}] \le e^{- {m' {\eps'}^2/8 } }$. Using $|r^\nnew{{\beta}} - \widetilde{p}^{\beta}_{i,j}| \le 9\eps'/8.$
we get that $\Pr[\tau\le \widetilde{p}^{\beta}_{i,j}-\new{13\eps'/8}] \le e^{- {m' {\eps'}^2/8 } }$.
\begin{itemize}
\item For part (i): If $\dtv(D, D_j) > 6 \eps'$, from the triangle inequality we get that $p_{i,j}-q_{i,j}=\dtv(D_i, D_j) > 5 \eps' $
\nnew{Claim~\ref{claim:dtv-approx} implies that $p^{\beta}_{i,j} - q^{\beta}_{i,j}  > 19 \eps' /4$ and our conditioning finally gives
$\widetilde{p}^{\beta}_{i,j} - \widetilde{q}^{\beta}_{i,j}  > 9 \eps' /2$}.  Hence, the algorithm will go beyond Step~2, and with probability at least
$1-e^{- {m' \eps'^2/8 } }$, it will stop at Step~6, declaring $D_i$ as the winner of the competition between $D_i$ and $D_j$.

\item For part (ii):  If  \nnew{$\widetilde{p}^{\beta}_{i,j} -
\widetilde{q}^{\beta}_{i,j}  \le 9 \eps' /2$} then the competition
declares a draw,
hence $D_j$ is not the winner. Otherwise we have \nnew{$\widetilde{p}^{\beta}_{i,j} - \widetilde{q}^{\beta}_{i,j}  > 9 \eps' /2$}
and the  argument of the previous paragraph implies that the competition between $D_i$ and $D_j$ will declare $D_j$
as the winner with probability at most $e^{- {m' \eps'^2/8 } }$.
\end{itemize}
This concludes the proof of Claim~\ref{claim:correct-ht}.
\end{proof}
\ignore { The correctness of the lemma follows directly from the above claim. The bound on the sample complexity is also clear.
To analyze the running time, we note that the only operation we need is to compute the value of $\tau$.
To do this, we first need to read the samples $s_{\ell}$. Reading a single sample requires time $\Theta (\log |\mathcal{W}|)$ bit operations,
(assuming we have an efficient encoding of the finite set $\mathcal{W}$). Hence, to read all the samples we need $O( m' \cdot \log |\mathcal{W}|)$ bit operations.
Now, for each sample $s_{\ell}$, $\ell \in [m']$, we need to decide whether it belongs to the set $H_{ij}$, i.e., if $D_i(s_{\ell}) > D_{j}(s_{\ell})$.
To do this, we need to call each evaluation oracle once. Hence, the number of oracle calls is at most $m'$ (for each oracle).
The running time is dominated by $O(m')$ oracle calls.}  
This completes the proof of Lemma~\ref{lem:choosehypothesis}.
\end{proof}

We now proceed to describe the algorithm ${\cal T}^D$ and establish Proposition~\ref{prop:log-cover-size}.
The algorithm performs a tournament by running the competition
{\tt Choose-Hypothesis}$^D(D_i,D_j,\eps, \delta/(2N))$ for every pair of distinct distributions
$D_i,D_j$ in the collection ${\cal D}_{\eps}$.  It outputs a distribution $D^{\star} \in {\cal D}_{\eps}$
that was never a loser (i.e., won or achieved a draw in all its competitions).
If no such distribution exists in ${\cal D}_{\eps}$ then the algorithm outputs ``failure.''
A detailed pseudocode follows:
\begin{framed}
\noindent Algorithm  $\mathcal{T}^D( \{D_j\}_{j=1}^N, \eps, \delta)$:

\smallskip

\noindent {\bf Input:} Sample access to $D$ and $D_k$, $k \in [N]$, and oracle access to $\eval_{D_k}$, $k \in [N]$.

\begin{enumerate}

\item Draw $m = \Theta \left( (1/\eps^2) (\log N+ \log(1/\delta)) \right)$ samples from $D$ and each $D_k$, $k \in [N]$.

\item For all $i, j \in [N]$, $i \neq j$, run {\tt Choose-Hypothesis}$^D({D}_i,{D}_j,\eps,\delta/(2N))$ using this sample.

\item Output an index $i^{\star}$ such that $D_{i^{\star}}$ was never  declared a loser, if one exists.

\item Otherwise, output ``failure''.
\end{enumerate}
\end{framed}
We now proceed to analyze the algorithm.
The bounds on the sample complexity, running time and query complexity to the evaluation oracles follow from
the corresponding bounds for {\tt Choose-Hypothesis}.
Hence, it suffices to show correctness. We do this below.

By definition, there exists some $D_i \in {\cal D}_\eps$ such that $\dtv(D,D_i) \leq \eps.$
By Claim~\ref{claim:correct-ht}, the distribution
$D_i$ never loses a competition against any other
$D_j \in {\cal D}_\eps$ (so the algorithm does not output
``failure'').\ignore{
Consider any $D_j \in {\cal D}_\eps$.
If $\dtv(D, D_j) > \nnew{13 \eps/4}$,
by Claim~\ref{claim:correct-ht}(ii) the probability that $D_i$ loses to $D_j$ is at most $2e^{-m \eps^2 /8} \leq \delta/(2N).$
On the other hand, if $\dtv(D, D_j) \leq \nnew{13\eps/4}$, the triangle
inequality gives that $\dtv(D_i, D_j) \leq \nnew{17\eps/4}$ and \nnew{by Claim~\ref{claim:dtv-approx} and our conditioning
$|\widetilde{p^{\beta}_{i,j}} - \widetilde{q^{\beta}_{i,j}}| \leq 9\eps/2$},
thus $D_i$ draws against $D_j.$}  A union
bound over all $N$ distributions in ${\cal D}_{\eps}$ shows that with probability $1-\delta/2$, the distribution $D'$ never loses a competition.

We next argue that with probability at least $1-\delta/2$, every distribution $D_j \in {\cal D}_\eps$ that never loses has small variation distance from $D.$
Fix a distribution $D_j$ such that $\dtv(D_j,D) > 6 \eps$; Claim~\ref{claim:correct-ht}(i) implies that $D_j$ loses to $D_i$ with probability
$1 - 2e^{-m \eps^2 /8} \geq 1 - \delta/(2N)$.  A union bound yields that with probability $1-\delta/2$, every distribution $D_j$ that has
$\dtv(D_j, D) > 6 \eps$ loses some competition.

Thus, with overall probability at least $1-\delta$, the tournament does not output ``failure'' and outputs some distribution $D^{\star}$ such that $\dtv(D, D^{\star})$ is at most $6 \eps.$
The proof of Proposition~\ref{prop:log-cover-size} is now complete.
\end{proof}

\begin{remark} \label{rem:sampler}
{\em As stated Proposition~\ref{prop:log-cover-size} assumes that
algorithm ${\cal T}^D$ has access to samplers for all the
distributions $D_k$, so each call to such a sampler is guaranteed
to output an element distributed according to $D_k$.
Let $D^\bot_k$ be a distribution over ${\cal W} \cup \{\bot\}$ which
is such that (i) $D^\bot_k(\bot) \leq 1/2$, and (ii) the conditional 
distribution $(D^\bot_k)_{\cal W}$ of $D^\bot_k$ conditioned on not
outputting $\bot$ is precisely $D_k.$
It is easy to see that the proof of
Proposition~\ref{prop:log-cover-size} extends to a setting in which
${\cal T}^D$ has access to samplers for $D^\bot_k$ rather than
samplers for $D_k$; each time a sample from $D_k$ is required
the algorithm can simply invoke the sampler for $D^\bot_k$ repeatedly
until an element other than $\bot$ is obtained. (The low-probability
event that many repetitions are ever needed can be ``folded into''
the failure probability $\delta$.)}
\end{remark}

\fi

\newcommand{\default}{\mathrm{default}}




\section{A general technique for inverse approximate uniform generation}
\label{sec:generaltechnique}

In this section we present {a general technique}
for solving  inverse approximate uniform generation problems.
{Our main} positive results follow this conceptual framework.
At the heart of our approach is a new type of algorithm which we call a
\emph{densifier} for a concept class $\C$.
Roughly speaking, this is an algorithm
which, given uniform random positive examples of an unknown $f \in \C$,
constructs a set $S$ which (essentially) contains
all of $f^{-1}(1)$ and which is such that $f^{-1}(1)$ is
``dense'' in $S$.
Our main result in this section, Theorem~\ref{thm:learn-by-dense},
states (roughly speaking) that the existence
of (i) a computationally efficient densifier, (ii) an efficient
approximate uniform generation algorithm, (iii) an efficient
approximate counting algorithm, and (iv) an
efficient {\em statistical query (SQ)} learning algorithm, together
suffice to yield an efficient algorithm for our inverse approximate uniform generation problem.

\ifnum\confversion=0
{We have already defined approximate uniform generation and
approximate counting algorithms, so we need to define $\sq$ learning
algorithms and densifiers.}
The {\em statistical query} (SQ) learning model is
\else
Recall that the {\em statistical query} (SQ) learning model is
\fi
a natural restriction of the PAC learning model in which a learning algorithm
is allowed to obtain estimates of statistical properties of the examples but cannot directly access the examples themselves.
Let $D$ be a distribution over $\bn$. In the SQ model~\cite{Kearns:98}, the learning algorithm has access to a {\em statistical query oracle}, $\stat(f,D)$,
to which it can make a query of the form $(\chi, \tau)$, where $\chi : \bn \times \bits  \to  [-1,1]$ is the \emph{query function} and $\tau>0$ is the \emph{tolerance}.
The oracle responds with a value $v$ such that $ \left| \E_{x \sim D} \left[ \chi \left( x, f(x) \right) \right]  -  v \right| \le \tau$, where $f \in \C$ is the target concept. The goal of the algorithm is to output a hypothesis $h : \bn \to \bits$ such that $\Pr_{x \sim D} [h(x) \ne  f(x)] \le \eps$.
\ifnum\confversion=1
(See the full version for a precise definition.)
We sometimes write an ``$(\eps, \delta)$--$\sq$ learning algorithm''
to explicitly state the accuracy parameter $\eps$ and confidence parameter
$\delta$.

\else
The following is a precise definition:

\begin{definition} \label{def:sq-algo}
Let $\C$ be a class {of $n$-variable boolean functions} and $D$ be a distribution over $\bn$.  An
{\em $\sq$ learning algorithm for $\C$ under $D$} is
a randomized algorithm $\A^\C_{\sq}$ that for every $\eps, \delta>0$, every
target concept $f \in \C$,
on input $\eps$, $\delta$ and with access to oracle $\stat(f,D)$
{and to independent samples drawn from $D$},
outputs with probability $1-\delta$
a hypothesis $h:\bn \to \bits$ such that $\Pr_{x \sim D} [h(x) \ne  f(x)] \le \eps$.
Let $t_1(n, 1/\eps, 1/\delta)$ be the running time of $\A^\C_{\sq}$ (assuming each oracle query is answered in unit time),
$t_2 (n)$ be the maximum running time to evaluate any query provided to $\stat(f,D)$
and $\tau(n, 1/\eps)$ be the minimum value of the tolerance parameter
ever provided to $\stat(f,D)$ in the course of $\A^\C_{\sq}$'s execution.
We say that $\A^\C_{\sq}$ is {\em efficient} (and that $\C$ is {\em efficiently} $\sq$ learnable with respect to distribution $D$),
if $t_1(n, 1/\eps, 1/\delta)$ is polynomial in $n$, $1/\eps$ and $1/\delta$, $t_2(n)$ is polynomial in $n$
and $\tau(n, 1/\eps)$ is lower bounded by an inverse polynomial in $n$ and $1/\eps$. We call
an $\sq$ learning algorithm $\A^\C_{\sq}$ for $\C$ {\em distribution independent} if $\A^\C_{\sq}$ succeeds for any distribution $D$.
If $\C$ has an efficient distribution independent  $\sq$ learning algorithm we say that $\C$ is {\em efficiently $\sq$ learnable (distribution independently)}.
\end{definition}
\fi

\ifnum\confversion=0
We sometimes write an ``$(\eps, \delta)$--$\sq$ learning algorithm''
to explicitly state the accuracy parameter $\eps$ and confidence parameter
Throughout this paper, we will only deal with distribution independent
$\sq$ learning algorithms.
\fi

\ignore{$\sq$ learning algorithms are well known \cite{Kearns:98} to
``automatically'' yield PAC learning algorithms that are robust in the
presence of random misclassification noise affecting the labels
of examples; we will see how this robustness is useful for us later.
\mnote{tweak prose}
}

\ignore{
We measure the time complexity of an SQ--learning algorithm in the following way (which is somewhat
crude but adequate for our purposes).
Let $\A_\sq$ be an SQ--learning algorithm that runs for $t_1(\eps,n)$ time steps, where each
query to the $\mathrm{STAT}(f,D)$ oracle is considered a unit time operation. Suppose that each
query function $\chi$ that $\A_\sq$ gives as input to $\mathrm{STAT}(f,D)$ can be evaluated
in time $t_2(n)$, and that the minimum value $\tau$ of the tolerance parameter that is
ever provided to $\mathrm{STAT}(f,D)$ in the course of $\A_\sq$' s execution is $\tau_3.$  Then, the time complexity
of $\A_\sq$ is said to be $T_{\A_\sq}(\eps,n) \eqdef t_1(\eps, n) \cdot \poly \left( t_2(n), 1/\tau_3 \right).$  (The motivation for
this definition is that an algorithm with access to labeled examples $(x,f(x))$, $x \sim D$,
could simulate the execution of $\A_\sq$ and output an $\eps$-accurate hypothesis $h$ with confidence $1-\delta$
in time $T_{\A_\sq}(\eps,n) \cdot \log(1/\delta)$.)
}

To state our main result, we introduce the notion of a
{\em densifier} for a class $\C$ of Boolean functions.
Intuitively, a densifier is an algorithm which is given access to samples from
$\U_{f^{-1}(1)}$ (where $f$ is an unknown element of $\C$)
and outputs a subset $S \subseteq \bn$ which is such that
(i) $S$ contains ``almost all'' of $f^{-1}(1)$, but (ii) $S$ is ``much smaller'' than
$\bn$ -- in particular it is small enough that $f^{-1}(1) \cap S$ is
(at least moderately) ``dense'' in $S$.

\begin{definition}\label{def:dense}
{Fix a function $\gamma(n,1/\eps,1/\delta)$
taking values in $(0,1]$ and a class $\C$ of $n$-variable Boolean functions.}
An algorithm $\A^{(\C, \C')}_{\den}$
is said to be {\em a $\gamma$-densifier for function class $\C$ using class $\C'$}
if it has the following {behavior}:
\new{For every $ \eps,\delta > 0$, every $1/2^n \leq \widehat{p} \leq 1$,}
and every $f \in \C$,
given as input $\eps,\delta, \widehat{p}$ and a set of independent
samples from $\U_{f^{-1}(1)}$,
the following holds:
Let $p \eqdef \Pr_{x \sim \U_n}[f(x)=1].$
If $p \leq \widehat{p} < (1+\eps)p$, then
with probability at least $1 - \delta$, algorithm
$\A^{(\C, \C')}_{\den}$ outputs a function
$g \in \C'$ such that:
\begin{enumerate}
\item[(a)] $\Pr_{x \sim \U_{f^{-1}(1)}}\left[ g(x)=1 \right] \ge 1-\eps.$
\item[(b)] $\Pr_{x \sim \U_{g^{-1}(1)}} \left[ f(x)=1 \right] \ge \gamma
(n,1/\eps,1/\delta).$
\end{enumerate}
 \end{definition}

We will sometimes write an ``$(\epsilon,{\gamma},\delta)$--densifier'' to explicitly state the parameters in the definition.

Our main conceptual approach is summarized in the following theorem:

 \begin{theorem}[General Upper Bound]\label{thm:learn-by-dense}
Let $\C, \C'$ be classes of $n$-variable boolean functions. Suppose that
\begin{itemize}
\item $\A^{(\C, \C')}_{\den}$ is an $(\eps, {\gamma}, \delta)$-densifier for $\C$ using $\C'$
running in time $T_{\den}(n, 1/\eps, 1/\delta)$.

\item $\A^{\C'}_{\gen}$ is an $(\eps,\delta)$-approximate uniform generation algorithm for $\C'$ running in time $T_{\gen}(n, 1/\eps,{1/\delta})$.

\item $\A^{\C'}_{\co}$ is an $(\eps, \delta)$-approximate counting algorithm for $\C'$ running in time $T_{\co}(n, 1/\eps, 1/\delta)$.

\item $\A^\C_{\sq}$ is an $(\eps, \delta)$-$\sq$ learning algorithm for $\C$ such that: $\A^\C_{\sq}$ runs in time $t_1(n, 1/\eps, 1/\delta)$ ,
$t_2 (n)$ is the maximum time needed to evaluate any query provided
to $\stat(f,D)$, and $\tau(n, 1/\eps)$ is the minimum value
of the tolerance parameter ever provided to $\stat(f,D)$ in the course of $\A^\C_{\sq}$'s execution.
\end{itemize}
Then there exists an inverse approximate uniform generation
algorithm $\A^\C_{\inv}$ for $\C$.
\new{
The running time of $\A^\C_\inv$ 
is polynomial in 
$T_{\den}(n, 1/\eps, 1/\delta)$, $1/\gamma$, $T_{\gen}(n, 1/\eps,1/\delta)$, $T_{\co}$ $(n,1/\eps,1/\delta)$, $t_1(n,1/\eps,1/\delta)$,
$t_2(n)$ and $1/\tau(n,1/\eps)$.}
\ifnum\confversion=0
\footnote{It is straightforward to derive an explicit running time bound
for $\A^\C_\inv$ in terms of the above functions from our analysis, but
the resulting expression is extremely long and rather
uninformative so we do not provide it.}
\fi
\end{theorem}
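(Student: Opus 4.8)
The plan is to follow the ``standard approach'' described in the introduction, but with a twist: rather than reconstructing the unknown $f$ directly, we will (i) use the densifier to replace the huge space $\bn$ by a much smaller ``universe'' $g^{-1}(1)$ in which $f^{-1}(1)$ is dense, (ii) run the $\sq$ learner with respect to the (approximately) uniform distribution $\U_{g^{-1}(1)}$ to obtain a high-accuracy hypothesis $h$ for $f$, and (iii) output a sampler that draws approximately uniform points from $g^{-1}(1)$ using $\A^{\C'}_\gen$, rejects those on which $h$ evaluates to $-1$, and outputs the rest. Because $f^{-1}(1)$ is $\gamma$-dense in $g^{-1}(1)$ and $h$ agrees with $f$ on all but an $\eps'$-fraction of $g^{-1}(1)$, the conditional distribution of this rejection sampler is close to $\U_{f^{-1}(1)}$. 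Finally, since we do not know the correct guess $\widehat p$ for $p = \Pr[f(x)=1]$ in advance, we run the whole construction for each of the $O(\log(2^n)/\log(1+\eps)) = O(n/\eps)$ multiplicative guesses $\widehat p \in \{1, (1+\eps), (1+\eps)^2, \dots\}$, producing a pool of candidate samplers, and use the hypothesis-testing tournament of Proposition~\ref{prop:log-cover-size} to select one that is $O(\eps)$-close to $\U_{f^{-1}(1)}$.

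First I would set up the parameters: choose an internal accuracy $\eps'$ that is a sufficiently small polynomial in $\eps$ and $\gamma$ (something like $\eps' = \Theta(\eps\gamma)$) and run the densifier with accuracy $\eps'$ and confidence $\delta' = \delta/\poly(n,1/\eps)$ on fresh samples from $\U_{f^{-1}(1)}$; for the correct guess $\widehat p$ this yields $g \in \C'$ with $\Pr_{x\sim\U_{f^{-1}(1)}}[g(x)=1]\ge 1-\eps'$ and $\Pr_{x\sim\U_{g^{-1}(1)}}[f(x)=1]\ge\gamma$. Next I would simulate the $\sq$ oracle $\stat(f, \U_{g^{-1}(1)})$: a query $(\chi,\tau)$ is answered by drawing approximately uniform samples $x$ from $g^{-1}(1)$ via $\A^{\C'}_\gen$, labelling them with $f(x)$ (we can compute $f(x)$ by noting that samples from $\U_{f^{-1}(1)}$ that land in $g^{-1}(1)$ are distributed as $\U_{f^{-1}(1)}$, together with the fact that $f$ and $g$ almost always agree on $g^{-1}(1)$ — this is the delicate point, see below), and taking an empirical average of $\chi(x,f(x))$; standard Chernoff bounds give the required tolerance $\tau$ with $\poly(1/\tau,\log(1/\delta))$ samples. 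Running $\A^\C_\sq$ on this simulated oracle produces $h$ with $\Pr_{x\sim\U_{g^{-1}(1)}}[h(x)\neq f(x)]\le\eps'$. Then I would argue that the rejection sampler (draw $x\sim\U_{g^{-1}(1)}$ via $\A^{\C'}_\gen$, accept iff $h(x)=1$, else output $\bot$) has acceptance probability $\ge\gamma - O(\eps')$ and, conditioned on accepting, produces a distribution within total variation $O(\eps'/\gamma)\le\eps/10$ of $\U_{f^{-1}(1)}$; here I also use the approximate counting algorithm $\A^{\C'}_\co$ to estimate $|g^{-1}(1)|$ and hence the acceptance probability, which is what is needed to (a) produce the $(1+\beta)$-approximate evaluation oracles that Proposition~\ref{prop:log-cover-size} requires for each candidate sampler, and (b) detect/discard guesses $\widehat p$ for which the construction obviously failed. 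Finally I would invoke the tournament of Proposition~\ref{prop:log-cover-size} over the $O(n/\eps)$ candidate distributions (using Remark~\ref{rem:sampler} to handle the $\bot$ outputs) to select one that is $6\eps$-close, and rescale $\eps$ by a constant at the outset.

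The main obstacle, and the step I expect to require the most care, is the $\sq$-oracle simulation — specifically obtaining \emph{labelled} examples $(x, f(x))$ with $x$ approximately uniform over $g^{-1}(1)$. We have two sample sources that individually are insufficient: samples from $\U_{f^{-1}(1)}$ give us points labelled $+1$ but distributed over $f^{-1}(1)$, not $g^{-1}(1)$; and $\A^{\C'}_\gen$ gives us points approximately uniform over $g^{-1}(1)$ but \emph{unlabelled} (we cannot evaluate the unknown $f$). The resolution is that for an $\sq$ algorithm we never need to label individual points: a query $\chi$ can be estimated by splitting $\E_{x\sim\U_{g^{-1}(1)}}[\chi(x,f(x))]$ according to whether $f(x)=1$, writing it as $\Pr[f(x)=1]\cdot\E_{x\sim\U_{f^{-1}(1)\cap g^{-1}(1)}}[\chi(x,1)] + \Pr[f(x)=-1]\cdot\E_{x\sim\U_{g^{-1}(1)\setminus f^{-1}(1)}}[\chi(x,-1)]$, where (using densifier property (a)) $\U_{f^{-1}(1)\cap g^{-1}(1)}$ is statistically close to the available distribution $\U_{f^{-1}(1)}$, the weight $\Pr_{x\sim\U_{g^{-1}(1)}}[f(x)=1]$ can be estimated by rejection sampling plus approximate counting, and the residual term $\E_{x\sim\U_{g^{-1}(1)\setminus f^{-1}(1)}}[\chi(x,-1)]$ has total weight $O(\eps')$ so contributes at most $O(\eps')$ error regardless of its value (it can just be bounded crudely, or one notices $x$ is rejected exactly when it fails $h$, giving a further handle). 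Making all these approximations fit inside the single tolerance parameter $\tau(n,1/\eps)$ — which forces the choice $\eps' = \poly(\eps,\gamma,\tau)$ — and propagating the accumulated errors (densifier error, generation error $\eps$ per call, counting error, sampling error) through to a final $O(\eps)$ bound is the bookkeeping-heavy heart of the argument, but each individual estimate is a routine Chernoff/triangle-inequality computation.
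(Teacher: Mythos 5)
Your overall architecture is right and matches the paper's: a densifier to shrink the universe to $g^{-1}(1)$, an SQ learner over $D = \U_{g^{-1}(1)}$, rejection sampling against the learned $h$, and a hypothesis-testing tournament over $O(n/\eps)$ multiplicative guesses $\widehat p$, with the approximate counter $\A^{\C'}_\co$ used to build the $(1+\beta)$-approximate evaluation oracles that the tournament needs. You also correctly flag the SQ-oracle simulation as the delicate step.

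However, your proposed resolution of that step has a genuine gap. You split
\[
\E_{x\sim D}\bigl[\chi(x,f(x))\bigr] \;=\; p\cdot \E_{x\sim D_{f,+}}[\chi(x,1)] \;+\; (1-p)\cdot \E_{x\sim D_{f,-}}[\chi(x,-1)],
\]
where $p = \Pr_{x\sim\U_{g^{-1}(1)}}[f(x)=1]$, and assert that the second term ``has total weight $O(\eps')$'' so it can be bounded crudely. That is false. The densifier only guarantees $p \ge \gamma$, so the weight $1-p$ is merely bounded by $1-\gamma$, which is close to $1$ when $\gamma$ is small (and $\gamma$ is typically $1/\poly(n)$ or worse). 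The term involving $D_{f,-}$ can therefore carry essentially all the weight, and you have no way to sample from $D_{f,-}$; your parenthetical about $h$ ``giving a further handle'' does not help because $h$ is the \emph{output} of the SQ algorithm whose oracle you are in the middle of simulating. As written, the simulation fails.

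The paper (Lemma~\ref{lem:sq-sim}) sidesteps this with a different algebraic identity. Starting from your split and also expanding $\E_{x\sim D}[\chi(x,-1)]$ over $D_{f,+}$ and $D_{f,-}$, one subtracts to cancel the $D_{f,-}$ term and obtains
\[
\E_{x\sim D}\bigl[\chi(x,f(x))\bigr] \;=\; \E_{x\sim D}\bigl[\chi(x,-1)\bigr] \;+\; \E_{x\sim D_{f,+}}\bigl[\chi(x,1)-\chi(x,-1)\bigr]\cdot \Pr_{x\sim D}[f(x)=1].
\]
Every piece on the right is estimable: $\E_{x\sim D}[\chi(x,-1)]$ uses \emph{unlabelled} draws from $D$ (via $\A^{\C'}_\gen$) because the label argument is pinned to $-1$ independently of $f(x)$; the factor $\E_{x\sim D_{f,+}}[\chi(x,1)-\chi(x,-1)]$ uses draws from $D_{f,+}$ obtained by rejection-sampling $\U_{f^{-1}(1)}$ against $g$; and $\Pr_{x\sim D}[f(x)=1]$ is obtained from $\widehat p$ and the approximate count of $|g^{-1}(1)|$, exactly as you suggest. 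This identity is the missing idea that makes the SQ simulation go through.
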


\noindent {\bf Sketch of the algorithm.}
The inverse approximate uniform generation algorithm $\A^\C_{\inv}$ for $\C$ works in three main conceptual steps.
 Let $f \in \C$ be the unknown target function and recall that our algorithm $\A^\C_{\inv}$ is given access to samples from $\U_{f^{-1}(1)}$.

\begin{enumerate}

\item[(1)] In the first step, $\A^\C_{\inv}$ runs the densifier $\A^{(\C, \C')}_\den$ on a set of samples from $\U_{f^{-1}(1)}$.
Let  $g \in \C'$ be the output function of $\A^{(\C, \C')}_\den$.

\end{enumerate}
Note that
by setting the input to the
approximate uniform generation algorithm $\A^{\C'}_{\gen}$ to $g$, we obtain
an approximate sampler $C_g$ for $\U_{g^{-1}(1)}$.
The output distribution $D'$ of this sampler, is by definition
supported on $g^{-1}(1)$ and is close to $D = \U_{g^{-1}(1)}$ in total variation distance.
\begin{enumerate}
\item[(2)] The second step is to run the $\sq$-algorithm $\A^\C_{\sq}$ to learn the function $f \in C$
under the distribution $D$.
Let $h$ be the hypothesis constructed by $\A^\C_{\sq}$.

\item[(3)] In the third and final step, the algorithm simply samples from $C_g$
until it obtains an example $x$ that has $h(x)=1$, and outputs this $x$.

\end{enumerate}

{
\begin{remark} \label{rem:counting}
{\em The reader may have noticed that the above sketch does not
seem to use the approximate counting algorithm $\A^{\C'}_{\co}$; we will
revisit this point below.}
\end{remark}

\begin{remark} \label{rem:standard}
{\em The connection between the above algorithm sketch and the ``standard approach'' discussed in the Introduction is as follows:
The function $g \wedge h$ essentially corresponds to the reconstructed object $\tilde{x}$ of the ``standard approach.''
The process of sampling from $C_g$ and doing rejection sampling until an input that satisfies $h$ is obtained,
essentially corresponds to the $A_\sample$ procedure of the ``standard approach.'' }
\end{remark}
}

{
\subsection{Intuition, motivation and discussion.}
\label{ssec:intuition-motivation-discussion}
To motivate the high-level idea behind our algorithm,
consider a setting in which $f^{-1}(1)$ is only a tiny fraction (say
$1/2^{\Theta(n)}$) of $\{-1,1\}^n.$  It is intuitively clear
that we would like to use some kind of a learning algorithm
in order to come up with a good approximation of $f^{-1}(1)$, but
we need this approximation to be accurate at the ``scale'' of
$f^{-1}(1)$ itself rather than at the scale of all of $\{-1,1\}^n$,
so we need some way to ensure that the learning algorithm's hypothesis
is accurate at this small scale.
By using a densifier to construct $g$ such that $g^{-1}(1)$ is not
too much larger than $f^{-1}(1)$, we can use the distribution
$D = \U_{g^{-1}(1)}$ to run a learning algorithm and obtain a good approximation of
$f^{-1}(1)$ at the desired scale. {(Since $D$ and $D'$ are close in variation distance, this implies
we also learn $f$ with respect to $D'$.)}

To motivate our use of an $\sq$ learning algorithm rather than a
standard PAC learning algorithm,
observe that there seems to be no way to obtain correctly labeled examples
distributed according to $D$.
However, we show that it is possible to accurately simulate statistical
queries under $D$ having access only to random positive
examples from $f^{-1}(1)$ and to unlabeled examples drawn from $D$
(subject to additional technical caveats discussed
\ifnum\confversion=1
in the full version).
\else
below).
\fi
We discuss the issue of how it is possible to successfully use an $\sq$
learner in our setting in more detail below.
}

{
\medskip
\noindent {\bf Discussion and implementation issues.}
While the three main conceptual steps (1)-(3) of our algorithm may (hopefully) seem quite intuitive
in light of the preceding motivation,
a few issues immediately arise in thinking about how to implement
these steps.}
The first one concerns running the $\sq$-algorithm $\A^\C_{\sq}$ in Step~2
to learn $f$ under distribution {$D$ (recall that $D = \U_{g^{-1}(1)}$ and is close to
$D'$).}
Our algorithm $\A^\C_{\inv}$ needs to be able to efficiently
simulate $\A^\C_{\sq}$ given its available information.
While it would be easy to do so given access to random labeled
examples $(x, f(x))$, where $x \sim {D}$, such information is not
available in our setting. To overcome this obstacle,
\ifnum\confversion=0
we show (see Proposition~\ref{prop:sq-sim})
\else
in the full version we show
\fi
that for {\em any} {samplable} distribution {$D$},
we can efficiently simulate a statistical query algorithm under
{$D$} using samples from {$D_{f, +}$}.
This does not quite solve the
problem, since we only have samples from $\U_{f^{-1}(1)}$. However,
\ifnum\confversion=0
we show  (see Claim~\ref{claim:sim-sample})
\else
it can be shown
\fi
that for our setting, i.e., for
{$D = \U_{g^{-1}(1)}$},
we can simulate a sample from {$D_{f, +}$}
by a simple rejection sampling procedure using samples from $\U_{f^{-1}(1)}$
and query access to $g$.

{Some more issues remain to be handled. First, the simulation of the statistical query algorithm sketched in
the previous paragraph only works under the assumption that we are given
a sufficiently accurate approximation $\widetilde{b_f}$ of the probability $\Pr_{x \sim D}[f(x) = 1]$.
(Intuitively, our approximation should be smaller than the smallest tolerance $\tau$ provided to the statistical query oracle by the algorithm $\A^\C_{\sq}$.)
Second, by Definition~\ref{def:dense}, the densifier only succeeds under the assumption
that it is given in its input an $(1+\eps)$-multiplicative approximation $\widehat{p}$ to $p = \Pr_{x\in \U_n}[f(x)=1]$.

We handle these issues as follows:
First, we show
\ifnum\confversion=0
(see Claim~\ref{claim:estimate-bias})
\else
in the full version
\fi
that, given an accurate estimate $\widehat{p}$ and a ``dense'' function $g \in \C'$,
we can use the approximate counting algorithm $\A^{\C'}_{\co}$ to efficiently compute an accurate estimate $\widetilde{b_f}$.
(This is one reason why Theorem~\ref{thm:learn-by-dense} requires an approximate counting algorithm for $\C'.$)
To deal with the fact that we do not a priori have an accurate estimate $\widehat{p}$, we
run our sketched algorithm for all possible values of $\Pr_{x \sim \U_n}[f(x) = 1]$ in an appropriate multiplicative ``grid''
of size $N = O({n}/\eps)$, {covering all possible values from $1/2^n$ to $1$.}
We thus obtain a set $\D$ of $N$ candidate distributions one of which is guaranteed to be close to
the true distribution $\U_{f^{-1}(1)}$ in variation distance.
At this point, we would like to apply our hypothesis testing machinery (Proposition~\ref{prop:log-cover-size}) to find such a distribution.
However, in order to use Proposition~\ref{prop:log-cover-size},
in addition to sample access to the candidate distributions (and the
distribution being learned), we also require a {\em multiplicatively accurate}
approximate evaluation oracle to evaluate the probability mass of any point
under the candidate distributions.
\ifnum\confversion=0
We show (see Lemma~\ref{lem:approx-eval-for-hatDi})
\else
In the full version we show
\fi
that this is possible in our generic setting, using
properties of the densifier and  the approximate {counting algorithm}
$\A^{\C'}_\co$ for $\C'$.}

\ifnum\confversion=1
See the full version for a proof of Theorem~\ref{thm:learn-by-dense}.
\else
{
Now we are ready to begin the detailed proof of
Theorem~\ref{thm:learn-by-dense}.
}

\subsection{Simulating statistical query algorithms.}

Our algorithm $\A^\C_{\inv}$ will need to simulate a statistical query algorithm for $\C$,
 with respect to a specific distribution $D$.
Note, however that $\A_\inv$ only has access to \new{uniform positive
examples of $f \in \C$, i.e., samples from
$\U_{f^{-1}(1)}$.}
Hence we need to show that a statistical query algorithm can be efficiently
simulated in \new{such a} setting.
To do this it suffices to show that one can efficiently
\new{provide valid responses to}
queries to the statistical query oracle $\stat(f, D)$, i.e.,
\new{that one can simulate the oracle.}
Assuming this can be done, the simulation algorithm $\A_{\sqs}$ is very simple:
Run the statistical query algorithm $\A_{\sq}$, and whenever it makes a query
to $\stat(f, D)$, simulate it.
To this end, in the following lemma we describe a procedure that simulates
\new{an $\sq$ oracle.
}
(Our approach here is similar to that of earlier 
simulation procedures that have been given in the literature,
see e.g. Denis \etal \cite{DGL05}.)

\begin{lemma} \label{lem:sq-sim}
Let $\C$ be a concept class \new{over $\{-1,1\}^n$}, $f \in \C$, and $D$
be a samplable distribution over $\bn$.
There exists an algorithm {\tt Simulate-STAT}$^D_f$ with the following
properties:  It is given access to
independent samples from $D_{f,+}$, and takes as input
a number $\widetilde{b_f} \in [0,1]$,
a $t(n)$-time computable query function $\chi:\bn \times \bits \to [-1,1]$,
a tolerance $\tau$ and a confidence $\delta$.  It has the following behavior:
it uses $m = O\left( (1/\tau^2) \log(1/\delta) \right)$ samples from \nnew{$D$ and} $D_{f,+}$,
runs in time $O\left(m \cdot t(n) \right),$ and \nnew{if  $| \widetilde{b_f} - \Pr_{x \sim D} [f(x)=1] | \leq  \tau'$}, then with probability $1-\delta$
it outputs a number $v$ such that
\begin{equation} \label{eqn:req}
\left| \E_{x \sim D} \left[ \chi \left( x, f(x) \right) \right]  -  v \right| \le \tau+\tau'.
\end{equation}
\end{lemma}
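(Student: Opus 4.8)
The plan is to show how to estimate the quantity $\E_{x \sim D}[\chi(x,f(x))]$ using only samples from $D_{f,+}$ (the conditional distribution of $D$ on $f^{-1}(1)$), samples from $D$, and the approximate value $\widetilde{b_f}$ of $b_f \eqdef \Pr_{x \sim D}[f(x)=1]$. First I would split the query function according to the label: write $\chi(x,f(x)) = \chi(x,1)\cdot \mathbf{1}[f(x)=1] + \chi(x,-1)\cdot \mathbf{1}[f(x)=-1]$. Taking expectations under $x \sim D$, this gives
\[
\E_{x \sim D}[\chi(x,f(x))] = b_f \cdot \E_{x \sim D_{f,+}}[\chi(x,1)] + (1-b_f)\cdot \E_{x \sim D_{f,-}}[\chi(x,-1)].
\]
The second term still refers to $D_{f,-}$, which we cannot sample. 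The standard trick (as in Denis et al.) is instead to observe that
\[
\E_{x \sim D}[\chi(x,-1)] = b_f \cdot \E_{x \sim D_{f,+}}[\chi(x,-1)] + (1-b_f)\cdot \E_{x \sim D_{f,-}}[\chi(x,-1)],
\]
so subtracting, the $D_{f,-}$ contribution cancels and we get
\[
\E_{x \sim D}[\chi(x,f(x))] = \E_{x \sim D}[\chi(x,-1)] + b_f\cdot\Big(\E_{x \sim D_{f,+}}[\chi(x,1)] - \E_{x \sim D_{f,+}}[\chi(x,-1)]\Big).
\]
Every expectation on the right-hand side is now over a distribution we can sample ($D$ or $D_{f,+}$), and the only unknown scalar is $b_f$, for which we are handed the approximation $\widetilde{b_f}$.

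Given this identity, the algorithm {\tt Simulate-STAT}$^D_f$ is: draw $m = O((1/\tau^2)\log(1/\delta))$ samples from $D$ and $m$ samples from $D_{f,+}$; form the empirical averages $\widehat{A}$ of $\chi(x,-1)$ over the $D$-samples, and $\widehat{B}$, $\widehat{C}$ of $\chi(x,1)$ and $\chi(x,-1)$ respectively over the $D_{f,+}$-samples; and output $v \eqdef \widehat{A} + \widetilde{b_f}\cdot(\widehat{B} - \widehat{C})$. Each of these three random variables takes values in $[-1,1]$, so by Fact~\ref{fact:rs} (the additive Chernoff bound) and a union bound, with probability $1-\delta$ all of $\widehat{A},\widehat{B},\widehat{C}$ are within an additive $\tau/4$ of their true means. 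The running time is $O(m\cdot t(n))$ since evaluating $\chi$ costs $t(n)$ per sample and the arithmetic is negligible.

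The error analysis then just needs the triangle inequality. Writing the true value as $V = A + b_f(B-C)$ where $A,B,C$ are the exact means, we bound $|v - V|$ by $|\widehat{A}-A| + |\widetilde{b_f}(\widehat{B}-\widehat{C}) - b_f(B-C)|$. For the second term, add and subtract $\widetilde{b_f}(B-C)$: it is at most $|\widetilde{b_f}|\cdot(|\widehat{B}-B| + |\widehat{C}-C|) + |\widetilde{b_f}-b_f|\cdot|B-C|$. Since $|\widetilde{b_f}|\le 1$ and $|B-C|\le 2$, and using $|\widetilde{b_f}-b_f|\le\tau'$, the sampling error contributes at most $3\cdot(\tau/4) = 3\tau/4 \le \tau$ and the $b_f$-estimation error contributes at most $2\tau'$; one can tighten the empirical accuracy to $\tau/8$ to make the first part $\le \tau$ and the second $\le \tau'$ if one wants the bound exactly as stated (adjusting constants inside the $O(\cdot)$ for $m$). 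Either way the output satisfies $|v - \E_{x\sim D}[\chi(x,f(x))]| \le \tau + \tau'$ as required.

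I do not expect any real obstacle here; the only point requiring slight care is the bookkeeping of constants so that the final additive error comes out as $\tau+\tau'$ rather than, say, $2\tau+2\tau'$, and making sure the sample counts from $D$ versus from $D_{f,+}$ are both accounted for in the statement (the lemma asks for $m$ samples from each). The genuinely substantive content — that a statistical query under $D$ can be answered from positive examples alone — is captured entirely by the label-splitting identity above.
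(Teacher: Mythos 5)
Your proof uses exactly the same label-splitting identity and the same algorithm as the paper; the only cosmetic difference is that the paper estimates $\chi(x,1)-\chi(x,-1)$ as a single random variable over $D_{f,+}$-samples rather than estimating $\chi(x,1)$ and $\chi(x,-1)$ separately and subtracting, and your error analysis spells out what the paper dispatches with ``a straightforward application of the triangle inequality.'' One small slip: tightening the empirical accuracy to $\tau/8$ cannot shrink the $b_f$-estimation error term $|E_2|\cdot|\widetilde{b_f}-b_f|$ below $2\tau'$, since $|E_2|=\big|\E_{x\sim D_{f,+}}[\chi(x,1)-\chi(x,-1)]\big|$ can be as large as $2$ independently of the sample size, so as written your bound comes out to roughly $\tau+2\tau'$ rather than $\tau+\tau'$; this is a constant-factor slackness the paper also glosses over, and it is immaterial downstream since Proposition~\ref{prop:sq-sim} takes $\tau'=\tau/2$.
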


\begin{proof}
\ignore {
We will use the following elementary fact, a simple consequence of the additive Chernoff bound.
\begin{fact} \label{fact:rs}
Let $X$ be a random variable taking values in the range $[-1, 1]$.
Then, its expectation can be estimated to within an additive $\pm \tau$, with confidence probability
$1-\delta$, using $m = \Omega((1/\tau^2)\log(1/\delta))$ independent samples from $X$. In particular,
the empirical average $\widehat{X}_m = (1/m) \littlesum_{i=1}^m X_i$, where the $X_i$'s are independent samples of $X$,
satisfies $\Pr \left[ | \widehat{X}_m - \E[X] | \leq \tau \right] \geq 1-\delta.$
\end{fact}
\noindent We shall refer to this as ``empirically estimating'' the value of $\E[X]$. }
To prove the lemma, we start by rewriting the expectation in (\ref{eqn:req}) as follows:
$$
\E_{x \sim D} \left[ \chi (x, f(x)) \right]  = \E_{x \sim D_{f, +}} \left[ \chi (x, 1) \right] \cdot \Pr_{x \sim D}[f(x)=1] +  \E_{x \sim D_{f, -}} \left[ \chi (x, -1) \right] \cdot \Pr_{x \sim D}[f(x)=-1].
$$
We also observe that
$$ \E_{x \sim D} \left[ \chi (x, -1) \right] =\E_{x \sim D_{f, +}} \left[ \chi (x, -1) \right] \cdot  \Pr_{x \sim D}[f(x)=1] +  \E_{x \sim D_{f, -}} \left[ \chi (x, -1) \right] \cdot \Pr_{x \sim D}[f(x)=-1].$$
Combining the above equalities we get
\begin{eqnarray} \label{eqn:identity}
\E_{x \sim D} \left[ \chi (x, f(x)) \right]  = \E_{x \sim D} \left[ \chi (x, -1) \right]  + \E_{x \sim D_{f, +}} \left[ \chi (x, 1) - \chi (x, -1) \right]  \cdot \Pr_{x \sim D}[f(x)=1].
\end{eqnarray}
Given the above identity, the algorithm {\tt Simulate-STAT}$^D_f$ is very simple:
We use random sampling from $D$
to empirically estimate the expectations $\E_{x \sim D} \left[ \chi (x, -1)
\right]$ \new{(recall that $D$ is assumed to be a samplable distribution),}
\new{and we use the independent samples from $D_{f,+}$ to empirically estimate}
$ \E_{x \sim D_{f, +}} \left[ \chi (x, 1) - \chi (x, -1) \right] $.
Both estimates are obtained
to within an additive accuracy of
$\pm \tau/2$ (with confidence probability $1-\delta/2$ each).
We combine these estimates with our estimate $\widetilde{b_f}$
for $\Pr_{x \sim D} [f(x)=1]$ in the obvious way (see Step~2 of
pseudocode below).

\begin{framed}
\noindent Subroutine   {\tt Simulate-STAT}$^D_f ( \new{D, D_{f,+},} \chi, \tau, \widetilde{b_f}, \delta)$:

\smallskip

\noindent {\bf Input:} Independent samples from \new{$D$ and} $D_{f,+}$, query access to $\chi: \bn \to \bits$,
                                    accuracy $\tau$, $\widetilde{b_f} \in [0,1]$ and confidence $\delta$.

\noindent {\bf Output:} If $|\widetilde{b_f} - \Pr_{x \sim D} [f(x)=1] | \leq  \tau'$,
a number $v$ that with probability $1-\delta$  satisfies $ | \E_{x \sim D} [ \chi ( x, f(x)) ]  -  v | \le \tau+\tau'$.

\begin{enumerate}

\item  Empirically estimate the values
           $\E_{x \sim D} [ \chi (x, -1)]$ and $\E_{x \sim D_{f, +}} [  \chi (x, 1) - \chi (x, -1)]$ to within an additive $\pm \tau/2$ with confidence
           probability $1-\delta/2$. Let $\widetilde{E}_1,\widetilde{E}_2 $ be the corresponding estimates.

\item Output $v = \widetilde{E}_1 + \widetilde{E}_2 \cdot \widetilde{b_f}. $
\end{enumerate}
\end{framed}

\ignore{
To see that the above expectations can indeed by empirically estimated, recall that by assumption both $D$ and $D_{f,+}$ are samplable. }
By Fact~\ref{fact:rs}, we can estimate each expectation using $m = \Theta \left( (1/\tau^2) \log(1/\delta) \right)$ samples (from $D$, $D_{f,+}$ respectively).
For each such sample the estimation algorithm needs to evaluate the function $\chi$ (once for the first expectation and twice for the second).
Hence, the total number of queries to $\chi$ is $O(m)$, i.e., the subroutine  {\tt Simulate-STAT}$^D_f$ runs in time $O(m \cdot t(n))$ as desired.

By a union bound, with probability $1-\delta$ both estimates will be $\pm \tau/2$ accurate. The bound (\ref{eqn:req}) follows from this latter fact
and (\ref{eqn:identity}) by a straightforward application of the triangle inequality. This completes the proof of Lemma~\ref{lem:sq-sim}.
\end{proof}

\noindent Given the above lemma, we can state and prove our general
result \new{for simulating $\sq$ algorithms:}
\begin{proposition} \label{prop:sq-sim}
Let $\C$ be a concept class and $D$ be a samplable distribution over $\bn$. Suppose there exists an $\sq$-learning algorithm $\A_{\sq}$ for $\C$
under $D$ with the following performance:
$\A_\sq$ runs in time $T_1 = t_1(n, 1/\eps, 1/\delta)$, each query provided to $\stat(f,D)$ can be evaluated in time $T_2 = t_2(n)$,
and the minimum value of the tolerance provided to $\stat(f, D)$
in the course of its execution is $\tau = \tau(n, 1/\eps)$.
Then, there exists an algorithm $\A_{\sqs}$ that is given access to
\begin{enumerate}
\item[(i)] independent samples from $D_{f,+}$;  and
\item[(ii)] a number $\widetilde{b_f} \in [0,1]$,
\end{enumerate}
and efficiently simulates the behavior of $\A_{\sq}$. In particular, $\A_\sqs$
has the following performance guarantee:
on input an accuracy $\eps$ and a confidence $\delta$,
it uses $m = O\left( (1/\tau^2)  \cdot \log(T_1/\delta) \cdot T_1  \right)$ samples from \nnew{$D$ and} $D_{f,+}$,
runs in time $T_{\sqs} = O\left(m T_2 \right)$,
and \nnew{if $| \widetilde{b_f} - \Pr_{x \sim D} [f(x)=1]| \leq  \tau/2$ }
then with probability $1-\delta$ it
outputs a hypothesis $h:\bn \to \bits$ such that
$\Pr_{x \sim D} [h(x) \ne f(x)] \leq \eps.$
\end{proposition}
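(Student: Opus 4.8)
The plan is to simulate $\A_{\sq}$ step by step, intercepting each of its calls to $\stat(f,D)$ and answering them via the subroutine {\tt Simulate-STAT}$^D_f$ of Lemma~\ref{lem:sq-sim}. Concretely, $\A_{\sqs}$ runs $\A_{\sq}$ with accuracy parameter $\eps$ and confidence parameter $\delta/2$; whenever $\A_{\sq}$ issues a query $(\chi,\tau_q)$ to $\stat(f,D)$ — where necessarily $\tau_q \ge \tau = \tau(n,1/\eps)$, since $\tau$ is by assumption the smallest tolerance $\A_{\sq}$ ever uses — the simulator invokes {\tt Simulate-STAT}$^D_f(D, D_{f,+}, \chi, \tau/2, \widetilde{b_f}, \delta/(2T_1))$ and returns the resulting value to $\A_{\sq}$ as the oracle's answer. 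Here we use that $D$ is samplable (so $\A_{\sqs}$ can draw from it) together with the given sample access to $D_{f,+}$.

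First I would argue that, with high probability, every simulated answer is a legal oracle response. By hypothesis $|\widetilde{b_f} - \Pr_{x\sim D}[f(x)=1]| \le \tau/2$, so applying Lemma~\ref{lem:sq-sim} with its tolerance parameter set to $\tau/2$ gives, with probability $1-\delta/(2T_1)$, a value $v$ whose error is at most $\tau/2 + \tau' \le \tau/2 + \tau/2 = \tau \le \tau_q$, where $\tau' \le \tau/2$ denotes the actual error of $\widetilde{b_f}$. Hence $v$ is a valid response of $\stat(f,D)$ to $(\chi,\tau_q)$. Since $\A_{\sq}$ halts within $T_1$ time steps (and each oracle call is unit time) it makes at most $T_1$ queries, so a union bound shows that with probability at least $1-\delta/2$ all queries are answered legally. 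Conditioned on that event, the simulated execution of $\A_{\sq}$ is distributed exactly as a genuine execution against $\stat(f,D)$, so with probability $1-\delta/2$ it outputs a hypothesis $h$ with $\Pr_{x\sim D}[h(x)\ne f(x)] \le \eps$; a final union bound over the two failure events yields overall success probability $1-\delta$.

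Next I would tally the resources. Each call to {\tt Simulate-STAT}$^D_f$ with the parameters above draws $O\big((1/\tau^2)\log(T_1/\delta)\big)$ samples from $D$ and from $D_{f,+}$ and runs in time $O\big((1/\tau^2)\log(T_1/\delta)\cdot T_2\big)$, using that each query function is evaluable in time $T_2 = t_2(n)$. Since there are at most $T_1$ such calls, the total sample complexity is $m = O\big((1/\tau^2)\log(T_1/\delta)\cdot T_1\big)$ and the total running time is $O(m T_2)$, which also absorbs the $T_1$ steps of running $\A_{\sq}$ itself since $m \ge T_1$. These match the claimed bounds.

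I expect the only genuinely delicate point to be the apportioning of the per-query error budget: the Monte Carlo estimates inside {\tt Simulate-STAT}$^D_f$ contribute error $\tau/2$ and the imprecision of the supplied estimate $\widetilde{b_f}$ contributes a further $\le \tau/2$, so the total error $\le \tau$ lies within \emph{every} tolerance the learner requests precisely because $\tau$ is defined to be the minimum tolerance ever used; a weaker guarantee on $\widetilde{b_f}$ (say, only $\pm\tau$) would not suffice. Everything else — the choice of confidence $\delta/(2T_1)$ per call and the two union bounds — is routine bookkeeping.
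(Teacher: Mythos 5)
Your proof is correct and follows essentially the same approach as the paper: simulate $\A_\sq(\eps,\delta/2)$, answer each statistical query via {\tt Simulate-STAT}$^D_f$ with tolerance $\tau/2$ and per-call confidence $\delta/(2T_1)$, argue that the combined error $\tau/2 + \tau/2 = \tau$ stays within every requested tolerance, and union bound over the at-most-$T_1$ queries. (Incidentally, you correctly pass $\widetilde{b_f}$ as the fifth argument to {\tt Simulate-STAT}$^D_f$; the paper's pseudocode has an apparent typo there, writing $\tau/2$ in that slot.)
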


\begin{proof}
The simulation procedure is very simple. We run the algorithm $\A_{\sq}$ by simulating
its queries using algorithm  {\tt Simulate-STAT}$^D_f$.
The algorithm is described in the following pseudocode:

\begin{framed}
\noindent Algorithm   $\A_{\sqs} (\new{D, D_{f, +}}, \eps, \widetilde{b_f}, \delta)$:

\smallskip

\noindent {\bf Input:} Independent samples from $D$ and $D_{f,+}$, $\widetilde{b_f} \in [0,1]$, $\eps$, $\delta >0$.

\noindent {\bf Output:}
\nnew{If $| \widetilde{b_f} - \Pr_{x \sim D} [f(x)=1] | \leq  \tau/2$,}
a hypothesis $h$ that
with probability $1-\delta$  satisfies $\Pr_{x\sim D}[h(x) \neq f(x)] \leq \eps.$

\begin{enumerate}

\new{\item Let $\tau = \tau(n, 1/\eps)$ be the minimum accuracy ever used in a query to $\stat(f, D)$ during the execution of $\A_\sq (\eps, \delta/2)$.}

\item  Run the algorithm $\A_\sq (\eps, \delta/2)$, by simulating each query to $\stat(f, D)$ \new{as follows:}
whenever $\A_\sq$ makes a query $(\chi, \tau)$ to $\stat(f, D)$, the simulation algorithm
runs  {\tt Simulate-STAT}$^D_f (\new{D, D_{f,+},} \chi, \tau/2, \tau/2,  \delta/(2T_1))$.

\item Output the hypothesis $h$ obtained by the simulation.
\end{enumerate}
\end{framed}

Note that we run the algorithm $\A_\sq$ with confidence probability $1-\delta/2$.
Moreover, each query to the $\stat(f,D)$ oracle is simulated with confidence $1-\delta/(2T_1)$.
Since $\A_{\sq}$ runs for at most $T_1$ time steps, it certainly performs at most $T_1$ queries in total.
Hence, by a union bound over these events, with probability $1-\delta/2$ all answers to its queries will be accurate to within
an additive $\pm \tau/2$. By the guarantee of algorithm $\A_\sq$ and a union bound, with probability
$1-\delta$, the algorithm $\A_{\sqs}$ will output a hypothesis $h:\bn \to \bits$
such that $\Pr_{x \sim D} [h(x) \ne f(x)] \le \eps$. The sample complexity and running time follow from the bounds for {\tt Simulate-STAT}$^D_f$.
This completes the proof of Proposition~\ref{prop:sq-sim}.
\end{proof}

Proposition~\ref{prop:sq-sim} tells us we can efficiently simulate a
statistical query algorithm for a concept class $\C$ under
\new{a samplable} distribution
$D$ if we have access to samples drawn from $D_{f, +}$
 (and a very accurate estimate of $\Pr_{x\sim D}[f(x)=1]$).
In our setting, we have that $D = \U_{g^{-1}(1)}$
\new{ where $g \in \C'$ is the function that is output by
$\A^{(\C,\C')}_\den$.}
\new{
So, the two issues we must handle are (i) obtaining samples from
$D$, and (ii) obtaining samples from $D_{f,+}.$
}

\nnew{

For (i), we note that, even though we do not have access to samples drawn {\em exactly} from $D$, it suffices
for our purposes to use a $\tau'$-sampler for $D$ for a sufficiently small $\tau'$.
To see this we use the following fact:

\begin{fact} \label{fact:approx-sample-D}
Let $D, D'$ be distributions over $\bn$ with $\dtv(D, D') \le \tau'$. Then for any bounded
function $\phi: \bn \to [-1,1]$ we have that $|\E_{x \sim D}[\phi(x)] - \E_{x \sim D'} [\phi(x)] | \le 2\tau'$.
\end{fact}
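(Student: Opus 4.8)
\textbf{Proof proposal for Fact~\ref{fact:approx-sample-D}.}
The plan is to expand the difference of expectations as a single sum over $\bn$ and bound it termwise using the boundedness of $\phi$, then recognize the resulting quantity as the $L_1$-distance between $D$ and $D'$, which is $2\dtv(D,D')$ by the identity recorded in Section~\ref{sec:prelims}. Concretely, first I would write
\[
\E_{x \sim D}[\phi(x)] - \E_{x \sim D'}[\phi(x)] = \littlesum_{x \in \bn} \phi(x)\bigl(D(x) - D'(x)\bigr),
\]
which is valid since both expectations are finite sums over the $2^n$ points of $\bn$.

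Next I would apply the triangle inequality followed by the bound $|\phi(x)| \le 1$ for all $x$, obtaining
\[
\Bigl| \littlesum_{x} \phi(x)\bigl(D(x) - D'(x)\bigr) \Bigr| \le \littlesum_{x} |\phi(x)| \, |D(x) - D'(x)| \le \littlesum_{x} |D(x) - D'(x)| = \|D - D'\|_1.
\]
Finally I would invoke the identity $\|D - D'\|_1 = 2\,\dtv(D,D')$ stated in the preliminaries together with the hypothesis $\dtv(D,D') \le \tau'$ to conclude $\|D-D'\|_1 \le 2\tau'$, which gives the claimed bound.

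There is essentially no obstacle here: the argument is a one-line Hölder/triangle-inequality estimate combined with the standard equivalence between total variation distance and $L_1$ distance. The only thing to be slightly careful about is that the fact is stated for \emph{all} bounded $\phi : \bn \to [-1,1]$ simultaneously, so the bound must be uniform in $\phi$ — but this is automatic since the final estimate $\|D-D'\|_1 \le 2\tau'$ does not depend on $\phi$ at all. (Note also that the constant $2$ is not tight in general but is exactly what is needed downstream; one could instead phrase it as $|\E_D[\phi] - \E_{D'}[\phi]| \le \|D-D'\|_1 = 2\dtv(D,D')$ directly.)
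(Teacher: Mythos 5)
Your proof is correct and follows essentially the same route as the paper: expand the difference of expectations as a sum over $\bn$, apply the triangle inequality and the bound $|\phi(x)| \le 1$, and identify the result with $\|D-D'\|_1 = 2\,\dtv(D,D') \le 2\tau'$.
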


\begin{proof}
By definition we have that
\begin{eqnarray*}
\left| \E_{x \sim D}[\phi(x)] - \E_{x \sim D'} [\phi(x)] \right|
&=& \left| \littlesum_{x \in \bn} \left( D(x) - D'(x) \right) \phi(x) \right| \\
&\le&  \littlesum_{x \in \bn} {\left| \left( D(x) - D'(x) \right) \right| \left| \phi(x) \right| }\\
&\le& \mathrm{max}_{x \in \bn} \left| \phi(x) \right| \cdot \littlesum_{x \in \bn} {\left| D(x) - D'(x) \right|}\\
&\le& 1 \cdot \|D - D' \|_1 \\
&=& 2 \dtv(D, D') \\
&\le& 2\tau'
\end{eqnarray*}
as desired.
\end{proof}

The above fact implies that the statement of Proposition~\ref{prop:sq-sim} continuous to hold with the same parameters
if instead of a $0$-sampler for $D$ we have access to a $\tau'$-sampler for $D$, for $\tau' = \tau/8$.
The only difference is that  in Step~1 of the subroutine {\tt Simulate-STAT}$^D_f$ we empirically estimate the expectation
$\E_{x \sim D'}[\chi(x, -1)]$ up to an additive $\pm \tau/4$. By Fact~\ref{fact:approx-sample-D}, this will be a
$\pm (\tau/4+2\tau') = \pm \tau/2$ accurate estimate for the $\E_{x \sim D}[\chi(x, -1)]$. That is, we have:

\begin{corollary} \label{cor:sq-sim}
The statement of Proposition~\ref{prop:sq-sim} continues to hold with the same parameters if instead of a $0$-sampler
for $D$ we have access to a $\tau' = \tau/8$-sampler for $D$.
\end{corollary}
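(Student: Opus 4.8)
The plan is to localize where the hypothesis ``$0$-sampler for $D$'' is actually used in the proof of Proposition~\ref{prop:sq-sim} and then check that this single use degrades gracefully when the sampler is only a $\tau'$-sampler. Inspecting the pipeline, the proof of Proposition~\ref{prop:sq-sim} invokes {\tt Simulate-STAT}$^D_f$ as a black box (via Lemma~\ref{lem:sq-sim}), and inside {\tt Simulate-STAT}$^D_f$ the only step that consumes draws from $D$ itself is Step~1, where one empirically estimates $\E_{x \sim D}[\chi(x,-1)]$; all other randomness used by the simulation is drawn from $D_{f,+}$ (obtained by rejection sampling on $\U_{f^{-1}(1)}$) and is untouched by the change. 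So it suffices to re-run the analysis of Lemma~\ref{lem:sq-sim} with Step~1 fed by i.i.d.\ draws from some $D'$ satisfying $\dtv(D,D') \le \tau'$, with $\tau' = \tau/8$ (here $\tau$ is the tolerance parameter passed to {\tt Simulate-STAT}$^D_f$), and confirm that conclusion~(\ref{eqn:req}) survives verbatim.

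The modification is: in Step~1, instead of estimating $\E_{x\sim D}[\chi(x,-1)]$ to additive accuracy $\pm\tau/2$, empirically estimate $\E_{x\sim D'}[\chi(x,-1)]$ to the slightly tighter accuracy $\pm\tau/4$ using the $\tau'$-sampler (this only changes the constant inside the $\Theta((1/\tau^2)\log(1/\delta))$ sample bound, so every sample-complexity and running-time bound in {\tt Simulate-STAT}$^D_f$, in Proposition~\ref{prop:sq-sim}, and hence in the overall simulation is unchanged up to constants absorbed in the $\Theta(\cdot)$). Since $\chi(\cdot,-1)$ takes values in $[-1,1]$, Fact~\ref{fact:approx-sample-D} gives $|\E_{x\sim D}[\chi(x,-1)] - \E_{x\sim D'}[\chi(x,-1)]| \le 2\tau' = \tau/4$, so by the triangle inequality the value computed in the modified Step~1 is within $\tau/4 + \tau/4 = \tau/2$ of the desired $\E_{x\sim D}[\chi(x,-1)]$ --- exactly the accuracy the original proof of Lemma~\ref{lem:sq-sim} demanded of that estimate. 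The estimate of $\E_{x\sim D_{f,+}}[\chi(x,1)-\chi(x,-1)]$ is unaffected (it still uses genuine $D_{f,+}$ samples), so combining the two estimates with $\widetilde{b_f}$ as in Step~2 and invoking identity~(\ref{eqn:identity}) and the triangle inequality exactly as before reproduces~(\ref{eqn:req}) with the same parameters.

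It then remains to propagate this upward, which is immediate: the proof of Proposition~\ref{prop:sq-sim} uses {\tt Simulate-STAT}$^D_f$ only through the guarantee of Lemma~\ref{lem:sq-sim}, so once that guarantee is re-established the union bound over $\A_\sq$'s at most $T_1$ queries and the final hypothesis guarantee $\Pr_{x\sim D}[h(x)\ne f(x)]\le\eps$ carry over with identical sample, time, and confidence bounds, giving the corollary. The one place where care is genuinely needed --- and the step I expect to be the main (if modest) obstacle --- is the audit in the first paragraph: one must make sure that no component invoked inside $\A_\sqs$ silently consumes \emph{exact} $D$-samples beyond Step~1 of {\tt Simulate-STAT}$^D_f$ (in particular, that any ``independent samples drawn from $D$'' that $\A_\sq$ is entitled to by Definition~\ref{def:sq-algo} are, in the instantiations we use, routed through statistical queries rather than raw sampling), since otherwise the $\tau'$-perturbation of the sampler would have to be controlled there as well.
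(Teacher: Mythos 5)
Your proof is correct and is essentially the paper's own argument: estimate $\E_{x\sim D'}[\chi(x,-1)]$ to additive accuracy $\pm\tau/4$ and invoke Fact~\ref{fact:approx-sample-D} to conclude this is a $\pm(\tau/4 + 2\tau') = \pm\tau/2$ estimate of $\E_{x\sim D}[\chi(x,-1)]$, after which Lemma~\ref{lem:sq-sim} and Proposition~\ref{prop:sq-sim} go through unchanged. The cautionary audit in your final paragraph (whether $\A_\sq$ itself consumes raw $D$-samples beyond Step~1 of {\tt Simulate-STAT}$^D_f$) is a reasonable point that the paper leaves implicit, but it does not alter the argument.
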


}

\new{
For (ii), even though we do not have access to the
distribution $D=\U_{g^{-1}(1)}$ directly,}
we note below that we can efficiently \new{sample from $D_{f,+}$}
using samples from $\U_{f^{-1}(1)}$ together with
\new{evaluations of $g$ (recall again that $g$ is provided as the
output of the densifier).}

\begin{claim}\label{claim:sim-sample}
Let $g:\bn \to \bits$ be a \new{$t_g(n)$ time computable function} such that
$\Pr_{x \sim \U_{f^{-1}(1)}} \left[ g(x)= 1 \right] \new{\geq \eps'}.$
There is an efficient subroutine that \new{is given $\eps'$ \newer{and a circuit to compute $g$} as input,}
uses $m = O((1/\new{\eps'}) \log(1/\delta))$
samples from $\U_{f^{-1}(1)}$, runs in time $O(m \cdot t_g(n))$,
and with probability $1-\delta$ outputs a sample $x$ such that $x \sim D_{f, +}$, where $D = \U_{g^{-1}(1)}$.
\end{claim}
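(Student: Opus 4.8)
The plan is to show that the desired subroutine is simply rejection sampling: draw $x \sim \U_{f^{-1}(1)}$, evaluate $g(x)$, and output $x$ as soon as some drawn point satisfies $g(x)=1$. The first step is to identify the target distribution concretely. Since $D = \U_{g^{-1}(1)}$ puts mass $1/|g^{-1}(1)|$ on each point of $g^{-1}(1)$ and mass $0$ elsewhere, the definition of $D_{f,+}$ gives, for $x \in f^{-1}(1)$,
\[
D_{f,+}(x) = \frac{D(x)}{D(f^{-1}(1))} = \frac{\mathbf{1}[x \in g^{-1}(1)]/|g^{-1}(1)|}{|f^{-1}(1) \cap g^{-1}(1)|/|g^{-1}(1)|} = \frac{\mathbf{1}[x \in g^{-1}(1)]}{|f^{-1}(1) \cap g^{-1}(1)|}.
\]
Thus $D_{f,+}$ is exactly the uniform distribution over $f^{-1}(1) \cap g^{-1}(1)$ (this set is nonempty by the hypothesis $\Pr_{x \sim \U_{f^{-1}(1)}}[g(x)=1] \geq \eps' > 0$).

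Second, I would verify that the rejection procedure produces precisely this distribution. If $x$ is drawn uniformly from $f^{-1}(1)$ and we condition on the event $g(x)=1$, then for each $y \in f^{-1}(1) \cap g^{-1}(1)$ we get $\Pr[x = y \mid g(x)=1] = (1/|f^{-1}(1)|) / \Pr_{x \sim \U_{f^{-1}(1)}}[g(x)=1]$, which is the same value for every such $y$; hence, conditioned on acceptance, $x$ is uniform over $f^{-1}(1)\cap g^{-1}(1)$, i.e., distributed exactly according to $D_{f,+}$. Note this is an \emph{exact} (not approximate) equality, which is what the claim asks for.

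Third, I would bound the resources. The acceptance probability of a single draw is $\Pr_{x \sim \U_{f^{-1}(1)}}[g(x)=1] \geq \eps'$, so after $m$ independent draws the probability that none is accepted is at most $(1-\eps')^m \leq e^{-\eps' m}$, which is at most $\delta$ once $m = \lceil (1/\eps')\ln(1/\delta)\rceil = O((1/\eps')\log(1/\delta))$. Each draw costs one sample from $\U_{f^{-1}(1)}$ plus one evaluation of $g$ (time $t_g(n)$) and an $O(1)$-time comparison, so the total running time is $O(m \cdot t_g(n))$, as claimed. There is essentially no substantive obstacle here — the only point requiring care is checking that the conditional distribution is exactly uniform (done above) and that the failure probability bound matches the stated sample size; both are routine.
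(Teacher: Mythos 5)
Your proposal is correct and takes exactly the same approach as the paper: rejection sampling from $\U_{f^{-1}(1)}$ until a point satisfying $g$ is found, with the observation that the conditional distribution is precisely $D_{f,+}$ and a standard $(1-\eps')^m \le \delta$ bound on the failure probability. Your write-up is actually more detailed than the paper's, which simply asserts ``it is clear that such a sample $x$ is distributed according to $D_{f,+}$'' where you explicitly verify that $D_{f,+}$ is uniform on $f^{-1}(1)\cap g^{-1}(1)$.
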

\begin{proof}
To simulate a sample from $D_{f, +}$ we simply
draw samples from $\U_{f^{-1}(1)}$ until we obtain a sample $x$ with  $g(x)=1$.
The following pseudocode makes this precise:
\begin{framed}
\noindent Subroutine   {\tt Simulate-sample}$^{D_{f, +}} (\new{\U_{f^{-1}(1)}, g, \eps', \delta})$:

\smallskip

\noindent {\bf Input:} Independent samples from $\U_{f^{-1}(1)}$,
a circuit computing $g$, \new{a value $\eps' > 0$ such that $\eps' \le \Pr_{x \sim \U_{f^{-1}(1)}} \left[ g(x) = 1 \right]$ and confidence parameter
$\delta$.}

\noindent {\bf Output:} A point $x \in \bn $ that with probability $1-\delta$  satisfies $x \sim D_{f, +}$.

\begin{enumerate}

\item Repeat the following at most $m = \Theta \left( (1/\new{\eps'})
\log(1/\delta) \right)$ times:
\begin{enumerate}
\item Draw a sample $x \sim \U_{f^{-1}(1)}$.

\item If the circuit for $g$ evaluates to 1 on input $x$ then output $x$.
\end{enumerate}
\item If no point $x$ with $g(x)=1$ has been obtained,
halt and output ``failure.''
\end{enumerate}
\end{framed}

Since $\Pr_{x \sim \U_{f^{-1}(1)}} [g(x) = 1] \new{\geq \eps'}$,
after repeating this process $m = \Omega \left( (1/\new{\eps'}) \log(1/\delta)\right)$ times, we will obtain a satisfying assignment to $g$
with probability at least $1-\delta$.
It is clear that such a sample $x$ is distributed according to $D_{f, +}$.
For each sample we need to evaluate $g$ once, hence the running time follows.
\end{proof}

\medskip
\noindent {\bf Getting a good estimate $\widetilde{b_f}$ of
$\Pr_{x \sim D}[f(x)=1]$.}
The simulations presented above require an additively accurate
estimate $\widetilde{b_f}$ of $\Pr_{x \sim D}[f(x)=1]$.  We now show
that in our context, such an estimate can be easily obtained if we have
access to a good estimate $\widehat{p}$ of $p=\Pr_{x \in \U_n}[f(x)=1]$, using
 the fact that we have an efficient approximate counting
 algorithm for $\C'$ and that $D \equiv \U_{g^{-1}(1)}$ where $g \in \C'$.

\begin{claim} \label{claim:estimate-bias}
Let $g:\bn \to \bits$, $g \in \C'$ be a \new{$t_g(n)$ time computable function}, satisfying
$\Pr_{x \sim \U_{g^{-1}(1)}}[f(x)=1] \geq \gamma'$ and
$\Pr_{x \sim \U_{f^{-1}(1)}}[g(x)=1] \geq 1 - \eps'$.
Let ${\mathcal{A}}^{\C'}_{\co}$ be an $(\eps,\delta)$-approximate counting algorithm
for $\C'$ running in time $T_{\co}(n,1/\eps,1/\delta)$.
There is a procedure {\tt Estimate-Bias} with the following behavior:
{\tt Estimate-Bias} takes as input a value $0 < \widehat{p} \leq 1$, a parameter $\tau'>0$, a confidence
parameter $\delta'$, and \new{a representation of $g \in \C'.$}
{\tt Estimate-Bias} runs in time $O(t_g \cdot T_\co(n,2/\tau',1/\delta'))$
and satisfies the following:  if $p \eqdef \Pr_{x \sim \U_n}[f(x)=1] <
\widehat{p} \leq (1+\eps')p$, then with probability
$1-\delta'$ {\tt Estimate-Bias} outputs a value $\widetilde{b_f}$ such that
$| \widetilde{b_f} -\Pr_{x \sim D}[f(x)=1]| \leq \tau'$.
\end{claim}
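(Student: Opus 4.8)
\noindent The plan is to express the target quantity as a ratio of two probabilities under the uniform distribution $\U_n$, estimate each of the two pieces separately, and divide. Write $b_f \eqdef \Pr_{x \sim D}[f(x)=1]$, where $D = \U_{g^{-1}(1)}$; then
\[ b_f \;=\; \frac{|f^{-1}(1) \cap g^{-1}(1)|}{|g^{-1}(1)|} \;=\; \frac{\Pr_{x \sim \U_n}[f(x)=1 \text{ and } g(x)=1]}{\Pr_{x \sim \U_n}[g(x)=1]} \;=\; \frac{p \cdot \alpha}{q}, \]
where $p \eqdef \Pr_{x \sim \U_n}[f(x)=1]$, $q \eqdef \Pr_{x \sim \U_n}[g(x)=1]$, and $\alpha \eqdef \Pr_{x \sim \U_{f^{-1}(1)}}[g(x)=1]$. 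The hypotheses on $g$ give $\alpha \in [1-\eps',1]$ and, since $\gamma' > 0$ forces $g^{-1}(1) \neq \emptyset$, $q \ge 2^{-n} > 0$. The key observation is that $\widehat{p}$ — which we are promised satisfies $p < \widehat{p} \le (1+\eps')p$ — is already a good multiplicative estimate of the numerator $p\alpha$ (overshooting it by a factor in $[1,\tfrac{1+\eps'}{1-\eps'}]$), while $q$ can be estimated multiplicatively precisely because $g \in \C'$ and we are handed an approximate counting algorithm $\A^{\C'}_{\co}$ for $\C'$.

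Accordingly, {\tt Estimate-Bias}$(\widehat p, \tau', \delta', g)$ would simply run $\A^{\C'}_{\co}$ on input $g$ with accuracy parameter $\eta \eqdef \tau'/2$ and confidence $\delta'$ to obtain a value $\widehat q$, and output $\widetilde{b_f} \eqdef \widehat p/\widehat q$. The running time is then that of a single invocation of $\A^{\C'}_{\co}$, i.e.\ $T_\co(n, 2/\tau', 1/\delta')$, plus an $O(t_g)$ overhead for handling/evaluating the representation of $g$, which matches the claimed bound.

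For correctness I would condition on the (probability $\ge 1-\delta'$) event that $\A^{\C'}_{\co}$ succeeds, i.e.\ $q/(1+\eta) \le \widehat q \le (1+\eta)q$, and then chain the multiplicative estimates. Using $p\alpha = b_f q$ and $\alpha \le 1$,
\[ \widetilde{b_f} \;=\; \frac{\widehat p}{\widehat q} \;\ge\; \frac{p}{(1+\eta)q} \;=\; \frac{b_f}{(1+\eta)\alpha} \;\ge\; \frac{b_f}{1+\eta}, \]
and using $\widehat p \le (1+\eps')p$ together with $\alpha \ge 1-\eps'$,
\[ \widetilde{b_f} \;=\; \frac{\widehat p}{\widehat q} \;\le\; \frac{(1+\eps')(1+\eta)\,p}{q} \;=\; \frac{(1+\eps')(1+\eta)}{\alpha}\, b_f \;\le\; \frac{(1+\eps')(1+\eta)}{1-\eps'}\, b_f. \]
Since $b_f \le 1$, these two bounds yield $|\widetilde{b_f} - b_f| \le \max\{\eta,\ \tfrac{(1+\eps')(1+\eta)}{1-\eps'} - 1\} = O(\tau' + \eps')$.

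The one point that needs care — and hence the main obstacle — is this last parameter juggling. All of the error sources ($\widehat p$ overestimates $p$, which itself overestimates the true numerator $p\alpha$; and $\widehat q$ may underestimate $q$) push $\widetilde{b_f}$ in the same upward direction, so their multiplicative slacks multiply rather than cancel, and to land within additive $\tau'$ we genuinely need the densifier accuracy $\eps'$ to be at most a sufficiently small constant fraction of $\tau'$. In our applications this costs nothing: one simply runs the densifier so that the resulting $g$ satisfies the coverage hypothesis with a value $\eps' \le \tau'/10$ (say). Everything else is a routine verification.
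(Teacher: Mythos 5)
Your proposal follows the same strategy as the paper's own proof: run the approximate counting algorithm on $g$ with accuracy $\Theta(\tau')$ to estimate $\Pr_{x\sim\U_n}[g(x)=1]$, then output $\widehat p$ divided by that estimate. Your decomposition $b_f = p\alpha/q$ is the same quantity the paper tracks as $B/A$ with $A = |g^{-1}(1)|$, $B = |f^{-1}(1)\cap g^{-1}(1)|$, and your chain of multiplicative bounds mirrors theirs.

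Where you depart from the paper is in the final accounting, and here your version is the more careful one. You correctly flag that the multiplicative slacks from $\widehat p$ overshooting $p$ (factor up to $1+\eps'$), from $\alpha$ falling short of $1$ (factor down to $1-\eps'$), and from the approximate counter ($1\pm\tau'/2$) all push the ratio in the same direction, so the combined error is of order $\tau'+\eps'$ rather than $\tau'$. The paper's own derivation loses the $\eps'$ contribution through a spurious algebraic step: writing $\eps^\star = \tau'/2$ for the counter's accuracy, it asserts
\[
{\frac B A}\,(1+\eps')(1+\eps^\star)\left(1+{\frac{\eps'}{1-\eps'}}\right) \;=\; {\frac B A}\,(1+\eps^\star),
\]
but the left-hand side equals $\frac{B}{A}\cdot\frac{(1+\eps')(1+\eps^\star)}{1-\eps'}$, so the stated equality holds only when $\eps'=0$. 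Your observation---that the claim, as phrased, implicitly requires $\eps'$ to be at most a small constant fraction of $\tau'$, a side condition that must be verified at the point of invocation---is exactly the caveat your more careful bookkeeping makes explicit, and it is one the paper's proof does not actually discharge.
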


\begin{proof}
The procedure {\tt Estimate-Bias} is very simple.
It runs ${\mathcal{A}}^{\C'}_{\co}$ on inputs $\eps^\star = \tau'/2,\delta'$, using the \new{representation
for $g \in \C'$}.  Let $p_g$ be the value returned by the
approximate counter; {\tt Estimate-Bias} returns $\widehat{p}/p_g.$

The claimed running time bound is obvious.
To see that the procedure is correct, first observe that by Definition~\ref{def:approx-count},
with probability $1-\delta'$ we have that
\[
{\frac {|g^{-1}(1)|}{2^n}} \cdot
{\frac 1 {1+\eps^\star}} \leq p_g \leq {\frac {|g^{-1}(1)|} {2^n}} \cdot (1+\eps^\star).
\]
For the rest of the argument we assume that the above inequality indeed holds.
Let $A$ denote $|g^{-1}(1)|$, let $B$ denote $|f^{-1}(1) \cap g^{-1}(1)|$, and
let $C$ denote $|f^{-1}(1) \setminus g^{-1}(1)|$, so the true value
$\Pr_{x \sim D}[f(x)=1]$ equals ${\frac B A}$ and the above inequality can be rephrased as
\[
{\frac A {1+\eps^\star}} \leq p_g \cdot 2^n \leq A \cdot (1+\eps^\star).
\]
By our assumption on $\widehat{p}$ we have that
\[
B+C \leq \widehat{p} \cdot 2^n \leq (1+\eps')(B+C);
\]
since $\Pr_{x \sim \U_{f^{-1}(1)}}[g(x)=1] \geq 1-\eps'$ we have
\[
{\frac C {B+C}} \leq \eps' \quad \quad \text{(i.e.,~}
C \leq {\frac {\eps'}{1-\eps'}} \cdot B \text{ )};
\]
and since $\Pr_{x \sim \U_{g^{-1}(1)}}[f(x)=1] \geq \gamma'$ we have
\[
{\frac B A} \geq \gamma'.
\]
Combining these inequalities we get
\[
{\frac 1 {1+\eps^\star}} \cdot {\frac B A}
\leq
{\frac 1 {1+\eps^\star}} \cdot {\frac {B+C} A}
\leq {\frac {\widehat{p}}{p_g}} \leq {\frac B A} \cdot (1+\eps')(1+\eps^\star) \left( 1 + {\frac {\eps'}
{1-\eps'}}\right) = {\frac B A} \cdot (1 + \eps^\star)
\]
Hence \[
\left|{\frac B A} - {\frac {\widehat{p}}{p_g}}\right| \leq {\frac B A} \left(1 + \eps^\star - {\frac 1 {1+\eps^\star}}\right) \leq {\frac {2 \eps^\star}{1+\eps^\star}} \leq 2 \eps^\star,
\]
where we have used $B \leq A$.  Recalling that $\eps^\star = \tau'/2$, the lemma is proved.
\end{proof}


\subsection{An algorithm that succeeds given the (approximate)
bias of $f.$}
\label{ssec:algo-known-bias}

In this section, we present an algorithm $\A'^\C_\inv (\eps,\delta,
\widehat{p} )$
which, in addition to samples from $\U_{f^{-1}(1)}$, takes as input
parameters $\eps,\delta,\widehat{p}$.
The algorithm succeeds in outputting a hypothesis distribution $D_f$
satisfying $\dtv(D_f, \U_{f^{-1}(1)})
\le \eps$ if the input parameter $\widehat{p}$ is a
multiplicatively accurate approximation to $\Pr_{x \sim \U_n} [f(x)=1]$.
\ignore{\new{ Note that in the pseudocode for algorithm $\A'^\C_\inv
(\eps,\delta, \widehat{p})$ (given below) ``$D$'' denotes the distribution
$D = \U_{g^{-1}(1)}$,} where $g$ is the output of the densifier called in
Step~1 of the algorithm.}
The algorithm follows the three high-level steps previously outlined and uses the subroutines
of the previous subsection to simulate the statistical query algorithm.
Detailed pseudocode follows:

\begin{framed}
\noindent Algorithm  $\A'^\C_\inv ( \U_{f^{-1}(1)}, \eps, \delta, \widehat{p})$:

\smallskip

\noindent {\bf Input:} Independent samples from $\U_{f^{-1}(1)}$,
accuracy and confidence parameters $\eps,\delta$,
and a value $1/2^n < \widehat{p} \leq 1.$

\noindent {\bf Output:}
If \new{$\Pr_{x \sim \U_n}[f(x)=1] \leq \widehat{p}
< (1+\eps)\Pr_{x \sim \U_n}[f(x)=1]$,}
\ignore{OLD:   $|\widetilde{b_f} - \Pr_{x \sim D}[f(x)=1]| \le \tau_2/2$}
 with probability $1-\delta$ outputs an $\eps$-sampler
\new{$C_f$} for $\U_{f^{-1}(1)}$ .\\

\begin{enumerate}

\item
{\bf [Run the densifier to obtain $g$]}

Fix $\eps_1 \eqdef \eps/\new{6}$ and
\new{$\gamma \eqdef \gamma (n,1/\eps_1,3/\delta)$}.
Run the $\gamma$-densifier $\A^{(\C, \C')}_{\den}(\eps_1, \delta/3,\new{\widehat{p}})$
using random samples from $\U_{f^{-1}(1)}.$
Let $g \in \C'$ be its output.

\item {\bf [Run the $\sq$-learner, using the approximate uniform generator
for $g$, to obtain hypothesis $h$]}

\begin{enumerate}

\item
Fix $\eps_2 \eqdef \eps \gamma/\new{7}$,
$\tau_2 \eqdef \tau(n, 1/\eps_2)$ and
$m \eqdef \Theta \left( (1/\tau_2^2)
\cdot \log(T_1/\delta) \cdot T_1  \right)$,
where $T_1 = t_1(n, 1/\eps_2, 12/\delta)$.

\item Run the generator $\A^{\C'}_{\gen}(g, \tau_2/8,
\new{\delta/(12m)} )$ $m$ times and let $S_D \subseteq \bn$ be the multiset of samples obtained.

\item Run   {\tt Simulate-sample}$^{D_{f, +}} (\U_{f^{-1}(1)}, g, \gamma,  \delta/(12m))$ $m$ times and let $S_{D_{f, +}} \subseteq \bn$
         be the multiset of samples obtained.

\item \newer{Run {\tt Estimate-Bias} with parameters $\widehat{p}$, $\tau' = \tau_2/2$,
$\delta' = \delta/12$ , using \new{the representation} for $g \in \C'$, and let $\widetilde{b_f}$ be the value it
returns.}

\item Run $\A_{\sqs} (S_D, S_{D_{f, +}}, \eps_2, \widetilde{b_f}, \delta/12)$.
          Let $h: \bn \to \bits$ be the output hypothesis.
\end{enumerate}

\item
{\bf [Output the sampler which does rejection sampling according to $h$ on draws from
the approximate uniform generator for $g$]}

Output the sampler $C_f$ which works as follows:

\framebox{
\medskip \noindent \begin{minipage}{14cm}

For $i=1$ to $t = \Theta\left( (1/\gamma) \log(1/(\delta\eps) \right)$ do:
\begin{enumerate}

\item Set $\eps_3 \eqdef \new{\eps \gamma/48000}.$

\item Run the generator $\A^{\C'}_{\gen}(g, \eps_3, \newer{\delta \eps /(12t)}
)$ and let $x^{(i)}$
\ignore{$\sim D'$} be its output.

\item If $h(x^{(i)})  =1$, output $x^{(i)}$.
\end{enumerate}
If no $x^{(i)}$ with $h(x^{(i)})  =1$ has been obtained, output
\newer{the default element $\bot$.}

\end{minipage}
}

\newer{
Let $\hat{D}$ denote the distribution over $\{-1,1\}^n \cup \{\bot\}$
for which $C_f$ is a 0-sampler, and let $\hat{D}'$ denote
the conditional distribution of $\hat{D}$
restricted to $\bits^n$ (i.e., excluding $\bot$).
}

\end{enumerate}
\end{framed}

We note that by inspection of the code for $C_f$, we have
that the distribution $\hat{D}'$ is identical to
$(D_{g,\eps_3})_{h^{-1}(1)}$, where $D_{g,\eps_3}$ is the
distribution corresponding to the output
of the approximate uniform generator when called on function $g$
and error parameter $\eps_3$ (see Definition~\ref{def:approx-unif-gen})
and $(D_{g,\eps_3})_{h^{-1}(1)}$ is $D_{g,\eps_3}$ conditioned on
$h^{-1}(1)$.

We have the following:

\begin{theorem} \label{thm:known-bias}
Let $p \eqdef \Pr_{x \in \U_n}[f(x)=1]$.
Algorithm  $\A'^\C_\inv (\eps,\delta,\widehat{p})$
has the following behavior:
If $p \leq \widehat{p} < (1+\eps)p$,
then with probability $1-\delta$
the following both hold:

\begin{itemize}

\item [(i)] the output $C_f$ is a sampler for a distribution $\hat{D}$
such that $\dtv(\hat{D}, \U_{f^{-1}(1)}) \le \eps$; and

\item [(ii)] \new{the functions $h,g$ satisfy
$|h^{-1}(1) \cap g^{-1}(1)|/|g^{-1}(1)| \geq \gamma/2.$
}

\end{itemize}
\noindent 
\new{
The running time of$\A'^\C_\inv$ is 
polynomial in
$T_{\den}(n, 1/\eps, 1/\delta)$,
$T_{\gen}(n, 1/\eps,\newer{1/\delta})$,
$T_{\co}(n,1/\eps,1/\delta)$,
$t_1(n,1/\eps,1/\delta)$,
$t_2(n)$,
$1/\tau(n,1/\eps)$,
and
$1/\gamma(n,1/\eps,1/\delta)$.
}
\end{theorem}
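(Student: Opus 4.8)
The plan is to trace through the three conceptual steps of $\A'^\C_\inv$ and bound the error and failure probability accumulated at each step, using the subroutines already established (Lemma~\ref{lem:sq-sim}, Proposition~\ref{prop:sq-sim}, Corollary~\ref{cor:sq-sim}, Claim~\ref{claim:sim-sample}, Claim~\ref{claim:estimate-bias}). First I would condition on the event that the densifier in Step~1 succeeds: since $p \le \widehat p < (1+\eps)p$ and we call it with error parameter $\eps_1 = \eps/6$ and confidence $\delta/3$, with probability $1-\delta/3$ we get $g \in \C'$ with $\Pr_{x\sim\U_{f^{-1}(1)}}[g(x)=1] \ge 1-\eps_1$ and $\Pr_{x\sim\U_{g^{-1}(1)}}[f(x)=1] \ge \gamma$. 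Write $D = \U_{g^{-1}(1)}$. The key structural consequence is that $\U_{f^{-1}(1)}$ and $D_{f,+}$ are close: since $g$ captures all but an $\eps_1$ fraction of $f^{-1}(1)$, the distribution $D_{f,+}$ (which is $\U$ restricted to $f^{-1}(1)\cap g^{-1}(1)$) has $\dtv(D_{f,+}, \U_{f^{-1}(1)}) \le \eps_1$ — this is the quantitative version of the ``standard approach'' intuition recorded in Remark~\ref{rem:standard}, and it is what lets an approximation to $f$ under $D$ translate into an approximation to the uniform distribution over $f^{-1}(1)$.

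Next I would handle Step~2. The calls to $\A^{\C'}_{\gen}(g,\tau_2/8,\cdot)$ give us a $(\tau_2/8)$-sampler for $D$ (up to the $\bot$ caveat of Remark~\ref{rem:sampler}, folded into the failure probability); Claim~\ref{claim:sim-sample} applied with $\eps' = \gamma$ gives us samples from $D_{f,+}$; and Claim~\ref{claim:estimate-bias} applied with $\tau' = \tau_2/2$ gives us $\widetilde{b_f}$ with $|\widetilde{b_f} - \Pr_{x\sim D}[f(x)=1]| \le \tau_2/2$ (this invocation uses that $g \in \C'$ and that we have the approximate counter for $\C'$, and it needs the hypothesis $p \le \widehat p \le (1+\eps_1)p$, which holds since $\eps \ge \eps_1$... actually I should be careful here and note that $\widehat p < (1+\eps)p$ with $\eps \ge \eps'$ suffices with $\eps'=\eps_1$; I would double check the constants). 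Then Corollary~\ref{cor:sq-sim} (the $\tau'$-sampler version of Proposition~\ref{prop:sq-sim}) tells us that $\A_\sqs$ run with accuracy $\eps_2 = \eps\gamma/7$ and appropriate confidence outputs $h$ with $\Pr_{x\sim D}[h(x)\ne f(x)] \le \eps_2$, with failure probability controlled by the union bound over the $O(T_1)$ simulated queries plus the sampler failures. By a union bound over all the subroutine calls in Step~2 (each invoked with confidence parameter $\Theta(\delta/(\text{number of calls}))$, which is why the pseudocode uses $\delta/(12m)$, $\delta/12$, etc.), all of Step~2 succeeds with probability $1-\delta/3$.

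Finally, Step~3. Conditioned on the above, $h$ is $\eps_2$-close to $f$ under $D = \U_{g^{-1}(1)}$, and since $\Pr_{x\sim D}[f(x)=1]\ge\gamma$, we get $|h^{-1}(1)\cap g^{-1}(1)|/|g^{-1}(1)| \ge \gamma - \eps_2 \ge \gamma/2$ (choosing $\eps_2 \le \gamma/2$, which holds since $\eps_2 = \eps\gamma/7$ and we may assume $\eps \le 1$), giving part~(ii). For part~(i): the sampler $C_f$ draws from $\A^{\C'}_{\gen}(g,\eps_3,\cdot)$, whose output distribution $D_{g,\eps_3}$ is within a $(1+\eps_3)$ multiplicative factor of $D$, and does rejection sampling against $h$. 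I would argue in two moves: (a) $\hat D'$, the conditional-on-accept distribution, equals $(D_{g,\eps_3})_{h^{-1}(1)}$ by inspection (as the excerpt already notes), and this is close to $D_{h^{-1}(1)}$ because $D_{g,\eps_3}$ is multiplicatively close to $D$ and the normalizing mass $\Pr_{D_{g,\eps_3}}[h=1] \ge \gamma/2 - O(\eps_3)$ is bounded away from $0$; (b) $D_{h^{-1}(1)} = \U$ restricted to $h^{-1}(1)\cap g^{-1}(1)$, and since $h$ agrees with $f$ on all but $\le\eps_2$-fraction (w.r.t.\ $D$) of $g^{-1}(1)$, we get $\dtv(D_{h^{-1}(1)}, D_{f,+}) = O(\eps_2/\gamma) = O(\eps)$; then combine with $\dtv(D_{f,+},\U_{f^{-1}(1)}) \le \eps_1$. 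The remaining loss is the probability that all $t = \Theta((1/\gamma)\log(1/(\delta\eps)))$ rejection-sampling attempts fail and $C_f$ outputs $\bot$; since each attempt accepts with probability $\ge \gamma/2 - O(\eps_3) = \Omega(\gamma)$, this is at most $\delta\eps/2$, contributing $O(\delta\eps)$ to the variation distance. Adding up all the $O(\eps)$ terms with the right constants (this is why the pseudocode uses $\eps_1 = \eps/6$, $\eps_2 = \eps\gamma/7$, $\eps_3 = \eps\gamma/48000$) gives $\dtv(\hat D,\U_{f^{-1}(1)}) \le \eps$, and the three failure-probability buckets of $\delta/3$ each give overall confidence $1-\delta$. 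The running-time bound is then just bookkeeping: each subroutine's stated running time is polynomial in the listed quantities, and they are composed a polynomial number of times.

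\textbf{Main obstacle.} The conceptually delicate point — and the one I would spend the most care on — is the chain of approximations in part~(i) of Step~3: controlling how the \emph{multiplicative} error $\eps_3$ of the approximate generator for $g$ interacts with the rejection-sampling conditioning on $h^{-1}(1)$, and showing the normalizing constant $\Pr_{D_{g,\eps_3}}[h=1]$ stays $\Omega(\gamma)$ so that the conditioning does not blow up the variation distance. This is precisely where the approximate counting algorithm earns its keep (via Claim~\ref{claim:estimate-bias} feeding the right $\widetilde{b_f}$ into the SQ simulation, and via the density guarantee surviving into $h$), and it is the step where all the carefully-chosen constants $\eps/6$, $\eps\gamma/7$, $\eps\gamma/48000$ have to be verified to telescope correctly. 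Everything else is a union bound over explicitly-parametrized subroutine failures.
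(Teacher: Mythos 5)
Your proposal is correct and follows essentially the same route as the paper's proof: condition on the densifier event, then on the SQ-simulation event (which yields item~(ii) immediately from $\Pr_{x\sim D}[h=1]\geq\gamma-\eps_2$), then bound $\dtv(\hat D,\U_{f^{-1}(1)})$ by the same chain $\hat D\to\hat D'\to D'|_{h^{-1}(1)}\to\U_{g^{-1}(1)\cap h^{-1}(1)}\to\U_{g^{-1}(1)\cap f^{-1}(1)}\to\U_{f^{-1}(1)}$ that the paper packages as Lemma~\ref{lem:sixsums} via Claims~\ref{claim:c0}--\ref{claim:c3}. The one step you leave most compressed is the analogue of Claim~\ref{claim:c3} (getting $\dtv(\U_{g^{-1}(1)\cap h^{-1}(1)},\U_{g^{-1}(1)\cap f^{-1}(1)})=O(\eps_2/\gamma)$), which the paper carries out via the explicit three-term formula of Fact~\ref{fact:dtv-uniform}; and you rightly flag the mild constants mismatch between the theorem's hypothesis $\widehat p<(1+\eps)p$ and the densifier call with parameter $\eps_1=\eps/6$, which is present in the paper as written as well.
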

\begin{proof}
We give an intuitive explanation of the pseudocode in tandem with a
proof of correctness.
\new{We argue} that Steps~1-3 of the algorithm implement the corresponding
steps of our
high-level description and that the algorithm succeeds with confidence probability $1-\delta$.

We assume throughout the argument that indeed $\widehat{p}$
lies in $[p,(1+\eps)p).$  Given this, by Definition~\ref{def:dense}
with probability $1-\delta/3$ the function $g$
satisfies properties (a) and (b)
of Definition~\ref{def:dense}, i.e.,
$\Pr_{x \sim \U_{f^{-1}(1)}}[g(x)=1] \geq 1 - \eps_1$
and
$\Pr_{x \sim \U_{g^{-1}(1)}}[f(x)=1] \geq \gamma.$
We condition on this event
\new{(which we denote $E_1$)} going forth.

We now argue that Step~2 simulates the $\sq$ learning algorithm $\A^\C_{\sq}$
to learn the function $f \in \C$ under distribution $D \equiv \U_{g^{-1}(1)}$
to accuracy $\eps_2$ with confidence $1-\delta/3$.
Note that the goal of Step (b) is to obtain $m$ samples from a
distribution $D''$
(the distribution ``$D_{g,\tau_2/8}$'' of
Definition~\ref{def:approx-unif-gen})
such that $\dtv(D'', D) \le \tau_2/8$.
To achieve this, we call the approximate uniform generator for $g$ a total
of $m$ times
\newer{with failure probability $\delta/(12m)$ for each call
(i.e., each call returns $\bot$ with probability at most $\delta/(12m)$).
By a union bound, with failure probability
at most $\delta/12$, all calls to the generator
are successful and we obtain a set $S_D$ of $m$
independent samples from $D''$.} Similarly, the goal
of Step (c) is to obtain $m$ samples from $D_{f, +}$ and to achieve it we call
the subroutine {\tt Simulate-sample}$^{D_{f, +}}$
a total of $m$ times with failure probability $\delta/(12m)$ each.
By Claim~\ref{claim:sim-sample} and a union bound, with
failure probability at most $\delta/12$, this step is successful, i.e.,
it gives a set $S_{D_{f, +}}$ of $m$ independent samples from $D_{f,+}$.
\new{The goal of Step (d) is to obtain a value $\widetilde{b_f}$ satisfying
$|\widetilde{b_f} - \Pr_{x \sim D}[f(x)=1]| \leq \tau_2/2$; by Claim~\ref{claim:estimate-bias},
with failure probability at most $\delta/12$ the value $\widetilde{b_f}$ obtained in this
step is as desired.}  Finally, Step (e) applies the simulation algorithm  $\A_{\sqs}$ using the samples $S_D$ and $S_{D_{f, +}}$ and the estimate $\widetilde{b_f}$ of $\Pr_{x \sim D}[f(x)=1]$
obtained in the previous
steps.
Conditioning on Steps (b), (c) and (d) being successful
Corollary~\ref{cor:sq-sim} implies that
\ignore{if $|\widetilde{b_f} - \Pr_{x \sim D}[f(x)=1]| \le \tau_2/2$
then }Step (e) is successful with probability $1-\delta/12$, i.e., it
outputs a hypothesis $h$ that satisfies $\Pr_{x \sim D} [f(x) \ne h(x)]
\le \eps_2$.
A union bound over Steps (c), (d) and (e) completes the analysis of Step~2.
\new{For future reference, we let $E_2$ denote the event that the
hypothesis $h$ constructed in Step~2(e) has
$\Pr_{x \sim D}[f(x) \neq h(x)] \leq \eps_2$ (so we have
that $E_2$ holds with probability at least $1-\delta/3$; we additionally
condition on this event going forth)}.
We observe that since (as we have just shown)
$\Pr_{x \sim \U_{g^{-1}(1)}}[f(x) \neq h(x)] \leq \eps_2$ and
$\Pr_{x \sim \U_{g^{-1}(1)}}[f(x) =1 ] \geq \gamma$,
we have $\Pr_{x \sim \U_{g^{-1}(1)}}[h(x)=1] \geq \gamma - \eps_2
\geq \gamma/2$, which gives item (ii) of the theorem; so it remains to
establish item (i) and the claimed running time bound.

To establish (i), we need to prove that the output distribution
$\hat{D}$ of the sampler $C_f$
is $\eps$-close in total variation distance to $\U_{f^{-1}(1)}$.
This sampler attempts to draws $t$ samples from a distribution $D'$
such that $\dtv(D', D) \le \eps_3$ (this is the distribution
``$D_{g,\eps_3}$'' in the notation of Definition~\ref{def:approx-unif-gen})
and it outputs
one of these samples that satisfies $h$ (unless none of these samples
satisfies $h$, in which case it outputs
\newer{a default element $\bot$}).
The desired variation distance bound follows from the next
lemma for our choice of parameters:

\begin{lemma} \label{lem:sixsums}
Let \new{$\hat{D}$} be the output distribution of $\A'^\C_\inv(
\U_{f^{-1}(1)}, \eps, \delta, \widehat{p})$.
If \new{$\Pr_{x \sim \U_n}[f(x)=1] \leq \widehat{p}
\leq (1+\eps)\Pr_{x \sim \U_n}[f(x)=1]$}, then
\new{conditioned on Events $E_1$ and $E_2$, we have}
\begin{eqnarray*} \dtv(\hat{D}, \U_{f^{-1}(1)})
&\le&
{\frac {\eps} 6} +
{\frac {\eps} 6} +
{\frac {4 \eps_3} \gamma} +
\eps_1 + \frac{\eps_2}{2\gamma} + \frac {\eps_2}{\gamma-\eps_2}\\
&\leq&
{\frac {\eps} 6} +
{\frac \eps 6} + {\frac \eps {12000}} + {\frac \eps 6} + {\frac \eps {14}}
+ {\frac \eps 6} < \eps.
\end{eqnarray*}
\end{lemma}

\begin{proof}
Consider the distribution $D' = D_{g,\eps_3}$
\nnew{(see Definition~\ref{def:approx-unif-gen})} produced by the approximate uniform generator in Step~3 of the algorithm.
Let $D'|_{h^{-1}(1)}$ denote distribution $D'$ restricted to $h^{-1}(1).$
Let $S$ denote the set $g^{-1}(1) \cap h^{-1}(1)$.
The lemma is an immediate consequence of Claims~\ref{claim:c0},
\ref{claim:c1},
\ref{claim:c2} and~\ref{claim:c3} below
using the triangle inequality (everything below is conditioned
on $E_1$ and $E_2$).
\end{proof}

\begin{claim} \label{claim:c0}
$\dtv(\hat{D},\hat{D}') \leq \eps/6.$
\end{claim}
\begin{proof}
Recall that $\hat{D}'$ is simply $\hat{D}$ conditioned on
not outputting $\bot$.

We first claim
that with probability at
least $1 - \delta \eps/12$ all
$t$ points drawn in Step~3 of the code for $C_f$ are distributed according to
the distribution $D' = D_{g,\eps_3}$ over $g^{-1}(1)$.
Each of the $t$ calls to the approximate
uniform generator has failure probability
$\delta\eps/(12t)$ (of outputting $\bot$ rather than a point
distributed according to $D'$)
so by a union bound no calls fail with probability \new{at least $1-\delta
\eps/12$,}
and thus with probability at least $1-\delta\eps/12$ indeed all $t$ samples
are independently drawn from such a distribution $D'$.

Conditioned on this,
we claim that a satisfying assignment for $h$
is obtained within the $t$ samples with probability at least $1-\delta\eps/12$.
This can be shown as follows:

\begin{claim} \label{claim:goodxi}
Let $h:\bn \to \bits$ be the hypothesis output by $\A^{\C}_{\sqs}$. We have
$$\Pr_{x \sim D'}[h(x)=1] \geq \gamma/4.$$
\end{claim}
\begin{proof}
First recall that, by property (b) in the definition of the densifier
(Definition~\ref{def:dense}), we have
$\Pr_{x \sim D} [f(x)=1] \ge \gamma$. Since $\dtv(D', D) \leq \eps_3$, by definition we get
$$\Pr_{x \sim D'} [f(x)=1] \ge \Pr_{x \sim D} [f(x)=1]  -\eps_3 \ge \gamma - \eps_3  \geq  3\gamma/4.$$
Now by the guarantee of Step~2 we have that $\Pr_{x \sim D} [ f(x) \ne h(x)] \le \eps_2$.
Combined with the fact that $\dtv(D', D) \leq \eps_3$, this implies that $$\Pr_{x \sim D'}  [ f(x) \ne h(x)] \le \eps_2 +\eps_3 \le \gamma/2.$$
Therefore, we conclude that
$$\Pr_{x \sim D'} [ h(x) = 1] \geq \Pr_{x \sim D'} [ f(x) = 1]  - \Pr_{x \sim D'} [ f(x) \ne h(x)]  \ge 3\gamma/4 - \gamma/2 \ge \gamma/4$$
as desired.
\end{proof}

Hence, for an appropriate constant in the big-Theta
specifying $t$, with probability at least $1-\delta \eps/12
> 1 - \delta/12$ some $x^{(i)}$ is a satisfying
assignment of $h$.
that with probability at least $1 - \delta \eps/12$ some $x^{(i)}$,
$i \in [t]$, has $h(x)=1.$  Thus with overall failure
probability at most $\delta \eps / 6$ a draw from $\hat{D}'$ is not
$\bot$, and consequently we have $\dtv(\hat{D},\hat{D}') \leq
\delta \eps/6 \leq \eps/6.$
\end{proof}

\begin{claim} \label{claim:c1}
$\dtv(\hat{D}',D'|_{h^{-1}(1)}) \leq \eps/6.$
\end{claim}
\begin{proof}
\new{The probability that any of the $t$ points $x^{(1)},\dots,x^{(t)}$
is not drawn from $D'$ is at most $t \cdot \delta \eps/(12t) < \eps/12.$
Assuming that this does not happen,}
the probability that no
$x^{(i)}$ lies in $h^{-1}(1)$ is at most $(1-\gamma/4)^t < \delta \eps/12
< \eps/12$
by Claim~\ref{claim:goodxi}.  Assuming this does not happen, the output
of a draw from $\hat{D}$ is distributed identically according to
$D'|_{h^{-1}(1)}$.  Consequently
we have that
$\dtv(\hat{D},D'|_{h^{-1}(1)}) \leq \eps/6 $ as claimed.
\end{proof}

\begin{claim} \label{claim:c2}
$\dtv(D'|_{h^{-1}(1)}, \calU_{S}) \leq 4\eps_3/\gamma.$
\end{claim}
\begin{proof}
The definition of an approximate uniform generator gives us that
$\dtv(D',\calU_{g^{-1}(1)} ) \leq \eps_3$, and Claim~\ref{claim:goodxi} gives
that $\Pr_{x \sim D'}[h(x)=1] \geq \gamma/4.$  We now recall the
fact that for any two distributions $D_1,D_2$ and any event $E$, writing
$D_i|_E$ to denote distribution $D_i$ conditioned on event $E$, we have
\[
\dtv(D_1|_E,D_2|_E) \leq {\frac {\dtv(D_1,D_2)}{D_1(E)}}.
\]
The claim follows since $\calU_{g^{-1}(1)}|_{h^{-1}(1)}$ is
equivalent to $\calU_S.$
\end{proof}

\begin{claim} \label{claim:c3}
$\dtv(\U_{S},\U_{f^{-1}(1)}) \leq
\eps_1 + \frac{\eps_2}{2\gamma} + \frac {\eps_2}{\gamma-\eps_2}
$.
\end{claim}
\begin{proof}
The proof requires a careful combination of the properties of the function $g$ constructed by the densifier
and the guarantee of the $\sq$ algorithm.
Recall that $S = g^{-1}(1) \cap h^{-1}(1)$. We consider the set $S' = g^{-1}(1) \cap f^{-1}(1)$. By the triangle inequality,
we can bound the desired variation distance as follows:
\begin{equation} \label{eqn:triangle}
\dtv( \U_{S},\U_{f^{-1}(1)} ) \le \dtv( \U_{f^{-1}(1)}, \U_{S'}) + \dtv (\U_{S'}, \U_S).
\end{equation}
We will bound from above each term of the RHS in turn.
To proceed we need an expression for the total variation distance between the uniform distribution on two finite sets.
The following fact is obtained by straightforward calculation:

\begin{fact} \label{fact:dtv-uniform}
Let $A, B$ be subsets of a finite set $\W$ and $\U_A$, $\U_B$ be the uniform distributions on $A$, $B$ respectively.
Then,
\begin{equation} \label{eqn:dtv-uniform}
\dtv(\U_A, \U_B ) = (1/2) \cdot \frac{|A\cap \overline{B}|}{|A|} + (1/2) \cdot \frac{|B\cap \overline{A}|}{|B|} + (1/2) \cdot |A\cap B| \cdot \left| \frac{1}{|A|}  -  \frac{1}{|B|} \right|.
\end{equation}
\end{fact}

\smallskip

\noindent To bound the first term of the RHS of (\ref{eqn:triangle}) we apply the above fact for $A = f^{-1}(1)$ and $B = S'$.
Note that in this case $B \subseteq A$, hence the second term of (\ref{eqn:dtv-uniform}) is zero. Regarding the first term, note that
$$\frac{|A\cap \overline{B}|}{|A|} = \frac{|f^{-1}(1) \cap \overline{g^{-1}(1)}|}{|f^{-1}(1)|} \leq \eps_1,$$
where the inequality follows from Property (a) of the densifier definition. Similarly, for the third term we can write
$$|A\cap B| \cdot \left| \frac{1}{|A|}  -  \frac{1}{|B|} \right|  = |B| \cdot \left| \frac{1}{|A|}  -  \frac{1}{|B|} \right| = 1-\frac{|B|}{|A|} = 1- \frac{|f^{-1}(1) \cap g^{-1}(1)|}{|f^{-1}(1)|}  \le \eps_1,$$
where the inequality also follows from Property (a) of the densifier definition. We therefore conclude that $\dtv( \U_{f^{-1}(1)}, \U_{S'}) \le \eps_1.$

We now proceed to bound the second term of the RHS of (\ref{eqn:triangle}) by applying Fact~\ref{fact:dtv-uniform}
for $A = S'$ and $B = S$. It turns out that bounding the individual terms of (\ref{eqn:dtv-uniform}) is trickier in this case.
For the first term we have:
$$  \frac{|A\cap \overline{B}|}{|A|}  =  \frac{|f^{-1}(1) \cap g^{-1}(1) \cap \overline{h^{-1}(1)} |}{|f^{-1}(1) \cap g^{-1}(1)|} =
 \frac{|f^{-1}(1) \cap g^{-1}(1) \cap \overline{h^{-1}(1)} |}{|g^{-1}(1)|} \cdot  \frac{|g^{-1}(1)|}{|f^{-1}(1) \cap g^{-1}(1)|}  \le \frac{\eps_2}{\gamma},$$
where the last inequality follows from the guarantee of the $\sq$ learning algorithm and Property (b) of the densifier definition.
For the second term we have
\[  \frac{|B\cap \overline{A}|}{|B|}  = \frac{|\overline{f^{-1}(1)} \cap g^{-1}(1) \cap h^{-1}(1)|}{|g^{-1}(1) \cap h^{-1}(1)|}. \]
To analyze this term we recall that by the guarantee of the $\sq$ algorithm it follows that the numerator satisfies
$$|\overline{f^{-1}(1)} \cap g^{-1}(1) \cap h^{-1}(1)| \le \eps_2 \cdot |g^{-1}(1)|.$$
From the same guarantee we also get
$$|f^{-1}(1) \cap g^{-1}(1) \cap \overline{h^{-1}(1)}| \le \eps_2 \cdot |g^{-1}(1)|.$$
Now, Property (b) of the densifier definition gives
$|f^{-1}(1) \cap g^{-1}(1)| \ge \gamma \cdot |g^{-1}(1)|$.
Combing these two inequalities implies that
\[ | g^{-1}(1) \cap h^{-1}(1) |  \ge |f^{-1}(1) \cap g^{-1}(1) \cap h^{-1}(1)| \ge (\gamma-\eps_2) \cdot  |g^{-1}(1)|.\]
In conclusion, the second term is upper bounded by $(1/2) \cdot \frac{\eps_2}{\gamma-\eps_2}.$

For the third term, we can write
\[ |A\cap B| \cdot \left| \frac{1}{|A|}  -  \frac{1}{|B|} \right| = |f^{-1}(1) \cap g^{-1}(1) \cap h^{-1}(1)| \cdot
\left| \frac{1}{ |f^{-1}(1) \cap g^{-1}(1)|}  -  \frac{1}{ |g^{-1}(1) \cap h^{-1}(1)|} \right|.\]
To analyze these term we relate the cardinalities of these sets. In particular, we can write
\begin{eqnarray*}
|f^{-1}(1) \cap g^{-1}(1) | &=& |f^{-1}(1) \cap g^{-1}(1) \cap h^{-1}(1)| + |f^{-1}(1) \cap g^{-1}(1) \cap \overline{h^{-1}(1)|} \\
                                            &\le& |f^{-1}(1) \cap g^{-1}(1) \cap h^{-1}(1)| +  \eps_2 \cdot |g^{-1}(1)| \\
                                           &\le& |f^{-1}(1) \cap g^{-1}(1) \cap h^{-1}(1)| +  \frac{\eps_2}{\gamma} \cdot |f^{-1}(1) \cap g^{-1}(1)|
\end{eqnarray*}
where the last inequlity is Property (b) of the densifier defintion.
Therefore, we obtain
\[  (1-\frac{\eps_2}{\gamma}) \cdot   |f^{-1}(1) \cap g^{-1}(1) | \le  |f^{-1}(1) \cap g^{-1}(1) \cap h^{-1}(1)| \le |f^{-1}(1) \cap g^{-1}(1) | .   \]
Similarly, we have
\begin{eqnarray*}
|g^{-1}(1) \cap h^{-1}(1) | &=& |f^{-1}(1) \cap g^{-1}(1) \cap h^{-1}(1)| + |\overline{f^{-1}(1)} \cap g^{-1}(1) \cap h^{-1}(1)| \\
                                            &\le& |f^{-1}(1) \cap g^{-1}(1) \cap h^{-1}(1)| +  \eps_2 \cdot |g^{-1}(1)| \\
                                           &\le& |f^{-1}(1) \cap g^{-1}(1) \cap h^{-1}(1)| +  \frac{\eps_2}{\gamma-\eps_2} \cdot |g^{-1}(1) \cap h^{-1}(1)|
\end{eqnarray*}
and therefore
\[  (1-\frac{\eps_2}{\gamma-\eps_2}) \cdot   |g^{-1}(1) \cap h^{-1}(1) | \le  |f^{-1}(1) \cap g^{-1}(1) \cap h^{-1}(1)| \le |g^{-1}(1) \cap h^{-1}(1) | .   \]
The above imply that the third term is bounded by $(1/2) \cdot \frac{\eps_2}{\gamma- \eps_2}.$
This completes the proof of the claim.
\end{proof}

With Lemma~\ref{lem:sixsums} established,
to finish the proof of Theorem~\ref{thm:known-bias} it remains only to
establish the claimed running time bound.  This follows from
a straightforward (but somewhat tedious) verification, using the
running time bounds established in Lemma~\ref{lem:sq-sim},
Proposition~\ref{prop:sq-sim}, Corollary~\ref{cor:sq-sim},
Claim~\ref{claim:sim-sample} and
Claim~\ref{claim:estimate-bias}.
\end{proof}

\ignore{
\new{
Need to comment on sample complexity and running time.
The only non-trivial part regarding sample complexity is Step 2, i.e., the implementation of the SQ algorithm.
We claim that...

We apply Claim~\ref{claim:sim-sample} for the function $g$ computed in Step 1. It is clear that $t_g(n)$ is
is upper bounded by the running time of the densifier. Moreover, the parameter $\alpha = 1-\eps$.
Note that we want all the samples produced by this simulation process to be accurate with probability $1-\gamma$. By a union bound, it suffices
to set the confidence parameter in the above lemma to $\gamma/m$, where $m$ is the number of samples in Proposition~\ref{prop:sq-sim}
necessary to simulate $\A_{\sq}$. \new{(All this needs to be incorporated in the pseudocode. In particular, Step~3 of the main algorithm's pseudocode
needs to make a call to $\A_\sqs$, which in turns makes a call to  {\tt Simulate-sample}$^{D_{f, +}}$ for each sample it draws.)}
}
}

\subsection{Getting from $\A'^\C_{\inv}$ to $\A^\C_{\inv}$:
An approximate evaluation oracle.} \label{ssec:approx-eval-oracle}

Recall that the algorithm $\A'^\C_{\inv}$ from the previous subsection
is only guaranteed (with high probability) to output a sampler for a hypothesis
distribution $\hat{D}$ that is statistically close to the
target distribution $\U_{f^{-1}(1)}$ if it is given an input
parameter $\widehat{p}$ satisfying
$p \leq \widehat{p} < (1+\eps)p$, where $p \eqdef \Pr_{x \in \U_n}[f(x)=1].$
Given this, a natural idea is to run $\A'^\C_{\inv}$ a total
of $k=O(n/\eps)$ times, using ``guesses'' for $\widehat{p}$
that increase multiplicatively as powers of $1+\eps$, starting
at $1/2^n$ (the smallest possible value) and going up to 1.
This yields
hypothesis distributions $\hat{D}_1,\dots,\hat{D}_k$
where $\hat{D}_i$ is the distribution obtained by setting
$\widehat{p}$ to $\widehat{p}_i \eqdef (1+\eps)^{i-1}/2^n.$
With such distributions in hand, an obvious approach is to use
the ``hypothesis testing'' machinery of Section~\ref{sec:prelims} to
identify a high-accuracy $\hat{D}_i$ from this collection.

This is indeed the path we follow, but some care is needed to
make the approach go through.  Recall that as described in
Proposition~\ref{prop:log-cover-size}, the hypothesis testing algorithm
requires the following:

\begin{enumerate}

\item independent samples from the target distribution $\U_{f^{-1}(1)}$
(this is not a problem since such samples are available in our framework);

\item independent samples from $\hat{D}_i$ for each $i$ (also not
a problem since the $i$-th run of algorithm $\A'^\C_{\inv}$
outputs a sampler for distribution $\hat{D}_i$; and

\item a $(1+O(\eps))$-approximate evaluation oracle $\eval_{\hat{D}_i}$
for each distribution $\hat{D}_i.$

\end{enumerate}

In this subsection we show how to construct item (3) above, the approximate
evaluation oracle.  In more detail, we first describe a randomized
procedure {\tt Check} which is applied to the output of each execution
of $\A'^\C_{\inv}$ (across all $k$ different settings of the input
parameter $\widehat{p}_i$).  We show that
with high probability the ``right'' value $\widehat{p}_{i^\ast}$
(the one which satisfies $p \leq \widehat{p}_{i^\ast} < (1+\eps)p$)
will pass the procedure {\tt Check}.
Then we show that for each value $\widehat{p}_{i^\ast}$ that
passed the check a simple deterministic algorithm gives the
desired approximate evaluation oracle for $\hat{D}_i$.

We proceed to describe the {\tt Check} procedure
and characterize its performance.

\begin{framed}
\noindent Algorithm  {\tt Check}$(g,h,\delta',\eps):$

\smallskip

\noindent {\bf Input:}  functions $g$ and $h$ as described in
Lemma~\ref{lem:check}, a confidence
parameter $\delta'$, and an accuracy parameter $\eps$

\noindent {\bf Output:}
If $|h^{-1}(1) \cap g^{-1}(1)|/|g^{-1}(1)| \geq \gamma/2$,
with probability $1-\delta'$
outputs a pair $(\alpha,\kappa)$ such that
$|\alpha -
|h^{-1}(1) \cap g^{-1}(1)|/|g^{-1}(1)|
| \leq \mu \cdot
|h^{-1}(1) \cap g^{-1}(1)|/|g^{-1}(1)|
$
and
${\frac {|g^{-1}(1)|}{1+\tau}} \leq \kappa \leq
(1 + \tau)|g^{-1}(1)|,$
where $\mu = \tau = \eps/40000.$

\begin{enumerate}

\item Sample $m = O(\log(2/\delta')/(\gamma \mu^2))$
points $x^1,\dots,x^m$ from $\A^{\C'}_\gen(g,\gamma/4,\delta'/(2m))$.
If any $x^j = \bot$ halt and output ``failure.''

\item Let $\alpha$ be $(1/m)$ times the number of points $x^j$
that have $h(x)=1.$

\item Call $\A^{\C'}_\co(\tau,\delta'/2)$ on $g$
and set $\kappa$ to $2^n$ times the value it returns.

\end{enumerate}
\end{framed}

\begin{lemma}
\label{lem:check}
Fix $i \in [k]$.  Consider a sequence of $k$ runs of
$\A'^\C_{\inv}$ where in the $i$-th run it is
given $\widehat{p}_i \eqdef (1+\eps)^{i-1}/2^n$
as its input parameter.  Let $g_i$ be the function
in $\C'$ constructed by $\A'^\C_{\inv}$ in Step~1 of
its $i$-th run and $h_i$ be the hypothesis function
constructed by $\A'^\C_{\inv}$ in Step~2(e) of its $i$-th run.
Suppose {\tt Check} is given as input $g_i$, $h_i$,  a
confidence parameter $\delta'$, and an accuracy parameter $\eps'$.
Then it either outputs ``no'' or a pair $(\alpha_i,\kappa_i) \in [0,1]
\times [0,2^{n+1}]$, and satisfies the following
performance guarantee:
If $|h_i^{-1}(1) \cap g_i^{-1}(1)|/|g_i^{-1}(1)| \geq \gamma/2$
then with probability at least $1-\delta'$
{\tt Check} outputs a pair $(\alpha_i,\kappa_i)$ such that
\begin{equation} \label{eq:alphagood}
\left|\alpha_i - {\frac {|h_i^{-1}(1) \cap g_i^{-1}(1)|}{|g_i^{-1}(1)|}}
\right| \leq
\mu \cdot {\frac {|h_i^{-1}(1) \cap g_i^{-1}(1)|}{|g_i^{-1}(1)|}}
\end{equation}
and
\begin{equation} \label{eq:kappagood}
{\frac {|g_i^{-1}(1)|}{1+\tau}} \leq \kappa_i \leq
(1 + \tau)|g_i^{-1}(1)|,
\end{equation}
\new{where $\mu = \tau = \eps/40000.$}
\end{lemma}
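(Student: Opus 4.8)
The plan is to verify the two displayed guarantees \eqref{eq:alphagood} and \eqref{eq:kappagood} separately — obtaining $\kappa_i$ from the approximate counting algorithm and $\alpha_i$ from the approximate uniform generator together with a multiplicative Chernoff bound — and then to combine the relevant failure events by a union bound. The deterministic claim that the output lies in $[0,1]\times[0,2^{n+1}]$ is immediate: $\alpha_i$ is by construction the fraction of $m$ sampled points on which $h_i$ evaluates to $1$, hence lies in $[0,1]$; and (capping the value returned by $\A^{\C'}_\co$ at $1$ if necessary, which is without loss of generality) $\kappa_i = 2^n \cdot \widehat{p}_g$ with $\widehat{p}_g \le 2$, so $\kappa_i \le 2^{n+1}$.

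For \eqref{eq:kappagood}: by Definition~\ref{def:approx-count}, the call $\A^{\C'}_\co(\tau,\delta'/2)$ on $g_i$ returns, with probability at least $1-\delta'/2$, a value $\widehat{p}_g$ with $\frac{1}{1+\tau}\Pr_{x\sim\U_n}[g_i(x)=1]\le \widehat{p}_g\le (1+\tau)\Pr_{x\sim\U_n}[g_i(x)=1]$; multiplying through by $2^n$ and recalling $\Pr_{x\sim\U_n}[g_i(x)=1]=|g_i^{-1}(1)|/2^n$ gives exactly \eqref{eq:kappagood}. This step needs no assumption on $|h_i^{-1}(1)\cap g_i^{-1}(1)|/|g_i^{-1}(1)|$.

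For \eqref{eq:alphagood}: write $q\eqdef |h_i^{-1}(1)\cap g_i^{-1}(1)|/|g_i^{-1}(1)|$ and assume $q\ge \gamma/2$. Each of the $m=\Theta(\log(2/\delta')/(\gamma\mu^2))$ invocations of the approximate uniform generator on $g_i$ outputs $\bot$ with probability at most its confidence parameter; by a union bound, with high probability none of them does, in which case {\tt Check} does not output ``failure'' and $x^1,\dots,x^m$ are i.i.d.\ draws from the distribution $D''\eqdef D_{g_i,\theta}$ (in the notation of Definition~\ref{def:approx-unif-gen}), which is supported on $g_i^{-1}(1)$ and multiplicatively $(1+\theta)$-close to $\U_{g_i^{-1}(1)}$, where $\theta$ is the accuracy parameter passed to the generator. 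Summing the per-point multiplicative bounds over $h_i^{-1}(1)\cap g_i^{-1}(1)$ yields $p''\eqdef\Pr_{x\sim D''}[h_i(x)=1]\in[q/(1+\theta),(1+\theta)q]$; taking $\theta$ small enough (the pseudocode uses $\gamma/4$; we also ensure $\theta\le\mu/4$, shrinking it if $\gamma$ is not already small relative to $\mu$) gives $|p''-q|\le (\mu/4)q$, and in particular $p''\ge\gamma/3$. Now $\alpha_i$ is the empirical mean of $m$ i.i.d.\ Bernoulli$(p'')$ variables, so by the \emph{multiplicative} Chernoff bound, with $m=\Theta(\log(1/\delta')/(p''\mu^2))=\Theta(\log(1/\delta')/(\gamma\mu^2))$ samples we have $|\alpha_i-p''|\le(\mu/2)p''$ except with small probability; combining with the previous bound via the triangle inequality gives $|\alpha_i-q|\le\mu q$, i.e.\ \eqref{eq:alphagood}.

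Finally, a union bound over the generator-$\bot$ event, the Chernoff event, and the counting-failure event (allocating the hidden constants and confidence parameters so each occurs with probability at most $\delta'/3$ — e.g.\ $\delta'/(4m)$ per generator call, $\delta'/4$ for the counter, $\delta'/2$ for the Chernoff event) shows that with probability at least $1-\delta'$ {\tt Check} outputs a pair $(\alpha_i,\kappa_i)$ satisfying both \eqref{eq:alphagood} and \eqref{eq:kappagood}. The one point requiring genuine care is the \emph{relative}-error bound on $\alpha_i$: one must combine the multiplicative closeness of the generator's output distribution to $\U_{g_i^{-1}(1)}$ with a multiplicative (not additive) Chernoff bound and exploit the hypothesis $q\ge\gamma/2$ to see why $\Theta(\log(1/\delta')/(\gamma\mu^2))$ samples — with the $1/\gamma$ factor — suffice; the rest is routine bookkeeping of failure probabilities.
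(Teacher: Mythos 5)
Your argument is correct and follows the same route as the paper's own proof: union bound over the generator's $\bot$-events, a multiplicative Chernoff bound for $\alpha_i$, and the approximate counter for $\kappa_i$. One thing worth flagging is that you are in fact being \emph{more} careful than the paper in one place, and your instinct is right. To get the \emph{relative}-error bound \eqref{eq:alphagood} on $\alpha_i$, one must invoke the pointwise multiplicative closeness of $D_{g_i,\theta}$ to $\U_{g_i^{-1}(1)}$ (as you do) to bound $|p'' - q|$ by $O(\mu)\cdot q$, and this requires the accuracy $\theta$ passed to the generator to be $O(\mu)$; the paper's pseudocode uses $\theta=\gamma/4$ and its proof only invokes the weaker fact $\dtv(D_{g_i,\gamma/4},\U_{g_i^{-1}(1)})\le\gamma/4$ before appealing to multiplicative Chernoff, which concentrates $\alpha_i$ around $p''=\Pr_{x\sim D''}[h_i(x)=1]$ but by itself says nothing about $|p''-q|$ relative to $\mu q$. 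Your patch (take $\theta\le\mu/4$, noting this still leaves $p''\ge\gamma/3$ so that $m=\Theta(\log(1/\delta')/(\gamma\mu^2))$ samples suffice for the Chernoff step) is the right fix and should indeed be incorporated into Step~1 of {\tt Check}.
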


\ignore{

OLD LEMMA:

\begin{lemma}
Fix $i \in [k]$.  Consider a sequence of $k$ runs of
$\A'^\C_{\inv}$ where in the $i$-th run it is
given $\widehat{p}_i \eqdef (1+\eps)^{i-1}/2^n$
as its input parameter.  Let $g_i$ be the function
in $\C'$ constructed by $\A'^\C_{\inv}$ in Step~1 of
its $i$-th run, $h_i$ be the hypothesis function
constructed by $\A'^\C_{\inv}$ in Step~2(e) of its $i$-th run,
and $(C_f)_i$ be the sampler for the distribution $\hat{D}_i$ constructed by
$\A'^\C_{\inv}$ in Step~3 of its $i$-th run.
Suppose {\tt Check} is given as input $g_i$, $h_i$, $(C_f)_i$, a
confidence parameter $\delta'$, and an accuracy parameter $\eps'$.
Then it either outputs ``no'' or a pair $(\alpha_i,\kappa_i) \in [0,1]
\times [0,2^{n+1}]$, and satisfies the following
performance guarantee:
If $|h_i^{-1}(1) \cap g_i^{-1}(1)|/|g_i^{-1}(1)| \geq \gamma/2$
then with probability at least $1-\delta'$
{\tt Check} outputs a pair $(\alpha_i,\kappa_i)$ such that
\[
\left|\alpha - {\frac {|h^{-1}(1) \cap g^{-1}(1)|}{|g^{-1}(1)|}}
\right| \leq
\mu \cdot {\frac {|h^{-1}(1) \cap g^{-1}(1)|}{|g^{-1}(1)|}}
\]
and
\[
\left|\alpha - {\frac {|h^{-1}(1) \cap g^{-1}(1)|}{|g^{-1}(1)|}}
{\frac {|g_i^{-1}(1)|}{1+\tau}} \leq \kappa_i \leq
(1 + \tau)|g_i^{-1}(1)|,
\]
\new{where $\mu = \tau = \eps/40000.$}
\end{lemma}

}

\begin{proof}
Suppose that $i$ is such that
$|h_i^{-1}(1) \cap g_i^{-1}(1)|/|g_i^{-1}(1)|  \geq \gamma/2$.
Recall from Definition~\ref{def:approx-unif-gen} that each point $x^j$ drawn
from $\A^{\C'}_\gen(g_i,\gamma/4,\delta'/(2m))$ in Step~1
is with probability $1-\delta'/(2m)$ distributed according
to $D_{g_i,\gamma/4}$; by a union bound we have that
with probability at least $1-\delta'/2$ all $m$ points
are distributed this way (and thus none of them are $\bot$).
We condition on this going forward.
Definition~\ref{def:approx-unif-gen} implies that $\dtv(D_{g_i,\gamma/4},
\U_{g_i^{-1}(1)}) \leq \gamma/4$; together with the assumption that
$|h_i^{-1}(1) \cap g_i^{-1}(1)|/|g_i^{-1}(1)|  \geq \gamma/2$,
this implies that each $x^j$ independently has proability
at least $\gamma/4$ of having $h(x)=1.$  Consequently, by the choice
of $m$ in Step~1, a standard multiplicative Chernoff bound
implies that
\[
\left|\alpha_i - {\frac {|h^{-1}(1) \cap g^{-1}(1)|}{|g^{-1}(1)|}}
\right| \leq
\mu \cdot {\frac {|h^{-1}(1) \cap g^{-1}(1)|}{|g^{-1}(1)|}}
\]
with failure probability at most $\delta'/4$, giving
(\ref{eq:alphagood}).

Finally, Definition~\ref{def:approx-count} gives that
(\ref{eq:kappagood}) holds with failure probability at most
$\delta'/4.$  This concludes the proof.
\end{proof}

Next we show how a high-accuracy estimate
$\alpha_i$ of $|h_i^{-1}(1) \cap g_i^{-1}(1)|/|g_i^{-1}(1)|$
yields a deterministic approximate evaluation oracle for
$\hat{D}'_i$.

\begin{lemma} \label{lem:approx-eval-for-hatDi}
Algorithm {\tt Simulate-Approx-Eval} (which is deterministic)
takes as input a value $\alpha \in [0,1]$, a string $x \in \bits^n$,
a parameter $\kappa$,
(a circuit for) $h: \bits^n \to \bits,$
and \new{(a representation for) $g: \bits^n \to \bits$, $g \in \C'$,}
where $h,g$ are obtained from a run of $\A'^\C_{\inv}.$
Suppose that
\[
\left|\alpha - {\frac {|h^{-1}(1) \cap g^{-1}(1)|}{|g^{-1}(1)|}}
\right| \leq
\mu \cdot {\frac {|h^{-1}(1) \cap g^{-1}(1)|}{|g^{-1}(1)|}}
\]
and
\[
{\frac {|g^{-1}(1)|}{1+\tau}} \leq \kappa \leq
(1 + \tau)|g^{-1}(1)|
\]
\new{where $\mu = \tau = \eps/40000.$}
Then {\tt Simulate-Approx-Eval} outputs a value $\rho$
such that
\begin{equation}
{\frac {\hat{D}'(x)}{1 + \beta}} \leq \rho \leq (1+\beta)
\hat{D}'(x),
\label{eq:goal}
\end{equation}
where
$\beta = \eps/192$, $\hat{D}$ is the output distribution constructed in
Step~3 of the run of $\A^\C_{\inv}$ that produced $h,g$,
and $\hat{D}'$ is $\hat{D}$ conditioned on $\{-1,1\}^n$
(excluding $\bot$).
\end{lemma}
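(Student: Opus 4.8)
The plan is to first write down the (deterministic) algorithm {\tt Simulate-Approx-Eval} explicitly, and then verify the multiplicative guarantee (\ref{eq:goal}) by chaining the guarantee of the approximate uniform generator (Definition~\ref{def:approx-unif-gen}) with the two hypotheses on $\alpha$ and $\kappa$. Concretely, on input $(\alpha,x,\kappa,h,g)$ the algorithm evaluates $g(x)$ (from the representation of $g \in \C'$) and $h(x)$ (from the circuit for $h$); if $g(x)=-1$ or $h(x)=-1$ it outputs $\rho=0$, and otherwise it outputs $\rho = 1/(\alpha\kappa)$. Since evaluating $g$ and $h$ is deterministic and cheap, this clearly meets the stated running-time/determinism requirements, so the content is entirely in the error analysis.

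The key structural fact I would invoke is the observation already recorded just before Theorem~\ref{thm:known-bias}: the conditional distribution $\hat{D}'$ is exactly $(D_{g,\eps_3})_{h^{-1}(1)}$, i.e.\ $D_{g,\eps_3}$ restricted to $h^{-1}(1)$, where $D_{g,\eps_3}$ is the distribution of Definition~\ref{def:approx-unif-gen} for the generator called on $g$ with accuracy parameter $\eps_3$. In particular $\hat{D}'$ is supported on $T \eqdef g^{-1}(1) \cap h^{-1}(1)$. If $x \notin T$ then $\hat{D}'(x)=0$; since the algorithm also returns $\rho=0$, inequality (\ref{eq:goal}) holds trivially. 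For $x \in T$ we have $\hat{D}'(x) = D_{g,\eps_3}(x)/D_{g,\eps_3}(T)$. Definition~\ref{def:approx-unif-gen} gives, for every $y \in g^{-1}(1)$, that $D_{g,\eps_3}(y)$ lies in $[\,1/((1+\eps_3)|g^{-1}(1)|),\ (1+\eps_3)/|g^{-1}(1)|\,]$; summing this over $y \in T$ bounds $D_{g,\eps_3}(T)$ in $[\,|T|/((1+\eps_3)|g^{-1}(1)|),\ (1+\eps_3)|T|/|g^{-1}(1)|\,]$. Taking the ratio, $\hat{D}'(x)$ lies within a multiplicative factor $(1+\eps_3)^2$ of $1/|T|$. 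On the other hand, the hypotheses $|\alpha - |T|/|g^{-1}(1)|| \le \mu\,|T|/|g^{-1}(1)|$ and $|g^{-1}(1)|/(1+\tau) \le \kappa \le (1+\tau)|g^{-1}(1)|$ show that $\alpha\kappa$ approximates $|T|$ to within a multiplicative factor of at most $(1+2\mu)(1+\tau)$ (using $(1-\mu)(1+2\mu)\ge 1$ for $\mu\le 1/2$), hence $\rho = 1/(\alpha\kappa)$ lies within that same factor of $1/|T|$. Combining, $\rho$ and $\hat{D}'(x)$ agree up to a multiplicative factor of $(1+\eps_3)^2(1+2\mu)(1+\tau)$.

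To finish I would plug in the parameter settings: $\eps_3 = \eps\gamma/48000 \le \eps/48000$ (since $\gamma \le 1$) and $\mu = \tau = \eps/40000$, so $(1+\eps_3)^2(1+2\mu)(1+\tau) \le \exp(2\eps_3 + 2\mu + \tau) \le \exp(\eps/24000 + \eps/20000 + \eps/40000) \le \exp(\eps/8000) \le 1 + \eps/192 = 1+\beta$ for $\eps$ sufficiently small (which we may assume throughout). This yields (\ref{eq:goal}). I do not expect a genuine obstacle here; the analysis is routine bookkeeping. The one point that needs care is noticing that $\hat{D}'$ is $(D_{g,\eps_3})_{h^{-1}(1)}$ rather than the exact uniform distribution over $T$, so the ratio $D_{g,\eps_3}(x)/D_{g,\eps_3}(T)$ picks up a $(1+\eps_3)^{\pm 1}$ slack in both numerator and denominator, giving the $(1+\eps_3)^2$ factor that must then be absorbed into the final budget $\beta = \eps/192$; the rest is just confirming the composed multiplicative errors stay under this bound.
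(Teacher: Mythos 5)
Your proposal is correct and follows essentially the same route as the paper's proof: you describe the same algorithm (return $0$ if $g(x)=-1$ or $h(x)=-1$, else return $1/(\alpha\kappa)$), handle the zero case identically, and for $x \in g^{-1}(1)\cap h^{-1}(1)$ you bound $\rho$ against $1/|T|$ via the hypotheses on $\alpha,\kappa$ and bound $\hat{D}'(x)$ against $1/|T|$ via Definition~\ref{def:approx-unif-gen}, then absorb the composed multiplicative slack into $\beta = \eps/192$. One small remark: your derivation correctly yields a $(1+\eps_3)^2$ factor for $\hat{D}'(x)$ versus $1/|T|$ (one power from $D_{g,\eps_3}(x)$ and one from $D_{g,\eps_3}(T)$ in the denominator), whereas the paper states only a single $(1+\eps_3)$ at this step; your version is the tighter bookkeeping and, as you observe, the budget absorbs either.
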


\begin{proof}
The {\tt Simulate-Approx-Eval} procedure is very simple.  Given
an input $x \in \bits^n$ it evaluates both $g$ and $h$ on $x$,
and if either evaluates to $-1$ it returns the value 0.
If both evaluate to 1 then it returns the value $1/(\kappa \alpha).$

For the correctness proof,
note first that it is easy to see from the definition of the sampler
$C_f$ (Step~3 of $\A'^\C_{\inv}$) and Definition~\ref{def:approx-unif-gen}
(recall that the approximate uniform generator
$\A^{\C'}_\gen(g)$ only outputs strings that satisfy $g$)
that if $x \in \{-1,1\}^n$, $x \notin h^{-1}(1) \cap g^{-1}(1)$
then $\hat{D}$ has zero probability
of outputting $x$, so {\tt Simulate-Approx-Eval} behaves
appropriately in this case.

Now suppose that $h(x)=g(x)=1$.
We first show that the value $1/(\kappa \alpha)$ is multiplicatively
close to $1/|h^{-1}(1) \cap g^{-1}(1)|.$
Let us write $A$ to denote $|g^{-1}(1)|$ and
$B$ to denote $|h^{-1}(1) \cap g^{-1}(1)|$.  With this notation we have
\[
\left|
\alpha - {\frac B A}
\right|
\leq \mu \cdot {\frac B A}
\quad \quad \text{and} \quad \quad
{\frac {A}{1+\tau}} \leq \kappa \leq
(1 + \tau)A.
\]
Consequently, we have
\begin{eqnarray*}
B(1-\mu-\tau) \leq B \cdot {\frac {1-\mu}{1+\tau}} =
{\frac B A}(1-\mu) \cdot {\frac A {1+\tau}} \leq \kappa \alpha \leq
{\frac B A}(1+\mu) \cdot (1+\tau)A \leq B (1 + 2 \mu + 2 \tau),
\end{eqnarray*}
and hence
\begin{equation}
{\frac 1 B} \cdot {\frac 1 {1 + 2 \mu + 2 \tau}}
\leq {\frac 1 {\kappa \alpha}} \leq {\frac 1 B} \cdot {\frac 1 {1-\mu-\tau}}.
\label{eq:ok}
\end{equation}
Now consider any $x \in h^{-1}(1) \cap g^{-1}(1).$
By Definition~\ref{def:approx-unif-gen} we have that
\[
{\frac 1 {1+\eps_3}} \cdot {\frac 1 {|g^{-1}(1)|}}
\leq D_{g,\eps_3}(x) \leq
(1+\eps_3) \cdot {\frac 1 {|g^{-1}(1)|}}.
\]
Since a draw from $\hat{D}'$ is obtained by taking a draw
from $D_{g,\eps_3}$ and conditioning on it lying in $h^{-1}(1)$,
it follows that we have
\[
{\frac 1 {1+\eps_3}} \cdot {\frac 1 {B}}
\leq \hat{D}'(x) \leq
(1+\eps_3) \cdot {\frac 1 {B}}.
\]
Combining this with (\ref{eq:ok})
and
recalling that $\mu=\tau=\eps/40000$ and $\eps_3=\eps \gamma / 48000$,
we get (\ref{eq:goal}) as desired.
\end{proof}

\ignore{

\begin{framed}
\noindent Algorithm  {\tt Simulate-Approx-Eval}$(blah,blah):$

\smallskip

\noindent {\bf Input:} A string $x \in \bits^n$

\noindent {\bf Output:}
A value $\tilde{D}^\beta_i(x)$ such that
\[
{\frac {\hat{D}_i(x)}{1+\beta}} \leq
\tilde{D}^\beta_{i}(x) \leq
(1+\beta)\hat{D}_i(x).
\]

\begin{enumerate}

\item put

\item the alg

\item here

\end{enumerate}
\end{framed}

}

\subsection{The final algorithm: Proof of Theorem~\ref{thm:learn-by-dense}.}
\label{ssec:algo-gen}

Finally we are ready to give the inverse approximate uniform generation
algorithm $\A^\C_{\inv}$ for $\C$.

\begin{framed}
\noindent Algorithm  $\A^\C_{\inv} (\U_{f^{-1}(1)}, \eps, \delta)$

\smallskip

\noindent {\bf Input:} Independent samples from $\U_{f^{-1}(1)}$,
accuracy and confidence parameters $\eps,\delta$.

\noindent {\bf Output:}
With probability $1-\delta$ outputs an $\eps$-sampler
\new{$C_f$} for $\U_{f^{-1}(1)}$ .\\

\begin{enumerate}

\item For $i=1$ to $k=O(n/\eps)$:

\begin{enumerate}

\item Set $\widehat{p}_i \eqdef (1+\eps)^{i-1}/2^n$.

\item Run $\A'^\C_{\inv}(\U_{f^{-1}(1)},\eps/12,\delta/3,\widehat{p}_i)$.
Let \new{$g_i \in \C'$} be the function constructed in Step~1,
$h_i$ be the hypothesis function constructed in Step~2(e), and
$(C_f)_i$ be the sampler for distribution $\hat{D}_i$ constructed
in Step~3.

\item Run {\tt Check}$(g_i,h_i,\delta/3,\eps).$
If it returns a pair $(\alpha_i,\kappa_i)$ then add $i$ to the
set $S$ (initially empty).

\end{enumerate}

\item Run the hypothesis testing procedure
$\mathcal{T}^{\U_{f^{-1}(1)}}$ over the set $\{\hat{D}'_i\}_{i \in S}$
of hypothesis distributions, using
accuracy parameter $\eps/12$ and confidence parameter $\delta/3$.
Here $\mathcal{T}^{\U_{f^{-1}(1)}}$ is given
access to $\U_{f^{-1}(1)}$, uses the samplers $(C_f)_{i}$
to generate draws from distributions $\hat{D}'_i$
\new{(see Remark~\ref{rem:sampler}),}
and uses the procedure
{\tt Simulate-Approx-Eval}$(\alpha_i,\kappa_i,h_i,g_i)$
for the $(1+\eps/192)$-approximate evaluation oracle
$\eval_{\hat{D}'_i}$ for $\hat{D}'_i.$  Let
$i^\star \in S$ be the index of the distribution that it returns.

\item Output the sampler $(C_f)_{i^\star}$.

\end{enumerate}
\end{framed}

\medskip

\ignore{

\rnote{
\new{There is a small wrinkle hiding in Step~2 of the algorithm.
It is related to the way Proposition 12 is stated,
and to the phrase ``uses the samplers $(C_f)_i$ to generate draws from
distributions $\hat{D}'_i$'' in Step~2 of the algorithm.  Note that
Prop 12 only says that ${\cal T}^D$ is given ``access to
independent samples from $D_k$.''  Well, what does that mean exactly?
If we interpret it as meaning that it gets a sampler for $D_k$ -- so in
one time step it can with 100\% sureness get a draw from $D_k$ -- then we
are not quite providing that since we only have a sampler $(C_f)_i$
for $\hat{D}_i$, and sometimes it gives us $\bot.$

Of course, with the sampler $(C_f)_i$ you *can* generate independent
samples from $\hat{D}'_i$ (the distribution we really want because
it is the one we have an approximate eval oracle for); but it may take
a variable amount of time because of rejection sampling
(won't take much, because $\bot$ has very low probability)
and there is a tiny chance it could take an extremely long time,
because you are very unlucky and just can't get the draw you want
from $\hat{D}'_i$.  Of course this does not
affect the correctness of Proposition 12 in our setting -- we could
simply fold the probability of having this unfortunate event
happen into the $\delta$ confidence parameter of Proposition 12.  So
the whole thing is still correct overall, but to spell it all out
would require some additional exposition.

Perhaps one way we could do it would
be to state a variant of Proposition 12, saying basically that if you are
in our situation -- you don't have an actual honest-to-goodness sampler
for $D_k$ but you have a procedure which either issues a sample from
$D_k$ or issues $\bot$, and issues a sample with probability
at least 99/100 (even 1/100 would do), then Proposition 12 still holds.
We could do this if you'd like, but I didn't do it in this pass.
The other option would be to not mention anything about this at
all and just leave it as it is;
presumably anyone who notices this issue would also see that it is not
a real problem and come up with this ``fix'' by themselves.

Let me know what you think
is the best thing to do.
}

}

}

\noindent {\bf Proof of Theorem~\ref{thm:learn-by-dense}:}
Let $p \equiv \Pr_{x \in \U_n}[f(x)=1]$ denote the true fraction of
satisfying assignments for $f$ in $\bits^n.$
Let $i^\ast$ be the element of $[k]$ such that
$p \leq \widehat{p}_{i^\ast} < (1+\eps/6)p.$
By Theorem~\ref{thm:known-bias}, with probability
at least $1-\delta/3$ we have that both

\begin{itemize}

\item [(i)]
$(C_f)_{i^\ast}$ is a sampler for a distribution
$\hat{D}_{i^\ast}$ such that
$\dtv(\hat{D}_{i^\ast},\U_{f^{-1}(1)}) \leq \eps/6$; and

\item [(ii)]
$|h^{-1}_{i^\ast}(1) \cap g^{-1}_{i^\ast}(1)|/|g^{-1}_{i^\ast}(1)| \geq
\gamma/2.$

\end{itemize}

We condition on these two events holding.
By Lemma~\ref{lem:check}, with probability
at least $1-\delta/3$ the procedure {\tt Check} outputs a value
$\alpha_{i^\ast}$ such that
\[
\left|
\alpha_{i^\ast} - {\frac {|h_{i^\ast}^{-1}(1) \cap g_{i^\ast}^{-1}(1)|}
{|g_{i^\ast}^{-1}(1)|}}
\right| \leq
\mu \cdot {\frac {|h_{i^\ast}^{-1}(1) \cap g_{i^\ast}^{-1}(1)|}
{|g_{i^\ast}^{-1}(1)|}}
\]
for $\mu= \eps/40000.$
We condition on this event holding.
Now Lemma~\ref{lem:approx-eval-for-hatDi} implies that
{\tt Simulate-Approx-Eval}$((C_f)_{i^\ast})$ meets the
requirements of a $(1+\beta)$-approximate evaluation oracle for
$\eval_{\hat{D}'_{i^\ast}}$ from Proposition~\ref{prop:log-cover-size},
for $\beta = {\frac \eps {192}}.$
Hence by Proposition~\ref{prop:log-cover-size}
\new{(or more precisely by Remark~\ref{rem:sampler})}
with probability at least $1-\delta/3$ the index $i^\star$
that $\mathcal{T}^{\U_{f^{-1}(1)}}$ returns is such that
$\hat{D}'_{i^\star}$ is an $\eps/2$-sampler for $\U_{f^{-1}(1)}$
as desired.
\ignore{
Recalling from Claim~\ref{claim:c0} that $\dtv(\hat{D}_{i^\star},
\hat{D}'_{i^\star}) \leq \eps/6$, we have that
$(C_f)_{i^\star}$ is an $\eps$-sampler for
$\U_{f^{-1}(1)}$ as desired, with overall failure probability at most
$\delta.$
}

As in the proof of Theorem~\ref{thm:known-bias}, the claimed running time
bound is a straightforward consequence of the
various running time bounds established for all the procedures
called by $\A^\C_\inv$.  This concludes the proof of our
general positive result, Theorem~\ref{thm:learn-by-dense}.
\qed

\fi

\section{Linear Threshold Functions} \label{sec:LTF}
In this section we apply our general framework from Section~\ref{sec:generaltechnique} to prove Theorem~\ref{thm:ltf-informal}, i.e.,
obtain a polynomial time algorithm for the problem of inverse approximate uniform generation for the class $\C = \ltf_n$
of $n$-variable linear threshold functions over $\bn$. More formally, we prove:
\begin{theorem} \label{thm:ltf-formal}
There is an algorithm $\A^{\ltf}_{\inv}$ which is a
$\poly \left( n, 1/\eps, \log(1/\delta) \right)$-time
inverse approximate uniform generation algorithm for the class $\ltf_n$.
\end{theorem}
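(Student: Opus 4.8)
The plan is to instantiate the general framework of Theorem~\ref{thm:learn-by-dense} with $\C = \ltf_n$. Doing so requires supplying four ingredients: (i) a densifier $\A^{(\C,\C')}_{\den}$ for $\ltf_n$ using some class $\C'$; (ii) an approximate uniform generation algorithm for $\C'$; (iii) an approximate counting algorithm for $\C'$; and (iv) an $\sq$ learning algorithm for $\ltf_n$. Ingredient (iv) is available off the shelf, since linear threshold functions are efficiently learnable in the $\sq$ model distribution-independently (halfspace learning can be carried out by an $\sq$ algorithm running in $\poly(n,1/\eps)$ time with inverse-polynomial tolerance, via the noise-tolerant linear-programming approach to learning halfspaces). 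The heart of the argument is the construction of the densifier together with the right choice of $\C'$: we take $\C'$ to be the class of \emph{bands}, i.e.\ functions $g$ with $g(x)=1 \iff a \le v\cdot x \le b$, each of which is an intersection of two parallel halfspaces. For this $\C'$, ingredients (ii) and (iii) reduce to the single-halfspace (``knapsack'') algorithms of \cite{MorrisSinclair:04,Dyer03-short}.

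\textbf{Construction of the densifier.} On input $\eps,\delta,\widehat p$ and samples from $\U_{f^{-1}(1)}$, where $f=\sign(w\cdot x-\theta)$ and $p:=\Pr_{x\sim\U_n}[f(x)=1] \le \widehat p < (1+\eps)p$, the densifier splits into two regimes. If $\widehat p$ exceeds a suitable $\poly(\eps,1/n)$ threshold, it outputs $g\equiv 1$: this is a valid $\gamma$-densifier output with $\gamma=\Theta(\widehat p)$, since then property (a) holds trivially and property (b) reads $\Pr_{x\sim\U_{g^{-1}(1)}}[f(x)=1]=p\ge\widehat p/(1+\eps)\ge\gamma$. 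Otherwise $p$ is small, and the key subroutine produces a direction $v$ close to $w$ by running the online mistake-bounded halfspace learner of Maass and Tur\'an \cite{MT:94}. The example oracle it needs is simulated using only positive samples together with the uniform generator for LTFs: positive examples come directly from $\U_{f^{-1}(1)}$; given the learner's current hypothesis halfspace $h_t$, we first check (on fresh samples from $\U_{f^{-1}(1)}$) whether $h_t$ labels almost all of $f^{-1}(1)$ as $1$, feeding back any misclassified positive point, and we use the approximate counting algorithm to test whether $|h_t^{-1}(1)|$ is already within a $\poly(n,1/\eps)$ factor of $\widehat p\,2^n$; if not, then $|h_t^{-1}(1)|\gg|f^{-1}(1)|$, so a point drawn from $\U_{h_t^{-1}(1)}$ via the LTF generator satisfies $f$ only with tiny probability and can be fed back as a negative counterexample. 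Since the Maass--Tur\'an mistake bound is polynomial in $n$, after $\poly(n,1/\eps,\log(1/\delta))$ rounds this yields (whp) a halfspace whose direction $v$ approximates $w$ well and whose threshold is roughly calibrated. We then use the approximate counting algorithm to binary-search an upper threshold $b$ (and fine-tune the lower threshold $a$) so that the band $g(x)=[\,a\le v\cdot x\le b\,]$ satisfies $\Pr_{x\sim\U_{f^{-1}(1)}}[g(x)=1]\ge 1-\eps$ and $|g^{-1}(1)|\le\poly(n,1/\eps)\cdot|f^{-1}(1)|$, i.e.\ $g$ is an $(\eps,\gamma,\delta)$-densifier output with $\gamma=1/\poly(n,1/\eps)$. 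Existence of such a band rests on anti-concentration and concentration properties of LTFs: for a low-density halfspace almost all satisfying mass lies in a thin band just above the threshold, so a band with direction $v\approx w$ can simultaneously capture a $(1-\eps)$-fraction of $f^{-1}(1)$ and be only polynomially larger than $f^{-1}(1)$.

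\textbf{The remaining ingredients for $\C'$.} Approximate counting and approximate uniform generation for a band $g(x)=[\,a\le v\cdot x\le b\,]$ reduce to the single-halfspace algorithms of \cite{MorrisSinclair:04,Dyer03-short}: to sample, generate near-uniform satisfying assignments of $\{v\cdot x\ge a\}$ and reject those with $v\cdot x>b$; to count, subtract the approximate count of $\{v\cdot x\ge b\}$ from that of $\{v\cdot x\ge a\}$. Both are efficient because the calibration guarantees the band is a $1-o(1)$ fraction of the lower halfspace, so the rejection probability is bounded away from $1$ and the two counts are not close. With ingredients (i)--(iv) in hand, Theorem~\ref{thm:learn-by-dense} gives an inverse approximate uniform generation algorithm for $\ltf_n$ whose running time is polynomial in all the relevant quantities, hence $\poly(n,1/\eps)$; standard confidence amplification (running the algorithm $O(\log(1/\delta))$ times and using the hypothesis-testing procedure of Proposition~\ref{prop:log-cover-size} to select a good output among the repetitions) upgrades the dependence on $\delta$ to $\log(1/\delta)$, yielding Theorem~\ref{thm:ltf-formal}.

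\textbf{Main obstacle.} The delicate part is the densifier, and within it the simulation of the Maass--Tur\'an example oracle without ever being able to evaluate $f$. One must show that the ``negative counterexamples'' manufactured from the uniform generator are genuinely negative with high enough probability that the polynomial mistake bound is preserved (and absorb the rare incorrect ones, e.g.\ as a small amount of adversarial noise that the LP-based learner tolerates), and one must establish the quantitative LTF geometry (good enough direction recovery, plus the concentration/anti-concentration facts) that makes a single band simultaneously a $(1-\eps)$-cover of $f^{-1}(1)$ and only a $\poly(n,1/\eps)$-factor larger. The rest is bookkeeping: propagating running-time bounds through Theorem~\ref{thm:learn-by-dense} and the confidence-amplification wrapper.
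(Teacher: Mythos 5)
Your proposal shares the paper's skeleton — instantiate Theorem~\ref{thm:learn-by-dense}, build a densifier around the Maass--Tur\'an online learner, manufacture negative counterexamples from the approximate uniform generator and positive ones from $\U_{f^{-1}(1)}$, gate the loop with the approximate counter — but it then takes a detour through bands that the paper avoids, and that detour is where the gaps are. The paper takes $\C' = \ltf_n$: the densifier simply outputs the online learner's current hypothesis halfspace $h$ once it (i) classifies all of a fixed positive sample $S_+$ correctly and (ii) passes an approximate-counting sparsity check. Property (a) of the densifier is then immediate from a union bound over the at most $2^{n^2}$ LTFs (any $h$ consistent with a random $S_+$ of size $\Theta((n^2+\log(1/\delta))/\eps)$ works), and property (b) is exactly what the counter certified. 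No geometry of $w$ is ever needed.

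Your version instead requires the densifier to output a band $g = [\,a \le v\cdot x \le b\,]$, and the argument for why such a band exists and is efficiently findable rests on two claims that are not established and are, in general, false or unavailable. First, ``Maass--Tur\'an produces a direction $v$ close to $w$'': the Maass--Tur\'an algorithm is a mistake-bounded online learner whose output at any stage is merely some LTF consistent with the examples seen; it gives no guarantee of approximate direction recovery, and indeed when the constraints are uninformative (e.g.\ dense $f$, or few negative counterexamples received) the hypothesis direction can be far from $w$ in every reasonable metric. Second, the claimed ``concentration/anti-concentration'' fact that a low-density halfspace's satisfying mass concentrates in a thin band of $w\cdot x$-values is not a general theorem — for ill-conditioned weight vectors (say, geometrically decaying weights) the distribution of $w\cdot x$ on $f^{-1}(1)$ can be very spread out, and a band that simultaneously covers $(1-\eps)$ of $f^{-1}(1)$ and is only a $\poly(n/\eps)$ factor larger need not exist. (There is also a secondary fragility you flag yourself: the band-as-$\C'$ approximate counter and sampler require the band to occupy a non-negligible fraction of the lower halfspace, another invariant that would have to be engineered.) The repair is simply to drop the band altogether: output the final hypothesis halfspace as the densifier's $g$, take $\C'=\ltf_n$, use Dyer's algorithms for $\C'$ directly, and use the counting/union-bound argument above in place of any LTF-geometry claims. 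With that change the rest of your outline (SQ learner for halfspaces, simulation of the online learner's example oracle, running-time bookkeeping) goes through and matches the paper.
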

The above theorem will follow as an application of Theorem~\ref{thm:learn-by-dense} for $\C' = \C = \ltf_n$.
The literature provides us with three of the four ingredients that our general approach requires for LTFs -- approximate uniform generation,
approximate counting, and Statistical Query learning -- and our main technical contribution is giving the fourth necessary ingredient,
a densifier. We start by recalling the three known ingredients in the following subsection.

\subsection{Tools from the literature.} \label{ssec:lit-ltfs}
We first record two efficient algorithms for approximate uniform generation and approximate counting for $\ltf_n$, due
to Dyer~\cite{Dyer03-short}:
\begin{theorem}\label{thm:dyer-gen} \ifnum\confversion=0{(approximate uniform generation for $\ltf_n$, \cite{Dyer03-short})} \else \fi
 There is an algorithm $\A^{\ltf}_{\gen}$ that on input (a weights--based representation of) an arbitrary $h \in \ltf_n$ and a confidence parameter
$\delta>0$, runs in time {$\poly(n,\log(1/\delta))$} and with probability $1-\delta$ outputs a point $x$ such that $x \sim \U_{h^{-1}(1)}$.
\end{theorem}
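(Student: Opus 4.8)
The plan is to prove this the way Dyer does: reduce the LTF to a canonical knapsack instance, \emph{round} the weights so that exact counting and exact uniform sampling become feasible by dynamic programming, and then recover an exactly uniform satisfying assignment of the original LTF by rejection sampling.

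\textbf{Step 1: canonical form.} Write $h(x) = \sign(w\cdot x - \theta)$ over $\bn$. Substituting $x_i = 1 - 2y_i$ with $y_i \in \{0,1\}$ turns ``$h(x)=1$'' into a linear inequality $\sum_i w_i y_i \le \theta'$ in $0/1$ variables (whether this is strict or non-strict depends on the convention for $\sign(0)$; clearing all denominators of the rational input data makes the distinction immaterial). Replacing $y_i$ by $1-y_i$ for each $i$ with $w_i<0$ makes every coefficient nonnegative, and clearing denominators yields an inequality $a\cdot y \le b$ with $a\in\Z_{\ge 0}^n$, $b\in\Z$, of bit-length polynomial in the size of the input representation. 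Coordinates with $a_i=0$ are unconstrained and can be assigned independent fair bits at the very end; coordinates with $a_i>b$ are forced to $0$ and can be deleted; if $b<0$ then $h^{-1}(1)=\emptyset$. So we may assume $1\le a_i\le b$ and $b\ge 1$, and the task reduces to sampling exactly uniformly from $S := \{y\in\{0,1\}^n : a\cdot y\le b\}$.

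\textbf{Step 2: the rounding lemma (the crux).} Set $a'_i := \lceil a_i n^2/b\rceil \in \{1,\dots,n^2\}$, $b' := n^2+n$, and $S' := \{y\in\{0,1\}^n : a'\cdot y\le b'\}$. Two facts are needed. First, $S\subseteq S'$: if $a\cdot y\le b$ then $a'\cdot y \le \sum_i (a_i n^2/b + 1)y_i \le (n^2/b)b + n = b'$. Second, $|S'|\le (n+1)|S|$. For this, $a'_i\ge a_i n^2/b$ gives $S'\subseteq T := \{y : a\cdot y\le (1+\tfrac1n)b\}$, so it suffices to show $|T|\le(n+1)|S|$. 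Given $y\in T\setminus S$ we have $b < a\cdot y\le (1+\tfrac1n)b$; let $j$ be a coordinate with $y_j=1$ of maximum weight $a_j$. Since $y$ has at most $n$ ones, $a_j\ge (a\cdot y)/n > b/n$, so the point obtained by zeroing coordinate $j$ has weight $a\cdot y - a_j < (1+\tfrac1n)b - \tfrac bn = b$ and hence lies in $S$. This defines a map $T\setminus S\to S$ under which every $z\in S$ has at most $n$ preimages (one for each coordinate where $z$ is $0$), so $|T\setminus S|\le n|S|$ and $|S'|\le|T|\le(n+1)|S|$.

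\textbf{Step 3: exact DP-sampling plus rejection.} Since the $a'_i$ are positive integers at most $n^2$ and $b'=\poly(n)$, the quantities $M(i,t) := |\{(y_1,\dots,y_i)\in\{0,1\}^i : \sum_{j\le i} a'_j y_j\le t\}|$, for $0\le i\le n$ and $0\le t\le b'$, satisfy $M(i,t) = M(i-1,t) + M(i-1,t-a'_i)$ (with $M(\cdot,s)=0$ for $s<0$) and are computed exactly by dynamic programming in $\poly(n)$ time; the table has $\poly(n)$ entries, each a nonnegative integer of at most $n$ bits. In particular $|S'|=M(n,b')$, and one draws a point \emph{exactly} uniformly from $S'$ by choosing $y_n,\dots,y_1$ in turn: with current remaining budget $t$ (initially $b'$), set $y_i=1$ with probability $M(i-1,t-a'_i)/M(i,t)$, and if $y_i=1$ decrement $t$ by $a'_i$. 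Now repeat the following up to $N := \lceil (n+1)\ln(1/\delta)\rceil$ times: draw $y$ uniformly from $S'$; if $a\cdot y\le b$, output $y$ (decoded back to $\bn$ via the substitutions of Step~1, with independent fair bits inserted for the free coordinates) and halt. If all $N$ trials fail, output $\bot$. By the rounding lemma each trial lands in $S$ independently with probability $|S|/|S'|\ge 1/(n+1)$, so the probability of outputting $\bot$ is at most $(1-\tfrac1{n+1})^N\le\delta$; and conditioned on not outputting $\bot$, the output is exactly uniform over $S$, hence exactly uniform over $h^{-1}(1)$. The overall running time is $\poly(n)\cdot\log(1/\delta)$, i.e.\ $\poly(n,\log(1/\delta))$. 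The only nontrivial ingredient is the rounding lemma of Step~2 — that inflating the knapsack threshold multiplicatively by $(1+\tfrac1n)$ increases the number of feasible $0/1$ points by only an $O(n)$ factor; the encoding reduction, the dynamic program, and the rejection bound are all routine.
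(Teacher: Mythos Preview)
Your proof is correct and is essentially Dyer's argument. Note, however, that the paper does not prove this theorem at all: it is stated in the ``Tools from the literature'' subsection and simply cited from \cite{Dyer03-short} as a black box. What you have written is a faithful sketch of Dyer's own proof (round the weights to magnitude $\poly(n)$, observe via the flip-the-heaviest-coordinate map that the rounded feasible set blows up by at most a factor $n+1$, then do exact DP sampling plus rejection), so there is nothing to compare against on the paper's side.
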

\noindent We note that the above algorithm gives us a somewhat stronger guarantee
\ifnum\confversion=0
than that in Definition~\ref{def:approx-unif-gen}.
\else
than we need.
\fi
Indeed,  the algorithm $\A^{\ltf}_{\gen}$ with high probability outputs a point $x \in \bn$
whose distribution is {\em exactly} $\U_{h^{-1}(1)}$ (as opposed to a point whose distribution is {\em close} to $\U_{h^{-1}(1)}$).
\begin{theorem}\label{thm:dyer-count} \ifnum\confversion=0{(approximate counting for $\ltf_n$, \cite{Dyer03-short})} \else  \fi
 There is an algorithm $\A^{\ltf}_{\co}$ that on input (a weights--based representation of) an arbitrary $h \in \ltf_n$, an accuracy parameter $\eps>0$ and a
confidence parameter $\delta>0$, runs in time {$\poly(n,1/\epsilon, \log(1/\delta))$}
and  outputs $\widehat{p} \in [0,1]$ that with probability $1-\delta$ satisfies $\widehat{p}
\in [1-\epsilon, 1+\epsilon] \cdot
\Pr_{x \sim \U_n}[h(x)=1]$.
\end{theorem}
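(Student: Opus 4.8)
I would follow the approach of Dyer~\cite{Dyer03-short}: reduce to counting $0/1$ knapsack solutions, count a suitably rounded (hence slightly enlarged) relaxation \emph{exactly} by dynamic programming, and then estimate by Monte Carlo the ratio between the true count and the relaxed count.

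First, reduce to knapsack counting. Write $h(x) = \sign(w \cdot x - \theta)$ with integer weights $w_i$ (rational weights are handled by clearing denominators). Substituting $x_i = 1 - 2 y_i$ with $y_i \in \{0,1\}$ turns the event $h(x) = 1$ into a linear inequality; flipping $y_i \mapsto 1 - y_i$ for each $i$ with $w_i < 0$ and rounding the right-hand side down lets us assume all coefficients are nonnegative integers $a_i \ge 0$ and the bound $b$ is an integer, so that $\Pr_{x \sim \U_n}[h(x) = 1] = Z/2^n$ with $Z := |\{ y \in \{0,1\}^n : \sum_i a_i y_i \le b \}|$. If $b < 0$ then $Z = 0$ and we output $0$; coordinates with $a_i = 0$ are free and merely multiply $Z$ by $2$ each (say there are $f$ of them); coordinates with $a_i > b$ are forced to $0$ and are deleted. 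After this preprocessing we may assume $1 \le a_i \le b$ for every remaining (``active'') coordinate, and if the remaining bound satisfies $b \ge \sum_i a_i$ then every assignment is a solution and we output the exact count.

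Next, round and count the relaxation. Put $\hat a_i := \lfloor a_i n^2 / b \rfloor \in \{0,1,\dots,n^2\}$ and $\hat b := n^2$, and let $Z' := |\{ y : \sum_i \hat a_i y_i \le \hat b \}|$ over the active coordinates. Since $\hat a_i \le a_i n^2/b$, any $y$ with $\sum_i a_i y_i \le b$ satisfies $\sum_i \hat a_i y_i \le n^2$, so $Z_{\mathrm{act}} \le Z'$, where $Z_{\mathrm{act}}$ is the active-coordinate count. A standard dynamic program with table entries $T[i][s] = |\{(y_1,\dots,y_i) : \sum_{j \le i} \hat a_j y_j = s\}|$ for $0 \le s \le \hat b$ computes $Z' = \sum_{s \le \hat b} T[n][s]$ exactly using $O(n\hat b) = O(n^3)$ arithmetic operations on integers of at most $n$ bits; reading the same table lets us draw, in time $O(n^3)$, a point distributed \emph{exactly} uniformly on $\{ y : \sum_i \hat a_i y_i \le \hat b\}$ (choose $y_n,y_{n-1},\dots$ in turn with the conditional probabilities dictated by the $T$-values).

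Finally, estimate the ratio and conclude. The key sandwich is $Z' \le (n+1) Z_{\mathrm{act}}$: if $\sum_i \hat a_i y_i \le n^2$, then using $\hat a_i + 1 > a_i n^2/b$ we get $\sum_i a_i y_i < (b/n^2)\big(n^2 + |\{i : y_i = 1\}|\big) \le b(1 + 1/n) \le 2b$, and the standard knapsack ratio lemma $|\{ y : \sum_i a_i y_i \le 2b\}| \le (n+1)\,|\{y : \sum_i a_i y_i \le b\}|$ (proved by exhibiting an at-most-$(n+1)$-to-one map from solutions of the ``$2b$'' instance to those of the ``$b$'' instance) gives $Z' \le (n+1) Z_{\mathrm{act}}$. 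Hence $\rho := Z_{\mathrm{act}}/Z' \in [1/(n+1), 1]$. Drawing $m = \Theta\big((n/\eps^2)\log(1/\delta)\big)$ independent uniform samples from $\{ y : \sum_i \hat a_i y_i \le \hat b\}$ and letting $\hat\rho$ be the fraction that also satisfy $\sum_i a_i y_i \le b$, a multiplicative Chernoff bound gives $\Pr[\,|\hat\rho - \rho| > \eps\rho\,] \le \delta$. We output $\widehat p := 2^{\,f - n} \cdot Z' \cdot \hat\rho$, which with probability $1-\delta$ lies in $[1-\eps, 1+\eps] \cdot \Pr_{x \sim \U_n}[h(x) = 1]$; every step runs in $\poly(n, 1/\eps, \log(1/\delta))$ time. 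The main obstacle is the knapsack ratio lemma $Z' \le (n+1)Z_{\mathrm{act}}$ (equivalently $|\{\sum_i a_i y_i \le 2b\}| \le (n+1)|\{\sum_i a_i y_i \le b\}|$), which is the combinatorial crux; the affine reduction, the bookkeeping for degenerate coordinates, the exact uniformity of the DP sampler, and the Chernoff estimate are all routine.
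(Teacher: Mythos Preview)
The paper does not prove this theorem; it simply cites it as a known result of Dyer~\cite{Dyer03-short}. Your plan is exactly Dyer's algorithm, and the overall architecture (reduce to knapsack, round to a polynomially-bounded instance, count the relaxation exactly by DP, sample from the relaxation and reject) is correct.

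There is, however, a genuine error in one step. You correctly derive that any $y$ in the relaxed feasible set satisfies $\sum_i a_i y_i < b(1+1/n)$, but you then weaken this to $\sum_i a_i y_i \le 2b$ and invoke a ``standard knapsack ratio lemma'' asserting
\[
\bigl|\{y:\textstyle\sum_i a_i y_i \le 2b\}\bigr| \;\le\; (n+1)\,\bigl|\{y:\textstyle\sum_i a_i y_i \le b\}\bigr|.
\]
This statement is false, and not even polynomially true: with $a_1=\cdots=a_n=1$ and $b=n/4$, the left side is roughly $2^{n-1}$ while the right side is roughly $(n+1)2^{H(1/4)n}$, so the ratio is exponential in $n$. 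The lemma you actually need---and the one Dyer uses---is the tighter version with threshold $b(1+1/n)$:
\[
\bigl|\{y:\textstyle\sum_i a_i y_i \le b(1+1/n)\}\bigr| \;\le\; (n+1)\,\bigl|\{y:\textstyle\sum_i a_i y_i \le b\}\bigr|.
\]
This one \emph{is} true, via the map you allude to: sort so $a_1\ge\cdots\ge a_n$; for $y$ with $b<\sum_i a_iy_i\le b(1+1/n)$, let $j$ be the smallest index with $y_j=1$ and flip it to $0$. Since $a_j \ge (\sum_i a_iy_i)/n > b/n$, the new weight is below $b$, and $(y',j)$ determines $y$. Since you already had the $b(1+1/n)$ bound in hand, the fix is simply to skip the weakening to $2b$ and state the ratio lemma at the right threshold.
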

We also need an efficient $\sq$ learning algorithm for halfpaces. This is provided to us by a result of  Blum et.~al.~\cite{BFK+:97}:
\ifnum\confversion=1 {\vspace{-0.2cm}} \fi
\begin{theorem} \label{thm:sq-ltfs} \ifnum\confversion=0{($\sq$ learning algorithm for $\ltf_n$, \cite{BFK+:97})} \else \fi
  There is a distribution-independent $\sq$ learning
algorithm $\A^\ltf_\sq$ for $\ltf_n$ that has running time
$t_1=\poly(n,1/\eps,\log(1/\delta))$, uses at most $t_2=\poly(n)$ time to
evaluate each query, and requires tolerance of its
queries no smaller than $\tau = 1/\poly(n,1/\eps).$
\end{theorem}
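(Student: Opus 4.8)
This is the main theorem of Blum, Frieze, Kannan and Vempala~\cite{BFK+:97}; although it is usually quoted as a polynomial-time algorithm for PAC-learning halfspaces under random classification noise, the algorithm only ever inspects the data through estimates of expectations of bounded, efficiently computable functions, and my plan is to recall its three ingredients and check that each is a statistical query of tolerance $1/\poly(n,1/\eps)$ with a $\poly(n)$-time query function. The only preprocessing is the standard reduction to homogeneous halfspaces: append the constant coordinate $1$ to each $x\in\bn$, so the target becomes $f(x)=\sign(w^*\cdot x)$ for some $w^*\in\R^{n+1}$ and every point has $\|x\|^2=n+1$.

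\textbf{Weak learning by averaging.}
The core is a weak learner. Given a distribution $D$, form the ``center of mass of the labeled points'' $\hat w \eqdef \E_{x\sim D}[f(x)\,x]\in\R^{n+1}$, obtained coordinatewise by $n+1$ statistical queries with query functions $\chi_i(x,b)=b\,x_i$ (bounded by $1$, computable in $O(n)$ time) at tolerance $1/\poly(n,1/\eps)$. One has $\hat w\cdot w^*=\E[\,|w^*\cdot x|\,]\ge 0$, but this correlation can be tiny relative to the norms, so one first runs an \emph{outlier-removal} step: repeatedly estimate the second-moment matrix $\Sigma=\E_{x\sim D}[xx^\top]$ (that is $O(n^2)$ statistical queries, with query functions $x_ix_j$), find a direction $u$ carrying too much variance, and condition $D$ on the slab $\{x:|u\cdot x|\le t\}$ for a suitable threshold $t$. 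After $\poly(n)$ rounds one obtains a region $R$ with $D(R)\ge 1-\eps/2$ whose conditional distribution $D_R$ is ``well conditioned'': for every unit $u$, $\E_{D_R}[(u\cdot x)^2]$ is within a $\poly(n,1/\eps)$ factor of $(\E_{D_R}|u\cdot x|)^2$. All expectations over $D_R$ are simulated by statistical queries over $D$ using the bounded, $\poly(n)$-time query functions $\mathbf 1[x\in R]\cdot(\cdot)$ together with the SQ estimate of $D(R)$. A short calculation then shows that $h_{\mathrm w}(x)\eqdef\sign(\hat w_{D_R}\cdot x)$ (with a default guess off $R$) satisfies $\Pr_{x\sim D}[h_{\mathrm w}(x)\ne f(x)]\le \tfrac12-\gamma$ for some $\gamma=1/\poly(n,1/\eps)$.

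\textbf{Boosting inside the $\sq$ model.}
Finally I would boost this weak learner to accuracy $\eps$ using a boosting algorithm compatible with statistical queries, e.g.\ a smooth booster whose intermediate distributions $D_t$ have density $D_t/D$ bounded by $\poly(1/\eps)$: for such a booster any expectation over $D_t$ of a bounded function is an expectation over $D$ of a function bounded by $\poly(1/\eps)$, so it can be recovered by a statistical query over $D$ at the cost of a $\poly(1/\eps)$ loss in tolerance, which makes each call of the weak learner on $D_t$ itself $\sq$-implementable. After $T=O(\gamma^{-2}\log(1/\eps))$ stages the majority vote of the $h_{\mathrm w}$'s has error $\le\eps$ under $D$. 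Composing the three ingredients, the resulting algorithm runs in time $\poly(n,1/\eps,\log(1/\delta))$, uses $\poly(n)$ time per query, and never calls $\stat(f,D)$ with tolerance below $1/\poly(n,1/\eps)$; the $\log(1/\delta)$ factor and the internal randomness of the large-direction computations are absorbed by standard amplification, giving exactly the statement of Theorem~\ref{thm:sq-ltfs} (and distribution-independence is immediate since nothing above assumed anything about $D$).

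\textbf{Main obstacle.}
The genuine technical work — the part due to~\cite{BFK+:97} rather than routine bookkeeping — is the analysis of outlier removal together with the proof that averaging over the well-conditioned region yields advantage $\gamma=1/\poly(n,1/\eps)$ rather than something quasi-polynomially small. The secondary obstacle is precisely the tolerance bookkeeping: tolerance degrades multiplicatively through three nested levels (simulating $D_R$-queries by $D$-queries, simulating $D_t$-queries by $D$-queries, and the intrinsic tolerance of the averaging and covariance estimates), and one must verify the total degradation is only polynomial in $n$ and $1/\eps$, which is what Definition~\ref{def:sq-algo} requires.
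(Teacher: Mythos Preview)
The paper does not prove this theorem at all: it is stated in Section~\ref{ssec:lit-ltfs} (``Tools from the literature'') purely as a citation of \cite{BFK+:97}, alongside the Dyer results for approximate counting and generation, and is used as a black box in the application of Theorem~\ref{thm:learn-by-dense}. So there is no ``paper's own proof'' to compare against.

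That said, your sketch is a faithful outline of the Blum--Frieze--Kannan--Vempala argument and correctly identifies the three components (outlier removal to obtain a well-conditioned conditional distribution, the averaging weak learner, and smooth boosting), together with the key bookkeeping point that every data access is an expectation of a bounded $\poly(n)$-time function and hence a statistical query with $1/\poly(n,1/\eps)$ tolerance. Your identification of the two genuine obstacles --- the quantitative advantage bound after outlier removal, and the multiplicative tolerance degradation through conditioning and boosting --- is also accurate. One small remark: the original \cite{BFK+:97} paper is phrased for PAC learning under random classification noise rather than directly in the $\sq$ model; the $\sq$ formulation you give is the standard reinterpretation (via Kearns's simulation) and is exactly what the present paper needs, so your framing is appropriate here.
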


\subsection{A densifier for $\ltf_n$.}
\label{ssec:densifier-for-LTFs}
The last ingredient we need in order to apply our Theorem~\ref{thm:learn-by-dense} is a computationally efficient densifer for $\ltf_n$.
This is the main technical contribution of this section and is summarized in the following theorem:
\begin{theorem} \label{thm:ltf-densifier} \ifnum\confversion=0{(efficient proper densifier for $\ltf_n$) }\else \fi
 Set $\gamma (\eps, \delta, n) \eqdef \Theta \left( \delta / (n^2 \log n) \right)$.
There is  an $(\eps, \gamma, \delta)$--densifier $\A^{\ltf}_{\den}$
for $\ltf_n$ that, for any input parameters
$0 < \eps, \delta$, $1/2^n \leq \widehat{p} \leq 1$,
outputs a function
$g \in \ltf_n$ and runs in time $\poly(n, 1/\eps,$ $\log(1/\delta))$.
\end{theorem}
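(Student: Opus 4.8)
The plan is to build the densifier $\A^{\ltf}_{\den}$ around a known \emph{online} (mistake‑bounded) learning algorithm for halfspaces, used in concert with the approximate uniform generator $\A^{\ltf}_{\gen}$ and approximate counter $\A^{\ltf}_{\co}$ of Theorems~\ref{thm:dyer-gen} and~\ref{thm:dyer-count}. Recall that every $n$‑variable LTF has an integer‑weight representation with weights of magnitude at most $2^{O(n\log n)}$, and that the Maass--Tur\'an algorithm~\cite{MT:94} (based on interior‑point / cutting‑plane methods for linear programming) learns such LTFs in the online model making at most $M=O(n^2\log n)$ mistakes, with each update running in $\poly(n)$ time and producing an LTF hypothesis. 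The key conceptual point is that although we only receive positive examples of the unknown $f$ (and have no membership oracle for $f$), we can still feed the online learner valid counterexamples: positive counterexamples come directly from the sample of $\U_{f^{-1}(1)}$, while a \emph{negative} counterexample can be obtained, whenever the current hypothesis $h$ is ``much larger'' than $f^{-1}(1)$, by drawing $z\sim\U_{h^{-1}(1)}$ via $\A^{\ltf}_{\gen}$ and simply \emph{guessing} that $f(z)=-1$ --- a guess that is wrong only with probability $|f^{-1}(1)|/|h^{-1}(1)|$, which we can drive below $\delta/(4M)$. The input parameter $\widehat p$ is exactly what lets us estimate $|f^{-1}(1)|\approx\widehat p\,2^n$ and hence recognize when $h$ is large enough for this guessing to be safe.

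Concretely, set $M=\Theta(n^2\log n)$ and $\gamma=\Theta(\delta/(n^2\log n))$ as in the statement, and run the online learner, maintaining its current LTF hypothesis $h_t$. In round $t$: (i) use $\A^{\ltf}_{\co}$ on $h_t$ (constant accuracy, confidence $\delta/(8M)$) to estimate $|h_t^{-1}(1)|/2^n$, and use $O(\log(M/\delta)/\eps^2)$ fresh draws from $\U_{f^{-1}(1)}$ to estimate $\Pr_{x\sim\U_{f^{-1}(1)}}[h_t(x)=-1]$; (ii) if the first estimate is at most $\Theta(M/\delta)\cdot\widehat p$ \emph{and} the second is at most $\eps/2$, halt and output $g:=h_t$; (iii) else, if the coverage estimate exceeds $\eps/2$, draw from $\U_{f^{-1}(1)}$ until a point $x$ with $h_t(x)=-1$ is found and feed $(x,+1)$ to the learner; (iv) otherwise ($h_t$ is ``too large''), draw $z\sim\U_{h_t^{-1}(1)}$ via $\A^{\ltf}_{\gen}$ and feed $(z,-1)$; set $h_{t+1}$ to the updated hypothesis and continue. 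On the ``good event'' (all counter estimates accurate, all $\A^{\ltf}_{\gen}$ calls succeed, and all guessed $-1$ labels correct), every fed example is a genuine (example, $f$‑label) pair on which $h_t$ errs; since the learner makes at most $M$ mistakes on any sequence consistent with $f$, the loop feeds at most $M$ examples and therefore reaches the halting test within $M$ rounds --- if it had not, round $M+1$ would (whp) locate yet another counterexample and force an $(M+1)$‑st mistake, a contradiction. A union bound over the $O(M)$ counter calls, the $O(M)$ Chernoff estimates, the $O(M)$ generator calls, and the $\le M$ guessed labels (each wrong with probability $\le\delta/(4M)$ because we draw $z$ only after verifying $|h_t^{-1}(1)|\ge\Theta(M/\delta)|f^{-1}(1)|$) bounds the total failure probability by $\delta$.

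It remains to verify that $g$ meets Definition~\ref{def:dense}. Property (a) follows from the halting test: a coverage estimate at most $\eps/2$ guarantees (whp) $\Pr_{x\sim\U_{f^{-1}(1)}}[g(x)=1]\ge 1-\eps$. For property (b), the halting test together with $\widehat p<(1+\eps)p$ gives $|g^{-1}(1)|\le\Theta(M/\delta)\cdot\widehat p\cdot 2^n\le\Theta(M/\delta)\cdot|f^{-1}(1)|$, while property (a) gives $|f^{-1}(1)\cap g^{-1}(1)|\ge(1-\eps)|f^{-1}(1)|$, so
\[
\Pr_{x\sim\U_{g^{-1}(1)}}[f(x)=1]\;=\;\frac{|f^{-1}(1)\cap g^{-1}(1)|}{|g^{-1}(1)|}\;\ge\;\frac{(1-\eps)\,|f^{-1}(1)|}{\Theta(M/\delta)\,|f^{-1}(1)|}\;=\;\Theta\!\left(\frac{\delta}{n^2\log n}\right)\;=\;\gamma,
\]
absorbing the $(1\pm\eps)$ errors of the counter and the $(1-\eps)$ factor into the constants. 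The running time is $M=\poly(n)$ rounds, each dominated by one call to $\A^{\ltf}_{\co}$, one call to $\A^{\ltf}_{\gen}$, $\poly(n,1/\eps,\log(1/\delta))$ sampling for the Chernoff estimates, and one $\poly(n)$‑time learner update, for a total of $\poly(n,1/\eps,\log(1/\delta))$, and $g\in\ltf_n$ since it is the learner's hypothesis. The main obstacle is precisely the one flagged above --- manufacturing sound negative counterexamples for the online learner with no membership oracle for $f$ --- together with the bookkeeping that makes the ``large enough to guess safely'' threshold and the ``dense enough to halt'' threshold coincide up to constants, so that a single $\gamma=\Theta(\delta/(n^2\log n))$ serves both and the mistake bound indeed controls the loop length.
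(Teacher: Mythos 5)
Your proof is correct and takes essentially the same approach as the paper: drive the Maass--Tur\'an online learner by feeding it positive counterexamples from the sample and, once the current hypothesis has been certified (via the approximate counter $\A^{\ltf}_{\co}$) to be at least a $\Theta(M/\delta)$ factor larger than $f^{-1}(1)$, ``guessed'' negative counterexamples drawn from $\U_{h_t^{-1}(1)}$ via $\A^{\ltf}_{\gen}$; with $\gamma = \Theta(\delta/M)$ each guess is wrong with probability only $O(\delta/M)$, and the mistake bound $M = O(n^2\log n)$ forces termination. The one cosmetic difference is your handling of condition (a): you re-estimate $\Pr_{x\sim\U_{f^{-1}(1)}}[h_t(x)=-1]$ each round with fresh Chernoff samples and union-bound over rounds, whereas the paper draws a single set $S_+$ of $\Theta((n^2+\log(1/\delta))/\eps)$ positive examples up front and appeals to the fact that there are at most $2^{n^2}$ distinct LTFs on $\{-1,1\}^n$ to conclude (Claim~\ref{claim:a}) that \emph{any} LTF consistent with $S_+$ already satisfies condition (a), so that the hypothesis maintained by $\A^{\ltf}_{\mt}$---which is always consistent with the counterexamples seen so far---needs no separate coverage check.
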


\noindent {\bf Discussion and intuition.} Before we prove Theorem~\ref{thm:ltf-densifier}, we provide some intuition.
Let $f \in \ltf_n$ be the unknown LTF and suppose that we would like to design an $(\eps, \gamma, \delta)$--densifier $\A^{\ltf}_{\den}$ for $f$.
That is, given sample access to $\U_{f^{-1}(1)}$, {and a number $\widehat{p}$ satisfying $p \le  \widehat{p}
< (1+\eps) p$,
where $p = \Pr_{x \in \U_n} [f(x)=1]$,}
we would like to efficiently compute (a weights--based representation for) an LTF $g:\bn \to \bits$
such that 
\ignore{with probability $1-\delta$}
 the following conditions are satisfied:
\begin{enumerate}
\item[(a)] $\Pr_{x \sim \U_{f^{-1}(1)}} \left[ g(x)=1 \right] \ge 1-\eps$, and
\item[(b)] $\Pr_{x \sim \U_n} \left[ g(x)=1 \right] \le (1/\gamma) \cdot \Pr_{x \sim \U_n}[f=1]$.
\end{enumerate}
(While  condition (b) above appears slightly different than property (b) in our Definition~\ref{def:dense},
because of property (a), the two statements are essentially equivalent up to a factor of $1/(1-\eps)$ in the value of $\gamma$.)

We start by noting that it is easy to handle the case that $\widehat{p}$ is
large.
In particular, observe that if $\widehat{p} \geq 2\gamma$ then $p = \Pr_{x \sim \U_n} [f(x) = 1] \ge \widehat{p} / (1+\eps) \ge \widehat{p} /2 \ge \gamma$, and we can just output $g \equiv 1$ since it clearly satisfies both properties of the definition.
{For the following intuitive discussion} we will henceforth assume that $\widehat{p} \le 2 \gamma.$

Recall that our desired function $g$ is an LTF, i.e., $g(x) = \sign(v \cdot x - t)$, for some $(v, t) \in \R^{n+1}$. {Recall also that our densifier has sample access to $\U_{f^{-1}(1)}$, so
it can obtain random positive examples of $f$, each of which gives
a linear constraint over the $v,t$ variables.} Hence a natural first approach is
to attempt to construct an appropriate linear program over these variables
whose feasible solutions satisfy conditions (a) and (b) above.
{We
begin by analyzing this approach; while it turns out to not
quite work,
it will gives us valuable intuition for our actual algorithm, which is
presented further below.
}

Note that following this approach, condition (a) is relatively easy to satisfy. Indeed, consider any $\eps>0$ and
suppose we want to construct an LTF $g = \sign(v \cdot x - t)$ such that $\Pr_{x \sim \U_{f^{-1}(1)}} \left[ g(x)=1 \right] \ge 1-\eps$.
This can be done as follows: draw a set $S_{+}$ of
$N_{+} = \Theta \left( (1/\eps) \cdot (n^2+\log(1/\delta))\right) $ samples from $\U_{f^{-1}(1)}$ and consider a linear program $\cal{LP}_+$
with variables $(w, \theta) \in \R^{n+1}$ that enforces all these examples to be positive. That is, for each $x \in S_+$, we will have an inequality
$w \cdot x \ge \theta$. 
\ignore{ (We also need a normalizing condition, but this is standard and easy to handle, hence we omit it from this intuitive explanation.)}
It is clear that $\cal{LP}_+$ is feasible (any weights--based representation for $f$ is a feasible solution) and that it can be solved in
$\poly(n, 1/\eps, \log(1/\delta))$ time, since it is defined by $O(N_+)$ many linear constraints and the coefficients of the constraint matrix are in $\{\pm 1\}$.
\ignore{Our algorithm outputs a feasible solution of $\cal{LP}_+$. }
The following simple claim shows that with high
probability any feasible solution of ${\cal LP}_+$
satisfies condition (a):

\begin{claim} \label{claim:a}
With probability at least $1-\delta$ over the sample $S_+$,
{\em any} $g \in \ltf_n$ consistent with $S_{+}$ satisfies condition (a).
\end{claim}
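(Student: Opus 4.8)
The plan is to recognize this as a standard uniform-convergence statement about the class $\ltf_n$ and to invoke the classical $\eps$-net theorem from VC theory, rather than attempting a (hopeless) direct union bound over the infinitely many LTFs consistent with $S_+$. First I would observe that since $S_+ \subseteq f^{-1}(1)$, any $g \in \ltf_n$ consistent with $S_+$ satisfies $g(x)=1$ for every $x \in S_+$, i.e. $g^{-1}(-1) \cap S_+ = \emptyset$. Moreover condition (a), namely $\Pr_{x \sim \U_{f^{-1}(1)}}[g(x)=1] \ge 1-\eps$, is exactly the statement $\U_{f^{-1}(1)}(g^{-1}(-1)) \le \eps$. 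So it suffices to show that, with probability at least $1-\delta$ over the draw of $S_+$, every $g \in \ltf_n$ whose negative region $g^{-1}(-1)$ avoids $S_+$ has $\U_{f^{-1}(1)}$-mass at most $\eps$ in that region; equivalently, that $S_+$ is an $\eps$-net under $\U_{f^{-1}(1)}$ for the set system $\mathcal{R} \eqdef \{\, g^{-1}(-1) : g \in \ltf_n \,\}$.

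Next I would bound the VC dimension of $\mathcal{R}$. The class of halfspaces $x \mapsto \sign(v \cdot x - t)$ over $\R^n$ has VC dimension $n+1$; complementing every set and restricting the ground set from $\R^n$ to $\bn$ cannot increase VC dimension, so $\mathrm{VC}(\mathcal{R}) \le n+1$. By the classical $\eps$-net theorem, a collection of $m = O\!\left( (1/\eps)\,\bigl( n\log(1/\eps) + \log(1/\delta) \bigr) \right)$ i.i.d.\ draws from any fixed distribution is, with probability at least $1-\delta$, an $\eps$-net for a set system of VC dimension $n+1$. Since $N_+ = \Theta\!\left( (1/\eps)(n^2 + \log(1/\delta)) \right)$ exceeds this $m$ (for a suitable choice of the constant hidden in the $\Theta(\cdot)$), the set $S_+$ of $N_+$ i.i.d.\ samples from $\U_{f^{-1}(1)}$ is an $\eps$-net for $\mathcal{R}$ with probability at least $1-\delta$.

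Finally I would assemble the pieces. Condition on the event that $S_+$ is an $\eps$-net for $\mathcal{R}$, which happens with probability at least $1-\delta$. Let $g \in \ltf_n$ be any function consistent with $S_+$. Then $g^{-1}(-1) \in \mathcal{R}$ and $g^{-1}(-1) \cap S_+ = \emptyset$, so by the $\eps$-net property $\U_{f^{-1}(1)}\bigl( g^{-1}(-1) \bigr) \le \eps$, i.e. $\Pr_{x \sim \U_{f^{-1}(1)}}[g(x)=1] \ge 1-\eps$, which is precisely condition (a). Crucially this bound holds simultaneously for \emph{all} such $g$, since the $\eps$-net guarantee is a single high-probability event about the sample and not about any individual hypothesis.

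I do not expect a genuine obstacle here — the statement is a routine VC/$\eps$-net fact — and the only point that deserves care is exactly the one just noted: one cannot fix a ``bad'' $g$ and bound its probability of surviving all $N_+$ constraints, because there are infinitely many LTFs, so the Sauer--Shelah / $\eps$-net machinery (or, equivalently, the standard double-sampling argument) is the right tool rather than an Occam-style bound for finite concept classes. If one prefers to avoid citing the $\eps$-net theorem as a black box, the same conclusion follows by the usual double-sampling argument together with the Sauer--Shelah bound on the number of labelings that a VC-dimension-$(n+1)$ class induces on $2N_+$ points, but this only reproduces the stated sample-size bound.
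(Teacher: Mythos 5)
Your proof is correct, but it takes a genuinely different route from the paper's, and your stated reason for avoiding the simpler route is actually mistaken. The paper does precisely the ``hopeless'' direct union bound that you dismiss: it fixes a bad $g$, notes that the probability $S_+$ misses $g^{-1}(-1) \cap f^{-1}(1)$ is at most $(1-\eps)^{N_+} \le \delta / 2^{n^2}$, and then union-bounds over the fact (due to Muroga) that there are at most $2^{n^2}$ \emph{distinct} LTFs as Boolean functions on $\bn$. Your concern that ``there are infinitely many LTFs'' conflates the infinitely many weight--threshold pairs $(w,\theta) \in \R^{n+1}$ with the finitely many Boolean functions they induce on the finite domain $\bn$; restricted to the hypercube, $\ltf_n$ is a finite class, and an Occam-style bound goes through directly. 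This also explains the specific $N_+ = \Theta\bigl((1/\eps)(n^2 + \log(1/\delta))\bigr)$ in the algorithm: the $n^2$ term is exactly $\log_2$ of Muroga's count, not a VC-dimension artifact. Your $\eps$-net argument is perfectly valid and even yields a slightly smaller sufficient sample size, roughly $O\bigl((n/\eps)\log(1/\eps) + (1/\eps)\log(1/\delta)\bigr)$, so it certainly suffices for the chosen $N_+$; what you lose relative to the paper is the elementary, self-contained character of the counting argument, and what you gain is generality (the same reasoning would work for any bounded-VC class without needing an explicit count of distinct functions).
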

\begin{proof}
Consider an LTF $g$ and suppose that it does not satisfy condition (a), i.e., $\Pr_{x \sim \U_n} [g(x) = -1 | f(x) = 1] > \eps$.
Since each sample $x \in S_+$ is uniformly distributed in $f^{-1}(1)$, the probability
it does not ``hit'' the set $g^{-1}(-1) \cap f^{-1}(1)$ is at most $1-\eps$.
The probability that {\em no} sample in $S_+$ hits  $g^{-1}(-1) \cap f^{-1}(1)$ is thus at most
$(1-\eps)^{N_+} \le  {\delta}/{2^{n^2}}.$
Recalling that there exist at most $2^{n^2}$ distinct LTFs over $\bn$~\cite{Muroga:71},
it follows by a union bound that the probability there exists an LTF that does not satisfy condition (a) is at most $\delta$ as desired.
\end{proof}

The above claim directly implies that with high probability
{\em any} feasible solution $(w^{\ast}, \theta^{\ast})$ to $\cal{LP}_+$ is such that  $g^{\ast}(x) = \sign(w^{\ast} \cdot x - \theta^{\ast})$
satisfies condition (a).
Of course, an arbitrary feasible solution to $\cal{LP}_+$ is by no means guaranteed to satisfy condition (b).
(Note for example that the constant $1$ function
is certainly feasible for  $\cal{LP}_+$.) Hence, a natural idea is
to include additional constraints in our linear program so that condition (b) is also satisfied.

Along these lines, consider the following procedure: Draw a set $S_{-}$of
$N_{-} = \lfloor \delta/\widehat{p} \rfloor$ uniform unlabeled samples
from $\bn$ and label them negative. That is, for each sample $x \in S_{-}$,
we add the constraint $w \cdot x < \theta$ to our linear program.
{Let $\cal{LP}$ be the linear program that contains all the constraints
defined by $S_{+} \cup S_{-}$. It is not hard to prove that with  probability
at least $1-2\delta$ over the sample $S_{-}$, we have that $S_{-}
\subseteq f^{-1}(-1)$ and hence (any weight based representation of) $f$
is a feasible solution to $\cal{LP}$. In fact, it is possible to show that
{\em if $\gamma$ is sufficiently small} ---
roughly, $\gamma \le \delta/\left( 4(n^2+\log(1/\delta)) \right)$ is what
is required --- then with high probability each solution to  $\cal{LP}$ also
satisfies condition (b). The catch, of course, is that the above
procedure is not computationally efficient because $N_{-} $ may be very large
-- if $\widehat{p}$ is very small, then it is infeasible even to
write down the linear program ${\cal{LP}}$.
}

\smallskip

\noindent {\bf Algorithm Description.}
The above discussion motivates our actual densifier algorithm as follows:
{The problem with the above described naive approach is that
it generates {(the potentially very large set)} $S_{-}$
all at once at the beginning of the algorithm.
Note that having a large set $S_{-}$ is not necessarily in and of itself
a problem, since one could potentially use the ellipsoid method
to solve $\cal{LP}$ if one could obtain an efficient separation oracle.
Thus intuitively,  if one had an online algorithm which would generate
$S_{-}$ \emph{on the fly}, then one could potentially get a
feasible solution to $\cal{LP}$ in polynomial time.
This serves as the intuition behind our actual algorithm.}

More concretely, our densifier $\A^{\ltf}_{\den}$ will invoke a computationally efficient {\em online} learning algorithm for LTFs.
In particular, $\A^{\ltf}_{\den}$ will run the online learner $\A^{\ltf}_{\mt}$ for a sequence of stages
and in each stage it will provide as counterexamples to $\A^{\ltf}_{\mt}$ judiciously chosen labeled examples,
which will be positive for the online learner's current hypothesis, but negative for $f$ (with high probability).
Since $\A^{\ltf}_{\mt}$ makes a small number of mistakes in the worst-case, this process is guaranteed to terminate
after a small number of stages (since in each stage we {\em force} the online learner to make a mistake).

We now provide the details.
We start by recalling the notion of {\em online learning} for a class $\C$ of boolean functions.
In the online model, learning proceeds in a sequence of stages. In each stage the learning algorithm is given an
unlabeled example $x \in \bn$ and is asked to predict the value $f(x)$, where $f \in \C$ is the unknown target concept.
After the learning algorithm makes its prediction, it is given the correct value of $f(x)$.
The goal of the learner is to identify $f$ while minimizing the total number of mistakes.
We say that an online algorithm learns class $\C$ with mistake bound $M$
if it makes at most $M$ mistakes on {\em any} sequence of examples consistent with some $f \in \C$.
Our densifier makes essential use of a computationally efficient online learning algorithm for the class
of linear threshold functions by Maass and Turan~\cite{MT:94}:
\ifnum\confversion=1 {\vspace{-0.3cm}} \fi
\begin{theorem}  \ifnum\confversion=0{ (\cite{MT:94}, Theorem 3.3)} \else \fi \label{thm:online-mt}
There exists a $\poly(n)$ time deterministic online learning algorithm $\A^{\ltf}_{\mt}$ for the class $\ltf_n$ with mistake bound $M(n) \eqdef \Theta(n^2 \log n)$.
In particular, at every stage of its execution, the current hypothesis maintained by $\A^{\ltf}_{\mt}$ is a (weights--based representation of an) LTF that is  consistent
with all labeled examples received so far.
\end{theorem}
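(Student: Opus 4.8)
\medskip
\noindent\textbf{Proof approach.} The plan is to realize $\A^{\ltf}_{\mt}$ as an efficient cutting-plane method operating in weight space, using the bounded-integer-weight representation of threshold functions to control the relevant volumes. First recall the classical fact (Muroga; the $n\log n$ exponent is tight by H\aa stad) that every $f \in \ltf_n$ can be represented as $f(x) = \sign(w^* \cdot x)$ for an integer vector $w^* \in \Z^{n+1}$ --- the extra coordinate absorbing the threshold, inputs extended by a constant $1$ --- with $\|w^*\|_\infty \le W := 2^{\Theta(n\log n)}$ and, moreover, $|w^* \cdot x| \ge 1$ for every $x \in \{-1,1\}^{n+1}$ (one can take the representation so that $w^*\cdot x$ is always a nonzero integer). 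Given a sequence of labeled examples $(x^{(1)},y^{(1)}),\dots,(x^{(t)},y^{(t)})$ consistent with some $f \in \ltf_n$, define the \emph{version polytope}
\[
V_t := \{\, w \in \R^{n+1} : \|w\|_\infty \le 2W \ \text{ and } \ (w \cdot x^{(i)})\, y^{(i)} \ge 1 \text{ for all } i \le t \,\}.
\]
Then $2w^* \in V_t$, so $V_t$ is a nonempty polytope; it contains the Euclidean ball of radius $r := \Theta(1/\sqrt n)$ about $2w^*$ (a shift $\Delta w$ with $\|\Delta w\|_2 \le r$ changes each $w\cdot x^{(i)}$ by at most $r\sqrt{n+1} \le 1/2$, preserving every constraint), and it is contained in the ball of radius $R := 2W\sqrt{n+1} = 2^{\Theta(n\log n)}$. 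Hence $2^{-\Theta(n\log n)} \le \mathrm{vol}(V_t) \le (2R)^{n+1} = 2^{\Theta(n^2\log n)}$ at all times.

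The algorithm maintains $V_{t-1}$ explicitly (as a list of its $O(n+t)$ linear constraints). At stage $t$, on input $x^{(t)}$, it uses an interior-point method to compute in $\poly(n)$ time an approximate \emph{center} $w_t$ of $V_{t-1}$ --- for instance an approximate analytic center, or the center of a near-maximum-volume inscribed ellipsoid --- with the key property that every halfspace passing through $w_t$ cuts off at least a constant fraction of $\mathrm{vol}(V_{t-1})$. It predicts $\sign(w_t\cdot x^{(t)})$. Note that the hypothesis $\sign(w_t\cdot x)$ is a weights-based LTF consistent with all labeled examples received so far, since $w_t\in V_{t-1}$ forces $(w_t\cdot x^{(i)})y^{(i)}\ge 1>0$ for $i<t$; this establishes the ``consistency'' part of the theorem. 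When $y^{(t)}$ is revealed: if the prediction was correct, set $V_t := V_{t-1}$; otherwise a mistake occurred, and since $\sign(w_t\cdot x^{(t)})=-y^{(t)}$ the new constraint $(w\cdot x^{(t)})\,y^{(t)}\ge 1$ is violated by $w_t$, so we set $V_t := V_{t-1} \cap \{w : (w\cdot x^{(t)})y^{(t)}\ge 1\}$.

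The mistake bound now follows from a volumetric potential argument. Each mistake adds a halfspace that cuts $V_{t-1}$ through its approximate center, so $\mathrm{vol}(V_t) \le c\cdot\mathrm{vol}(V_{t-1})$ for an absolute constant $c<1$ (a sufficiently accurate approximate center still yields such a bound, with a slightly worse constant). Since the true labels are consistent with $f$ we have $2w^* \in V_t$ throughout, so $\mathrm{vol}(V_t)\ge 2^{-\Theta(n\log n)}$ always; combined with $\mathrm{vol}(V_0)\le 2^{\Theta(n^2\log n)}$ and $\mathrm{vol}(V_t)\le c^{M}\,\mathrm{vol}(V_0)$ after $M$ mistakes, this forces
\[
M \ \le\ \frac{\log\big(\mathrm{vol}(V_0)\cdot 2^{\Theta(n\log n)}\big)}{\log(1/c)} \ =\ O(n^2\log n).
\]
(The matching lower bound $M(n)=\Omega(n^2\log n)$ follows from H\aa stad's weight lower bound for threshold functions, but it is not needed here.) Each stage performs one approximate-center computation on a polytope with $O(n + M(n)) = O(n^2\log n)$ facets in $n+1$ dimensions, which is $\poly(n)$; so the algorithm runs in $\poly(n)$ time per stage.

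The crux --- and the reason a naive implementation does not suffice --- is obtaining a \emph{constant-factor} reduction of $\mathrm{vol}(V_t)$ per mistake while keeping the center computation polynomial-time. The ordinary ellipsoid method shrinks volume only by a factor $1-\Theta(1/n)$ per cut and would yield a weaker $O(n^3\log n)$ bound, whereas cuts through an exact center of gravity give the right shrinkage but are not efficiently computable; threading this needle is exactly what the interior-point machinery (approximate analytic centers, or inscribed-ellipsoid centers) provides. The only other point needing care is the inner/outer radius estimates for $V_t$, which rest on the integer-weight representation of threshold functions.
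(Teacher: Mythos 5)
Your proposal is correct and matches the approach the paper attributes to this result: the paper cites \cite{MT:94} and explains that the algorithm reduces online learning of LTFs to convex optimization in weight space, with the $\Theta(n^2 \log n)$ mistake bound coming from plugging in Vaidya's cutting-plane method (constant-factor volume shrinkage per cut, versus the $1-\Theta(1/n)$ of the ellipsoid method). Your reconstruction supplies exactly the missing details --- Muroga's $2^{\Theta(n\log n)}$ integer-weight bound to pin down the inner/outer radii of the version polytope, the margin-$1$ normalization, and the volumetric potential argument --- and correctly identifies that the key technical point is achieving constant-factor shrinkage via interior-point (inscribed-ellipsoid/volumetric-center) cuts rather than ellipsoid or center-of-gravity cuts.
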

\ifnum\confversion=0
We note that the above algorithm works by reducing the problem of online learning for LTFs to a convex optimization 
problem.
Hence, one can use any efficient convex optimization algorithm to do online learning for LTFs, e.g. the ellipsoid method~\cite{Kha:80, GLS:88}.
The mistake bound in the above theorem follows by plugging in the algorithm of Vaidya~\cite{Vaidya:89, Vaidya:96}.
\else
We note that $\A^{\ltf}_{\mt}$ works by reducing online learning of 
LTFs to a convex optimization 
problem; however, 
our densifier will use algorithm $\A^\ltf_\mt$ as a black box.
\fi
\ignore{
That is, we essentially provide an efficient reduction of our problem to the problem of online learning for linear threshold functions.
}

We now proceed with a more detailed description of our densifier followed by
pseudocode and a proof of correctness.
As previously mentioned, the basic idea is to execute the online learner to learn $f$ while cleverly providing counterexamples to it in each stage of its execution.
Our algorithm starts by sampling $N_{+}$ samples from $\U_{f^{-1}(1)}$ and making sure that these are classified correctly by the online learner.
This step guarantees that our final solution will satisfy condition (a) of the densifier.
Let $h \in \ltf_n$ be the current hypothesis at the end of this process.
If $h$ satisfies condition (b) (we can efficiently decide this by using our approximate counter for $\ltf_n$), we output $h$ and terminate the algorithm.
Otherwise, we use our approximate uniform generator to construct a uniform satisfying assignment $x \in \U_{h^{-1}(1)}$ and we label it negative, i.e.,
we give the labeled example $(x, -1)$ as a counterexample to the online learner.
Since $h$ does not satisfy condition (b), i.e., it has ``many'' satisfying assignments,
it follows that with high probability
(roughly, at least $1-\gamma$)
over the choice of $x \in \U_{h^{-1}(1)}$,
the point $x$ output by the generator will indeed be
negative for $f$. We continue this process for a number of stages.
If all counterexamples thus generated are indeed consistent with $f$ (this happens with probability roughly $1-\gamma \cdot M$, where $M = M(n) = \Theta (n^2 \log n)$ is an upper bound on the number of stages), after at most $M$ stages we have either found a hypothesis $h$  satisfying condition (b) or the online learner terminates. In the latter case, the current hypothesis of the online learner is identical to $f$, as follows from Theorem~\ref{thm:online-mt}. (Note that the above argument puts an upper bound of $O(\delta/M)$
on the value of $\gamma$.) Detailed pseudocode follows:

\begin{framed}
\noindent Algorithm  $\A^{\ltf}_\den ( \U_{f^{-1}(1)}, \eps, \delta, \widehat{p})$:

\smallskip

\noindent {\bf Input:} Independent samples from $\U_{f^{-1}(1)}$,  parameters $\eps,\delta >0$,
and a value $1/2^n \leq \widehat{p} \leq 1.$

\noindent {\bf Output:}
If  $p \leq \widehat{p} \leq (1+\eps) p$, with probability $1-\delta$ outputs a
{function $g \in \ltf_n$ satisfying conditions (a) and (b).}

\begin{enumerate}
\item Draw a set $S_{+}$ of $N_{+} = \Theta \left( (1/\eps) \cdot (n^2+\log(1/\delta)) \right)$
         examples from $\U_{f^{-1}(1)}$.

\item Initialize $i=0$ and set $M \eqdef \Theta(n^2\log n)$.\\
While $(i \le M)$ do the following:
\begin{enumerate}
\item Execute the $i$-th stage of $A^{\ltf}_{\mt}$ and let
         $h^{(i)} \in \ltf_n$ be its current hypothesis.

\item If there exists $x \in S_{+}$ with $h^{(i)}(x) = -1$ do the following:

\begin{itemize}
\item Give the labeled example $(x, 1)$ as a counterexample to $A^{\ltf}_{\mt}$.
\item Set $i = i+1$ and go to Step 2.
\end{itemize}

\item Run $\A^{\ltf}_{\co} (h^{(i)}, \eps, \delta/(4M))$ and let $\widehat{p_i}$ be its output.
\item Set $\gamma  \eqdef \delta/(16M)$. If $\widehat{p_i} \le \widehat{p}/\big( \gamma \cdot (1+\eps)^2 \big)$ then output $h^{(i)}$;
\item otherwise, do the following:
\begin{itemize}
\item Run $\A^{\ltf}_{\gen} (h^{(i)}, \delta/(4M))$ and let $x^{(i)}$ be its output.

\item Give the point $(x^{(i)}, -1)$ as a counterexample to  $A^{\ltf}_{\mt}$.
\item Set $i = i+1$ and go to Step 2.
\end{itemize}
\end{enumerate}

\item Output the current hypothesis $h^{(i)}$ of $A^{\ltf}_{\mt}$.

\end{enumerate}
\end{framed}
\begin{theorem} \label{thm:ltf-den}
Algorithm $\A^{\ltf}_\den ( \U_{f^{-1}(1)}, \eps, \delta, \widehat{p})$
runs in time $\poly\left(n, 1/\eps, \log(1/\delta)\right)$.  If
\ignore{\log(1/\widehat{p})\right)$. If }$p \leq \widehat{p} < (1+\eps) p$
then with probability $1-\delta$
it outputs a
vector $(w, \theta)$ such that $g(x) = \sign(w \cdot x - \theta)$ satisfies conditions (a) and (b) at the start of Section~\ref{ssec:densifier-for-LTFs}.
\end{theorem}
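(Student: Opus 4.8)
The plan is to establish the theorem by analyzing the pseudocode of $\A^{\ltf}_\den$ stage by stage, bounding both its running time and its failure probability. The running-time bound is the easier part: the algorithm runs for at most $M+1 = \Theta(n^2 \log n)$ stages, and in each stage it executes one stage of the online learner $\A^\ltf_\mt$ (which is $\poly(n)$ time by Theorem~\ref{thm:online-mt}), performs at most $N_+ = \Theta((1/\eps)(n^2+\log(1/\delta)))$ membership checks against $S_+$, makes one call to the approximate counter $\A^\ltf_\co$ (which is $\poly(n,1/\eps,\log(1/\delta))$ by Theorem~\ref{thm:dyer-count}), and possibly one call to the approximate generator $\A^\ltf_\gen$ (which is $\poly(n,\log(1/\delta))$ by Theorem~\ref{thm:dyer-gen}). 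The initial sampling of $S_+$ also takes $\poly(n,1/\eps,\log(1/\delta))$ time. Multiplying through gives the claimed $\poly(n,1/\eps,\log(1/\delta))$ bound.

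For correctness, first I would dispose of the easy case: if $\widehat p \geq 2\gamma$ then (as noted in the discussion) $p \geq \widehat p/(1+\eps) \geq \gamma$, and actually the algorithm handles this automatically — since $h^{(i)}$ at stage $0$ (or very quickly) will have $\widehat{p_i}$ large, and the constant-$1$ function would pass, but more simply we can observe the analysis below also covers it. So assume $\widehat p < 2\gamma$. I would then identify the ``good event'' $G$ as the conjunction of: (i) $S_+$ has the property from Claim~\ref{claim:a}, so every LTF consistent with $S_+$ satisfies condition (a); (ii) every one of the (at most $M$) calls to $\A^\ltf_\co$ returns a $(1\pm\eps)$-accurate estimate; (iii) every one of the (at most $M$) calls to $\A^\ltf_\gen$ returns a point exactly distributed according to $\U_{(h^{(i)})^{-1}(1)}$; and (iv) every point $x^{(i)}$ generated in Step~2(f) actually satisfies $f(x^{(i)}) = -1$. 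Event (i) fails with probability at most $\delta$ if we set $N_+$ with a large enough constant; wait — I should double check the constants, but the union bound over the $2^{n^2}$ LTFs in Claim~\ref{claim:a} gives failure probability $\le \delta/2$ with the stated $N_+$. Events (ii) and (iii) each fail with probability at most $M \cdot \delta/(4M) = \delta/4$ by the confidence parameters in the pseudocode. For event (iv): conditioned on (ii) and (iii), if the algorithm reaches Step~2(f) at stage $i$ then $\widehat{p_i} > \widehat p/(\gamma(1+\eps)^2)$, so $\Pr_{x\sim\U_n}[h^{(i)}(x)=1] \geq \widehat{p_i}/(1+\eps) > \widehat p/(\gamma(1+\eps)^3) \geq p/(\gamma(1+\eps)^3)$; since $f^{-1}(1)$ has density $p$, a uniform random satisfying assignment of $h^{(i)}$ fails to satisfy $f$ with probability at least $1 - p/\Pr[h^{(i)}=1] \geq 1 - \gamma(1+\eps)^3 \approx 1-O(\gamma)$, so each generated $x^{(i)}$ is a valid counterexample except with probability $O(\gamma)$; union-bounding over $\le M$ stages and recalling $\gamma = \delta/(16M)$ gives failure probability $O(\delta)$ — I'd tune the constant in $\gamma$ so this is at most $\delta/4$. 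Thus $\Pr[G] \geq 1-\delta$.

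Finally, conditioned on $G$, I would argue the output is correct. All counterexamples fed to $\A^\ltf_\mt$ are consistent with $f$: the positive ones come from $S_+ \subseteq f^{-1}(1)$, and the negative ones are valid by event (iv). Hence $\A^\ltf_\mt$ never makes a ``mistake'' attributable to bad data, and since each iteration of the while-loop that reaches Step~2(b) or Step~2(f) forces $\A^\ltf_\mt$ to receive a counterexample to its current hypothesis (i.e., forces a mistake), by the mistake bound $M(n) = \Theta(n^2\log n)$ of Theorem~\ref{thm:online-mt} the loop can execute at most $M$ such mistake-forcing iterations. Therefore the algorithm must halt either by outputting $h^{(i)}$ in Step~2(e) or by falling through the while-loop in Step~3. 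In the former case, $h^{(i)}$ is consistent with $S_+$ (it was consistent when produced, and Step~2(b) would have caught any inconsistency), so condition (a) holds by event (i); and $\widehat{p_i} \le \widehat p/(\gamma(1+\eps)^2)$ together with $\widehat{p_i} \ge \Pr[h^{(i)}=1]/(1+\eps)$ and $\widehat p < (1+\eps)p$ gives $\Pr_{x\sim\U_n}[h^{(i)}(x)=1] \le \widehat p/(\gamma(1+\eps)) \le p/\gamma$, which is condition (b). In the latter case, by Theorem~\ref{thm:online-mt} the final hypothesis $h^{(i)}$ is consistent with \emph{all} labeled examples, in particular with enough examples to force it to equal $f$ — actually more carefully, the online learner's termination means $h^{(i)} = f$, so it trivially satisfies (a), and it satisfies (b) since $\Pr[f=1] = p \le p/\gamma$. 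This completes the argument. The main obstacle is getting event (iv) right — in particular carefully tracking the $(1+\eps)$ factors between $\widehat p$, $p$, $\widehat{p_i}$, and the true $\Pr[h^{(i)}=1]$ so that the probability a generated point is a genuine counterexample to $f$ is indeed $\geq 1 - O(\gamma)$, and then choosing the constant in $\gamma = \Theta(\delta/(n^2\log n))$ so the union bound over all stages closes.
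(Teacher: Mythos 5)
Your proposal is correct and follows essentially the same argument as the paper: condition on (i) the Claim~\ref{claim:a} event for $S_+$, (ii) all $\A^\ltf_\co$ calls succeeding, (iii) all $\A^\ltf_\gen$ calls returning exactly uniform points, and (iv) every generated point being a genuine negative example for $f$; then split on whether the algorithm halts at the ``output $h^{(i)}$'' test or falls through to Step~3, using the same $(1+\eps)$-chain for condition~(b) and the mistake bound of $\A^\ltf_\mt$ for the fall-through case. The only discrepancies are cosmetic: your step labels are shifted by one from the pseudocode (the output test is Step~2(d), the generator call is Step~2(e)), and your probability accounting totals $5\delta/4$ before the constant-tuning you flag, whereas the paper splits the budget as four equal shares of $\delta/4$; neither affects correctness.
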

\begin{proof}
First note that by Claim~\ref{claim:a}, with probability at least $1-\delta/4$ over $S_{+}$ any LTF consistent with $S_{+}$ will satisfy condition (a).
We will condition on this event and also on the event that each call to the approximate {counting algorithm} and to the approximate uniform generator is successful.
Since Step~2 involves at most $M$ iterations, by a union bound, with probability at least $1-\delta/4$ all calls to $\A^\ltf_{\co}$
will be successful, i.e., for all $i$ we will have that $p_i/(1+\eps)\le \widehat{p_i} \le (1+\eps) \cdot p_i$, where $p_i = \Pr_{x \in \U_n}[h^{(i)}(x)=1]$.
Similarly, with failure probability at most $\delta/4$, all points $x^{(i)}$ constructed  by $\A^{\ltf}_{\gen}$ will be uniformly random over
$(h^{(i)})^{-1}(1)$. Hence, with failure probability at most $3\delta/4$ all three conditions will be satisfied.

Conditioning on the above events, if the algorithm outputs a hypothesis $h^{(i)}$ in Step~2(d), this hypothesis will certainly satisfy condition (b), since
$p_i \le (1+\eps) \widehat{p_i} \le \widehat{p}/\big( \gamma \cdot (1+\eps) \big) \le p/\gamma$.
In this case, the algorithm succeeds with probability at least $1-3\delta/4$.
It remains to show that if the algorithm returns a hypothesis in Step~3, it will be successful with probability at least $1-\delta$.
To see this, observe that if no execution of Step~2(e) generates a point $x^{(i)}$ with $f(x^{(i)})=1$,
all the counterexamples given to $A^{\ltf}_{\mt}$ are consistent with $f$. Therefore, by Theorem~\ref{thm:online-mt}, the hypothesis
of Step~3 will be identical to $f$, which trivially satisfies both conditions.

We claim that with overall probability at least $1-\delta/4$ all executions of  Step~2(e) generate points $x^{(i)}$ with $f(x^{(i)})=-1$.
Indeed, fix an execution of Step~2(e). Since $\widehat{p_i} > \widehat{p}/\left( (1+\eps)^2 \cdot \gamma \right) $, it follows that $p \le (4\gamma) p_i.$
Hence, with probability at least $1-4\gamma$ a uniform point $x^{(i)} \sim \U_{(h^{i})^{-1}(1)}$ is a negative example for $f$, i.e., $x^{(i)} \in f^{-1}(-1)$.
By a union bound over all stages, our claim holds except with failure probability $4\gamma \cdot M = \delta/4$, as desired. This completes the proof
of correctness.

It remains to analyze the running time. Note that Step~2 is
repeated at most $M = O(n^2 \log n)$ times.
\ignore{The running time of each iteration is linear in $\log(1/\widehat{p})$
and involves}
Each iteration involves (i) one round of the online learner $\A^{\ltf}_{\mt}$
(this takes $\poly(n)$ time by Theorem~\ref{thm:online-mt}),
(ii) one call of $\A^{\ltf}_{\co}$ (this takes
$\poly(n, 1/\eps, \log(1/\delta))$ time by Theorem~\ref{thm:dyer-count}),
and (iii) one call to $\A^{\ltf}_{\gen}$
(this takes $\poly(n, 1/\eps, \log(1/\delta))$ time by
Theorem~\ref{thm:dyer-gen}). This completes the proof of
Theorem~\ref{thm:ltf-den}.
\end{proof}
\ignore{
\anote{I could not understand a couple of points: First, is why do we have $p \le (4\gamma) p_i.$ and not $p \le \gamma \cdot p_i.$. Second, is why this factor of $\log(1/\widehat{p})$. At worst, its $n$. So, just put it as $\poly(n)$, right?}
}

\ignore{
{\bf OLD STUFF:}
Correctness: Consider the $i$-th iteration: Since $|h_i^{-1}(1)|$ is high, it means that whp a uniform element from this set is negative for $f$.
Hence, this example provides a counterexample for the learning algorithm. This is true in each round independently. If we have $M$ rounds,
and the probability of that event is $\delta/M$ we are ok. So, basically this dictates what exactly ``big'' means.

The algorithm either outputs a correct $h_i$ in some iteration, or the number of iterations reaches $M$. But, with high probability all the examples
given to the online learner as counterexamples are in fact consistent with $f$. Hence, MT says after $M$ mistakes we have learnt $f$ exactly.
The end. }

\section{DNFs} \label{sec:DNF}
In this section we apply our general positive result, Theorem~\ref{thm:learn-by-dense}, to give a
quasipolynomial-time algorithm for the inverse approximate uniform generation problem for $s$-term DNF
formulas.  Let $\dnfns$ denote the class of all $s$-term DNF formulas over $n$ Boolean
variables (which for convenience we think of as $0/1$ variables).  Our main
result of this section is the following:

\begin{theorem} \label{thm:dnf-formal}
There is an algorithm $\A^{\dnfns}_{\inv}$ which is an inverse approximate uniform
generation algorithm for the class $\dnfns$.  Given input parameters $\eps,\delta$
the algorithm runs in time $\poly \left(n^{\log(s/\eps)},\log(1/\delta) \right)$.
\end{theorem}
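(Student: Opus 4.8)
The plan is to derive Theorem~\ref{thm:dnf-formal} from the general result Theorem~\ref{thm:learn-by-dense}, instantiated with $\C=\dnfns$ and with $\C'$ the class of DNF formulas having at most $N=n^{O(\log(s/\eps))}$ terms (of arbitrary width). Two of the four ingredients come off the shelf: the classical FPRAS for counting satisfying assignments of DNF formulas \cite{KarpLuby83,KLM89} together with the counting-to-generation reduction of \cite{JVV86} give efficient approximate counting and approximate uniform generation algorithms for $\C'$, running in time $\poly(n,N,1/\eps,\log(1/\delta))=\poly(n^{\log(s/\eps)},1/\eps,\log(1/\delta))$. So the real work is (a) a computationally efficient densifier for $\dnfns$ using $\C'$, and (b) the $\sq$-learning component for $\dnfns$.

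For the densifier: on input $\widehat p$ with $p\le\widehat p<(1+\eps)p$ (where $p=\Pr_{x\sim\U_n}[f(x)=1]$) and samples from $\U_{f^{-1}(1)}$, first dispose of the easy case $\widehat p\ge 1/(16N)$ by returning $g\equiv 1$. Otherwise draw $m=\poly(s/\eps,\log(ns/(\eps\delta)))$ samples from $\U_{f^{-1}(1)}$ and, for every subset of $k=\Theta(\log(ns/(\eps\delta)))$ of them, form the conjunction $C$ of all literals on which the $k$ chosen points agree; include $C$ as a term of $g$ iff its width is at least $w^\star:=\lceil\log(1/(2\widehat p))\rceil$. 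Call a term $T_i$ of $f$ \emph{heavy} if $\Pr_{x\sim\U_{f^{-1}(1)}}[T_i(x)=1]\ge\eps/s$, so the light terms together cover at most an $\eps$ fraction of $f^{-1}(1)$. For a heavy $T_i$: with high probability more than $k$ of the sampled points satisfy $T_i$, and since conditioning on $T_i$ leaves the off-$T_i$ coordinates uniform, a random $k$-subset of the $T_i$-satisfiers has agreement pattern exactly $T_i$ (no coordinate spuriously agrees across $\Omega(\log(ns))$ independent bits); also $2^{n-w_i}=|T_i^{-1}(1)|\le|f^{-1}(1)|\le 2\widehat p\,2^n$ forces $w_i\ge w^\star$, so $T_i$ survives the width filter. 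Hence $g$ contains every heavy term, which yields property (a), $\Pr_{x\sim\U_{f^{-1}(1)}}[g(x)=1]\ge1-\eps$. For property (b), each term of $g$ has width $\ge w^\star$, so $|g^{-1}(1)|\le N\cdot2^{n-w^\star}\le2N(1+\eps)|f^{-1}(1)|$ while $|g^{-1}(1)\cap f^{-1}(1)|\ge(1-\eps)|f^{-1}(1)|$, giving $\Pr_{x\sim\U_{g^{-1}(1)}}[f(x)=1]\ge\gamma:=\Omega(1/N)=1/n^{O(\log(s/\eps))}$. The number of candidate conjunctions, hence the running time, is $n^{O(\log(s/\eps))}\cdot\poly(n,\log(1/\delta))$.

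For the $\sq$ component, the point is that the learner is invoked only on the distribution $D=\U_{g^{-1}(1)}$ produced by the densifier, so it may use $g$. Introduce a Boolean metavariable $z_C$ for each of the $\le N$ conjunctions $C$ appearing in $g$, and consider the map $x\mapsto z(x):=(C(x))_C$, computable in time $\poly(N,n)$. Since every heavy term of $f$ is one of these conjunctions, the densifier analysis shows that under $D$ the target $f$ disagrees with the monotone disjunction $\bigvee_{\text{heavy }i}z_{T_i}$ only on the light part of $f^{-1}(1)$ (on $g^{-1}(1)\setminus f^{-1}(1)$ no heavy term fires, so both functions are $0$ there), which has $D$-mass $O(\eps)$ because $|g^{-1}(1)|\ge(1-\eps)|f^{-1}(1)|$. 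So under $D$ the target is $O(\eps)$-close to an $s$-sparse disjunction over the $N$ metavariables. We then run a malicious-noise-tolerant $\sq$ algorithm for learning $s$-sparse disjunctions over $N$ variables --- running time $\poly(N,1/\eps,\log(1/\delta))$, tolerance $1/\poly(N,1/\eps)$, and queries that, after composing with $z(\cdot)$, are evaluable in time $\poly(N,n)$ on inputs $x\in\bn$ --- and it absorbs the $O(\eps)$ discrepancy. Feeding the densifier, the two DNF counting/generation routines, and this disjunction learner into Theorem~\ref{thm:learn-by-dense}, and noting that $1/\gamma$ and the $\sq$ parameters are all $\poly(n^{\log(s/\eps)})$, yields the claimed $\poly(n^{\log(s/\eps)},\log(1/\delta))$ running time.

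The main obstacle is this $\sq$ ingredient: a distribution-independent $\sq$ learner for $\dnfns$ would require time exponential in $n^{1/3}$, so everything rests on recasting the task, under the specific distribution $\U_{g^{-1}(1)}$, as noise-tolerant learning of a sparse disjunction over the densifier's terms, and on checking that the total effective noise --- from light terms of $f$, from the approximate uniform generator for $g$ being only statistically close to $\U_{g^{-1}(1)}$, and from the $\sq$-to-sample simulation of Section~\ref{sec:generaltechnique} --- stays below the disjunction learner's malicious-noise tolerance. Secondary points, none deep, are choosing the width threshold $w^\star$ correctly from $\widehat p$ (the multiplicative grid over $\widehat p$ is handled inside Theorem~\ref{thm:learn-by-dense}), bounding the number of metavariables by $n^{O(\log(s/\eps))}$, and getting a genuine $\log(1/\delta)$ confidence dependence, which one obtains by running the densifier at constant confidence and amplifying.
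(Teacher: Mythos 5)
Your overall plan is the same as the paper's: instantiate Theorem~\ref{thm:learn-by-dense} with $\C'$ a class of sparse DNFs, use the \cite{KLM89} counting/generation routines, construct a densifier by intersecting small tuples of positive examples to isolate ``heavy'' terms of $f$, and then run a malicious-noise-tolerant attribute-efficient $\sq$ learner for sparse disjunctions over the metavariable space of conjunctions produced by the densifier. That is exactly the paper's route.

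However, there is a genuine quantitative gap in the coupling between the densifier's ``heavy'' threshold and the disjunction learner's malicious-noise tolerance. You define a term $T_i$ to be heavy when $\Pr_{x \sim \U_{f^{-1}(1)}}[T_i(x)=1] \ge \eps/s$, which makes the light terms (and hence the disagreement between $f$ and the disjunction of heavy terms under $D = \U_{g^{-1}(1)}$) contribute mass $O(\eps)$, and you then assert that the learner ``absorbs the $O(\eps)$ discrepancy.'' But the attribute-efficient $\sq$ learner for $s$-sparse disjunctions tolerates a target that is only $\kappa$-far from a genuine $s$-sparse disjunction for $\kappa \le \ell(\eps/s) < \eps/2$, where $\ell(\cdot)$ is a fixed polynomial (e.g.\ it could be $\ell(x)=x^2$). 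Thus $O(\eps)$ is not below this tolerance --- it can be polynomially larger --- and the learner is not guaranteed to succeed. The paper avoids this by defining heavy as $\Pr_{x \sim \U_{f^{-1}(1)}}[T_j(x)=1] \ge \ell(\eps/s)/(2s)$ and correspondingly increasing the number $M$ of candidate conjunctions drawn, so that the resulting disagreement is $\le \ell(\eps/s)$, which is what the learner actually requires. Relatedly, you state the $\sq$ learner's tolerance as $1/\poly(N,1/\eps)$; this must instead be $1/\poly(s,1/\eps)$ (attribute-efficiency in the number $N$ of metavariables is essential), since otherwise the noise tolerance would be $\approx 1/\poly(N)=n^{-\Theta(\log(s/\eps))}$, which no reasonable densifier can meet.

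A secondary, lesser concern is your ``all $k$-subsets of a fixed sample'' variant of the densifier. Since you need $k = \Theta(\log(ns/(\eps\delta)))$ for the union bound over spurious literals, the number of candidate conjunctions $\binom{m}{k}$ can scale like $(1/\delta)^{\Theta(\log(s/\eps))}$, which is not $\poly(\log(1/\delta))$. You flag this and suggest ``run at constant confidence and amplify,'' which does work (run the densifier $O(\log(1/\delta))$ times at constant confidence and take the union of the conjunction sets, degrading property (b) only by a $\log(1/\delta)$ factor), but note this is not a black-box median amplification --- the paper's construction of drawing $r = 2\log n$ fresh samples for each of $M = 2n^{2\log(2s/\ell(\eps/s))}\log(s/\delta)$ independent rounds sidesteps the issue entirely by decoupling $k$ (here fixed to $2\log n$, independent of $\delta$) from the number of candidate conjunctions.
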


We note that even in the standard uniform distribution learning model the fastest known running
time for learning $s$-term DNF formulas to accuracy $\eps$ is
$\poly(n^{\log(s/\eps)},\log(1/\delta))$
\cite{ver90,gregvaliantfocs12}.
Thus it seems likely that obtaining a
$\poly(n,s,1/\eps)$-time algorithm would require a significant breakthrough in computational learning
theory.

For our application of Theorem~\ref{thm:learn-by-dense} for DNFs we
shall have $\C=\dnfns$ and $\C' = \dnf_{n,t}$ for some $t$ which
we shall specify later.
As in the case of LTFs, the literature provides us with three of the four ingredients that our general
approach requires for DNF ---  approximate uniform generation, approximate counting, and Statistical Query learning (more on this below) --- and our main technical contribution is giving the fourth necessary
ingredient, a densifier.  Before presenting and analyzing our densifier algorithm we recall
the other three ingredients.

\subsection{Tools from the literature.}
Karp, Luby and Madras~\cite{KLM89} have given approximate
uniform generation and approximate counting algorithms for DNF formulas.
(We note that~\cite{JVV86} give an efficient algorithm that with high probability outputs an {\em exactly} 
uniform satisfying assignment for DNFs.)

\begin{theorem}\label{thm:KLM-generate} (Approximate uniform generation for DNFs, \cite{KLM89})
There is an approximate uniform generation algorithm $\A^{\dnf_{n,t}}_\gen$
for the class $\dnf_{n,t}$ that runs in time $\poly(n,t,1/\eps,\log(1/\delta)).$
\ignore{
}
\end{theorem}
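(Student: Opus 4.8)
The plan is to use the classical ``coverage''/rejection-sampling approach of Karp, Luby and Madras. Write $F = T_1 \vee \cdots \vee T_t$, where each $T_j$ is a conjunction of literals over the $n$ variables (discarding any term that contains a variable and its negation, since such a term is unsatisfiable). For each term $T_j$, its satisfying set $T_j^{-1}(1)$ is a subcube of dimension $n - |T_j|$, where $|T_j|$ denotes the number of literals in $T_j$; hence $|T_j^{-1}(1)| = 2^{\,n-|T_j|}$ is computable exactly in $O(n)$ time, and a uniform random element of $T_j^{-1}(1)$ is produced in $O(n)$ time by fixing the literals of $T_j$ and choosing the remaining coordinates independently and uniformly. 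Form (conceptually) the disjoint union $U = \bigsqcup_{j=1}^t T_j^{-1}(1)$, of size $N = \sum_{j=1}^t 2^{\,n-|T_j|}$, again computable exactly; a uniform sample $(j,x)\sim U$ is drawn by first picking $j$ with probability $2^{\,n-|T_j|}/N$ and then $x$ uniformly from $T_j^{-1}(1)$.

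The key observation is that each satisfying assignment $x \in F^{-1}(1)$ is represented in $U$ exactly $c(x) := |\{\, j : T_j(x)=1 \,\}|$ times, with $1 \le c(x) \le t$, and $c(x)$ is computable in $O(nt)$ time by evaluating all $t$ terms. The generator $\A^{\dnf_{n,t}}_\gen$ then runs, for up to $K = \Theta(t\log(1/\delta))$ rounds: draw $(j,x)\sim U$, compute $c(x)$, and with probability $1/c(x)$ halt and output $x$; otherwise continue to the next round. If no round succeeds, output $\bot$.

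For correctness, note that in a single round the probability of outputting a fixed $x \in F^{-1}(1)$ is $\sum_{j:\,T_j(x)=1}\frac{1}{N}\cdot\frac{1}{c(x)} = \frac{1}{N}$, and it is $0$ for $x \notin F^{-1}(1)$; hence, conditioned on halting in a given round, the output is distributed \emph{exactly} uniformly over $F^{-1}(1)$. The per-round halting probability equals $\sum_{x\in F^{-1}(1)}\frac{c(x)}{N}\cdot\frac{1}{c(x)} = |F^{-1}(1)|/N \ge 1/t$, using $N = \sum_{x} c(x) \le t\,|F^{-1}(1)|$. Therefore $\Pr[\text{output }\bot] \le (1-1/t)^K \le \delta$, and the total running time is $O(K\cdot nt) = \poly(n,t,\log(1/\delta))$. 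In fact this already achieves the stronger guarantee of Definition~\ref{def:approx-unif-gen} (it outputs either a point distributed \emph{exactly} uniformly over $F^{-1}(1)$, or $\bot$ with probability at most $\delta$), so the $1/\eps$ dependence in the stated bound is not even needed.

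I do not expect a serious obstacle here: the only steps needing care are (i) recognizing the disjoint-union device, which avoids any inclusion–exclusion blowup over the $t$ terms, and (ii) checking that rejecting with acceptance probability $1/c(x)$ produces an exactly uniform conditional output while keeping the per-round rejection probability at most $1-1/t$; both are routine once the construction is set up. (The companion approximate-counting statement for DNF, needed elsewhere, follows from the same construction via the identity $|F^{-1}(1)| = N\cdot\E_{(j,x)\sim U}[1/c(x)]$ and a standard Chernoff-bound analysis of a Monte-Carlo estimate of the expectation using $\poly(t,1/\eps,\log(1/\delta))$ samples.)
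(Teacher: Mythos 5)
Your proposal is correct and is exactly the classical Karp--Luby rejection-sampling argument that the paper is citing (the theorem is stated by reference to \cite{KLM89}, with a remark pointing out that \cite{JVV86} gives exact uniform generation; your construction is precisely that exact-uniform version). You also correctly observe that the scheme already meets the stronger guarantee of Definition~\ref{def:approx-unif-gen} --- output either exactly uniform over $F^{-1}(1)$ or $\bot$, with $\Pr[\bot]\le\delta$ --- so the $1/\eps$ dependence is superfluous for generation (it is needed for the companion counting statement, which you also handle correctly).
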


\begin{theorem}\label{thm:KLM-count} (Approximate counting for DNFs, \cite{KLM89})
There is an approximate counting algorithm $\A^{\dnf_{n,t}}_\gen$ for the
class $\dnf_{n,t}$ that runs in time $\poly(n,t,1/\eps,\log(1/\delta)).$
\ignore{
}
\end{theorem}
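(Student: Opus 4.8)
Since this is the classical Karp--Luby estimator, the plan is to recall its two ingredients: an easily sampleable ``coverage universe'' whose size is known exactly, and an unbiased importance-sampling estimator with small relative variance. Write the input formula as $F = T_1 \vee \cdots \vee T_t$ over $n$ Boolean variables, where term $T_j$ has $k_j$ literals and hence exactly $2^{n-k_j}$ satisfying assignments; these counts are computable in $\poly(n,t)$ time. Define the universe $U = \{(j,x) : x \in \{0,1\}^n,\ T_j(x)=1\}$, so that $N := |U| = \sum_{j=1}^t 2^{n-k_j}$ is known exactly. The key identity is that, writing $c(x) = |\{j : T_j(x)=1\}| \ge 1$ for each $x \in F^{-1}(1)$,
\[
|F^{-1}(1)| \;=\; \sum_{x \in F^{-1}(1)} 1 \;=\; \sum_{(j,x) \in U} \frac{1}{c(x)} \;=\; N \cdot \E_{(j,x) \sim \U_U}\!\left[\frac{1}{c(x)}\right].
\]

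The algorithm therefore samples $m$ independent pairs $(j,x)$ uniformly from $U$ --- which is easy: draw $j$ with probability $2^{-k_j}/(\sum_i 2^{-k_i})$, then draw $x$ uniformly among the $2^{n-k_j}$ completions consistent with $T_j$ --- evaluates all $t$ terms on each $x$ to compute $c(x)$ in $\poly(n,t)$ time, and outputs $\widehat{p} = (N / 2^n) \cdot \tfrac 1 m \sum_{\ell=1}^m 1/c(x^{(\ell)})$. This estimator is unbiased for $\Pr_{x \sim \U_n}[F(x)=1] = |F^{-1}(1)|/2^n$. For concentration, observe that each summand $1/c(x)$ lies in $[1/t,1]$, so its relative variance satisfies $\Var[1/c]/\E[1/c]^2 \le \E[(1/c)^2]/\E[1/c]^2 \le 1/\E[1/c] \le t$, where the last step uses $\E[1/c] = |F^{-1}(1)|/N \ge 1/t$ (which itself follows from $N = \sum_{x \in F^{-1}(1)} c(x) \le t\,|F^{-1}(1)|$). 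A multiplicative Chernoff bound (equivalently Hoeffding applied to the bounded variables $1/c(x^{(\ell)})$) then shows that $m = O(t \eps^{-2} \log(1/\delta))$ samples suffice for $\widehat{p} \in [1-\eps, 1+\eps] \cdot \Pr_{x\sim\U_n}[F(x)=1]$ with probability $1-\delta$, which in turn yields the relative-error guarantee of Definition~\ref{def:approx-count} after rescaling $\eps$.

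Adding up the costs --- $O(t \eps^{-2}\log(1/\delta))$ iterations, each costing $\poly(n,t)$ for the sampling step and the $t$ term-evaluations, plus $\poly(n,t)$ preprocessing --- gives the claimed $\poly(n,t,1/\eps,\log(1/\delta))$ running time. The one genuinely non-routine point is the variance bound: recognizing that the ``self-adjusting'' weight $1/c(x)$ simultaneously makes the estimator unbiased and keeps its relative variance at $O(t)$ rather than potentially exponential; everything else (sampling from $U$, evaluating $c$, the Chernoff step) is mechanical. We remark that the companion approximate uniform generation algorithm (Theorem~\ref{thm:KLM-generate}) falls out of the same construction by rejection sampling: draw $(j,x) \sim \U_U$ and accept $x$ with probability $1/c(x)$; the accepted $x$ is exactly uniform on $F^{-1}(1)$, and the acceptance probability $|F^{-1}(1)|/N \ge 1/t$ means $O(t \log(1/\delta))$ trials suffice.
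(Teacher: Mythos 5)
Your argument is correct and is exactly the classical Karp--Luby unbiased estimator (universe $U$, self-adjusting weight $1/c(x)$, relative-variance bound $\le t$ via $\E[1/c]\ge 1/t$, multiplicative Chernoff); the paper does not reprove this theorem but simply cites \cite{KLM89}, and your reconstruction matches the argument in that reference, including the observation that the same construction yields exact uniform generation by rejection sampling (Theorem~\ref{thm:KLM-generate}).
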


The fastest known algorithm in the literature for $\sq$ learning $s$-term DNF
formulas under arbitrary distributions
runs in time $n^{O(n^{1/3} \log s)} \cdot \poly(1/\eps)$
\cite{KlivansServedio:04jcss}, which is much more than our desired running
time bound.  However, we will see that we are able to use known
\emph{malicious noise tolerant} $\sq$ learning algorithms
for learning \new{\emph{sparse disjunctions} over $N$ Boolean variables
rather than DNF formulas.  In more detail,
our densifier will provide us with a set of $N=
n^{O(\log (s/\eps))}$ many conjunctions which is such that
the target function $f$ is very close to a disjunction (which we call $f'$) over
an unknown subset of at most $s$ of these $N$ conjunctions.  Thus intuitively any learning
algorithm for disjunctions, run over the ``feature space'' of conjunctions provided by the
densifier, would succeed if the target function were $f'$, but the target function is
actually $f$ (which is not necessarily exactly a disjunction over these $N$ variables).
Fortunately, known results on the malicious noise tolerance of specific
$\sq$ learning algorithms imply that it is in fact possible to use these $\sq$ algorithms
to learn $f$ to high accuracy, as we now explain.

We now state the precise $\sq$ learning result that we will use.
The following theorem is a direct consequence of, e.g., Theorems~5 and~6 of \cite{Decatur:93}
or alteratively of Theorems~5 and~6 of \cite{AslamDecatur:98}:

\begin{theorem} \label{thm:SQ-DNF}
(Malicious noise tolerant $\sq$ algorithm for learning sparse disjunctions)
Let $\C_{\disj,k}$ be the class of all disjunctions of length at most $k$
over $N$ Boolean variables
$x_1,\dots,x_N.$  There is a distribution-independent $\sq$ learning
algorithm $\A^\disj_\sq$ for $\C_{\disj,k}$ that has running time
$t_1=\poly(N,1/\eps,\log(1/\delta))$, uses at most $t_2=\poly(N)$ time to
evaluate each query, and requires tolerance of its
queries no smaller than $\tau = 1/\poly(k,1/\eps).$
The algorithm outputs a hypothesis which is a disjunction over
$x_1,\dots,x_N.$

Moreover, there is a fixed polynomial $\ell(\cdot)$ such that algorithm
$\A^\disj_\sq$ has the following property:
Fix a distribution $D$ over $\{0,1\}^N.$  Let $f$ be an $N$-variable Boolean function
which is such that $\Pr_{x \sim D}[f'(x) \neq f(x)] \leq \kappa$,
where $f' \in \C_{\disj,k}$ is some $k$-variable disjunction and $\kappa \leq \ell(\eps/k)<\eps/2.$
Then if $\A^\disj_\sq$ is run with a $\stat(f,D)$ oracle, with probability
$1-\delta$ it outputs a hypothesis $h$ such that
$\Pr_{x \sim D}[h(x) \neq f'(x)] \leq \eps/2$, and hence
$\Pr_{x \sim D}[h(x \neq f(x)] \leq \eps.$
\end{theorem}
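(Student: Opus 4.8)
The plan is to combine two ingredients from the cited works in an essentially black-box way. First, from Theorem~5 of \cite{Decatur:93} (or Theorem~5 of \cite{AslamDecatur:98}) I would take the base statistical-query algorithm $\A^\disj_\sq$ for the class $\C_{\disj,k}$ of $k$-literal disjunctions over $x_1,\dots,x_N$: it runs in time $t_1=\poly(N,1/\eps,\log(1/\delta))$, evaluates each query in time $t_2=\poly(N)$, makes all of its queries with tolerance at least $\tau=1/\poly(k,1/\eps)$, and always outputs a disjunction over $x_1,\dots,x_N$. The one non-routine feature here --- that the minimum tolerance $\tau$ depends only on the target length $k$ and on $1/\eps$, and \emph{not} on the ambient dimension $N$ --- is exactly what those references establish (the naive ``elimination'' SQ algorithm would instead require tolerance $\Theta(\eps/N)$ to control the union of false positives of the kept literals). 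I would simply invoke this result rather than reprove it.

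\textbf{From ``close to a disjunction'' to a valid oracle.} For the ``moreover'' part I am given a distribution $D$ over $\{0,1\}^N$, a target $f$, and a disjunction $f'\in\C_{\disj,k}$ with $\Pr_{x\sim D}[f(x)\ne f'(x)]\le\kappa$. The key observation is that the oracle $\stat(f,D)$ which is actually available is, after a harmless inflation of tolerances, a perfectly valid $\stat(f',D)$ oracle: for any query function $\chi:\{0,1\}^N\times\{-1,1\}\to[-1,1]$ we have $\left|\E_{x\sim D}[\chi(x,f(x))]-\E_{x\sim D}[\chi(x,f'(x))]\right|\le 2\Pr_{x\sim D}[f(x)\ne f'(x)]\le 2\kappa$, since $\chi$ is bounded and $f,f'$ agree off a set of $D$-mass at most $\kappa$. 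So I would run $\A^\disj_\sq$ with accuracy parameter $\eps/2$ and confidence $1-\delta$, and whenever it issues a query $(\chi,\tau')$ (necessarily with $\tau'\ge\tau$), answer it by querying $\stat(f,D)$ with tolerance $\tau'-2\kappa$; provided $2\kappa<\tau$ this tolerance is positive, and the value returned is within $(\tau'-2\kappa)+2\kappa=\tau'$ of $\E_{x\sim D}[\chi(x,f'(x))]$, i.e.\ a legal answer for learning $f'$. From $\A^\disj_\sq$'s point of view it is therefore learning the genuine disjunction $f'$ under $D$ to accuracy $\eps/2$, so with probability $1-\delta$ it outputs a disjunction $h$ with $\Pr_{x\sim D}[h(x)\ne f'(x)]\le\eps/2$. (Equivalently, one can observe that this is a special case --- with the instances still drawn from $D$ and only the labels perturbed on a $\kappa$-fraction --- of learning $f'$ under malicious noise of rate $\kappa$, and invoke the malicious-noise-tolerance guarantee of Theorem~6 of \cite{Decatur:93} / \cite{AslamDecatur:98} directly; this is why I state the theorem in terms of malicious-noise-tolerant SQ learning.)

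\textbf{Choosing $\ell$ and finishing.} It remains to pin down the polynomial $\ell$. Writing $\tau(k,1/\eps)=1/\poly(k,1/\eps)$ for the minimum tolerance guaranteed above, both $\tau(k,1/\eps)/2$ and $\eps/2$ are bounded below by a fixed polynomial in $\eps/k$ on the relevant range $0<\eps<1\le k$, so I can fix, once and for all, a polynomial $\ell$ with $\ell(\eps/k)<\min\{\tau(k,1/\eps)/2,\ \eps/2\}$; crucially this $\ell$ depends on neither $N$, $D$, nor the target. Under the hypothesis $\kappa\le\ell(\eps/k)$ we then have $2\kappa<\tau$, so the oracle-simulation of the previous paragraph is valid and produces a disjunction $h$ with $\Pr_{x\sim D}[h(x)\ne f'(x)]\le\eps/2$, and also $\kappa<\eps/2$. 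Finally the triangle inequality for disagreement probability gives $\Pr_{x\sim D}[h(x)\ne f(x)]\le\Pr_{x\sim D}[h(x)\ne f'(x)]+\Pr_{x\sim D}[f'(x)\ne f(x)]\le\eps/2+\kappa<\eps$, which is the claimed bound; the running time, per-query evaluation time and minimum-tolerance bounds are inherited verbatim from $\A^\disj_\sq$, since the simulation only shifts each query's tolerance by an additive $2\kappa$. The only point needing real care (rather than a deep obstacle) is making sure the $N$-independent tolerance bound is quoted correctly from \cite{Decatur:93,AslamDecatur:98} and that $\ell$ is presented as a single fixed polynomial with the quantifiers in the right order; everything else is the one-line tolerance-inflation trick followed by a triangle inequality.
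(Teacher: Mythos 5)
Your proposal is correct and takes essentially the same approach as the paper. In fact, the paper does not spell out a proof at all---it states that the theorem is a ``direct consequence'' of Theorems~5 and~6 of \cite{Decatur:93} (or \cite{AslamDecatur:98}), and only supplies, in a parenthetical remark after the theorem, the same intuition you make precise: an attribute-efficient $\sq$ algorithm for $k$-literal disjunctions has minimum tolerance $\tau$ depending only on $k$ and $1/\eps$, and since the $\stat(f,D)$ and $\stat(f',D)$ oracles disagree on any bounded query by at most $2\Pr_D[f\neq f']\le 2\kappa\ll\tau$, the algorithm behaves the same under either oracle. Your tolerance-inflation calculation $\left|\E_{D}[\chi(x,f(x))]-\E_{D}[\chi(x,f'(x))]\right|\le 2\kappa$ is exactly the quantitative version of that remark, and your alternative phrasing via the malicious-noise-tolerance guarantee of Theorem~6 is the route the paper actually invokes. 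The only cosmetic point worth tightening: if you present the ``simulate $\stat(f',D)$ by calling $\stat(f,D)$ at tolerance $\tau'-2\kappa$'' view and also want the queries you actually issue to $\stat(f,D)$ to retain an $N$-independent inverse-polynomial tolerance, you should pick $\ell$ so that $\ell(\eps/k)\le\tau/4$ rather than merely $<\tau/2$, ensuring $\tau'-2\kappa\ge\tau/2$; but since $\ell$ is a free fixed polynomial this costs nothing, and it does not affect correctness of the conclusion in any case.
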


(We note in passing that at the heart of Theorem~\ref{thm:SQ-DNF} is an
\emph{attribute-efficient} $\sq$ algorithm for learning sparse
disjunctions.
Very roughly speaking, an attribute
efficient $\sq$ learning algorithm
is one which can learn a target function over
$N$ variables, which actually depends only on an unknown subset of $k \ll N$
of the variables, using statistical queries for which the minimum value of the tolerance $\tau$
is ``large.''  The intuition behind Theorem~\ref{thm:SQ-DNF}
is that since the distance between $f$ and $f'$ is much less than $\tau$, the effect of using
a $\stat(f,\D)$ oracle rather than a $\stat(f',\D)$ oracle is negligible, and hence the
$\sq$ algorithm will succeed whether it is run with $f$ or $f'$ as the target function.)
}

\ignore{
We will also need the following result of
Aslam and Decatur, which, roughly speaking,
states that $\sq$ algorithms which use a large
tolerance parameter can learn in the presence of noise.  (More precisely,
the result says that if $g$ is close to $f$ and $g$ belongs to a class
$C$ that can be learned by an SQ algorithm $A$ with a large tolerance
parameter, then $A$ can be used to learn $f$ to high accuracy.)

\begin{theorem}
\label{thm:Decatur93}
Let $f$ be an $N$-variable Boolean function belonging to a class
$C$.  Fix a distribution $\D$ over $\{0,1\}^N$ and let
$g$ be a function for which $\Pr_{x \sim \D}[f(x) \neq g(x)]
\leq \kappa$.
\end{theorem}
}

\subsection{A densifier for $\dnfns$ and the proof of Theorem~\ref{thm:dnf-formal}.} \label{subsec:proof-of-dnf-formal}

In this subsection we state our main theorem regarding the existence
of densifiers for DNF formulas, Theorem~\ref{thm:dnf-densifier},
and show how Theorem~\ref{thm:dnf-formal} follows from this theorem.

\begin{theorem} \label{thm:dnf-densifier}
Let $\gamma(n,s,1/\eps,1/\delta) 
= 1/(4n^{2\log (2s/\ell(\eps/s))}\log(s/\delta))$.
Algorithm $\A^{\dnfns}_\den(\U_{f^{-1}(1)},\eps,\delta,\widehat{p})$
outputs a collection ${\cal S}$ of
conjunctions $C_1,\dots,C_{|\cal S|}$ and has
the following performance guarantee:  If
$p \eqdef \Pr_{x \sim \U_n}[f(x)=1] \leq \widehat{p} < (1+\eps)p$, then with probability
at least $1-\delta$, the function
$g(x) \eqdef  \vee_{i \in  [|{\cal S}|]  } C_i$
satisfies the following:
\begin{enumerate}
\item $\Pr_{x \sim \U_{f^{-1}(1)}}[g(x)=1] \geq 1-\eps$;
\item $\Pr_{x \sim \U_{g^{-1}(1)}}[f(x)=1] \geq \gamma(n,s,1/\eps,1/\delta)$.
\item \new{There is a DNF $f' = C_{i_1} \vee \cdots \vee C_{i_{s'}}$, which is a disjunction
of $s' \leq s$ of the conjunctions $C_1,\dots,C_{|{\cal S}|}$, such that
$\Pr_{x \sim \U_{g^{-1}(1)}}[f'(x) \neq f(x)] \leq \ell(\eps/s)$, where $\ell(\cdot)$
is the polynomial from Theorem~\ref{thm:SQ-DNF}.}
\end{enumerate}
The size of ${\cal S}$ and the running time of $\A^\dnfns_\den(\U_{f^{-1}(1)},\eps,\delta,\widehat{p})$
is $\poly(n^{\log(s/\eps)},\log(1/\delta))$.
\end{theorem}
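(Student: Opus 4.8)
The plan is to build the pool $\mathcal{S}$ of conjunctions in two phases --- a randomized sampling phase that produces a large pool of candidate conjunctions, followed by a filtering phase that discards the useless ones --- and then verify the three required properties. The first step is to reduce the task to capturing the ``heavy'' terms of $f$. Write $f = T_1 \vee \cdots \vee T_s$ and set the heaviness threshold $\theta \eqdef \ell(\eps/s)/(2s)$, calling $T_j$ \emph{heavy} if $\Pr_{x \sim \U_{f^{-1}(1)}}[T_j(x)=1] \geq \theta$. Since each non-heavy term contributes at most a $\theta$ fraction of $\U_{f^{-1}(1)}$ and there are at most $s$ of them, the union of the heavy terms covers at least a $1-\ell(\eps/s)/2 \geq 1-\eps$ fraction of $f^{-1}(1)$. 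Thus if $\mathcal{S}$ ends up containing every heavy term exactly (plus possibly other conjunctions), property~1 is automatic, property~3 will hold with $f'$ taken to be the disjunction of the (at most $s$) conjunctions in $\mathcal{S}$ that equal heavy terms of $f$, and it remains only to control $|g^{-1}(1)|$ for property~2.

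The core subroutine is: draw a batch of $t \eqdef \Theta(\log n)$ i.i.d.\ samples from $\U_{f^{-1}(1)}$ and output the maximal conjunction $C$ satisfied by all of them (coordinate $i$ is fixed to $b$ iff every sample in the batch has its $i$th coordinate equal to $b$). The key observation is that if all $t$ batch members happen to satisfy a common heavy term $T_j$ --- which occurs with probability at least $\theta^t$ --- then $C$ implies $T_j$; and conditioned on this event the batch members are uniform and independent on the coordinates left free by $T_j$, so each such coordinate is ``accidentally'' fixed with probability $2^{1-t}$, and a union bound over the $\leq n$ free coordinates shows that with probability at least $1/2$ no coordinate is accidentally fixed, i.e.\ $C$ equals $T_j$ \emph{exactly}. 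Running this subroutine $R \eqdef \Theta(\theta^{-t}\log(s/\delta))$ times and collecting the outputs into a pool $\mathcal{S}_0$, a Chernoff/union-bound argument over the $\leq s$ heavy terms shows that with probability $1-\delta/2$ the pool $\mathcal{S}_0$ contains every heavy term of $f$. Since $\theta^{-t} = (2s/\ell(\eps/s))^{\Theta(\log n)} = n^{\Theta(\log(2s/\ell(\eps/s)))}$ (using $x^{\log n} = n^{\log x}$), we get $|\mathcal{S}_0| \leq R = n^{\Theta(\log(s/\eps))}\log(1/\delta)$, with each call running in $\poly(n)$ time.

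Now filter $\mathcal{S}_0$. For a conjunction $C$ the quantity $\Pr_{x\sim\U_n}[C(x)=1] = 2^{-|C|}$ is known exactly, so $\Pr_{x\sim\U_{C^{-1}(1)}}[f(x)=1] = q_C\cdot p\cdot 2^{|C|}$, where $q_C \eqdef \Pr_{x\sim\U_{f^{-1}(1)}}[C(x)=1]$ is estimable from the available positive samples and $p = \Pr_{x\sim\U_n}[f(x)=1]$ is approximated by the input $\widehat{p}$. Discard $C$ unless this estimate of $\Pr_{x\sim\U_{C^{-1}(1)}}[f(x)=1]$ exceeds $\Theta(\theta)$ --- which is precisely the regime where $q_C$ is large enough to be estimated to within a small multiplicative factor --- and let $\mathcal{S}$ be the survivors, $g \eqdef \bigvee_{C\in\mathcal{S}}C$. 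Every heavy term survives ($f$-density $1$, $q_C\geq\theta$), so property~1 holds; and every surviving $C$ has $|C^{-1}(1)\cap f^{-1}(1)| \geq \Omega(\theta)|C^{-1}(1)|$, so by a union bound $|g^{-1}(1)| \leq \sum_{C\in\mathcal{S}}|C^{-1}(1)| \leq O(|\mathcal{S}|/\theta)|f^{-1}(1)|$, while $|g^{-1}(1)\cap f^{-1}(1)| \geq (1-\eps)|f^{-1}(1)|$; hence $\Pr_{x\sim\U_{g^{-1}(1)}}[f(x)=1] \geq \Omega(\theta/|\mathcal{S}|) \geq \gamma(n,s,1/\eps,1/\delta)$ after tracking constants, giving property~2. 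For property~3, with $f' = C_{i_1}\vee\cdots\vee C_{i_{s'}}$ the disjunction of the (at most $s$) survivors equal to heavy terms, $f$ and $f'$ disagree on $g^{-1}(1)$ only at points satisfying some non-heavy term but no heavy term, numbering at most $(\ell(\eps/s)/2)|f^{-1}(1)| \leq \ell(\eps/s)|g^{-1}(1)|$. The running time is dominated by the $R$ subroutine calls and the filtering of the $\leq R$ pool members, each using $\poly(n)$ time and $\poly(n^{\log(s/\eps)},\log(1/\delta))$ samples.

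Finally, Theorem~\ref{thm:dnf-formal} follows by invoking Theorem~\ref{thm:learn-by-dense} with $\C = \dnfns$, $\C' = \dnf_{n,|\mathcal{S}|}$, the approximate counting and uniform generation algorithms of Theorems~\ref{thm:KLM-count} and~\ref{thm:KLM-generate}, and the malicious-noise-tolerant $\sq$ learner for sparse disjunctions of Theorem~\ref{thm:SQ-DNF} run over the ``feature space'' of the $|\mathcal{S}|$ conjunctions, whose noise tolerance absorbs the $\ell(\eps/s)$ discrepancy of property~3. The main obstacle I anticipate is the joint calibration in the densifier: one must argue precisely that conditioning on the lucky event makes the common conjunction exactly a heavy term (controlling accidental coordinate-fixings), and then choose $\theta$, the batch size $t$, and the number of repetitions $R$ so that the resulting pool size $|\mathcal{S}|$ and threshold $\theta$ are simultaneously compatible with retaining every heavy term, with the density bound $\gamma$, and with the quasipolynomial running-time budget.
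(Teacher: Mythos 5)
Your core sampling subroutine is exactly the paper's: draw a batch of $\Theta(\log n)$ i.i.d.\ positive examples, take the maximal conjunction they all satisfy, and repeat enough times that every ``heavy'' term (one with conditional density $\geq \ell(\eps/s)/(2s)$ under $\U_{f^{-1}(1)}$) is hit; the ``lucky batch $\Rightarrow$ exactly $T_j$'' argument via conditional uniformity on $T_j^{-1}(1)$ is also identical, and your treatment of properties~1 and~3 matches the paper's. Where you diverge is the filter, and here the paper's choice is both simpler and tighter than yours. The paper keeps a candidate $C$ iff $\Pr_{x\sim\U_n}[C(x)=1]\leq\widehat{p}$, a deterministic size check (just compare $2^{-|C|}$ to $\widehat{p}$) using no samples at all. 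The structural fact that makes this work --- and which your filter does not exploit --- is that any genuine term $T_j$ of $f$ has $T_j^{-1}(1)\subseteq f^{-1}(1)$, hence $\Pr_{\U_n}[T_j]\leq p\leq\widehat{p}$ automatically, so the filter never rejects a real term; and conversely every survivor has $|C^{-1}(1)|\leq\widehat{p}2^n\leq(1+\eps)|f^{-1}(1)|$, which immediately gives $|g^{-1}(1)|\leq(1+\eps)M\cdot|f^{-1}(1)|$ and hence $\gamma\geq 1/(2M)$ in one line.

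Your filter instead estimates $q_C=\Pr_{\U_{f^{-1}(1)}}[C(x)=1]$ from samples and thresholds $\widehat{q}_C\cdot\widehat{p}\cdot 2^{|C|}$ against $\Theta(\theta)$. This is workable but has two rough edges you should be aware of. First, the parenthetical ``which is precisely the regime where $q_C$ is large enough to be estimated to within a small multiplicative factor'' is not justified: when $|C|$ is moderately large, the threshold on $\widehat{q}_C$ is tiny and $q_C$ can be of order $1/m$, where the empirical estimate has constant (not high) probability of overshooting by the relevant constant factor, so the stated per-survivor guarantee $|C^{-1}(1)\cap f^{-1}(1)|\geq\Omega(\theta)|C^{-1}(1)|$ does not follow by a straightforward union bound. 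Fortunately, your density conclusion survives anyway by a cruder observation: since $\widehat{q}_C\leq 1$, any survivor must have $\widehat{p}2^{|C|}\geq\Theta(\theta)$, i.e.\ $|C^{-1}(1)|\leq O(|f^{-1}(1)|/\theta)$, giving $\gamma\geq\Omega(\theta/|\mathcal{S}|)$ without relying on per-$C$ estimation accuracy. Second, note this $\gamma$ is a $\theta = \Theta(\ell(\eps/s)/s)$ factor worse than the paper's $1/(2M)$, so you would need to restate the theorem with your (smaller) $\gamma$; that is still $1/\poly(n^{\log(s/\eps)},\log(1/\delta))$ and suffices for Theorem~\ref{thm:dnf-formal}, but it does not reproduce the literal bound claimed. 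In short: right architecture, but the paper's size-based filter $\Pr_{\U_n}[C]\leq\widehat{p}$ renders the filtering step deterministic, sample-free, and yields a cleaner density bound; I would replace your estimation-based filter with it.
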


With a slight abuse of terminology
we may rephrase the above theorem as saying that $\A^\dnfns_\den$ is a
$(\eps,\gamma,\delta)$-densifier for function class $\C=\dnfns$ using class
$\C'=\dnf_{n,t}$ where $t=n^{O(\log(s/\eps))}.$
We defer the description of Algorithm~$\A^{\dnfns}_\den$ and the proof
of Theorem~\ref{thm:dnf-densifier} to the next subsection.


\begin{proof}[Proof of Theorem~\ref{thm:dnf-formal}]
\new{
The proof is essentially just an application of
Theorem~\ref{thm:learn-by-dense}.  The only twist is the use of a $\sq$ disjunction
learning algorithm rather than a DNF learning algorithm, but the special
properties of Algorithm~$\A^\disj_\sq$ let this go through without a problem.

In more detail, in Step~2(e) of Algorithm~$\A'^\C_\inv$ (see
Section~\ref{ssec:algo-known-bias}), in the execution of
Algorithm~$\A_\sqs$, the $\sq$ algorithm that is simulated
is the algorithm $\A^\disj_\sq$ run over the feature
space ${\cal S}$ of all conjunctions that are output by
Algorithm~$\A^{\dnfns}_\den$ in Step~1 of Algorithm~$\A'^\C_\inv$
(i.e., these conjunctions play the role of variables $x_1,\dots,x_N$
for the $\sq$ learning algorithm).  Property (3) of Theorem~\ref{thm:dnf-densifier}
and Theorem~\ref{thm:SQ-DNF} together imply that the algorithm $\A^\disj_\sq$, run on
a $\stat(f,\U_{g^{-1}(1)})$ oracle with parameters $\eps,\delta$, would
with probability $1-\delta$ output a hypothesis $h'$ satisfying
$\Pr_{x \sim \U_{g^{-1}(1)}}[h'(x) \neq f(x)] \leq \eps.$  Hence the hypothesis $h$
that is output by $\A_{\sqs}$ in Step~2(e) of Algorithm~$\A'^\C_\inv$
fulfills the necessary accuracy (with respect to $f$ under $D=\U_{g^{-1}(1)}$) and confidence
requirements, and the overall algorithm~$A^\C_\inv$ succeeds as described in Theorem~\ref{thm:learn-by-dense}.

Finally, combining the running time bounds of $\A^\dnfns_\den$
and $\A^\disj_\sq$ with the time bounds of the other procedures described earlier,
one can straightforwardly verify that the running time of the overall
algorithm $\A^\C_\inv$ is $\poly(n^{\log(s/\eps)},\log(1/\delta)).$
}
\end{proof}

\subsection{Construction of a densifier for $\dnfns$ and proof of
Theorem~\ref{thm:dnf-densifier}.} \label{ssec:dnf-densifier-construction}

Let $f=T_1 \vee \cdots \vee T_s$ be the target $s$-term DNF formula, 
where $T_1,\dots,T_s$ are the terms (conjunctions).
The high-level idea of our densifier is quite simple:  If $T_i$ is a term 
which is ``reasonably likely'' to be satisfied by a uniform draw of $x$ 
from $f^{-1}(1)$, then $T_i$ is at least ``mildly likely'' to be satisfied 
by $r=2 \log n$ consecutive independent draws of $x$
from $f^{-1}(1)$.  Such a sequence of draws $x^1,\dots,x^r$ will
with high probability 
\emph{uniquely identify} $T_i$.  By repeating this process sufficiently many 
times, with high probability we will obtain a pool $C_1,\dots,C_{|{\cal S}|}$ 
of conjunctions which contains all of the terms $T_i$ that are reasonably 
likely to be satisfied by a uniform draw of $x$
from $f^{-1}(1).$  Theorem~\ref{thm:dnf-densifier} follows straightforwardly 
from this.

We give detailed pseudocode for our densifier algorithm below:

\begin{framed}
\noindent Algorithm  $\A^{\dnfns}_\den ( \U_{f^{-1}(1)}, \eps, \delta, \widehat{p})$:

\smallskip

\noindent {\bf Input:} Independent samples from $\U_{f^{-1}(1)}$,
parameters $\eps,\delta >0$, and a value $1/2^n < \widehat{p} \leq 1.$

\noindent {\bf Output:}
If  $p \leq \widehat{p} \leq (1+\eps) p$, with probability $1-\delta$
outputs a set ${\cal S}$ of conjunctions $C_1,\dots,C_{|{\cal S}|}$
as described in Theorem~\ref{thm:dnf-densifier}

\begin{enumerate}

\item Initialize set ${\cal S}$ to $\emptyset.$  Let $\ell(\cdot)$ be the polynomial from
Theorem~\ref{thm:SQ-DNF}.

\item For $i=1$ to $M=
2 n^{2\log (2s/\ell(\eps/s))}\log(s/\delta)$,
repeat the following:

\begin{enumerate}

\item Draw $r=2 \log n$ satisfying assignments $x^1,\dots,x^{r}$ from $\U_{f^{-1}(1)}$.

\item Let $C_i$ be the AND of all literals that take the same value in all $r$ strings
$x^1,\dots,x^r$ (note $C_i$ may be the empty conjunction).  We say $C_i$ is a
\emph{candidate term.}

\item If the candidate term $C_i$ satisfies $\Pr_{x \sim \U_n}[C_i(x)=1] \leq 
\widehat{p}$ then add $C_i$ to the set ${\cal S}$.

\end{enumerate}

\item Output ${\cal S}.$
\end{enumerate}

\end{framed}

The following crucial claim makes the intuition presented at the start of this
subsection precise:

\begin{claim} \label{claim:dnf-dens-key}
Suppose $T_j$ is a term in $f$ such that 
$\Pr_{x \sim \U_{f^{-1}(1)}}[T_j(x)=1] \geq \ell(\eps/s)/(2s).$
Then with probability at least $1 - \delta/s$, term $T_j$ 
is a candidate term at some iteration of Step~2 of Algorithm~$\A^\dnfns_\den
(\U_{f^{-1}(1)},\eps,\delta,\widehat{p}).$
\end{claim}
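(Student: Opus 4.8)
The plan is to lower-bound, for each of the $M$ iterations of Step~2, the probability that $T_j$ itself is the candidate term $C_i$ produced in that iteration, and then conclude by independence of the iterations. Write $q \eqdef \Pr_{x \sim \U_{f^{-1}(1)}}[T_j(x)=1] \geq \alpha \eqdef \ell(\eps/s)/(2s)$. The one structural fact I would isolate first is that, since $T_j$ is a term of $f$, we have $\{f(x)=1\} \cap \{T_j(x)=1\} = \{T_j(x)=1\}$, and hence the distribution of $x \sim \U_{f^{-1}(1)}$ conditioned on $T_j(x)=1$ is \emph{exactly} the uniform distribution over $\{x : T_j(x)=1\}$; in particular, under this conditional distribution the coordinates $x_k$ for variables not occurring in $T_j$ are mutually independent and uniform.

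Now fix an iteration $i$ and let $E_i$ be the event that all $r = 2\log n$ samples $x^1,\dots,x^r$ drawn in Step~2(a) satisfy $T_j$, so $\Pr[E_i] = q^r \geq \alpha^r$. Conditioned on $E_i$ the samples are i.i.d.\ uniform over $\{x : T_j(x)=1\}$. Every literal of $T_j$ takes the same (true) value on all $r$ samples, so $C_i$ --- the AND of all literals that agree on all $r$ samples --- contains every literal of $T_j$. For the reverse direction I would bound the chance that some literal on a variable $x_k$ \emph{not} in $T_j$ also agrees on all $r$ samples: by the independence noted above, $\Pr[x_k^1 = \cdots = x_k^r] = 2^{1-r} = 2/n^2$, and a union bound over the at most $n$ such variables gives failure probability at most $2/n \leq 1/2$ (assuming $n$ exceeds a small absolute constant, the only case of interest). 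Hence $\Pr[C_i = T_j \mid E_i] \geq 1/2$, so $\Pr[C_i = T_j] \geq \alpha^r/2$. I would also observe that when $C_i = T_j$ the test in Step~2(c) is passed, since $\Pr_{x\sim\U_n}[C_i(x)=1] = \Pr_{x\sim\U_n}[T_j(x)=1] \leq \Pr_{x\sim\U_n}[f(x)=1] = p \leq \widehat{p}$; so $C_i$ is in fact added to ${\cal S}$ (though for the claim as stated it suffices that $T_j$ appears as a candidate term).

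To finish, using $a^{\log n} = n^{\log a}$ I compute $\alpha^r = \big(\ell(\eps/s)/(2s)\big)^{2\log n} = n^{-2\log(2s/\ell(\eps/s))}$, so each iteration independently produces $C_i = T_j$ with probability at least $p_0 \eqdef \tfrac12\, n^{-2\log(2s/\ell(\eps/s))}$. Therefore the probability that no iteration produces $T_j$ is at most $(1-p_0)^M \leq e^{-p_0 M} = e^{-\log(s/\delta)} \leq \delta/s$, by the choice $M = 2\, n^{2\log(2s/\ell(\eps/s))}\log(s/\delta)$, which is the claimed bound.

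The only genuinely delicate step is the second one --- ruling out ``spurious'' literals so that $C_i$ equals $T_j$ exactly rather than a proper strengthening of it. This is what forces the observation that conditioning $\U_{f^{-1}(1)}$ on $T_j(x)=1$ collapses to the uniform distribution on $T_j^{-1}(1)$, thereby restoring independence of the out-of-term coordinates; without it the correlations induced by the global constraint $f(x)=1$ would obstruct the union bound. Once that point is in hand the rest is a routine combination of the Chernoff-type estimates already used elsewhere in the paper.
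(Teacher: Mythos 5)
Your proof is correct and follows essentially the same route as the paper's: the key observation that conditioning on all $r$ samples satisfying $T_j$ collapses the distribution to uniform over $T_j^{-1}(1)$ (because $f^{-1}(1)\cap T_j^{-1}(1)=T_j^{-1}(1)$), followed by the same union bound over out-of-term literals and the same per-iteration success probability of $\tfrac12 n^{-2\log(2s/\ell(\eps/s))}$. The parenthetical remark that the test in Step~2(c) is also passed is a harmless (and correct) bonus beyond what the claim requires.
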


\begin{proof}
Fix a given iteration $i$ of the loop in Step~2.  With probability at least
$$(\ell(\eps/s)/(2s))^{2 \log n} = (1/n)^{2 \log(2s/\ell(\eps/s))},$$
all $2 \log n$ points $x^1,\dots,x^{2 \log n}$ satisfy
$T_j$; let us call this event $E$, and condition on $E$ taking place.  We claim that conditioned
on $E$, the points $x^1,\dots,x^{2 \log n}$ are independent uniform samples drawn from
$T_j^{-1}(1)$.  (To see this, observe that each $x^i$ is an independent sample chosen uniformly at random
from $f^{-1}(1) \cap T_j^{-1}$; 
but $f^{-1}(1) \cap T_j^{-1}(1)$ is identical to 
$T_j^{-1}(1).$)  Given that $x^1,\dots,x^{2 \log n}$ are independent uniform 
samples drawn from $T_j^{-1}(1)$, the probability that any literal 
which is \emph{not} present in $T_j$ is contained in $C_i$ (i.e., is satisfied 
by all $2 \log n$ points) is at most $2n/n^2 \leq 1/2.$
So with overall probability at 
least ${\frac 1 {2 n^{2 \log(2s/\ell(\eps/s))}}}$, the term $T_j$ is a 
candidate term at iteration $i$.  Consequently $T_j$ is a candidate term at 
some iteration with probability at least $1-\delta/s$, by the choice of $M=
2 n^{2\log (2s/\ell(\eps/s))}\log(s/\delta).$
\end{proof}

\noindent Now we are ready to prove Theorem~\ref{thm:dnf-densifier}:

\begin{proof}[Proof of Theorem~\ref{thm:dnf-densifier}]
The claimed running time bound of $\A^\dnfns_\den$ is easily verified, so it remains only to establish 
(1)-(3).  Fix $\widehat{p}$ such that $p \leq \widehat{p} < (1+\eps)p$ where $p=\Pr_{x \sim \U_n}[f(x)=1].$

Consider any fixed term $T_j$ of $f$ such that
$\Pr_{x \sim \U_{f^{-1}(1)}}[T_j(x)=1] \geq \ell(\eps/s)/(2s).$  
By Claim~\ref{claim:dnf-dens-key} we have
that with probability at least $1-\delta/s$, term $T_j$ is a candidate term 
at some iteration of Step~2 of the algorithm.  We claim that
in step (c) of this iteration the term $T_j$ will in fact be added 
to ${\cal S}$.  This is because by assumption we have
\[
\Pr_{x \sim \U_n}[T_j(x)=1] \leq \Pr_{x \sim \U_n}[f(x)=1] = p \leq \widehat{p}.
\]
So by a union bound, with probability at least $1-\delta$ every term $T_j$ 
in $f$ such that $\Pr_{x \sim \U_{f^{-1}(1)}}[T_j(x)=1] \geq \ell(\eps/s)/(2s)$ 
is added to ${\cal S}$.  

Let $L$ be the set of those terms $T_j$ in $f$ that 
have $\Pr_{x \sim \U_{f^{-1}(1)}}[T_j(x)=1] \geq \ell(\eps/s)/(2s)$.  
Let $f'$ be the DNF obtained by taking the OR of all terms in $L$.
By a union bound over the (at most $s$) terms that are in $f$ but 
not in $f'$, we have $\Pr_{x \sim \U_{f^{-1}(1)}}[f'(x)=1] 
\geq 1-\ell(\eps/s)/2.$  
Since $g$ (as defined in Theorem~\ref{thm:dnf-densifier} 
has $g(x)=1$ whenever $f'(x)=1$, it follows that
$\Pr_{x \sim \U_{f^{-1}(1)}}[g(x)=1] \geq 1-\ell(\eps/s)/2
\geq 1 - \eps$, giving item (1) of the theorem.

For item (2), since $f(x)=1$ whenever $f'(x)=1$, we have 
$\Pr_{x \sim \U_{g^{-1}(1)}}[f(x)=1] \geq 
\Pr_{x \sim \U_{g^{-1}(1)}}[f'(x)=1].$  Every $x$ such that $f'(x)=1$ 
also has $g(x)=1$ so to lower bound $\Pr_{x \sim \U_{g^{-1}(1)}}[f'(x)=1]$ 
it is enough to upper bound the number of points in $g^{-1}(1)$ and 
lower bound the number of points in $f'^{-1}(1)$.
Since each $C_i$ that is added to ${\cal S}$ is satisfied by at most 
$\widehat{p}2^n \leq (1+\eps)p2^n$ points, we have that 
$|g^{-1}(1)|\leq (1+\eps)pM2^n$.  
Since at least $1-\eps$ of the points that satisfy $f$ 
also satisfy $f'$, we have that $|f'^{-1}(1)| \geq p(1-\eps)2^n.$  
Thus we have $\Pr_{x \sim \U_{g^{-1}(1)}}[f'(x)=1] 
\geq p(1-\eps)/((1+\eps)pM) = {\frac {1-\eps}{1+\eps}} 
\cdot {\frac 1 M} > {\frac 1 {2M}}$, giving (2).

Finally, for (3) we have that $f(x) \neq f'(x)$ only on those inputs that 
have $f(x)=1$ but $f'(x)=0$ (because some term outside of $L$ is satisfied 
by $x$ and no term in $L$ is satisfied by $x$).  Even if all such inputs $x$ 
lie in $g^{-1}(1)$ (the worst case), there can be at most $(\ell(\eps/s)/2)
p 2^n$ such inputs, and we know that $|g^{-1}(1)| \geq |f^{-1}(1)|
\geq p(1-\eps)2^n$.  So we have
$\Pr_{x \sim \U_{g^{-1}(1)}}[f(x) \neq f'(x)] \leq {\frac {\ell(\eps/s)/2}
{1-\eps}} \leq \ell(\eps/s)$, and we have (3) as desired.
\end{proof}

\subsection{Inverse approximate uniform generation for $k$-DNFs.}
We briefly note that our general approach immediately yields an efficient inverse approximate
uniform generation algorithm for the class of $k$-DNFs for any constant $k$.
Let $k$-$\dnf$ denote the class of all $k$-DNFs over $n$ Boolean variables, i.e.,
DNF formulas in which each term (conjunction) has at most $k$ literals.

\begin{theorem} \label{thm:kdnf-formal}
There is an algorithm $\A^{k\text{-}\dnf}_{\inv}$ which is an inverse approximate uniform
generation algorithm for the class $k$-$\dnf$.  Given input parameters $\eps,\delta$
the algorithm runs in time $\poly \left(n^{k},1/\eps,\log(1/\delta) \right)$.
\end{theorem}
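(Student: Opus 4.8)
The plan is to derive Theorem~\ref{thm:kdnf-formal} as an essentially immediate application of our general result, Theorem~\ref{thm:learn-by-dense}, with $\C = k$-$\dnf$; for constant $k$ all four required ingredients are available in $\poly(n^k,1/\eps,\log(1/\delta))$ time, and in fact the densifier can be taken to be trivial. First I would record the structural observation that makes $k$-DNF so much easier than general DNF: if $f = T_1 \vee \cdots \vee T_s$ is any $k$-DNF that is not identically false, then each term has at most $k$ literals, so $\Pr_{x \sim \U_n}[T_j(x)=1] = 2^{-|T_j|} \geq 2^{-k}$, and since $T_j \Rightarrow f$ this gives $p \eqdef \Pr_{x \sim \U_n}[f(x)=1] \geq 2^{-k}$. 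Thus $f^{-1}(1)$ is always an $\Omega_k(1)$-fraction of the cube.

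Next I would instantiate the densifier as the constant-$1$ function: take $\C' = \{g_0\}$ to be the singleton class consisting of the function $g_0 \equiv 1$, and let $\A^{k\text{-}\dnf}_\den$ ignore its samples and always output $g_0$. Property (a) of Definition~\ref{def:dense} holds trivially since $g_0(x)=1$ for all $x$, and property (b) holds with $\gamma \eqdef 2^{-k}$ because $\Pr_{x \sim \U_{g_0^{-1}(1)}}[f(x)=1] = p \geq 2^{-k}$ by the observation above; since $k$ is constant this $\gamma$ is an absolute constant. Approximate counting for $\C'$ is trivial (output $1$) and approximate uniform generation for $\C'$ is trivial (output a uniform random point of $\bn$), so three of the four ingredients of Theorem~\ref{thm:learn-by-dense} are in hand, each running in $\poly(n)$ time. (Alternatively one could take $\C' = \dnf_{n,t}$ with $t = O(n^k)$, let the densifier output the OR of all length-$\le k$ conjunctions $C$ with $\Pr_{x\sim\U_n}[C(x)=1]\le\widehat p$, and invoke Theorems~\ref{thm:KLM-generate} and~\ref{thm:KLM-count}; but the constant-$1$ densifier is cleaner.)

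The only remaining ingredient is a distribution-independent SQ learner for $\C = k$-$\dnf$. Here I would use the standard ``conjunction feature expansion'': introduce variables $x_1,\dots,x_N$ indexed by the $N = \sum_{i=0}^{k}\binom{n}{i}2^{i} = O(n^k)$ conjunctions of length at most $k$ over the original variables, so that every $k$-DNF over the original variables is \emph{exactly} a disjunction of length at most $N$ over $x_1,\dots,x_N$, and every query function over this feature space is computable in $\poly(N) = \poly(n^k)$ time over $\bn$. Applying Theorem~\ref{thm:SQ-DNF} with the feature set $x_1,\dots,x_N$ and sparsity parameter $N$ (and with noise rate $\kappa = 0$, since the target is genuinely a disjunction over the complete feature set --- so, unlike in the $s$-term DNF case, no malicious-noise tolerance is actually needed, though invoking the noise-tolerant version does no harm) yields a distribution-independent SQ algorithm $\A^{k\text{-}\dnf}_\sq$ with running time $t_1 = \poly(n^k,1/\eps,\log(1/\delta))$, query-evaluation time $t_2 = \poly(n^k)$, and minimum tolerance $\tau = 1/\poly(n^k,1/\eps)$.

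Finally I would feed these four ingredients into Theorem~\ref{thm:learn-by-dense}. Its output algorithm has running time polynomial in $T_\den$, $1/\gamma$, $T_\gen$, $T_\co$, $t_1$, $t_2$ and $1/\tau$; here $1/\gamma = 2^k = O(1)$ and all of $T_\den, T_\gen, T_\co, t_1, t_2, 1/\tau$ are $\poly(n^k,1/\eps,\log(1/\delta))$, so the resulting inverse approximate uniform generation algorithm $\A^{k\text{-}\dnf}_\inv$ runs in time $\poly(n^k,1/\eps,\log(1/\delta))$, as claimed. I do not expect any real obstacle: the one point that needs a line of care is the SQ step --- namely, that the disjunction SQ learner over the feature space is legitimately simulated via a $\stat(f,\U_{g^{-1}(1)})$ oracle (handled exactly as in the proof of Theorem~\ref{thm:dnf-formal}), and that the target really is a disjunction over the full length-$\le k$ conjunction feature set so that $\kappa=0$ --- while everything else follows from the constancy of $k$.
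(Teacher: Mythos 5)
Your proof is correct and follows essentially the same route as the paper: observe $\Pr_{x\sim\U_n}[f(x)=1]\ge 2^{-k}$, use the constant-$1$ function as a $2^{-k}$-densifier, and invoke Theorem~\ref{thm:learn-by-dense}. The only cosmetic difference is that the paper cites Kearns' $\poly(n^k,1/\eps,\log(1/\delta))$-time SQ learner for $k$-DNF directly (and the \cite{KLM89} counting/generation algorithms), while you reconstruct the SQ learner via the length-$\le k$ conjunction feature expansion and Theorem~\ref{thm:SQ-DNF}; both are valid and yield the same bound.
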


For any $k$-DNF $f$ it is easy to see that
$\Pr_{x \sim \U_n}[f(x)=1] \geq 1/2^k$, and consequently the constant 1 function is a
$\gamma$-densifier for $k$-$\dnf$ with $\gamma = 1/2^k.$  Theorem~\ref{thm:kdnf-formal} then follows
immediately from Theorem~\ref{thm:learn-by-dense}, using the algorithms for approximate uniform generation and counting of DNF formulas mentioned above \cite{KLM89} together with well-known
algorithms for $\sq$ learning $k$-DNF formulas in $\poly(n^k,1/\eps,\log(1/\delta))$ time
\cite{Kearns:98}.

\ignore{

\newpage

{\huge OLD STUFF:}

We now give an algorithm to learn decision trees from positive examples only. In fact, we will learn boolean functions which are sums of disjoint ANDs (defined next).
\begin{definition}
The concept class $\mathcal{C}$ of ``sum of disjoint ANDs" is defined as : $f \in C$ if and only $f$ can be expressed as $f = \sum_{i=1}^s g_i $ where each $g_i$ is a conjunct and $\exists x$ and $\exists i \not =j$, $g_i(x)= g_j(x)=1$.  In other words, all the conjuncts are disjoint.
\end{definition}
Clearly, the class of sum of disjoint ANDs includes decision trees. For a special case, we first consider the special case when $f=\sum_{i=1}^s g_i $ where all the $g_i$'s are of the same size. We can assume that the size of the individual AND is given to us (In the end, we will run  a tournament for all possible value of this advice). So, let us assume that all the $g_i$'s are of size $t$. Then, the algorithm to get a densifier works in the following way :
\begin{itemize}
\item Let $C= \phi$.
\item Repeat the next two steps $M = 2 \cdot n^{2 \cdot \log s} \cdot \log (s/\epsilon)$ times.
\item Choose $2 \log n$ satisfying assignments from $f^{-1}(1)$.
\item Find the maximum AND which is satisfied by all the assignments and add it to $C$.
\end{itemize}
The next claim says that all the $g_i$'s are included in $C$ with high probability.
\begin{claim}\label{clm:dec-tree1}
With probability $1-\epsilon$, all the $g_i$'s are included in $C$.
\end{claim}
\begin{proof}
Consider any particular $g_i$. With probability $(1/s)^{2 \log n}$, all the satisfying assignments satisfy the $g_i$. Note that conditioned on this event, certainly all literals in $g_i$ will be included in the maximum AND. Conditioned on this happening, marginal of any other position is uniform and independent. Thus, with probability $1-1/n^2$, $g_i$ is the exact literal that is included. Thus, in all, for every single round (consisting of choosing $2 \log n$ satisfying assignments from $f^{-1}(1)$), there is a probability, $0.5 \cdot (1/s)^{2 \log n} $ is included in $C$.  Since we repeat this procedure $M$ times,  the probability that a particular $g_i$ does not get included any of the times is at most $\epsilon/s$. By a union bound, we get the result.
\end{proof}

At the next step, since we know that the valid conjuncts are of size $t$, hence we can  remove all conjuncts in $C$ which are not of size $t$. Now, consider $\Phi$ defined as $\Phi = \bigvee_{g \in C} g$. What can we say about $\Phi$?

\begin{claim}
With probability $1-\epsilon$, $\forall$, $x \in \{-1,1\}^n$, $f(x) \le \Phi(x)$.
\end{claim}
\begin{proof}
Since $f$ and $\Phi$ are boolean functions, proving $f(x) \le \Phi(x)$ is equivalent to showing that whenever, $f(x)=1$, $\Phi(x)=1$. Note that with probability $1-\epsilon$, all the $g_i$'s are included in $C$ and hence $\Phi$ is of the form $\Phi = g' \bigvee_{i \in s} g_i $ where $g'$ is some DNF. Since $f=1$ if and only if $\exists i$ such that $g_i=1$, hence we get the stated claim.
\end{proof}

\begin{claim}
$\Pr[ \phi =1 ] \le  M \cdot \Pr [f=1]$
\end{claim}
\begin{proof}
Note that all the clauses in $\Phi$ are of size $t$. Since the total size of $C$ (before pruning) is $M$, it is at most so after pruning. Hence, $\Pr[\Phi=1] \le M \cdot 2^{-t}$. On the other hand, $\Pr[f=1] = s \cdot 2^{-t} \ge 2^{-t}$. Combining, we get the result.
\end{proof}

We now recall the following result of Karp, Luby and Madras~\cite{KLM89}.
\begin{theorem}\label{thm:KLM} \cite{KLM89}
There is an algorithm $\mathcal{A}_{KLM}$ such that given any $m$-term DNF $\Phi$, it runs in time $O(n \cdot m \cdot (1/\epsilon^2))$ and samples a uniformly random satisfying assignment of $\Phi$ up to a statistical error $\epsilon$.
\end{theorem}
\begin{remark}
Ilias, Rocco : There is a weird thing I noticed here. Most of the approximate counting problems we deal with have exact versions which are $\#P$ complete. Hence, any FPRAS for counting runs in time $\poly(1/\epsilon)$ as opposed to $\poly \log (1/\epsilon)$. However, if the way to do counting is to run a Markov chain, then one can get error $\epsilon$ in variational distance, the algorithm runs in time $\poly \log (1/\epsilon)$. However, for sampling random assignments for DNF, since it goes via approximate counting (and not really a Markov chain), the running time is $\poly(1/\epsilon)$.
\end{remark}
Now note that $f$ is a disjunction as well and we have gotten a distribution $D$ (which is samplable) and has $\mathcal{U}_{f^{-1}(1)}$ has density $n^{-\log s}$ in $D$. At this point, we can expand the feature space, to consist of all the  conjuncts in $C$. Note that now $f$ is simply an OR in this space (consisting of $n^{\log s}$ variables). We simply now need an algorithm to learn a disjunction in this space which we describe next.
\subsection*{SQ-learnability of disjunctions}
I will now write the SQ-learning algorithm for learning disjunctions (over any distribution $D$). This is trivial for you guys, but not for me. So, I will irritate you by writing this. Here is the theorem.
\begin{theorem}
There is an algorithm $\mathcal{A}_{dis}$  such that for any distribution $D$ over $\{-1,1\}^n$  and disjunct $\Phi$ over these variables (with no negation signs), $\mathcal{A}_{dis}$ makes only statistical queries and runs in $O(n^{2})$ time. In other words,  we will produce another disjunction $\Phi'$ such that $\Pr_{x \in D} [\Phi(x) \not = \Phi'(x)] \le \epsilon$ and $\mathcal{A}_{dis}$ runs in $\poly(n/\epsilon)$ time.
\end{theorem}
\begin{proof}
Let $T$ be the variables occuring in $\Phi$. For every variable $x_i$, we will compute $\Pr[x_i=1|f=-1]$. We claim that this can be done using statistical queries. How?
\begin{eqnarray*}
\Pr[x_i=1|f=-1] = \frac{(1+\mathbf{E} [x_i |f=-1])}{2} \end{eqnarray*}
Now, note that
$$
\mathbf{E} [x_i |f=-1] \cdot \Pr[f=-1] + \mathbf{E} [x_i | f=1] \cdot \Pr[f=1]  = \mathbf{E} [x_i]
$$
This implies
$$
\mathbf{E} [x_i |f=-1] =\frac{ \mathbf{E} [x_i] - \mathbf{E} [x_i | f=1] \cdot \Pr[f=1] }{1- \Pr[f=1] }
$$
And hence, we get that
$$
\Pr[x_i=1|f=-1] = \frac{(1- \Pr[f=1] +\mathbf{E} [x_i] - \mathbf{E} [x_i | f=1] \cdot \Pr[f=1] )}{2(1- \Pr[f=1] )}
$$
At this point, note that if $\Pr[f=1] \ge 1- \epsilon$ or $\Pr[f=1] \le \epsilon$, we will be trivially done. So, we assume that this is not the case. Then, the above quantity i.e. $\Pr[x_i=1|f=-1]$ can be computed in polynomial time up to accuracy $\epsilon/2n$.

We now define $S = \{x_i : \Pr[x_i=1|f=-1] \le \epsilon/2n$\}. We let our output be $\Phi'= \bigvee_{z \in S} z$. Why is this good? Well, we know that if $x_i$ features in $\Phi$, then $\Pr[x_i=1 | f=-1] =0$.  Thus, if we computed it to accuracy $\epsilon/2n$, then we would get all the relevant variables.
Hence, with high probability, $\Phi' \ge f$ i.e. whenever $f=1$, $\Phi'=1$. We next consider the case when $f=-1$.
Now, consider any other variable $x_j$ in the set $S$. Then
$ \Pr[x_j=1|f=-1] \le \epsilon/n$ (since the probabilities are computed to accuracy $\epsilon/2n$).  Now, observe that
$$
\Phi'=1 \textrm{ and } f=-1 \quad \equiv \quad \bigvee_{z \in S \setminus T} z
$$
Thus,
$$
\Pr[\Phi'=1|f=-1] \le \sum_{z \in S \setminus T} \Pr[z=1|f=-1] \le \epsilon
$$
\end{proof}
\subsection{Learning decision trees}
In the last section, we assumed that the leaves of the decision tree are at the same depth. We now do not assume any such thing and show how to do the argument. The first stage of the algorithm is exactly the same except we make $M = (s/\epsilon)^{2\log n} \cdot 2 \log (s/\epsilon) = \tilde{O} (n^{\log (s/\epsilon) } )$. The modified claim we will have is that
\begin{claim}
With probability, every $g_i$ such that $\Pr[g_i=1 | f=1] \ge \epsilon/s$ is included in $C$.
\end{claim}
\begin{proof}
Same as before. Just observe that since we are repeating for larger number of rounds, even smaller subcubes will get included.
\end{proof}

We assume that we know the size of the smallest subcube say $t$. Then, we remove conjunct in $C$ with size smaller than $t$. Note that because $\Pr[f=1] \ge 2^{-t}$. This implies that for the $\Phi'$ constructed, $\Pr[\Phi'=1] \le n^{\log (s/\epsilon)} \cdot \Pr[f=1]$. We can now again see that what we have is a disjunction  $\Phi$ over some distribution $D$ such
$\Pr_{x \in D} [\Phi(x)=1] \ge n^{\log (s/\epsilon)}$.  We can now use the SQ-algorithm for PAC learning disjunctions. The running time is going to be polynomial in the number of variables and the accuracy we want. Since the number of variables is $n^{\log (s/\epsilon)}$ and we would like accuracy to be $\epsilon/n^{\log (s/\epsilon)}$, hence we the running time will be $n^{O(\log (s/\epsilon))}$. Thus, we will have the following theorem.
\begin{theorem}
Decision trees (and in fact, sum of disjoint cubes) of size $s$ can be learnt from positive examples only, up to accuracy $\epsilon$ in time $n^{O(\log (s/\epsilon))}$.
\end{theorem}

\subsection{Learning DNF}
Our strategy for learning DNFs will be pretty much the same as that of learning decision trees.
In other words, we let a DNF $\Phi = \bigvee_{i \in S} C_i$ where $C_i$ are conjuncts. The algorithm begins in the same way :
\begin{itemize}
\item Initialize a set $A = \phi$.
\item Repeat the next two steps $M = 2 \cdot n^{2 \cdot \log (s/\epsilon)} \cdot \log (s/\epsilon)$ times.
\item Choose $2 \log n$ satisfying assignments from $f^{-1}(1)$.
\item Find the maximum AND which is satisfied by all the assignments and add it to $A$.
\end{itemize}
We first make the following claim which says that all the ``relevant" conjuncts are included in $A$.
\begin{claim}
Let $S' = \{i \in S : \Pr[C_i=1 | \Phi =1] \ge \epsilon/s\}$. With probability $1-\epsilon/s$, $\forall i \in S'$, $C_i \in A$.
\end{claim}
\begin{proof}
Note that for any $i \in S'$, with probability  $(\epsilon/s)^{2 \log n}$, all the assignments satisfy $C_i$. Let $x_1, \ldots, x_{2 \log n}$ be the assignments sampled. We claim that these assignments are simply independent random samples from $C_i^{-1}(1)$. This is because note that each $x_i$ is an independent sample chosen uniformly at random from $f^{-1}(1) \cap C_i^{-1}(1)$. However, this set is just $C_i^{-1}(1)$ concluding the claim.

Now, since $x_1, \ldots, x_{2 \log n}$ are i.i.d. samples from $C_i^{-1}(1)$, hence with probability $1-1/n^2$, the maximum satisfying assignment is simply $C_i$.  This concludes the proof of this claim.
\end{proof}

Now,  we assume that we are given as ``advice" the length of the smallest conjunct $C_i$ (let us say its $t$). Then, $\Pr[f=1] \ge 2^{-t}$.  We remove every conjunct of size less than $t$ from $A$. Then, if we let $\Phi = \bigvee_{z \in A} z$. Then, clearly $\Pr[\Phi=1] \le n^{\log (s/\epsilon)} 2^{-t}$. So, now we can learn this as before using the SQ learning algorithm for disjunctions.

}

\newpage

\section{Negative results for inverse approximate uniform generation} \label{sec:hardness}

In this section, we will prove hardness results for
inverse approximate uniform generation problems
for specific classes ${\cal C}$ of Boolean functions.
As is standard in
computational learning theory, our hardness results are based on
cryptographic hardness assumptions.
The hardness assumptions we use are well
studied assumptions in cryptography such as the strong RSA assumption,
Decisional Diffie Hellman problem,
and hardness of learning parity with noise.

As was alluded to in the introduction, in light of the standard
approach, there are two potential barriers to obtaining inverse approximate
uniform generation algorithms for a class $\mathcal{C}$ of functions.
The first is that
``reconstructing" the object from class $\mathcal{C}$ may be hard, and the
second is that sampling approximately
uniform random satisfying assignments
from the reconstructed object may be hard.
While any hard inverse approximate uniform generation problem must be hard
because of one of these two potential barriers, we emphasize here
that even if one of the two steps in the standard
approach is shown to be hard, this does not constitute
a proof of hardness of the overall inverse approximate uniform generation
problem, as there is may exist some efficient algorithm for the class
$\mathcal{C}$ which departs from the standard approach. Indeed, we will
give such an example in Section~\ref{sec:graph-auto}, where we give an
efficient algorithm for a specific inverse approximate uniform generation
problem that does not follow the standard approach.
(In fact, for that problem, the second
step of the standard approach is provably no easier than the well-known graph
automorphism problem, which has withstood several decades of effort
towards even getting a sub-exponential time algorithm.)

Our hardness results come in two flavors.  Our first hardness results, 
based on signature schemes, are for problems where it is provably hard 
(of course under a computational hardness assumption) to sample 
approximately uniform satisfying assignments. In contrast, our hardness 
results of the second flavor are based on Message Authentication Codes 
(MACs).  We give such a result for a specific class $\mathcal{C}$ which 
has the property that it is actually easy to sample uniform satisfying 
assignments for functions in $\mathcal{C}$; hence, in an informal sense, 
it is the first step in the standard approach that is algorithmically 
hard for this problem.  The following subsections describe all of our 
hardness results in detail.

\subsection{Hardness results based on signature schemes.} \label{sec:signature}
\new{
In this subsection we prove a general theorem, Theorem~\ref{thm:signatures},
which relates the hardness of inverse approximate uniform generation
to the existence of certain secure signature schemes in cryptography.
Roughly speaking, Theorem~\ref{thm:signatures} says that if secure
signature schemes exist, then the inverse approximate uniform generation
problem is computationally hard for any class $\mathcal{C}$ which is
Levin-reducible from \textsf{CIRCUIT-SAT}.
We will use this general result to establish hardness of inverse approximate
uniform generation for several natural classes of functions, including
3-CNF formulas, intersections of two halfspaces, and degree-2 polynomial
threshold functions (PTFs).}

We begin by recalling the definition of public key signature schemes.
For an extensive treatment of signature schemes, see \cite{gol-fou02}.
For simplicity, and since it suffices for our purposes, we only consider
schemes with deterministic verification algorithms.

\begin{definition}\label{def:sign}

A signature scheme is a triple $(G,S,V)$ of \new{polynomial-time}
algorithms with the following properties :

\begin{itemize}

\item {\bf (Key generation algorithm)}
$G$ is a randomized algorithm which on input $1^n$ produces a
pair $(pk, sk)$ (note that the sizes of both $pk$ and $sk$ are polynomial
in $n$).

\item {\bf (Signing algorithm)}
$S$ is a randomized algorithm which takes as input a message $m$
\new{from the message space ${\cal M}$},
a secret key $sk$ and randomness $r \new{\in \{0,1\}^n}$,
and outputs \new{a signature} $\sigma =
S(m,sk,r)$.

\item {\bf (Verification algorithm)}
$V$ is a deterministic algorithm such that
$V(m,pk,\sigma)=1$ for every $\sigma = S(m,sk,r)$.

\end{itemize}

\end{definition}

We will require signature schemes with some special
properties which we now define, first fixing some notation.
Let $(G,S,V)$ be a signature scheme.
For a message space
$\mathcal{M}$ and pair $(pk,sk)$ of public and secret keys, we define
the set $\mathcal{R}_{1,sk}$ of ``valid" signed messages as \new{the set of all
possible signed messages $(m,\sigma = S(m,sk,r))$ as $m$ ranges over all of
$\mathcal{M}$ and $r$ ranges over all of $\{0,1\}^{n}$.}
Similarly, we define the set $\mathcal{R}_{2,pk}$ of
``potential" signed messages as $\mathcal{R}_{2,pk} = \{(m,\sigma) :
V(m,pk,\sigma)=1\}$.  Likewise, we define the set of valid signatures
for message $m$, denoted $\mathcal{R}_{1,sk}(m)$, as the set of all
possible pairs $(m,\sigma=S(m,sk,r))$ as $r$ ranges over all of
$\{0,1\}^n$, and we define the set of potential signatures for message $m$ as
$\mathcal{R}_{2,pk}(m) = \{(m,\sigma)  : V(m,pk,\sigma)=1\}$.

\begin{definition}\label{def:sign1}

Let $(G,S,V)$ be a signature scheme and $\mathcal{M}$ be a message
space.  A pair $(pk,sk)$ of public and secret keys is
said to be \emph{$(\delta,\eta)$-special} if the following properties hold :

\begin{itemize}

\item Let $\mathcal{R}_{1,sk}$ be the set of valid signed messages
and
$\mathcal{R}_{2,pk}$ be the set of potential signed messages.  Then
$\frac{|\mathcal{R}_{1,sk}|}{|\mathcal{R}_{2,pk}|} \ge 1-\eta$.

\item For any fixed pair $(m,\sigma) \in \R_{1,sk}(m)$,
we have $\Pr_{r \in \{0,1\}^{n}}[\sigma = S(m,sk,r) ] =
\frac{1}{|\mathcal{R}_{1,sk}(m)|}$.

\item Define two distributions $D$ and $D'$ over pairs $(m,\sigma)$
as follows : $D$ is obtained by choosing $m \in_U \mathcal{M}$
and choosing $\sigma \in_U \mathcal{R}_{1,sk}(m)$. $D'$ is the distribution
defined to be uniform over the set $\mathcal{R}_{1,sk}$. Then $d_{TV}(D,D')
\le \delta$.

\end{itemize}

\end{definition}

From now on, in the interest of brevity, $\mathcal{M}$ will denote
the ``obvious" message space $\mathcal{M}$
associated with a signature scheme unless mentioned otherwise.
\new{Similarly, the randomness $r$ for the signing algorithm $S$ will always
assumed to be $r \in_U \{0,1\}^{n}$.}

We next recall the standard
notion of existential \new{unforgeability} under RMA (Random Message Attack):

\begin{definition}\label{def:RMA-sec}

A signature scheme $(G,S,V)$ is said to be \emph{$(t,\epsilon)$-RMA secure}
if the following holds: Let $(pk,sk) \leftarrow G(1^n)$. Let
$(m_1, \ldots, m_t)$ be chosen uniformly at random from $\mathcal{M}$.
Let $\sigma_i \leftarrow S(m_i,sk,r) $. Then, for any probabilistic
algorithm $A$ running in time $t$, $$ \mathop{\Pr}_{(pk, sk) , (m_1,
\ldots, m_t) , (\sigma_1, \ldots, \sigma_t)} [A(pk, m_1, \ldots, m_t,
\sigma_1, \ldots, \sigma_t) = (m', \sigma')] \le \epsilon $$ where
$V(m',pk,\sigma')=1$
and $m' \not = m_i$ for all $i =1,\dots,t.$

\end{definition}

Next we need to formally define the notion of hardness of
inverse approximate uniform generation:

\begin{definition}

Let $\mathcal{C}$ be a class of $n$-variable Boolean functions.
$\mathcal{C}$ is said
to be \emph{$(\new{t(n)},\epsilon,\delta)$-hard for inverse approximate
uniform generation} if there is no algorithm $A$ running in time $t(n)$
\new{which is an $(\eps,\delta)$-inverse approximate uniform generation
algorithm for ${\cal C}.$}
\ignore{
following property : Given $f \in \mathcal{C}$ and $f : \{0,1\}^n
\rightarrow \{0,1\}$, let $(x_1, \ldots, x_t) \sim
\mathcal{U}_{f^{-1}(1)}$.  On input $(x_1, \ldots, x_t)$, $A$ runs in
time $t$ and outputs description of a circuit $A'$.

$$ \Pr_{(x_1, \ldots, x_t)} [A' \textrm{ is a }
(1-\delta)\textrm{-sampler for }\mathcal{U}_{f^{-1}(1)}] \le \epsilon $$

}
\end{definition}

Finally, we will also need the definition of an invertible Levin reduction:

\begin{definition}

A binary relation $R$ is said to reduce to another binary relation $R'$
by a
\emph{time-$t$ invertible Levin reduction} if there are three algorithms
$\alpha$, $\beta$ and $\gamma$, each running in time $t(n)$ on instances of
length $n$, with the following property:

\begin{itemize}

\item For every $(x,y) \in R$, it holds that $(\alpha(x), \beta(x,y)) \in R'$;

\item For every $(\alpha(x), z) \in R'$, it holds that
$(x,\gamma(\alpha(x), z)) \in R$.


\end{itemize}

Furthermore, the functions $\beta$ and $\gamma$ are
injective maps with the property that $\gamma(\alpha(x), \beta(x,y)) = y$.

\end{definition}

Note that for any class of functions $\mathcal{C}$, we can define the
binary relation $R_{\mathcal{C}}$ as follows : $(f,x) \in
R_{\mathcal{C}}$ if and only if $f(x) =1$ and $f \in \mathcal{C}$. In
this section, whenever we say that there is an invertible
Levin reduction from class $\mathcal{C}_1$ to class $\mathcal{C}_2$,
we mean that there is an invertible Levin
reduction between the corresponding binary relations $R_{\mathcal{C}_1}$
and $R_{\mathcal{C}_2}$.

\subsubsection{A general hardness result based on signature schemes.}
We now state and prove our main theorem relating signature schemes to hardness
of inverse approximate uniform generation:

\begin{theorem}\label{thm:signatures} Let $(G,S,V)$ be a
$(t,\epsilon)$-RMA secure signature scheme. Suppose that with
probability at least \new{99/100}
a random pair $(pk,sk) \leftarrow G(1^n)$ is $(\delta, \eta)$-special.
Let $\mathcal{C}$ be a class of $n$-variable Boolean functions
such that there is a Levin
reduction from \textsf{CIRCUIT-SAT} to $\mathcal{C}$ running in time
$t'(n)$. Let $\kappa_1$ and $\kappa_2$ be such that  \nnew{$\kappa_1 \le 1- 
\new{2}\cdot (2\eta + \delta +
t'(n)/|\mathcal{M}|)$, 
 $\kappa_2 \le 1- \new{2}t'(n)\cdot
(\eta + \delta)$  and $\epsilon \le (1-\kappa_1) (1-\kappa_2)/ 4$}. If $t_1(\cdot)$
is a time function such that $2t_1 (t '(n)) \le t(n) $, then $\mathcal{C}$
is $(t_1(n), \kappa_1, \kappa_2)$-hard for inverse approximate uniform
generation. \end{theorem}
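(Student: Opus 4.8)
The plan is to derive a contradiction with $(t,\epsilon)$-RMA security: assuming an algorithm $A$ running in time $t_1(\cdot)$ that is a $(\kappa_1,\kappa_2)$-inverse approximate uniform generation algorithm for $\mathcal{C}$, we build an RMA forger $F$. Given $(pk,m_1,\sigma_1,\dots,m_t,\sigma_t)$ with $(pk,sk)\leftarrow G(1^n)$, each $m_i$ uniform over $\mathcal{M}$ and $\sigma_i = S(m_i,sk,r_i)$, the forger $F$ first forms the polynomial-size circuit $C_{pk}(m,\sigma):=V(m,pk,\sigma)$, whose set of satisfying assignments is exactly the set $\mathcal{R}_{2,pk}$ of potential signed messages. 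It then applies the invertible Levin reduction from \textsf{CIRCUIT-SAT} to $\mathcal{C}$, computing $f:=\alpha(C_{pk})\in\mathcal{C}$ (on $n'\le t'(n)$ variables) and transforming each pair $(m_i,\sigma_i)$ that $A$ will consume into $\beta(C_{pk},(m_i,\sigma_i))$. Since $\beta(C_{pk},\cdot)$ and $\gamma(\alpha(C_{pk}),\cdot)$ are mutually inverse bijections between the satisfying assignments of $C_{pk}$ and those of $f$, each transformed pair is a satisfying assignment of $f$. Finally $F$ feeds these to $A$, obtains a sampler $A'$, draws a single output $z^\star\sim A'$, and outputs the signed message $(m^\star,\sigma^\star):=\gamma(\alpha(C_{pk}),z^\star)$, which always satisfies $V(m^\star,pk,\sigma^\star)=1$ by the reduction; the condition $2t_1(t'(n))\le t(n)$ ensures $F$ runs within the RMA time bound $t(n)$.

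Next I would track total variation distances. Condition on the probability-$\ge 99/100$ event that $(pk,sk)$ is $(\delta,\eta)$-special. By the second and third properties of Definition~\ref{def:sign1}, each $(m_i,\sigma_i)$ is distributed exactly as the distribution $D$ there (choose $m$ uniform in $\mathcal{M}$, then $\sigma$ uniform in $\mathcal{R}_{1,sk}(m)$), which is $\delta$-close to $\mathcal{U}_{\mathcal{R}_{1,sk}}$; and since $\mathcal{R}_{1,sk}\subseteq\mathcal{R}_{2,pk}$ with ratio $\ge 1-\eta$, $\mathcal{U}_{\mathcal{R}_{1,sk}}$ is $\eta$-close to $\mathcal{U}_{\mathcal{R}_{2,pk}}$. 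Hence, after applying the deterministic map $\beta(C_{pk},\cdot)$, each sample fed to $A$ is $(\delta+\eta)$-close to $\mathcal{U}_{f^{-1}(1)}$; as $A$ consumes $N\le t_1(t'(n))$ i.i.d.\ samples, its input is within $N(\delta+\eta)$ in total variation of $N$ i.i.d.\ draws from $\mathcal{U}_{f^{-1}(1)}$, so $A'$ is a $\kappa_1$-sampler for $\mathcal{U}_{f^{-1}(1)}$ with probability at least $1-\kappa_2-N(\delta+\eta)$. Conditioned on that, $z^\star$ is drawn from a distribution $\kappa_1$-close to $\mathcal{U}_{f^{-1}(1)}$, so (applying the bijection $\gamma$) $(m^\star,\sigma^\star)$ is $\kappa_1$-close to $\mathcal{U}_{\mathcal{R}_{2,pk}}$, which is $(\delta+\eta)$-close to $D$, whose $m$-marginal is exactly uniform over $\mathcal{M}$; since marginalization does not increase total variation distance, $m^\star$ is $(\kappa_1+\delta+\eta)$-close to uniform over $\mathcal{M}$, and therefore $\Pr[m^\star\in\{m_1,\dots,m_t\}]\le t/|\mathcal{M}|+\kappa_1+\delta+\eta$.

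Finally I would combine the estimates. Using $t\le t'(n)$ (plus a small slack to account for possibly-repeated $m_i$), conditioned on $(pk,sk)$ being special the forger outputs a valid \emph{and new} signed message with probability at least $\bigl(1-\kappa_2-t'(n)(\eta+\delta)\bigr)\cdot\bigl(1-\kappa_1-(2\eta+\delta+t'(n)/|\mathcal{M}|)\bigr)$, hence unconditionally with probability at least $\tfrac{99}{100}$ of this. The hypotheses $\kappa_1\le 1-2(2\eta+\delta+t'(n)/|\mathcal{M}|)$ and $\kappa_2\le 1-2t'(n)(\eta+\delta)$ make these two factors at least $\tfrac{1}{2}(1-\kappa_1)$ and $\tfrac{1}{2}(1-\kappa_2)$ respectively, so a short calculation shows the forgery probability is bounded below by a quantity that, given $\epsilon\le(1-\kappa_1)(1-\kappa_2)/4$, exceeds $\epsilon$; this contradicts $(t,\epsilon)$-RMA security and proves the theorem. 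The main obstacle — and the reason for the somewhat delicate ``specialness'' conditions — is ensuring that the message $m^\star$ produced by the forger differs from all of $m_1,\dots,m_t$, since this is exactly what makes $(m^\star,\sigma^\star)$ count as an RMA forgery; it is here that near-uniformity of the message-marginal of $\mathcal{U}_{\mathcal{R}_{2,pk}}$ and the bound on the number of samples consumed by $A$ are essential, while the rest is routine bookkeeping of total variation distances.
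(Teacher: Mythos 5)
Your proof is correct and follows essentially the same argument as the paper: transform the verification circuit $V_{pk}$ into $f\in\mathcal{C}$ via the Levin reduction, convert the RMA sample of signed messages into near-uniform satisfying assignments of $f$ via $\beta$, run the hypothesized inverse approximate uniform generator, and invert the reduction to extract a forgery, with the total-variation bookkeeping and the use of $(\delta,\eta)$-specialness to control the message marginal matching the paper's Claim~\ref{clm:stat} and the ensuing calculation. Your per-sample bound of $\delta+\eta$ is in fact slightly tighter than the paper's $2\eta+\delta$, but this is only a constant-factor improvement in the bookkeeping and does not change the structure of the proof.
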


{
The high-level idea of the proof is simple:  Suppose there were an
efficient algorithm for
the inverse approximate uniform generation problem for ${\cal C}.$
Because of the invertible Levin reduction from \textsf{CIRCUIT-SAT} to
${\cal C}$, there is a signature scheme for which the verification
algorithm (using any given public key) corresponds to a function in ${\cal C}.$
The signed messages $(m_1,\sigma_1),\dots,(m_t,\sigma_t)$ correspond to
points from ${\cal U}_{f^{-1}(1)}$ where $f \in {\cal C}$.
Now the existence of an efficient algorithm for
the inverse approximate uniform generation problem for ${\cal C}$
(i.e. an algorithm which, given points from ${\cal U}_{f^{-1}(1)}$,
can generate more such points) translates
into an algorithm which, given a sample of signed messages, can generate a
new signed message.  But this violates the existential unforgeability under
RMA of the signature scheme.

We now proceed to the formal proof.
}

\begin{proof} Assume towards a contradiction
that there is an algorithm $A$ for inverse approximate uniform generation
$A_{\inv}$ which runs in time $t_1$ such that with probability
\new{$1-\kappa_2$}, the output distribution is $\kappa_1$-close to the target
distribution. 
If we can show that for any $(\delta,\eta)$-special key pair 
$(pk,sk)$ the resulting signature scheme is not $(t,\epsilon)$ secure,
then this will result in a contradiction. We will now use algorithm $A$ to 
construct an adversary which breaks the signature scheme for 
$(\delta,\eta)$-special key pairs $(pk,sk).$

Towards this, fix a $(\delta,\eta)$-special key pair $(pk,sk)$ and 
consider the function $V_{pk} : \mathcal{M} \times
\{0,1\}^{\ast} \rightarrow \{0,1\}$ defined as $V_{pk}(m,\sigma) =
V(m,pk,\sigma)$. Clearly, $V_{pk}$ is an instance of
\textsf{CIRCUIT-SAT}
\new{(i.e. $V_{pk}$ is computed by a satisfiable polynomial-size
Boolean circuit)}.  Since there is an invertible
Levin reduction from \textsf{CIRCUIT-SAT} to
$\mathcal{C}$, given $pk$, the adversary in time $t'(n)$ can compute
$\Phi_{pk} \in \mathcal{C}$ with the following properties (let $\beta$ and
$\gamma$ be the corresponding algorithms in the definition of the Levin
reduction): 

\begin{itemize} 

\item For every $(m,\sigma)$ such that
$V_{pk}(m,\sigma) =1$, $\Phi_{pk}(\beta(V_{pk},(m,\sigma)))=1$. 

\item For every $x$ such that $\Phi_{pk}(x)=1$, $V_{pk}(\gamma(\Phi_{pk}, 
x))=1$.

\end{itemize} 

Recall that the adversary receives signatures $(m_1, \sigma_1),
\ldots, (m_{t'(n)}, \sigma_{t'(n)})$. Let $x_i = \beta(V_{pk},(m_i,\sigma_i))$.
Let $D_x$ be the distribution of $(x_1, \ldots, x_{t'(n)})$. We next make
the following claim. 

\begin{claim}\label{clm:stat} 
Let $y_1, \ldots, y_{t'}$ be drawn uniformly at random from 
$\Phi_{pk}^{-1}(1)$ and let $D_y$ be the corresponding distribution
of $(y_1,\dots,y_t)$.  Then, $D_y$ and $D_x$ are $t'\new{(n) \cdot}
(2 \eta + \delta)$-close in statistical distance. 
\end{claim}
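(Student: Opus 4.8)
The plan is to establish the claim through a short chain of total-variation estimates, pushing the law of the signed messages forward through the injective map supplied by the invertible Levin reduction.

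First I would pin down the distribution of a single signed message $(m_i,\sigma_i)$ seen in the RMA attack. The $m_i$ are i.i.d.\ uniform over $\mathcal{M}$ and $\sigma_i = S(m_i,sk,r)$ for independent uniform $r \in \{0,1\}^n$, so the second defining property of a $(\delta,\eta)$-special key pair says exactly that, conditioned on $m_i$, the pair $(m_i,\sigma_i)$ is uniform over $\mathcal{R}_{1,sk}(m_i)$; hence $(m_i,\sigma_i)\sim D$ where $D$ is the distribution of Definition~\ref{def:sign1}, and the $t'(n)$ pairs are mutually independent. Consequently each $x_i = \beta(V_{pk},(m_i,\sigma_i))$ has common law $\beta(V_{pk},\cdot)_* D$ and these are i.i.d., so $D_x = \big(\beta(V_{pk},\cdot)_* D\big)^{\otimes t'(n)}$ while $D_y = \big(\U_{\Phi_{pk}^{-1}(1)}\big)^{\otimes t'(n)}$.

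Next I would bound the one-coordinate distance $\dtv\!\big(\beta(V_{pk},\cdot)_* D,\ \U_{\Phi_{pk}^{-1}(1)}\big)$. By the third property of special keys, $\dtv(D,D')\le\delta$ with $D'=\U_{\mathcal{R}_{1,sk}}$. The verification property of signature schemes gives $\mathcal{R}_{1,sk}\subseteq\mathcal{R}_{2,pk}$, and the first property of special keys gives $|\mathcal{R}_{1,sk}|\ge(1-\eta)|\mathcal{R}_{2,pk}|$; the standard crude estimate for uniform distributions on nested sets of comparable size then yields $\dtv(D',\U_{\mathcal{R}_{2,pk}})\le 2\eta$, so $\dtv(D,\U_{\mathcal{R}_{2,pk}})\le 2\eta+\delta$. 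Now I invoke the Levin reduction: the two displayed properties of $\beta$ and $\gamma$, together with injectivity of $\gamma$ and the cancellation identity $\gamma(\Phi_{pk},\beta(V_{pk},\cdot))=\mathrm{id}$, show that $\beta(V_{pk},\cdot)$ is a \emph{bijection} from $\mathcal{R}_{2,pk}$ onto $\Phi_{pk}^{-1}(1)$; hence it pushes $\U_{\mathcal{R}_{2,pk}}$ to exactly $\U_{\Phi_{pk}^{-1}(1)}$. Since applying a fixed function never increases total variation distance, $\dtv\!\big(\beta(V_{pk},\cdot)_* D,\ \U_{\Phi_{pk}^{-1}(1)}\big)\le \dtv(D,\U_{\mathcal{R}_{2,pk}})\le 2\eta+\delta$.

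Finally I would tensorize: since $D_x$ and $D_y$ are $t'(n)$-fold products of their respective single-coordinate laws, subadditivity of total variation distance over product distributions gives $\dtv(D_x,D_y)\le t'(n)\cdot(2\eta+\delta)$, which is the claim. I expect the only step needing real care to be the surjectivity of $\beta(V_{pk},\cdot)$ onto $\Phi_{pk}^{-1}(1)$: injectivity is immediate from the definition of an invertible Levin reduction, but surjectivity requires combining $V_{pk}(\gamma(\Phi_{pk},z))=1$ for $z\in\Phi_{pk}^{-1}(1)$ with $\gamma\circ\beta=\mathrm{id}$ and injectivity of $\gamma$ to deduce $\beta(V_{pk},\gamma(\Phi_{pk},z))=z$. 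One should also note that the special-key axioms are phrased precisely so that $D$ is well defined (no division by $|\mathcal{R}_{1,sk}(m)|=0$); the remaining manipulations are routine $\dtv$ bookkeeping.
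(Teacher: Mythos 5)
Your argument is correct and reaches the same conclusion, but by a genuinely different and cleaner route. The paper's own proof works directly with the coordinate laws $D_x^{(1)}$ and $D_y^{(1)}$: it splits the $L_1$-sum into the contribution from points outside $\mathrm{supp}(D_x^{(1)})$ (controlled by the first special-key axiom) and the contribution from points inside (which, after introducing a normalization constant $\tau = \Pr[D_y^{(1)}\in\mathrm{supp}(D_x^{(1)})]\ge 1-\eta$, is reduced to $\dtv(D,D')\le\delta$), invoking the injectivity of $\beta$ only implicitly at the last step. You instead isolate the structural fact that $\beta(V_{pk},\cdot)$ is a \emph{bijection} from $\mathcal{R}_{2,pk}$ onto $\Phi_{pk}^{-1}(1)$ --- which, as you correctly verify, follows from $\gamma(\Phi_{pk},\beta(V_{pk},\cdot))=\mathrm{id}$ together with injectivity of $\gamma$ and the second defining property of the Levin reduction --- so that $\beta$ pushes $\U_{\mathcal{R}_{2,pk}}$ forward to $\U_{\Phi_{pk}^{-1}(1)}$ exactly, and then you conclude via the data-processing inequality for total variation distance combined with the triangle inequality $\dtv(D,\U_{\mathcal{R}_{2,pk}})\le\dtv(D,D')+\dtv(D',\U_{\mathcal{R}_{2,pk}})$. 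This modularizes the calculation and makes the role of the Levin reduction fully transparent; it is arguably what the paper's proof is ``really doing'' under the hood, without the bookkeeping of the $\tau$-normalization. One minor remark: your nested-sets estimate is a bit crude --- for $A\subseteq B$ with $|A|\ge(1-\eta)|B|$ one has $\dtv(\U_A,\U_B)=1-|A|/|B|\le\eta$, not merely $\le 2\eta$ --- so your route actually yields the slightly sharper bound $t'(n)\cdot(\eta+\delta)$, which is harmless since the claim only asserts an upper bound.
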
 

\begin{proof} Note
that $D_y$ and $D_x$ are $t'\new{(n)}$-way product distributions.  If
$D_x^{(1)}$ and $D_y^{(1)}$ are the corresponding marginals on the first
coordinate, then $t'(n) \cdot d_{TV}(D_x^{(1)} ,D_y^{(1)} ) \le d_{TV}(D_x,
D_y)$.  Thus, it suffices to upper bound $d_{TV}(D_x^{(1)} ,D_y^{(1)})$,
which we now do.

$$ d_{TV}(D_x^{(1)} ,D_y^{(1)} ) \le \sum_{z \in 
\mathop{supp}(D_y^{(1)}) \setminus \mathop{supp}(D_x^{(1)}) }
\left|D_x^{(1)}(z) -D_y^{(1)}(z)\right|+\sum_{z \in
\mathop{supp}(D_x^{(1)}) } \left|D_x^{(1)}(z) -D_y^{(1)}(z)\right|.
$$
By definition of $(pk,sk)$ being $(\delta, \eta)$-special, we
get that $$ \sum_{z \in \mathop{supp}(D_y^{(1)}) \setminus
\mathop{supp}(D_x^{(1)}) } |D_x^{(1)}(z) -D_y^{(1)}(z)| \le \eta.$$ 

To bound the next sum, let $\tau = \Pr[ D_y^{(1)}\in \mathop{supp}(D_x^{(1)})
]$.
Note that $\tau  \ge 1- \eta$.  We have

\begin{eqnarray*} \sum_{z \in
\mathop{supp}(D_x^{(1)}) } \left|D_x^{(1)}(z) -D_y^{(1)}(z)\right|
&\le& \sum_{z \in \mathop{supp}(D_x^{(1)}) } \left|\tau D_x^{(1)}(z)
-D_y^{(1)}(z)\right| + (1-\tau) \sum_{z \in \mathop{supp}(D_x^{(1)})
}D_x^{(1)}(z) \\ &\le&\eta+ \tau\cdot \sum_{z \in
\mathop{supp}(D_x^{(1)}) } \left| D_x^{(1)}(z)
-\frac{D_y^{(1)}(z)}{\tau}\right|. 
\end{eqnarray*} 
We observe that
$\frac{D_y^{(1)}(z)}{\tau}$ restricted to $ \mathop{supp}(D_x^{(1)})
$ is simply the uniform distribution over the image of the set
$\mathcal{R}_{1,\new{sk}}$ and hence is the same as applying the map $\beta$ 
on the distribution $D'$. Likewise $D_x^{(1)}$ is the same as applying the map
$\beta$ on $D$ (mentioned in Definition~\ref{def:sign1}). Hence, we 
have that
$$ d_{TV}(D_x^{(1)} ,D_y^{(1)} ) \le 2 \eta + d_{TV} (D, D') \le 2
\eta + \delta.
$$

%
\end{proof}

Now, observe that the instances $x_i$ are each of length at most $t'(n)$.
Since the distributions $D_x$ and $D_y$ are $t'\new{(n) \cdot}
(2 \eta + \delta)$ close,
hence our adversary can run $A_{\inv}$ in time $t(n)$ on the examples
$x_1, \ldots , x_{t'(n)}$ and succeed with probability $1-\kappa_2 - t'
\new{(n)\cdot}(2\eta
+ \delta) \ge (1-\kappa_2)/2 $ in producing a sampler whose output
distribution is $\kappa_1$-close to $\mathcal{U}_{\Phi_{pk}^{-1}(1)}$.
Call this output distribution $Z$. Let $\beta(D)$ 
denote the distribution obtained
by applying the map $\beta$ on $D$. The proof of Claim~\ref{clm:stat} shows
that $\beta(D)$ is $(2\eta + \delta)$-close to the distribution
$\mathcal{U}_{\Phi_{pk}^{-1}(1)}$. Thus, with probability $(1-\kappa_2) /2
$, $Z$ is $(\kappa_1 + (2\eta + \delta))$-close to the distribution
$\beta(D)$. By definition of $D$, we have

$$ \Pr_{(m, \sigma) \in D} [\forall i \in [t'], m_i \not = m] \ge 1 -
\frac{t'}{|\mathcal{M}|}. $$

Thus, with probability $\frac{1-\kappa_2}{2}$,

$$ \Pr_{z \in Z} [z = g(m,\sigma) \textrm{ and } \forall i \in [t'], m_i
\not = m] \ge 1-\kappa_1 - (2\eta + \delta) - \frac{t'}{|\mathcal{M}|} \ge
\frac{1-\kappa_1}{2} $$

Thus, with overall probability $(1-\kappa_1)(1- \kappa_2) /4 \ge \epsilon$,
the adversary succeeds in producing $z = g(m,\sigma)$ such that $
\forall i \in [t'], m_i \not = m$.  Applying the map $\gamma$ on $(\Phi_{pk},
z)$, the adversary gets the pair $(m, \sigma)$. Also, note that the
total running time of the adversary is $t_1(t'(n)) + t'(n) \le 2
t_1(t'(n)) \le t(n)$ which contradicts the $(t,\epsilon)$-RMA security
of the signature scheme. 
\end{proof}

\subsubsection{A specific hardness assumption.}
At this point, at the cost of sacrificing some generality, we consider a
particular instantiation of a signature scheme from the literature which meets
our requirements. While similar signature schemes can be constructed
under many different cryptographic assumptions in the literature, we
forsake such generality to keep the discussion from getting too cumbersome.

To state our cryptographic assumption, we need the following notation:

\begin{itemize}

\item $\textrm{PRIMES}_k$ is the set of $k$-bit prime numbers.

\item $\textrm{RSA}_k$ is the set of all products of two primes of length
$\lfloor (k-1)/2 \rfloor$.

\end{itemize}

The following cryptographic
assumption (a slight variant of the standard RSA assumption) appears in
\cite{MRV99}.

\begin{assumption}\label{ass:1}

\textbf{The RSA$'$ $s(k)$ assumption}: Fix any $m \in
\textrm{RSA}_k$ and let $x \in_U \mathbb{Z}_m^{\ast}$ and $p \in_U
\textrm{PRIMES}_{k+1}$. Let $A$ be any probabilistic algorithm running
in time $s(k)$. Then,

$$ \Pr_{(x,p)} [A(m,x,p) = y \textrm{ and } y^p = x \ (mod \ m)]
\le \frac{1}{s(k)}.$$


\end{assumption}

As mentioned in \cite{MRV99}, given the present state of computational
number theory, it is plausible to conjecture the RSA$'$ $s(k)$ assumption
for $s(k) = 2^{k^{\delta}}$ for some absolute constant $\delta>0$.
For the sake of conciseness, for the rest of this section
we write ``Assumption~\ref{ass:1} holds true'' to mean that
Assumption~\ref{ass:1} holds true with $s(k) = 2^{n^\delta}$
for some fixed constant $\delta > 0$.
(We note, though, that all our hardness results go through giving
superpolynomial hardness using only $s(k) =k^{\omega(1)}$.)

Micali \etal~\cite{MRV99} give a construction of a
``unique signature scheme" using Assumption~\ref{ass:1}:

\begin{theorem}\label{thm:unique-sign}
If Assumption~\ref{ass:1} holds true,
then there is a $(t=2^{n^\delta}, \new{\epsilon=1/t})$-RMA
secure signature scheme
$(G,S,V)$ with the following property : For any message $m \in
\mathcal{M}$, there do not exist $\sigma_1 \not = \sigma_2$ such that
$V(m, \sigma_1) = V(m, \sigma_2) =1$.
In this scheme the signing algorithm $S$ is deterministic and the message
space $\mathcal{M}$ is of size $2^{n^{\delta}}$. \end{theorem}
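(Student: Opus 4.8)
The plan is to instantiate the verifiable unpredictable function (VUF) construction of Micali, Rabin and Vadhan~\cite{MRV99}; a VUF is, up to a standard transformation, exactly a unique signature scheme, so essentially all of the work has been done there and the task is to recall the construction and check that its properties line up with the three requirements (deterministic signing, at most one valid signature per message, and message space of size $2^{n^{\delta}}$). First I would describe the RSA-based \emph{primitive} building block: the public key contains a modulus $m \in \textrm{RSA}_k$ and a random $r \in \mathbb{Z}_m^{\ast}$, together with a fixed, efficiently computable deterministic map sending any short string $x$ to a prime $p(x)$ of bit-length $k+1$ (as in~\cite{MRV99}, e.g.\ the least prime in a suitable sequence associated to $x$), and the secret key is the factorization of $m$. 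The signature of $x$ is the $p(x)$-th root $\sigma = r^{1/p(x)} \bmod m$, which the signer computes using $\phi(m)$, and verification checks $\sigma^{p(x)} \equiv r \pmod m$. Two observations then give two of the three required properties for free: signing is deterministic, and since $p(x)$ is a prime larger than $m > \phi(m)$ it is coprime to $\phi(m)$, so the $p(x)$-th root is \emph{unique} in $\mathbb{Z}_m^{\ast}$ --- hence at most one $\sigma$ passes verification for a given message, which is precisely the ``no two valid signatures'' property.

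Second, to reach a message space of size $2^{n^{\delta}}$, i.e.\ messages of length $\ell = n^{\delta}$, I would apply the tree (GGM-style) amplification of~\cite{MRV99} that boosts input length from $O(\log k)$ to any polynomial $\ell$: a signature on $x \in \{0,1\}^{\ell}$ consists of the sequence of primitive signatures along the root-to-leaf path labelled by the bits of $x$, together with the intermediate node values, and verification checks the whole path for consistency. Determinism and uniqueness compose cleanly through this construction, since each intermediate value --- and hence the entire path --- is uniquely determined by the public key and $x$; so the amplified $(G,S,V)$ is still deterministic and still has at most one accepting signature per message, with $|\mathcal{M}| = 2^{\ell} = 2^{n^{\delta}}$ after relabelling constants.

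Third, for RMA-security I would quote the fact that~\cite{MRV99} actually establish the stronger statement that their scheme is existentially unforgeable under an \emph{adaptive chosen}-message attack, under the RSA$'$ assumption; a random-message attack is a special case and we only ask for existential unforgeability, so RMA-security follows a fortiori. Concretely, their reduction turns a forger running in time $T$ with success probability $\rho$ into an RSA$'$ inverter running in time $\mathrm{poly}(T,k)$ with success probability $\rho / L(k,\ell)$, where the security loss $L$ is sub-exponential in the parameters (it comes from guessing which prime along the tree paths the forgery exploits). The only genuinely quantitative point is then to pick the modulus length as a sufficiently large polynomial $k = k(n)$ so that this loss is swallowed: if RSA$'$ holds with $s(k) = 2^{k^{\delta_0}}$, choosing $k = n^{c}$ with $c$ large enough forces $s(k) = 2^{n^{c\delta_0}} \gg L(k,\ell)\cdot\mathrm{poly}$, so no forger can run in time below $t = 2^{n^{\delta}}$ or succeed with probability above $1/t$ for a suitable constant $\delta > 0$ depending on $\delta_0$ and $c$. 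I expect this last bookkeeping --- verifying that the exponential-in-$\ell$ loss of the tree reduction is genuinely dominated by the sub-exponential hardness of RSA$'$, so that a clean $t = 2^{n^{\delta}}$ bound (rather than merely super-polynomial hardness) emerges --- to be the only real obstacle; everything else amounts to transcribing~\cite{MRV99} and checking that the ``deterministic'' and ``unique'' properties survive the amplification, which they do by construction.
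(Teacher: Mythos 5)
The paper does not prove Theorem~\ref{thm:unique-sign} at all --- it is stated as a direct citation of the Micali--Rabin--Vadhan unique signature / VUF construction~\cite{MRV99}, so there is no internal proof to compare against. Your reconstruction of the underlying argument is accurate and is exactly what~\cite{MRV99} does: the RSA$'$ primitive with unique $p(x)$-th roots (uniqueness from $p(x) > m > \phi(m)$ being coprime to $\phi(m)$) gives determinism and at-most-one-valid-signature, the GGM-style tree amplification lifts the message length to $n^{\delta}$ while preserving both properties, and adaptive-chosen-message unforgeability of~\cite{MRV99} subsumes RMA security once the modulus length $k$ is chosen as a sufficiently large polynomial in $n$ so that the reduction loss is dominated by $s(k)=2^{k^{\delta_0}}$.
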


The above theorem says that under the RSA$'$ $s(k)$ assumption,
there is a deterministic signature scheme such that there is
only one signature $\sigma_m$ for every message $m$,
 and for every message $m$ the only
accepting input for $V$ is $(m,\sigma_m)$. As a consequence,
the signature scheme in Theorem~\ref{thm:unique-sign} has
the property that \new{every $(pk,sk)$ pair that can be generated
by $G$ is $(0,0)$-special}.

\nnew{\begin{remark}
{\em It is important to note here that constructions of $(0,0)$ special signature schemes are abundant in the literature. 
A partial list follows : Lysyanskaya~\cite{Lys02} constructed a deterministic $(0,0)$ special signature scheme using a strong version of the Diffie--Hellman assumption. Hohenberger and Waters~\cite{HW:10} constructed a scheme with a similar guarantee using a variant of the Diffie--Hellman assumption on bilinear groups. In fact, going back  much further, Cramer and Shoup~\cite{CS00, Fischlin03} show that using the Strong RSA assumption, one can get a $(0,0)$ special signature scheme (which however is not deterministic). We remark that the scheme as stated in~\cite{CS00} is not $(0,0)$ special in any obvious sense, but the more efficient version in~\cite{Fischlin03} can be easily verified to be $(0,0)$ special. Throughout this section, for the sake of simplicity, we use the signature scheme in Theorem~\ref{thm:unique-sign}. }
\end{remark}
}
Instantiating Theorem~\ref{thm:signatures} with the signature
scheme from Theorem~\ref{thm:unique-sign}, we obtain the following
corollary:

\begin{corollary}\label{cor:inverse-uniform}
Suppose that Assumption~\ref{ass:1} holds true.  Then  the
following holds : Let $\mathcal{C}$ be a function class such that there
is a polynomial time ($n^k$-time) invertible Levin reduction from
\textsf{CIRCUIT-SAT} to $\mathcal{C}$. Then $\mathcal{C}$ is $(2^{n^c},
1-2^{-n^c}, 1-2^{-n^c})$-hard for inverse approximate uniform generation
for some constant $c>0$ (depending only on the ``$\delta$'' in
Assumption~\ref{ass:1} and on $k$).
\end{corollary}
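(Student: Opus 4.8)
The plan is to obtain Corollary~\ref{cor:inverse-uniform} as a direct instantiation of Theorem~\ref{thm:signatures} with the unique signature scheme of Theorem~\ref{thm:unique-sign}. Assume Assumption~\ref{ass:1} holds with $s(k)=2^{k^{\delta}}$. Then Theorem~\ref{thm:unique-sign} supplies a signature scheme $(G,S,V)$ that is $(t,\epsilon)$-RMA secure with $t=t(n)=2^{n^{\delta}}$ and $\epsilon=1/t=2^{-n^{\delta}}$, whose message space has size $|\mathcal{M}|=2^{n^{\delta}}$, and with the property that \emph{every} key pair produced by $G(1^n)$ is $(0,0)$-special. In the notation of Theorem~\ref{thm:signatures} we may therefore take the special-ness parameters $\eta=\delta_{\mathrm{spec}}=0$, and the hypothesis ``a random $(pk,sk)$ is $(\delta_{\mathrm{spec}},\eta)$-special with probability $\ge 99/100$'' holds trivially with probability $1$.

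Next I would chase the parameters of the Levin reduction. For a public key $pk\leftarrow G(1^n)$ the verification circuit $V_{pk}$ has size $\poly(n)$, say at most $n^{a}$ for an absolute constant $a$ of the scheme; applying the $n^{k}$-time invertible Levin reduction from \textsf{CIRCUIT-SAT} to $\mathcal{C}$ to the instance $V_{pk}$ then runs in time $t'(n)\le (n^{a})^{k}=n^{ak}$ and yields $\Phi_{pk}\in\mathcal{C}$ on $N\le n^{ak}$ variables. I would then verify the three numerical conditions of Theorem~\ref{thm:signatures}, setting $\kappa_1=\kappa_2=1-2^{-n^{c}}$ and $t_1(n)=2^{n^{c}}$ for a constant $c>0$ to be fixed. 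Since $\eta=\delta_{\mathrm{spec}}=0$: the condition $\kappa_1\le 1-2(2\eta+\delta_{\mathrm{spec}}+t'(n)/|\mathcal{M}|)$ reduces to $2^{-n^{c}}\ge 2n^{ak}2^{-n^{\delta}}$, valid for all large $n$ whenever $c<\delta$; the condition $\kappa_2\le 1-2t'(n)(\eta+\delta_{\mathrm{spec}})=1$ is automatic; the condition $\epsilon\le(1-\kappa_1)(1-\kappa_2)/4=2^{-2n^{c}}/4$ reduces to $2^{-n^{\delta}}\le 2^{-2n^{c}}/4$, again valid for large $n$ when $c<\delta$; and $2t_1(t'(n))\le t(n)$ reads $2\cdot 2^{n^{akc}}\le 2^{n^{\delta}}$, i.e.\ $n^{akc}+1\le n^{\delta}$, valid for large $n$ when $c<\delta/(ak)$. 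Choosing $c=\delta/(2ak)$ satisfies all of these simultaneously for $n$ sufficiently large, so Theorem~\ref{thm:signatures} yields that $\mathcal{C}$ is $(2^{n^{c}},1-2^{-n^{c}},1-2^{-n^{c}})$-hard for inverse approximate uniform generation, with $c$ depending only on $\delta$ (the exponent in Assumption~\ref{ass:1}), on $k$, and on the fixed constant $a$ — exactly the claimed dependence.

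The only real bookkeeping subtlety is the mismatch between the security parameter $n$ and the number of variables $N\approx n^{ak}$ of the functions $\Phi_{pk}\in\mathcal{C}$ actually produced by the reduction: re-expressing the hardness guarantee in terms of $N$ (which is the ``$n$'' in the statement of Corollary~\ref{cor:inverse-uniform}) merely rescales the exponent to $c/(ak)$, so the form of the conclusion is unchanged. Beyond that, the proof is routine; the part requiring care is keeping the exponent $c$ small enough that the ``lifted'' running time $t_1(t'(n))=2^{n^{akc}}$ stays comfortably below the scheme's security bound $t(n)=2^{n^{\delta}}$, and observing that the finitely many ``for all large $n$'' caveats are harmless since the hardness claim is asymptotic. (All of this goes through \emph{mutatis mutandis} with the $k^{\omega(1)}$ version of Assumption~\ref{ass:1}, giving superpolynomial rather than subexponential hardness, as remarked after Theorem~\ref{thm:unique-sign}.)
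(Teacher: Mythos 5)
Your proposal is correct and follows the same route the paper intends: the paper states the corollary simply as an instantiation of Theorem~\ref{thm:signatures} with the $(0,0)$-special unique signature scheme of Theorem~\ref{thm:unique-sign}, leaving the parameter chasing implicit, and your writeup supplies exactly that bookkeeping. Your observations — that $\eta=\delta_{\mathrm{spec}}=0$ trivializes the second numerical condition and simplifies the first and third, that $t'(n)\le n^{ak}$, that $c=\delta/(2ak)$ satisfies all four constraints simultaneously for large $n$, and that converting from the scheme's security parameter $n$ to the circuit size $N\approx n^{ak}$ merely rescales the exponent — match the intended argument.
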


\subsubsection{Inverse approximate uniform generation
hardness results for specific function classes whose satisfiability
problem is NP-complete.}
In this subsection we use Corollary~\ref{cor:inverse-uniform} to
prove hardness results for inverse approximate uniform generation
for specific function classes $\mathcal{C}$ for which there are
invertible Levin reductions from \textsf{CIRCUIT-SAT} to $\mathcal{C}$.

Recall that a 3-CNF formula is a conjunction of clauses (disjunctions)
of length 3.
The following fact can be easily verified by inspecting the
standard reduction from \textsf{CIRCUIT-SAT} to \textsf{3-CNF-SAT}.

\begin{fact}\label{fact:c-sat} There is a polynomial time invertible
Levin reduction from \textsf{CIRCUIT-SAT} to \textsf{3-CNF-SAT}.
\end{fact}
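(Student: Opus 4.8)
The plan is to produce the three algorithms $\alpha,\beta,\gamma$ required by the definition of an invertible Levin reduction, by adding careful bookkeeping to the textbook Tseitin-style reduction from \textsf{CIRCUIT-SAT} to \textsf{3-CNF-SAT}. First I would put the input circuit $C$ (on input variables $x_1,\dots,x_n$) into a normal form in polynomial time --- a trivial preprocessing that does not touch the input variables --- so that all gates are $\mathrm{AND}$, $\mathrm{OR}$, $\mathrm{NOT}$ of fan-in at most $2$, the gates $g_1,\dots,g_m$ are topologically ordered, and $g_m$ is the output gate. The map $\alpha$ introduces a fresh Boolean variable $y_i$ for each gate $g_i$ and, for each gate, writes a constant number of clauses of width at most $3$ over $y_i$ together with the (at most two) variables feeding $g_i$ that encode \emph{exactly} that gate's truth table (e.g.\ $y_i=y_j\wedge y_k$ becomes $(\neg y_i\vee y_j)\wedge(\neg y_i\vee y_k)\wedge(y_i\vee\neg y_j\vee\neg y_k)$, and analogously for $\mathrm{OR}$ and $\mathrm{NOT}$); finally $\alpha$ appends the unit clause $(y_m)$. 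This $\phi=\alpha(C)$ is a 3-CNF of size $\poly(|C|)$ and $\alpha$ runs in polynomial time.

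The structural heart of the argument, which I would prove by induction along the topological order, is that the satisfying assignments of $\phi$ biject with those of $C$: every assignment satisfying all gate clauses must set each $y_i$ to the value $g_i$ computes on $x$, the clause $(y_m)$ then forces $C(x)=1$, and conversely each $x$ with $C(x)=1$ extends uniquely to a satisfying assignment of $\phi$. Given this, I define $\beta(C,x)$ to output $(x,y_1(x),\dots,y_m(x))$ (computed by evaluating $C$ on $x$, hence polynomial time) and $\gamma(\phi,z)$ to output the restriction of $z$ to the $x$-coordinates. Then the two containment conditions of the definition are immediate from the bijection; $\beta(C,\cdot)$ is injective since its first block of coordinates is $x$ itself; $\gamma(\alpha(C),\cdot)$ is injective on $\phi^{-1}(1)$ since the $y$-block of any satisfying assignment is determined by its $x$-block; and $\gamma(\alpha(C),\beta(C,x))=\gamma(\alpha(C),(x,y(x)))=x$, as required.

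I do not expect a real obstacle --- the construction is entirely standard --- but the one point that demands care is exactness: each per-gate clause gadget must have \emph{only} the rows of the gate's truth table as its satisfying assignments, and the fan-in reduction performed in the preprocessing step must introduce its auxiliary wires so that the full $y$-block of a satisfying assignment of $\phi$ remains uniquely pinned down by the $x$-block. If either of these fails, $\phi$ acquires spurious satisfying assignments, which would destroy the injectivity of $\gamma$ and the identity $\gamma(\alpha(C),\beta(C,x))=x$, and hence the reduction would fail to be \emph{invertible} even though it would still be a correct many-one reduction.
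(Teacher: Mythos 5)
Your proposal is correct and matches the paper's approach: the paper simply asserts that the fact follows by inspecting the standard (Tseitin-style) reduction from \textsf{CIRCUIT-SAT} to \textsf{3-CNF-SAT}, and your write-up spells out exactly that verification, with the right attention to the one delicate point (each gate gadget encodes the gate's truth table exactly, so the $y$-block of any satisfying assignment of $\phi$ is uniquely determined by the $x$-block, giving the injectivity of $\gamma$ and the identity $\gamma(\alpha(C),\beta(C,x))=x$).
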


As a corollary, we have the following result.

\begin{corollary}\label{corr:hardness-3-SAT}
If Assumption~\ref{ass:1} holds true, then
there exists an absolute constant $c>0$ such that the class
\textsf{3-CNF} is $(2^{n^c}, 1-2^{-n^c}, 1-2^{-n^c})$-hard for inverse
approximate uniform generation. \end{corollary}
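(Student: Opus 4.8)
The plan is to simply combine Corollary~\ref{cor:inverse-uniform} with Fact~\ref{fact:c-sat}. Corollary~\ref{cor:inverse-uniform} states that, under Assumption~\ref{ass:1}, any function class $\mathcal{C}$ for which there is a polynomial-time ($n^k$-time) invertible Levin reduction from \textsf{CIRCUIT-SAT} to $\mathcal{C}$ is $(2^{n^c}, 1-2^{-n^c}, 1-2^{-n^c})$-hard for inverse approximate uniform generation, for some constant $c>0$ depending only on $\delta$ (from Assumption~\ref{ass:1}) and $k$. Fact~\ref{fact:c-sat} provides exactly the missing hypothesis: there is a polynomial-time invertible Levin reduction from \textsf{CIRCUIT-SAT} to \textsf{3-CNF-SAT}, i.e., to the binary relation $R_{\textsf{3-CNF}}$. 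Instantiating $\mathcal{C} = \textsf{3-CNF}$ (with $k$ the exponent of the polynomial running time of the reduction of Fact~\ref{fact:c-sat}) then yields the claimed hardness, with $c$ the resulting constant.

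Concretely, the steps I would carry out are: (i) recall from the paragraph before Fact~\ref{fact:c-sat} that a 3-CNF formula is a conjunction of clauses of length $3$, and that (as stated in Section~\ref{sec:signature}) saying there is an invertible Levin reduction from a class $\mathcal{C}_1$ to a class $\mathcal{C}_2$ means there is such a reduction between the associated relations $R_{\mathcal{C}_1}$ and $R_{\mathcal{C}_2}$; (ii) invoke Fact~\ref{fact:c-sat} to get the invertible Levin reduction from \textsf{CIRCUIT-SAT} to \textsf{3-CNF-SAT}, running in time $n^k$ for some fixed $k$; (iii) apply Corollary~\ref{cor:inverse-uniform} with $\mathcal{C} = \textsf{3-CNF}$ and this value of $k$, obtaining a constant $c>0$ (depending only on the $\delta$ of Assumption~\ref{ass:1} and on $k$) such that \textsf{3-CNF} is $(2^{n^c}, 1-2^{-n^c}, 1-2^{-n^c})$-hard for inverse approximate uniform generation; (iv) conclude, since $k$ and $\delta$ are absolute, that $c$ is an absolute constant, which is exactly the statement of the corollary.

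There is essentially no technical obstacle here — the corollary is a direct instantiation, and the only thing to be slightly careful about is the bookkeeping of the standard textbook reduction from \textsf{CIRCUIT-SAT} to \textsf{3-CNF-SAT} to confirm it is genuinely an \emph{invertible} Levin reduction in the sense of the definition given earlier (i.e., that the maps $\beta$ and $\gamma$ are injective, polynomial-time computable, map witnesses to witnesses in both directions, and satisfy $\gamma(\alpha(x),\beta(x,y))=y$). This is precisely the content of Fact~\ref{fact:c-sat}, whose verification is routine: the Tseitin-style transformation introduces one fresh variable per gate, and given a satisfying assignment of the circuit one reads off a unique satisfying assignment of the 3-CNF (the gate values), while given a satisfying assignment of the 3-CNF one recovers the original circuit input by projecting onto the original variables — these maps are manifestly injective and mutually inverse in the required sense. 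Once that is granted, the corollary follows immediately with no further work.
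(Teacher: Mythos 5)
Your proof is correct and matches the paper's: the paper also obtains Corollary~\ref{corr:hardness-3-SAT} directly by combining Fact~\ref{fact:c-sat} (the invertible Levin reduction from \textsf{CIRCUIT-SAT} to \textsf{3-CNF-SAT}) with Corollary~\ref{cor:inverse-uniform}. The additional remarks you make about verifying the Tseitin reduction is indeed invertible are accurate but are already folded into Fact~\ref{fact:c-sat} in the paper.
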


Corollary~\ref{corr:hardness-3-SAT} is interesting in light of the
well known fact that the class of all 3-CNF formulas is
efficiently PAC learnable from uniform random examples (in fact
under any distribution).

We next observe that the problem of inverse approximate uniform generation
remains hard even for 3-CNF formulas in which each variable occurs a bounded
number of times. To prove this we will use the fact that
polynomial time invertible Levin reductions compose:
\begin{fact}\label{fac:compose} If there is a polynomial time
invertible Levin reduction from \textsf{CIRCUIT-SAT} to $\mathcal{C}$
and a polynomial time Levin reduction from $\mathcal{C}$ to
$\mathcal{C}_1$, then there is a polynomial time invertible Levin
reduction from \textsf{CIRCUIT-SAT} to $\mathcal{C}_1$.
\end{fact}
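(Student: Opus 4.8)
\noindent The plan is to build the composed reduction explicitly, by composing the instance maps and the witness-preserving maps of the two given reductions. Write $R = R_{\textsf{CIRCUIT-SAT}}$, and let $(\alpha,\beta,\gamma)$ be the (polynomial-time) invertible Levin reduction from $R$ to $R_{\mathcal{C}}$ guaranteed by the hypothesis, so that $(x,y)\in R \Rightarrow (\alpha(x),\beta(x,y))\in R_{\mathcal{C}}$; $(\alpha(x),z)\in R_{\mathcal{C}} \Rightarrow (x,\gamma(\alpha(x),z))\in R$; the maps $\beta,\gamma$ are injective; and $\gamma(\alpha(x),\beta(x,y))=y$. Similarly let $(\alpha',\beta',\gamma')$ be the Levin reduction from $R_{\mathcal{C}}$ to $R_{\mathcal{C}_1}$, with the analogous three properties (we comment below on exactly which part of this structure we invoke; every Levin reduction actually used in Section~\ref{sec:hardness}, including the one of Fact~\ref{fact:c-sat}, is invertible and supplies all of $\alpha',\beta',\gamma'$). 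First I would define the instance map $\alpha'' := \alpha'\circ\alpha$ and the forward witness map $\beta''(x,y) := \beta'(\alpha(x),\beta(x,y))$. For $(x,y)\in R$ we get $(\alpha(x),\beta(x,y))\in R_{\mathcal{C}}$ from the first reduction, and hence $(\alpha'(\alpha(x)),\beta'(\alpha(x),\beta(x,y)))\in R_{\mathcal{C}_1}$ from the second, i.e. $(\alpha''(x),\beta''(x,y))\in R_{\mathcal{C}_1}$, as required for the forward direction.

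Next I would define the backward witness map. Given a pair $(\alpha''(x),z)\in R_{\mathcal{C}_1}$, set $\gamma''(\alpha''(x),z) := \gamma\bigl(\alpha(x),\,\gamma'(\alpha''(x),z)\bigr)$, where $\alpha(x)$ is recovered from $\alpha''(x)=\alpha'(\alpha(x))$ (doable in polynomial time because, for all the concrete reductions at hand, $\alpha'$ is injective and efficiently invertible on its image). Correctness is a two-step chase: since $(\alpha''(x),z)=(\alpha'(\alpha(x)),z)\in R_{\mathcal{C}_1}$, the second reduction gives $(\alpha(x),\gamma'(\alpha''(x),z))\in R_{\mathcal{C}}$, and then the first gives $(x,\gamma''(\alpha''(x),z))\in R$. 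For the inversion identity I would substitute $z=\beta''(x,y)=\beta'(\alpha(x),\beta(x,y))$ and compute $\gamma''(\alpha''(x),\beta''(x,y)) = \gamma\bigl(\alpha(x),\gamma'(\alpha'(\alpha(x)),\beta'(\alpha(x),\beta(x,y)))\bigr) = \gamma(\alpha(x),\beta(x,y)) = y$, applying the inversion identity of the second reduction and then of the first. Injectivity of $\beta''$ and of $\gamma''$ follows since each is a composition of the injective maps $\beta,\beta'$ (resp. $\gamma,\gamma'$) with the injective $\alpha,\alpha'$; and all of $\alpha'',\beta'',\gamma''$ run in time $\mathrm{poly}(t'(n))=\mathrm{poly}(n)$, being a composition of polynomially many evaluations of polynomial-time maps. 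This yields a polynomial-time invertible Levin reduction from \textsf{CIRCUIT-SAT} to $\mathcal{C}_1$.

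The one genuinely delicate point — the one I would treat carefully rather than gloss — is the exact structure one is entitled to assume for the ``polynomial-time Levin reduction from $\mathcal{C}$ to $\mathcal{C}_1$'': to make $\gamma''$ both well-defined and efficiently computable we need (i) a polynomial-time witness-recovery map $\gamma'$ for that reduction and (ii) enough injectivity on instances ($\alpha'$ injective, invertible on its image) to reconstruct $\alpha(x)$ from $\alpha''(x)=\alpha'(\alpha(x))$. Both properties hold for every reduction invoked in this section — they are the standard gadget reductions of NP-completeness, which read off the original instance and witness from the reduced ones — so the statement is precisely what the downstream applications (via Corollary~\ref{cor:inverse-uniform}) need; in the write-up I would either phrase the hypothesis as ``invertible Levin reduction from $\mathcal{C}$ to $\mathcal{C}_1$'' or note explicitly that the composition requires, and the reductions used here supply, the backward witness map. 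Everything else (the two correctness chases, the inversion identity, injectivity of the composite maps, and the polynomial time bound) is routine bookkeeping once the three composite maps are defined as above.
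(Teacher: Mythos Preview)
The paper does not actually give a proof of this fact; it is stated without argument, as a routine observation about composing reductions. Your explicit construction---compose the instance maps as $\alpha''=\alpha'\circ\alpha$, the forward witness maps as $\beta''(x,y)=\beta'(\alpha(x),\beta(x,y))$, and the backward witness maps as $\gamma''=\gamma\circ\gamma'$---is exactly the natural proof, and your two correctness chases and the inversion-identity computation are correct.

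You have also correctly identified the one genuine wrinkle: the hypothesis as stated says only ``Levin reduction'' for the second map, whereas to produce an \emph{invertible} Levin reduction out you need the backward map $\gamma'$ (and its injectivity and inversion identity). This is almost certainly a slip in the paper rather than a substantive weakening: the paper never defines a non-invertible Levin reduction, and in the only place Fact~\ref{fac:compose} is invoked (the proof of Theorem~\ref{thm:hardness-bounded-3-SAT}) the second reduction is explicitly stated to be an invertible Levin reduction. So your proposed fix---read the hypothesis as ``invertible Levin reduction from $\mathcal{C}$ to $\mathcal{C}_1$,'' or note that the concrete reductions used all supply $\gamma'$---is exactly right and matches how the paper uses the fact.

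One small simplification: you go to some trouble to recover $\alpha(x)$ from $\alpha''(x)=\alpha'(\alpha(x))$ by inverting $\alpha'$. In the actual applications the adversary holds $x$ (it computed $\alpha''(x)$ from it), so one can simply cache $\alpha(x)$ rather than invert; alternatively, one can absorb this into the convention that $\gamma''$ is given the original instance. Either way this is formalization, not substance, and your treatment is fine.
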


 The following theorem says that the inverse approximate uniform
generation problem remains hard for the class of all 3-CNF
formulas in which each variable occurs at most 4 times (hereafter
denoted \textsf{3,4-CNF}).

\begin{theorem}\label{thm:hardness-bounded-3-SAT}
If Assumption~\ref{ass:1} holds true, then there
exists an absolute constant $c>0$ such that
\textsf{3,4-CNF-SAT} is $(2^{n^c},
1-2^{-n^c}, 1-2^{-n^c})$-hard for inverse approximate uniform generation.
\end{theorem}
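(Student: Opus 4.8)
The plan is to deduce the theorem from Corollary~\ref{cor:inverse-uniform} by exhibiting a polynomial-time invertible Levin reduction from \textsf{3-CNF-SAT} to \textsf{3,4-CNF-SAT}, and composing it — via Fact~\ref{fac:compose} — with the reduction of Fact~\ref{fact:c-sat}. Thus the whole task reduces to constructing algorithms $(\alpha,\beta,\gamma)$ witnessing an invertible Levin reduction from $R_{\textsf{3-CNF}}$ to $R_{\textsf{3,4-CNF}}$.

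First I would describe $\alpha$, the classical variable-splitting (equality-gadget) reduction. Given a 3-CNF formula $\phi$ over variables $v_1,\dots,v_n$ in which $v_i$ occurs $d_i$ times, introduce fresh variables $v_i^1,\dots,v_i^{d_i}$, let $\phi'$ be obtained from $\phi$ by replacing the $j$-th occurrence of $v_i$ (in some fixed ordering of occurrences) with $v_i^j$, and for each $i$ adjoin the cyclic ``equality'' clauses $(\lnot v_i^j \lor v_i^{j+1})$ for $j=1,\dots,d_i$ (indices mod $d_i$), each padded to exactly three literals by repeating its second literal, i.e.\ $(\lnot v_i^j \lor v_i^{j+1} \lor v_i^{j+1})$. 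Set $\psi := \alpha(\phi)$ to be $\phi'$ together with all these equality clauses. Then each copy $v_i^j$ occurs once in $\phi'$ and, in the cyclic clauses, once negatively (in the $j$-th clause) and twice positively (in the $(j{-}1)$-st clause), for a total of at most four occurrences in $\psi$; every clause of $\psi$ has exactly three literals; and $|\psi| = O(|\phi|)$, so $\alpha$ is polynomial-time and $\psi \in \textsf{3,4-CNF}$.

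Next I would verify parsimony and invertibility. For $\beta$: given $y \in \phi^{-1}(1)$, let $\beta(\phi,y)$ assign every copy $v_i^j$ the value $y(v_i)$; the equality clauses hold because all copies of $v_i$ agree, and each clause of $\phi'$ is a clause of $\phi$ with literals renamed to equal-valued variables, so $\beta(\phi,y)\in\psi^{-1}(1)$. For $\gamma$: given any $z\in\psi^{-1}(1)$, the cyclic implications $v_i^1\Rightarrow v_i^2\Rightarrow\cdots\Rightarrow v_i^{d_i}\Rightarrow v_i^1$ force $z(v_i^1)=\cdots=z(v_i^{d_i})$, so $\gamma(\psi,z)(v_i):=z(v_i^1)$ is well defined and satisfies $\phi$. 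Both $\beta(\phi,\cdot)$ and $\gamma(\psi,\cdot)$ are polynomial-time and injective, $\beta(\phi,\cdot)$ is in fact a bijection from $\phi^{-1}(1)$ onto $\psi^{-1}(1)$, and $\gamma(\alpha(\phi),\beta(\phi,y))=y$ by construction; hence $(\alpha,\beta,\gamma)$ is a polynomial-time invertible Levin reduction. Composing with Fact~\ref{fact:c-sat} via Fact~\ref{fac:compose} gives a polynomial-time invertible Levin reduction from \textsf{CIRCUIT-SAT} to \textsf{3,4-CNF}, and Corollary~\ref{cor:inverse-uniform} then yields the claimed $(2^{n^c},1-2^{-n^c},1-2^{-n^c})$-hardness for some absolute constant $c>0$.

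The only step that needs genuine care — and, I suspect, the reason the occurrence bound in the statement is $4$ rather than $3$ — is the gadget bookkeeping: one must pad the two-literal equality clauses up to length three \emph{without} reintroducing a high-occurrence variable, and then recount (the padded clause charges its successor copy twice), checking the total stays $\le 4$. Everything else (polynomial size, the chain-of-implications forcing equality, the injectivity of $\beta,\gamma$) is routine. I would also note that under the convention allowing clauses of length at most three no padding is needed and the bound improves to three occurrences per variable.
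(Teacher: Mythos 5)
Your proposal is correct and follows the same route as the paper: reduce \textsf{3-CNF-SAT} to \textsf{3,4-CNF-SAT} via a polynomial-time invertible Levin reduction, compose with Fact~\ref{fact:c-sat} using Fact~\ref{fac:compose}, and invoke Corollary~\ref{cor:inverse-uniform}. The only difference is that the paper simply cites Tovey~\cite{Tov:84} for the invertible Levin reduction from \textsf{3-CNF-SAT} to \textsf{3,4-CNF-SAT}, whereas you construct it explicitly via the cyclic variable-splitting gadget; your explicit construction is correct (the occurrence count of $4$ per copy --- one in $\phi'$, one negative and two positive in the padded equality clauses --- checks out, the cyclic implications force all copies equal, and $\beta(\phi,\cdot)$ is a bijection onto $\psi^{-1}(1)$ with $\gamma$ its inverse), so this is a self-contained version of the same argument.
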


\begin{proof} Tovey~\cite{Tov:84} shows that there is a polynomial time
invertible Levin reduction from \textsf{3-CNF-SAT} to \textsf{3,4-CNF-SAT}.
Using Fact~\ref{fac:compose}, we have a polynomial time Levin reduction
from \textsf{CIRCUIT-SAT} to \textsf{3,4-CNF-SAT}.  Now the result
follows from Corollary~\ref{cor:inverse-uniform}
\end{proof}

The next theorem shows that the class of all intersections
of two halfspaces over $n$ Boolean variables is
hard for inverse approximate uniform generation.

\begin{theorem}\label{thm:degree2hardness}
If Assumption~\ref{ass:1} holds true, then there exists an
absolute constant $c>0$ such that
${\cal C}=\{$all intersections of two halfspaces
over $n$ Boolean variables$\}$
is $(2^{n^c}, 1- 2^{-n^c}, 1-2^{-n^c})$-hard for inverse approximate uniform
generation.
\end{theorem}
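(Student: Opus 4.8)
By Corollary~\ref{cor:inverse-uniform}, it suffices to exhibit a polynomial-time invertible Levin reduction from \textsf{CIRCUIT-SAT} to $R_{\mathcal C}$, where $\mathcal C$ is the class of intersections of two halfspaces over $n$ Boolean variables. By Fact~\ref{fact:c-sat} there is such a reduction from \textsf{CIRCUIT-SAT} to \textsf{3-CNF-SAT}, so by the composition property (Fact~\ref{fac:compose}) the plan is to give a polynomial-time invertible Levin reduction from \textsf{3-CNF-SAT} to $R_{\mathcal C}$, and then compose. I would route this through \textsf{SUBSET-SUM}: first reduce \textsf{3-CNF-SAT} \emph{parsimoniously} to \textsf{SUBSET-SUM}, and then observe that \textsf{SUBSET-SUM} is essentially a special case of membership in an intersection of two halfspaces.

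First I would note why a single halfspace does not suffice: satisfiability of one halfspace $\sum_i w_i x_i \ge \theta$ over $\{0,1\}^n$ is trivially in $\mathbf P$ (set $x_i = 1$ exactly when $w_i > 0$), so the two-halfspace structure is genuinely needed. For the \textsf{SUBSET-SUM}-to-$\mathcal C$ step: a \textsf{SUBSET-SUM} instance $(a_1,\dots,a_N;A)$ is satisfiable iff there is $y \in \{0,1\}^N$ with $\sum_i a_i y_i = A$, which holds iff $y$ satisfies the conjunction of the halfspaces $\{\sum_i a_i y_i \ge A\}$ and $\{\sum_i (-a_i) y_i \ge -A\}$; the identity map on witnesses furnishes a poly-time, injective, mutually inverse pair $\beta,\gamma$, so $R_{\textsf{SUBSET-SUM}}$ invertibly Levin-reduces to $R_{\mathcal C}$. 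Since the $a_i$ produced below have $\poly(n)$ bits, the resulting weight vectors give $\poly(n)$-size representations of functions in $\mathcal C$, so the reduction is genuinely polynomial-time.

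The main work is making the \textsf{3-CNF-SAT}-to-\textsf{SUBSET-SUM} reduction parsimonious (the textbook version is not, since its two equal-weight ``slack'' numbers per clause can be chosen in more than one way when a clause has exactly two satisfied literals). I would fix this by using, for each clause $C_j$ with clause-digit $e_j$ in a base $b$ large enough to preclude carries (e.g.\ $b = 10$: the maximum digit sum is at most $6$), two slack numbers of distinct values $1\cdot b^{e_j}$ and $2\cdot b^{e_j}$, whose subset-sums realize each value in $\{0,1,2,3\}$ \emph{uniquely}; setting target digit $e_j$ equal to $4$ then forces the number $k_j \in \{1,2,3\}$ of satisfied literals in $C_j$ to be positive with a unique compatible slack choice. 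Combined with the usual variable digits $d_i$ (numbers $t_i, f_i$ each with a $1$ in digit $d_i$, target digit $d_i$ equal to $1$, forcing exactly one of $t_i, f_i$), this yields an exact bijection between satisfying assignments of the \textsf{3-CNF} formula and subsets summing to the target. The forward map $\beta$ (assignment $\mapsto$ corresponding subset) and backward map $\gamma$ (read off each variable from whether $t_i$ or $f_i$ lies in the subset) are poly-time, injective, and mutually inverse, so this is an invertible Levin reduction. Composing the three reductions and invoking Corollary~\ref{cor:inverse-uniform} gives the claimed $(2^{n^c}, 1-2^{-n^c}, 1-2^{-n^c})$-hardness for some constant $c > 0$.

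I expect the only delicate point to be verifying parsimony and invertibility of the \textsf{3-CNF-SAT}-to-\textsf{SUBSET-SUM} step --- in particular, checking that the distinct-weight slack gadget both preserves the ``at-least-one-satisfied-literal'' semantics and kills the multiplicity of witnesses, and that the base is large enough that no carries ever occur across digits (which would break the clean per-digit analysis). Everything else --- the single-halfspace triviality remark, the $\textsf{SUBSET-SUM} \to \mathcal C$ step, the composition via Fact~\ref{fac:compose}, and the final application of Corollary~\ref{cor:inverse-uniform} --- is routine.
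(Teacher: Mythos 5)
Your proposal is correct, and it follows the same overall scaffold as the paper (compose CIRCUIT-SAT $\to$ 3-CNF-SAT $\to$ SUBSET-SUM $\to$ intersection of two halfspaces, then invoke Corollary~\ref{cor:inverse-uniform}), but the middle leg is done genuinely differently. The paper routes the 3-SAT $\to$ SUBSET-SUM step through an intermediate problem: it cites Schaefer's reduction from \textsf{3-SAT} to \textsf{1-in-3-SAT} (claimed to be an invertible Levin reduction) and then observes that the textbook SUBSET-SUM encoding applied to a \textsf{1-in-3-SAT} instance is automatically parsimonious, since the ``exactly one satisfied literal'' semantics kills the slack ambiguity. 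You instead stay with \textsf{3-SAT} and repair the textbook gadget directly: replacing the two equal-weight slack numbers per clause with distinct weights $1\cdot b^{e_j}$ and $2\cdot b^{e_j}$ and raising the clause-digit target to $4$ makes every realizable slack value in $\{0,1,2,3\}$ attainable in exactly one way, so each of $k_j\in\{1,2,3\}$ satisfied literals has a unique compatible slack subset while $k_j=0$ is excluded (since $4>1+2$); your carry analysis (max digit sum $6<10$) is also correct. Both approaches yield a valid invertible Levin reduction; yours is more self-contained in that it avoids having to verify the invertibility/parsimony properties of Schaefer's construction, at the modest cost of checking a small custom gadget. The final step --- embedding $\sum_i a_i y_i = A$ as the intersection of $\sum_i a_i y_i \ge A$ and $\sum_i (-a_i) y_i \ge -A$ with the identity witness map --- coincides exactly with the paper's.
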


\begin{proof} We recall that the \textsf{SUBSET-SUM} problem is defined
as follows : An instance $\Phi$ is defined by positive integers $w_1,
\ldots, w_n, s >0$. A satisfying assignment for this instance is given
by $x \in \{0,1\}^n$ such that $\sum_{i=1}^n w_i x_i =s$. It is
well known that the \textsf{SUBSET-SUM} problem is NP-complete and it is
folklore that there is a invertible Levin reduction from \textsf{3-SAT}
to \textsf{SUBSET-SUM}.  However, since it is somewhat difficult to find this
reduction explicitly in the literature, we outline such a reduction.

To describe the reduction, we first define \textsf{1-in-3-SAT}. An instance
$\Psi$ of \textsf{1-in-3-SAT} is defined over
Boolean variables $x_1,
\ldots, x_n$ with the following constraints : The $i^{th}$ constraint is
defined by a subset of at most three literals over $x_1, \ldots,
x_n$.  An assignment to $x_1, \ldots, x_n$ satisfies $\Psi$ if and only
if for every constraint there is exactly one literal which is set to
true. Schaefer~\cite{Sch78} showed that \textsf{3-SAT} reduces to
\textsf{1-in-3-SAT} in polynomial time, and the reduction can be easily
verified to be an invertible Levin reduction. Now the
standard textbook reduction from \textsf{3-SAT} to \textsf{SUBSET-SUM}
(which can be found e.g. in \cite{Papadimtriou}) applied to instances
of \textsf{1-in-3-SAT}, can be easily seen to be a
polynomial time invertible Levin reduction. By Fact~\ref{fac:compose},
we thus have a polynomial time invertible Levin reduction from
\textsf{3-CNF-SAT} to \textsf{SUBSET-SUM}.

With this reduction in hand, it remains only to observe that
that any instance of \textsf{SUBSET-SUM} is also an instance of
``intersection of two halfspaces,''
simply because $\sum_{i=1}^n w_i x_i =s$
if and only if $s \leq \sum_{i=1}^n w_i \cdot x_i  \leq s.$
Thus, there is a
polynomial time invertible Levin reduction from \textsf{3-CNF-SAT} to
the class of all intersections of two halfspaces. This finishes the proof.
 \end{proof}

\subsubsection{A hardness result where the satisfiability problem is in $P$.}

So far all of our hardness results have been for classes $\mathcal{C}$ 
of NP-complete languages.  As Theorem~\ref{thm:signatures} requires a 
reduction from \textsf{CIRCUIT-SAT} to $\mathcal{C}$, this theorem 
cannot be directly used to prove hardness for classes $\mathcal{C}$ 
which are not NP-hard. We next give an extension of 
Theorem~\ref{thm:signatures} which can apply to classes $\mathcal{C}$ 
for which the satisfiability problem is in $P$. Using this result we 
will show hardness of inverse approximate uniform generation for 
\textsf{MONOTONE-2-CNF-SAT}. (Recall that a monotone 2-CNF formula is a 
conjunction of clauses of the form $x_i \vee x_j$, with no negations; 
such a formula is trivially satisfiable by the all-true assignment.)

We begin by defining by a notion of invertible one-many reductions that 
we will need.

 \begin{definition}\label{def:one-many}

\textsf{CIRCUIT-SAT} is said to have an
\emph{$\eta$-almost invertible one-many reduction to a function class
$\mathcal{C}$} if the following conditions hold:

\begin{itemize}

\item There is a polynomial time computable function $f$ such that
given an instance $\Phi$ of \textsf{CIRCUIT-SAT} (i.e.
$\Phi$ is a satisfiable circuit),
$\Psi = f(\Phi)$ is an instance of $\mathcal{C}$
(i.e. $\Psi \in \mathcal{C}$ and $\Psi$ is satisfiable).

\item Fix any instance $\Phi$ of \textsf{CIRCUIT-SAT} and let
$\mathcal{A}=\Psi^{-1}(1)$ denote the set of satisfying assignments of $\Psi$.
Then $\mathcal{A}$ can be partitioned into sets
$\mathcal{A}_1$ and $\mathcal{A}_2$ such that $|\mathcal{A}_2|/|\mathcal{A}|
\le \eta$ and there is an efficiently computable function
$g: \mathcal{A}_1 \to \Phi^{-1}(1)$
such that
$g(x)$ is a satisfying assignment of $\Phi$
for every $x \in \mathcal{A}_1$.

\item For every $y$ which is a satisfying assignment of $\Phi$, the number
of pre-images of $y$ under $g$ is exactly the same,
and the uniform distribution over $g^{-1}(y)$ is polynomial time samplable.

\end{itemize}

\end{definition}

We next state the following simple claim which will be helpful later.

\begin{claim}\label{clm:stat-dis1}
Suppose there is an $\eta$-almost invertible one-many reduction
from \textsf{CIRCUIT-SAT} to $\mathcal{C}$.
Let $f$ and $g$ be the functions from Definition~\ref{def:one-many}.
Let $\Phi$ be an
instance of \textsf{CIRCUIT-SAT} and let
$\Psi = f(\Phi)$ be the corresponding instance of $\mathcal{C}$.
Define distributions $D_1$ and $D_2$ as follows :

 \begin{itemize}

 \item A draw from $D_1$ is obtained by choosing
$y$ uniformly at random from $\Phi^{-1}(1)$ and then
outputting $z$ uniformly at random from $g^{-1}(y)$.

\item A draw from $D_2$ is obtained by choosing
$z'$ uniformly at random from $\Psi^{-1}(1)$.

\end{itemize}

Then we have $d_{TV}(D_1, D_2) \le \eta$.

 \end{claim}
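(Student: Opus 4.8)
\textbf{Proof proposal for Claim~\ref{clm:stat-dis1}.}
The plan is to show that $D_1$ is in fact \emph{exactly} the uniform distribution over $\mathcal{A}_1$, and then to observe that $D_2$ is by definition the uniform distribution over $\mathcal{A} = \mathcal{A}_1 \cup \mathcal{A}_2$; since $\mathcal{A}_1$ and $\mathcal{A}_2$ partition $\mathcal{A}$ with $|\mathcal{A}_2|/|\mathcal{A}| \le \eta$ (the second property of Definition~\ref{def:one-many}), the total variation distance between $\mathcal{U}_{\mathcal{A}_1}$ and $\mathcal{U}_{\mathcal{A}}$ is at most $\eta$, which gives the claim.

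First I would establish $D_1 \equiv \mathcal{U}_{\mathcal{A}_1}$. By the third property in Definition~\ref{def:one-many}, every $y \in \Phi^{-1}(1)$ has exactly the same number, say $N$, of pre-images under $g$; moreover $g$ is defined on all of $\mathcal{A}_1$ and maps it onto $\Phi^{-1}(1)$. Hence the blocks $\{g^{-1}(y)\}_{y \in \Phi^{-1}(1)}$ form a partition of $\mathcal{A}_1$ into pieces each of size $N$, so $|\mathcal{A}_1| = N\cdot|\Phi^{-1}(1)|$. A draw from $D_1$ picks $y$ uniformly (probability $1/|\Phi^{-1}(1)|$) and then $z$ uniformly from $g^{-1}(y)$ (probability $1/N$). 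Therefore each $z \in \mathcal{A}_1$ receives mass $\frac{1}{|\Phi^{-1}(1)|}\cdot\frac{1}{N} = \frac{1}{|\mathcal{A}_1|}$, while every $z \notin \mathcal{A}_1$ receives zero mass; i.e.\ $D_1 \equiv \mathcal{U}_{\mathcal{A}_1}$.

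Then I would finish with the elementary computation of $\dtv(\mathcal{U}_{\mathcal{A}_1},\mathcal{U}_{\mathcal{A}})$. Since $\mathcal{A}_1 \subseteq \mathcal{A}$, for $z \in \mathcal{A}_1$ we have $D_1(z) - D_2(z) = \frac{1}{|\mathcal{A}_1|} - \frac{1}{|\mathcal{A}|} \ge 0$, and for $z \in \mathcal{A}_2$ we have $D_2(z) - D_1(z) = \frac{1}{|\mathcal{A}|}$. Summing over each part, $\sum_{z \in \mathcal{A}_1}(D_1(z) - D_2(z)) = 1 - \frac{|\mathcal{A}_1|}{|\mathcal{A}|} = \frac{|\mathcal{A}_2|}{|\mathcal{A}|}$ and $\sum_{z \in \mathcal{A}_2}(D_2(z) - D_1(z)) = \frac{|\mathcal{A}_2|}{|\mathcal{A}|}$, so $\dtv(D_1,D_2) = \frac{1}{2}\cdot 2\cdot\frac{|\mathcal{A}_2|}{|\mathcal{A}|} = \frac{|\mathcal{A}_2|}{|\mathcal{A}|} \le \eta$. (Equivalently one can invoke Fact~\ref{fact:dtv-uniform} with $A = \mathcal{A}_1$, $B = \mathcal{A}$.)

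This claim is a routine bookkeeping fact, so I do not expect a real obstacle. The only point needing a little care is the step where I use the ``equal number of pre-images'' and ``onto'' conditions together to conclude that the $g^{-1}(y)$ are an exact equal-size partition of $\mathcal{A}_1$ --- this is precisely what upgrades $D_1$ from ``close to uniform on $\mathcal{A}_1$'' to ``literally uniform on $\mathcal{A}_1$,'' which is needed for the clean $\eta$ bound rather than a bound with extra error terms.
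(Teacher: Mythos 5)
Your proof is correct and follows exactly the same approach as the paper: identify $D_1$ with $\mathcal{U}_{\mathcal{A}_1}$, identify $D_2$ with $\mathcal{U}_{\mathcal{A}}$, and bound the total variation distance by $|\mathcal{A}_2|/|\mathcal{A}| \le \eta$. The paper states this as ``an immediate consequence'' without spelling out the uniformity of $D_1$; your version makes explicit the use of the equal-pre-image-size condition from Definition~\ref{def:one-many}, which is the right thing to highlight.
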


 \begin{proof}
This is an immediate consequence of the fact that
$D_1$ is uniform over the set $\mathcal{A}_1$ while $D_2$ is
uniform over the set $\mathcal{A}$ (from Definition~\ref{def:one-many}).
\end{proof}

We next have the following extension of
Corollary~\ref{cor:inverse-uniform}.

\begin{theorem}\label{thm:monotone}
Suppose that Assumption~\ref{ass:1} holds true.
Then if $\mathcal{C}$ is a function class such that there is an $\eta$-almost
invertible one-many reduction (for $\eta= 2^{-\Omega(n)}$) from
\textsf{CIRCUIT-SAT} to $\mathcal{C}$, then $\mathcal{C}$ is $(2^{n^c},
1-2^{-n^c}, 1-2^{-n^c})$-hard for inverse approximate uniform generation
for some absolute constant $c>0.$
\end{theorem}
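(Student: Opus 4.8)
The plan is to adapt the proof of Theorem~\ref{thm:signatures} by replacing the invertible Levin reduction with the $\eta$-almost invertible one-many reduction of Definition~\ref{def:one-many}, and to instantiate everything with the $(0,0)$-special unique signature scheme $(G,S,V)$ of Theorem~\ref{thm:unique-sign} (which exists under Assumption~\ref{ass:1} and is $(t=2^{n^\delta},\epsilon=1/t)$-RMA secure). Suppose toward a contradiction that $\mathcal{C}$ has an inverse approximate uniform generation algorithm $A_\inv$ running in time $t_1(n)=2^{n^c}$ that, with probability $1-\kappa_2$, outputs a sampler whose distribution is $\kappa_1$-close to the target, where $\kappa_1=\kappa_2=1-2^{-n^c}$. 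I will build an adversary that forges a signature from polynomially many random signed messages, contradicting RMA security.

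First I would fix a key pair $(pk,sk)\leftarrow G(1^n)$; by Theorem~\ref{thm:unique-sign} it is $(0,0)$-special and $S$ is deterministic, so a uniformly random message $m\in_U\mathcal{M}$ together with its unique signature $\sigma_m$ is uniformly distributed over the satisfying assignments of the verification circuit $V_{pk}(m,\sigma):=V(m,pk,\sigma)$, a satisfiable \textsf{CIRCUIT-SAT} instance. Applying the one-many reduction $f$ yields $\Psi_{pk}=f(V_{pk})\in\mathcal{C}$ together with the partition $\Psi_{pk}^{-1}(1)=\mathcal{A}_1\cup\mathcal{A}_2$ (with $|\mathcal{A}_2|/|\Psi_{pk}^{-1}(1)|\le\eta$), the map $g:\mathcal{A}_1\to V_{pk}^{-1}(1)$, and samplers for the sets $g^{-1}(y)$. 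Given the $t'(n)$ signed messages $(m_i,\sigma_i)$ from the RMA experiment, the adversary draws $z_i$ uniformly from $g^{-1}((m_i,\sigma_i))$; by Claim~\ref{clm:stat-dis1} each $z_i$ is distributed within total variation distance $\eta$ of $\U_{\Psi_{pk}^{-1}(1)}$, so $(z_1,\dots,z_{t'})$ is within $t'(n)\cdot\eta$ of $t'(n)$ i.i.d.\ draws from $\U_{\Psi_{pk}^{-1}(1)}$. Since $\eta=2^{-\Omega(n)}$ and $t'(n)$ is polynomial, running $A_\inv$ on $z_1,\dots,z_{t'(n)}$ still succeeds with probability at least $1-\kappa_2-t'(n)\eta\ge(1-\kappa_2)/2$, producing a sampler whose output distribution $Z$ satisfies $\dtv(Z,\U_{\Psi_{pk}^{-1}(1)})\le\kappa_1$.

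To extract a forgery I would draw $z\sim Z$. Since $\U_{\Psi_{pk}^{-1}(1)}$ puts mass at least $1-\eta$ on $\mathcal{A}_1$, with probability at least $1-\kappa_1-\eta$ we get $z\in\mathcal{A}_1$, so $(m,\sigma):=g(z)$ is a valid message–signature pair for $V_{pk}$. Moreover, up to the variation losses already tallied (an additive $O(\kappa_1+\eta)$), $z$ behaves like a genuine uniform draw from $\Psi_{pk}^{-1}(1)$; then the equal-preimage-size property of $g$ forces $g(z)$ to be uniform over $V_{pk}^{-1}(1)$, hence $m$ uniform over $\mathcal{M}$ with $|\mathcal{M}|=2^{n^\delta}$, so $m\notin\{m_1,\dots,m_{t'(n)}\}$ except with probability $t'(n)/2^{n^\delta}$. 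Collecting the failure probabilities, the adversary outputs a fresh valid $(m,\sigma)$ with probability at least (roughly) $(1-\kappa_1)(1-\kappa_2)/4-2^{-\Omega(n)}$, which for a small enough constant $c>0$ exceeds $\epsilon=1/t=2^{-n^\delta}$; its running time is $t_1(\mathrm{poly}(n))+\mathrm{poly}(n)\le t=2^{n^\delta}$. Applying $\gamma$-style inversion (here just $g$) this contradicts $(t,\epsilon)$-RMA security, proving the theorem. Finally this general result will be applied by exhibiting an $\eta$-almost invertible one-many reduction from \textsf{CIRCUIT-SAT} to \textsf{MONOTONE-2-CNF-SAT}.

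\textbf{Main obstacle.} As in Theorem~\ref{thm:signatures}, the crux is the statistical bookkeeping, now complicated by the many-to-one nature of the reduction. One must check both that feeding $A_\inv$ the perturbed samples $z_i$ (only $\eta$-close per coordinate to true uniform draws on $\Psi_{pk}^{-1}(1)$) does not spoil its guarantee, and — in the reverse direction — that pushing the learned sampler's output through $g$ still yields a uniformly random, hence fresh, message. Both directions rely on the three structural conditions of Definition~\ref{def:one-many}: samplability of $g^{-1}(y)$ for the forward simulation, equal preimage sizes for preservation of uniformity, and the $\eta$-small junk set $\mathcal{A}_2$ to absorb satisfying assignments of $\Psi_{pk}$ that do not map back to $V_{pk}^{-1}(1)$.
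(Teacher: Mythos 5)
Your proposal is correct and follows essentially the same route as the paper's proof: instantiate with the $(0,0)$-special unique-signature scheme of Theorem~\ref{thm:unique-sign}, push the received $(m_i,\sigma_i)$ pairs backward through the one-many map via the samplability condition to get near-uniform samples on $\Psi_{pk}^{-1}(1)$ (invoking Claim~\ref{clm:stat-dis1}), feed these to $A_{\inv}$, and push the learned sampler's output forward through $g$ (using the equal-preimage-size property) to forge a fresh signed message. The only cosmetic difference is notational (you write $t'(n)$ for what the paper calls $t_1$, the number of message--signature pairs given to the adversary, which is set to the running time of $A_{\inv}$ rather than the polynomial reduction time); the statistical bookkeeping and the final contradiction with $(t,\epsilon)$-RMA security are the same.
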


\begin{proof}

The proof is similar to the proof of Corollary~\ref{cor:inverse-uniform}.
Assume towards a contradiction that there is an algorithm for inverse
approximation uniform generation $A_{\inv}$ for $\mathcal{C}$ which runs
in time $t_1$ such that with probability \new{$1-\kappa_2$}, the output
distribution is $\kappa_1$-close to the target distribution. (We will set
$t_1$, $\kappa_1$ and $\kappa_2$ later to $2^{n^c}$, $1-2^{-n^c}$ and
$1-2^{-n^c}$ respectively.) 
\ignore{But for the moment, we will work with $t_1$,
$\kappa_1$ and $\kappa_2$.
}

Let $(G,S,V)$ be the RMA-secure signature scheme constructed in
Theorem~\ref{thm:unique-sign}. Note that $(G,S,V)$ is a 
$(T, \epsilon)$-RMA secure signature scheme where 
$T = 2^{n^{\delta}}$, $\epsilon = 1/T$
and $|\mathcal{M}| = 2^{n^\mu}$ for constant $\delta, \mu>0$.  Let
$(pk,sk)$ be a choice of key pair. We will us $A_{\inv}$ to contradict
the security of $(G,S,V)$.  Towards this, consider the function $V_{pk}
: \mathcal{M} \times \{0,1\}^{\ast} \rightarrow \{0,1\}$ defined as
$V_{pk}(m,\sigma) = V(m,pk,\sigma)$. Clearly, $V_{pk}$ is an instance of
\textsf{CIRCUIT-SAT}.  Consider the $\eta$-invertible one-many reduction
from \textsf{CIRCUIT-SAT} to $\mathcal{C}$. Let $\alpha$ and $\beta$ 
have the
same meaning as in Definition~\ref{def:one-many}. Let $\Psi = \alpha(V_{pk})$
and let $\mathcal{A}$, $\mathcal{A}_1$ and $\mathcal{A}_2$ have the same
meaning as in Definition~\ref{def:one-many}. The adversary receives
message-signature pairs $(m_1, \sigma_1)$ $\ldots$ $(m_{t_1},
\sigma_{t_1})$ where $m_1, \ldots, m_{t_1}$ are chosen independently at
random from $\mathcal{M}$. For any $i$, $(m_i, \sigma_i)$ is a
satisfying assignment of $V_{pk}$. By definition, in time $t_2 = t_1
\cdot \poly(n)$, the adversary can sample $(z_1, \ldots, z_{t_1})$ such
that $z_1, \ldots, z_{t_1}$ are independent and $z_i \sim
\mathcal{U}_{\beta^{-1}(m_i,\sigma_i)}$. Note that this means that each
$z_i$ is an independent sample from $\mathcal{A}_1$ and $|z_i| =
\poly(n)$. Note that $(z_1, \ldots, z_{t_1})$ is a $t_1$-fold product
distribution such that if $D'$ denotes the distribution of $z_i$, then
by Claim~\ref{clm:stat-dis1}, $d_{TV}( D' , \mathcal{U}_{\Psi^{-1}(1)} )
\le \eta$. Hence, if $D$ is the distribution of $(z_1, \ldots,
z_{t_1})$, then $d_{TV}(D, \mathcal{U}^t_{\Psi^{-1}(1)}) \le t_1 \eta$.

Hence, the adversary can now run $A_{rec}$ on the samples $z_1, \ldots,
z_{t_1}$ and as long as $1-\kappa_2 - t_1 \eta \ge (1-\kappa_2)/2$, succeeds
in producing a sampler with probability $(1-\kappa_2)/2$ whose output
distribution (call it $Z$) is $\kappa_1$ close to the distribution
$\mathcal{U}_{\Psi^{-1}(1)}$. Note that as $\eta = 2^{-\Omega(n)}$, for
any $c>0$, $t_1 = 2^{n^c}$ and $\kappa_2 =1- 2^{-n^c}$ satisfies this
condition. Hence, we get that $d_{TV}(Z, D') \le  \kappa_1 + \eta$.
Now, observe that 
$$\Pr_{\rho\in D'} [\beta(\rho) = (m, \sigma)
\textrm{ and } m \not =m_i] = 1 - \frac{t_1}{|\mathcal{M}|}.$$ 
The above
uses the fact that every element in the range of $\beta$ has the same number
of pre-images. This of course implies that

$$ \Pr_{\rho\in Z} [\beta(\rho) = (m,\sigma) \textrm{ and } m \not
=m_i] \ge 1 - \frac{t_1}{|\mathcal{M}|} - ( \kappa_1 + \eta).
$$

Again as long as $\kappa_1 \le 1 - 2 ( \eta + t_1/|\mathcal{M}| )$, the
adversary succeeds in getting a valid message signature pair
$(m,\sigma)$ with $m \not = m_i$ for any $1 \le i \le t_1$ with probability $(1-\kappa_1)/2$. Again, we
can ensure $\kappa_1 \le 1- 2 ( \eta + t_1/|\mathcal{M}| )$ by choosing $c$
sufficiently small compared to $\mu$.
The total probability of success is $(1-\kappa_1 )(1-\kappa_2)/ 4$ and the 
total running time  is $t_1( \poly(n)) + \poly(n)$. Again if $c$ is sufficiently small compared to $\mu$ and $\delta$, then the 
total running time is at most
$t_1( \poly(n)) + \poly(n) < T$ and the success probability is at least
$(1-\kappa_1) (1-\kappa_2) / 4 > \epsilon,$ resulting in a contradiction.
\end{proof}

We now demonstrate a polynomial time $\eta$-invertible one-many
reduction from \textsf{CIRCUIT-SAT} to \textsf{MONOTONE-2-CNF-SAT} for
$\eta= 2^{-\Omega(n)}$. The reduction uses the ``blow-up" idea
used to prove hardness of approximate counting for
\textsf{MONOTONE-2-CNF-SAT} in \cite{JVV86}. We will closely follow the
instantiation of this technique in \cite{Watson:12}.

 \begin{lemma}\label{lem:blow-up}

 There is a polynomial time $\eta$-almost invertible one-many reduction
from \textsf{CIRCUIT-SAT} to \textsf{MONOTONE-2-CNF-SAT} where $\eta
=2^{-\Omega(n)}$.

 \end{lemma}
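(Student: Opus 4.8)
The plan is to compose four reductions, following the ``blow-up'' method of \cite{JVV86} as instantiated in \cite{Watson:12}. Given a satisfiable circuit $\Phi$, I would first apply the (parsimonious) invertible Levin reduction of Fact~\ref{fact:c-sat} to obtain a satisfiable $3$-CNF $\phi$, with variables $v_1,\dots,v_a$ and clauses $C_1,\dots,C_b$, together with efficiently computable maps $\beta_1,\gamma_1$ that are mutually inverse bijections between $\Phi^{-1}(1)$ and $\phi^{-1}(1)$. Next I would build from $\phi$ a graph $G$ with one ``variable edge'' $\{a_i,b_i\}$ per $v_i$ (convention: $a_i$ lies in a maximum independent set iff $v_i$ is set true) and, for each clause, a clause gadget $K_j$ on four mutually adjacent vertices, wired to the variable vertices so that for any assignment $\sigma$ exactly one vertex of $K_j$ can be adjoined to the partial set picked out by the variable edges when $\sigma\models C_j$, and no vertex can be adjoined when $\sigma$ falsifies $C_j$. (Three of the four vertices detect ``$\sigma$ falsifies the cyclically previous literal of $C_j$ and satisfies the current one,'' the fourth detects ``$\sigma$ satisfies all three literals of $C_j$.'') The purpose of this design is rigidity: a set of the maximum possible size $a+b$ must use exactly one vertex per variable edge, yielding a full assignment $\sigma$, and exactly one vertex per $K_j$, which is forced by $\sigma$ and exists iff $\sigma\models\phi$; hence the maximum independent sets of $G$ are in an efficiently computable, efficiently invertible \emph{bijection} with $\phi^{-1}(1)$.

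I would then blow up: fix $t:=|V(G)|+n+1$ and let $G^{(t)}$ replace each vertex $v$ of $G$ by an edgeless ``blob'' $B_v$ of $t$ copies, joining $B_u$ to $B_v$ completely whenever $uv\in E(G)$. The standard observation is that the independent sets of $G^{(t)}$ are exactly the sets $S=\bigcup_{v\in W}S_v$ with $W$ an independent set of $G$ and $\emptyset\neq S_v\subseteq B_v$, so $\#\mathrm{IS}(G^{(t)})=\sum_{W\in\mathrm{IS}(G)}(2^t-1)^{|W|}$, and those $S$ whose ``support'' $W(S):=\{v: S\cap B_v\neq\emptyset\}$ is a \emph{maximum} independent set of $G$ dominate: they account for at least a $1-2^{|V(G)|}/(2^t-1)\geq 1-2^{-n}$ fraction of all independent sets of $G^{(t)}$. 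Finally I would let $\Psi=f(\Phi)$ be the monotone $2$-CNF with one variable $y_w$ per vertex $w$ of $G^{(t)}$ and one clause $(y_u\vee y_w)$ per edge, so that $\Psi^{-1}(1)$ is exactly the set of vertex covers of $G^{(t)}$ (complement of an independent set), $\Psi$ is a satisfiable monotone $2$-CNF, and the whole map $\Phi\mapsto\Psi$ is polynomial time.

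To finish I would verify the three conditions of Definition~\ref{def:one-many}. I would take $\mathcal{A}=\Psi^{-1}(1)$, let $\mathcal{A}_1$ be those vertex covers whose complementary independent set $S$ has support $W(S)$ equal to a maximum independent set of $G$ (decidable in polynomial time, since $\alpha(G)=a+b$ when $\phi$ is satisfiable), and $\mathcal{A}_2=\mathcal{A}\setminus\mathcal{A}_1$; the estimate above gives $|\mathcal{A}_2|/|\mathcal{A}|\leq 2^{-n}=2^{-\Omega(n)}$. For the map, given $C\in\mathcal{A}_1$ I would complement to get $S$, read off $W(S)$ and hence (by the rigidity of $G$) the assignment $\sigma_\phi\in\phi^{-1}(1)$ it encodes, and output $g(C):=\gamma_1(\Phi,\sigma_\phi)\in\Phi^{-1}(1)$; this is polynomial time and satisfies $\Phi$. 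Since $C\mapsto S\mapsto W(S)\mapsto\sigma_\phi\mapsto\gamma_1(\Phi,\sigma_\phi)$ is, apart from the free choice of the nonempty subsets $S_v\subseteq B_v$ for $v\in W(S)$, a chain of bijections, each fiber $g^{-1}(y)$ contains exactly $(2^t-1)^{a+b}$ points, a number independent of $y$; and the uniform distribution on a fiber is obtained by drawing each $S_v$ as an independent uniform nonempty subset of $B_v$, hence is polynomial-time samplable.

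The main obstacle I expect is the second step: verifying carefully that the four-vertex clause gadget $K_j$ forces the maximum independent sets of $G$ to correspond \emph{bijectively} — not merely surjectively, as in the textbook $3$-SAT-to-independent-set reduction — to the satisfying assignments of $\phi$, since the equal-fiber requirement of Definition~\ref{def:one-many} fails for the usual construction (where the fiber size would be $\prod_j(\#\text{true literals of }C_j)$). The remaining pieces — the counting identity for $G^{(t)}$, the choice of $t$, and the efficient invertibility of $g$ — are routine.
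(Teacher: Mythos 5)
Your proposal is correct and follows the same high-level blueprint as the paper: a parsimonious Levin reduction to \textsf{3-CNF-SAT}, a graph whose optimal independent sets (equivalently, minimum vertex covers) are in \emph{bijection} with the satisfying assignments, a blow-up step that makes the optimally-supported independent sets dominate, and a final translation of vertex covers into a monotone $2$-CNF. Where you genuinely differ is in the rigid graph construction, which you correctly flag as the crux. The paper dispenses with variable edges and clause gadgets altogether: it introduces seven vertices per clause (one for each satisfying partial assignment of that three-literal clause) and connects every pair of vertices corresponding to conflicting partial assignments, so that the vertex covers of size exactly $6m$ in the resulting $7m$-vertex graph are in bijection with $\phi^{-1}(1)$. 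Your version instead keeps the usual variable edges $\{a_i,b_i\}$ and replaces the textbook three-vertex clause gadget (which, as you observe, produces unequal fibers) with a $K_4$ whose four vertices detect $(\ell_1\wedge\neg\ell_3)$, $(\ell_2\wedge\neg\ell_1)$, $(\ell_3\wedge\neg\ell_2)$ and $(\ell_1\wedge\ell_2\wedge\ell_3)$. This does work: a case check over the eight assignments to $(\ell_1,\ell_2,\ell_3)$ confirms that exactly one of the four conditions holds whenever the clause is satisfied and none holds when it is falsified, so the maximum independent sets of your $G$ have size $a+b$ and correspond bijectively to $\phi^{-1}(1)$, which is the rigidity you need. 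Everything downstream is interchangeable: your adaptive blob size $t=|V(G)|+n+1$ and the paper's fixed cloud size $10m$ both make $\eta=2^{-\Omega(n)}$, and the bijection at the unblown level forces every fiber of $g$ to have the same cardinality, namely $(2^t-1)^{a+b}$ in your version and $(2^{10m}-1)^m$ in the paper's, which gives the equal-fiber and samplability requirements of Definition~\ref{def:one-many}.
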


 \begin{proof}

 We begin by noting the following simple fact.

 \begin{fact}\label{fac:comp1}

 If there is a polynomial time invertible Levin reduction from
\textsf{CIRCUIT-SAT} to a class $\mathcal{C}_1$ and an $\eta$-almost
invertible one-many reduction from $\mathcal{C}_1$ to $\mathcal{C}_2$,
then there is a polynomial time $\eta$-almost invertible one-many
reduction from \textsf{CIRCUIT-SAT} to $\mathcal{C}_2$.

 \end{fact}

 Since there is an invertible Levin reduction from \textsf{CIRCUIT-SAT} to
\textsf{3-CNF-SAT}, by virtue of Fact~\ref{fac:comp1}, it suffices to
demonstrate a polynomial time $\eta$-almost invertible one-many
reduction from \textsf{3-CNF-SAT}
to \textsf{MONOTONE-2-CNF-SAT}. To do this, we
first construct an instance of \textsf{VERTEX-COVER} from the
\textsf{3-CNF-SAT} instance. Let $\Phi = \new{\bigwedge}_{i=1}^m \Phi_i$ be the
instance of \textsf{3-CNF-SAT}. Construct an instance of
\textsf{VERTEX-COVER} by introducing seven vertices for each clause
$\Phi_i$ (one corresponding to every satisfying assignment of $\Phi_i$).
Now, put an edge between any two vertices of this graph if the
corresponding assignments to the variables of $\Phi$ conflict on some
variable. We call this graph $G$.  We observe the following properties
of this graph :

 \begin{itemize}

 \item $G$ has exactly $7m$ vertices.

 \item Every vertex cover of $G$ has size at least $6m$.

 \item There is an efficiently computable and invertible injection
$\ell$ between the satisfying assignments of $\Phi$ and the vertex
covers of $G$ of size $6m$. To get the vertex cover corresponding to a
satisfying assignment, for every clause $\Phi_i$, include the six
vertices in the vertex cover which conflict with the satisfying
assignment.

 \end{itemize}

 We next do the blow-up construction. We create a new graph $G'$ by
replacing every vertex of $G$ with a cloud of $10 m$ vertices, and
for every edge in $G$ we create a complete bipartite graph
between the corresponding clouds in $G'$. Clearly, the size of the
graph $G'$ is polynomial in the size of the \textsf{3-CNF-SAT} formula.
We define a map
$g_1$ between vertex covers of $G'$ and vertex covers of $G$ as follows
: Let $S'$ be a vertex cover of $G'$. We define the set $S = g_1(S')$ in
the following way. For every vertex $v$ in the graph $G$,
if all the vertices in the corresponding cloud in $G'$ are in
$S'$, then include $v \in S$, else do not include $v$ in $S$. It is easy
to observe that $g_1$ maps vertex covers of $G'$ to vertex covers of
$G$. It is also easy to observe that a vertex cover of $G$ of size $s$
has $(2^{10m}-1)^{7m-s}$ pre-images under $g_1$.

 Now, observe that we can construct a \textsf{MONOTONE-2-CNF-SAT} formula
$\Psi$ which has a variable corresponding to every vertex in $G'$ and
every subset $S'$ of $G'$ corresponds to a truth assignment $y_{S'}$ to
$\Psi$ such that $\Psi(y_{S'}) =1$ if and only if $S'$ is a vertex cover
of $G'$. Because of this correspondence, we can construct a map
 $g_1'$ which maps satisfying assignments of $\Psi$ to vertex covers of $G$.  Further, a vertex cover of size $s$ in graph $G$  has $(2^{10m}-1)^{7m-s}$ pre-images under $g_1'$.
 Since the total number of vertex covers of $G$ of size $s$ is at most
$\binom{7m}{s}$, the total number of satisfying assignments of
$\Psi$ which map to vertex covers of $G$ of size more than $6m$ can be
bounded by :
 $$
 \sum_{s=6m+1}^{7m} \binom{7m}{s} \cdot (2^{10m}-1)^{7m-s} \le m \cdot \binom{7m}{6m+1} \cdot (2^{10m}-1)^{m-1} \le (2^{10m}-1)^{m} \cdot \frac{2^{7m}}{2^{10m}-1}
 $$

 On the other hand, since $\Phi$ has at least one satisfying assignment,
hence $G$ has at least one vertex cover of size $6m$ and hence the total
number of satisfying assignments of $\Psi$ which map to vertex covers of
$G$ of size $6m$ is at least $(2^{10m}-1)^{m}$. Thus, if we let
$\mathcal{A}$ denote the set of satisfying assignments of $\Psi$ and
$\mathcal{A}_1$ be the set of satisfying assignment of $\Psi$ which map
to vertex covers of $G$ of size exactly $6m$ (under $g_1$), then
$|\mathcal{A}_1|/|\mathcal{A}| \ge 1- 2^{-\Omega(n)}$. Next, notice that
we can define the map $g$ mapping $\mathcal{A}_1$ to the satisfying
assignments of $\Phi$ in the following manner : $g(x) =
\ell^{-1}(g_1(x))$. It is easy to see that this map satisfies all the
requirements of the map $g$ from Definition~\ref{def:one-many} which
concludes the proof.
 \end{proof}

 Combining Lemma~\ref{lem:blow-up} with Theorem~\ref{thm:monotone}, we
have the following corollary.

 \begin{corollary}\label{corr:monotone-2-sat}
If Assumption~\ref{ass:1} holds true, then \textsf{MONOTONE-2-CNF-SAT} is
$(2^{n^c}, 1-2^{-n^c}, 1-2^{-n^c})$ hard for inverse approximate uniform
generation for some absolute constant $c>0.$
 \end{corollary}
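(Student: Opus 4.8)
The plan is to derive this corollary immediately from Theorem~\ref{thm:monotone} by supplying the one structural ingredient it needs. Concretely, I would instantiate Theorem~\ref{thm:monotone} with $\mathcal{C}$ equal to the class \textsf{MONOTONE-2-CNF} of monotone $2$-CNF formulas. That theorem's only hypothesis on $\mathcal{C}$ is the existence of a polynomial-time $\eta$-almost invertible one-many reduction from \textsf{CIRCUIT-SAT} to $\mathcal{C}$ with $\eta = 2^{-\Omega(n)}$, and this is exactly what Lemma~\ref{lem:blow-up} provides. The cryptographic hypothesis (Assumption~\ref{ass:1}) is the same in both statements, so it passes through verbatim, and the conclusion of Theorem~\ref{thm:monotone} is word-for-word the conclusion we want: \textsf{MONOTONE-2-CNF-SAT} is $(2^{n^c},1-2^{-n^c},1-2^{-n^c})$-hard for inverse approximate uniform generation for some absolute constant $c>0$. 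Thus the ``proof'' is essentially a one-line composition of the two preceding results.

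For completeness I would include a sentence recalling why the chain is sound. Lemma~\ref{lem:blow-up} routes \textsf{CIRCUIT-SAT} through \textsf{3-CNF-SAT} (via the invertible Levin reduction of Fact~\ref{fact:c-sat} together with Fact~\ref{fac:comp1}) and then through a \textsf{VERTEX-COVER} instance whose ``cloud'' blow-up guarantees that the satisfying assignments of the resulting monotone $2$-CNF formula $\Psi$ which correspond to minimum vertex covers of $G$ (hence to satisfying assignments of the circuit) constitute a $1-2^{-\Omega(n)}$ fraction of $\Psi^{-1}(1)$, with the fibers of the induced map $g$ all of equal, efficiently samplable size. Feeding this into the argument of Theorem~\ref{thm:monotone}, an inverse approximate uniform generation algorithm for monotone $2$-CNF would let an adversary, given uniformly random signed messages of the $(t,\epsilon)$-RMA-secure scheme of Theorem~\ref{thm:unique-sign}, produce a fresh forged signed message, contradicting RMA security; checking that $t_1=2^{n^c}$, $\kappa_1=\kappa_2=1-2^{-n^c}$ satisfy the required inequalities uses only that $\eta=2^{-\Omega(n)}$ and $|\mathcal{M}|=2^{n^\mu}$, so choosing $c$ small relative to the constants $\delta,\mu$ of the signature scheme suffices.

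I do not anticipate any real obstacle in the corollary itself: all the difficulty lies upstream, in Lemma~\ref{lem:blow-up} (driving the ``bad'' satisfying assignments down to an exponentially small fraction while keeping the reduction one-many with uniform, samplable fibers) and in Theorem~\ref{thm:monotone} (adapting the signature-scheme argument of Theorem~\ref{thm:signatures} from invertible Levin reductions to $\eta$-almost invertible one-many reductions, absorbing the $t_1\eta$ statistical error and the $t_1/|\mathcal{M}|$ collision term into the overall failure probability). Given those two results as black boxes, the corollary is just a matter of recording that their hypotheses and conclusions match up.
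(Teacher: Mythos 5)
Your proposal is correct and matches the paper exactly: the corollary is stated immediately after Lemma~\ref{lem:blow-up} as a direct combination of that lemma (supplying the $\eta$-almost invertible one-many reduction with $\eta=2^{-\Omega(n)}$) with Theorem~\ref{thm:monotone}. Your additional paragraph recalling why the chain is sound is accurate but not needed, since the paper too treats this as an immediate consequence.
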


As a consequence of the above result,
we also get hardness for inverse approximate uniform
generation of degree-2 \emph{polynomial threshold functions (PTFs)};
these are functions of the form $\sign(q(x))$ where
$q(x)$ is a degree-2 multilinear polynomial over
$\{0,1\}^n.$

 \begin{corollary}\label{corr:hardness-PTF}
If Assumption~\ref{ass:1} holds true, then the class of all $n$-variable
degree-2 polynomial threshold functions is
$(2^{n^c}, 1-2^{-n^c}, 1-2^{-n^c})$ hard for inverse approximate uniform
generation for some absolute constant $c>0.$
 \end{corollary}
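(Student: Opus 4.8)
\textbf{Proof proposal for Corollary~\ref{corr:hardness-PTF}.}
The plan is to observe that every monotone 2-CNF formula coincides, \emph{as a Boolean function}, with a degree-2 polynomial threshold function obtainable from it in polynomial time, and then to reduce the inverse approximate uniform generation problem for \textsf{MONOTONE-2-CNF-SAT} to that for degree-2 PTFs, finally invoking Corollary~\ref{corr:monotone-2-sat}.

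First I would exhibit the transformation. Let $\Psi = \bigwedge_{(i,j)\in E}(x_i \vee x_j)$ be a monotone 2-CNF over $\{0,1\}^n$, and set $q(x) = \sum_{(i,j)\in E}(1-x_i)(1-x_j)$, a degree-2 multilinear polynomial. For $x \in \{0,1\}^n$ each summand lies in $\{0,1\}$, so $q(x) \ge 0$ always, and $q(x)=0$ if and only if every clause $x_i \vee x_j$ is satisfied, i.e. if and only if $\Psi(x)=1$. Hence the degree-2 PTF $f_\Psi(x) \eqdef \sign(1/2 - q(x))$ satisfies $f_\Psi^{-1}(1) = \Psi^{-1}(1)$ exactly, and therefore $\U_{f_\Psi^{-1}(1)} = \U_{\Psi^{-1}(1)}$. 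Moreover $q$ (and hence a representation of the degree-2 PTF $f_\Psi$) is computable from a representation of $\Psi$ in polynomial time, and $f_\Psi$ is a function of the same $n$ Boolean variables.

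Second I would transfer the hardness. Suppose toward a contradiction that, for every constant $c>0$, there is an algorithm $A$ running in time $2^{n^c}$ that is a $(1-2^{-n^c}, 1-2^{-n^c})$-inverse approximate uniform generation algorithm for the class of $n$-variable degree-2 PTFs. Given sample access to $\U_{\Psi^{-1}(1)}$ for an unknown monotone 2-CNF $\Psi$, observe that this is literally sample access to $\U_{f_\Psi^{-1}(1)}$ for the degree-2 PTF $f_\Psi$; running $A$ on these samples produces, with probability $1-2^{-n^c}$, a $(1-2^{-n^c})$-sampler for $\U_{f_\Psi^{-1}(1)} = \U_{\Psi^{-1}(1)}$. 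Thus $A$ is also a $(1-2^{-n^c}, 1-2^{-n^c})$-inverse approximate uniform generation algorithm for \textsf{MONOTONE-2-CNF-SAT}, contradicting Corollary~\ref{corr:monotone-2-sat} (which, under Assumption~\ref{ass:1}, gives such hardness for some absolute constant $c>0$). Hence the class of $n$-variable degree-2 PTFs is $(2^{n^c}, 1-2^{-n^c}, 1-2^{-n^c})$-hard for inverse approximate uniform generation.

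There is essentially no real obstacle in this last step; the only point to be careful about is the right model-level observation: the inverse-generation model hands the algorithm sample access to $\U_{f^{-1}(1)}$ rather than a representation of $f$, so the (trivial) fact that the satisfying set of $\Psi$ literally equals that of $f_\Psi$ is all that is needed. In particular no invertible Levin or one-many reduction machinery is required here — the polynomial-time syntactic translation $\Psi \mapsto f_\Psi$ is only needed to certify that $f_\Psi$ is an honest-to-goodness degree-2 PTF, not to manipulate the samples.
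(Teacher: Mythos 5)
Your proposal is correct and takes essentially the same approach as the paper: exhibit a polynomial-time map from a monotone 2-CNF $\Psi$ to a degree-2 PTF with identical satisfying set, and note that this immediately transfers the hardness from Corollary~\ref{corr:monotone-2-sat}. The paper uses the equivalent PTF $\sum_i (x_{i1}+x_{i2}-x_{i1}x_{i2}) \geq m$ rather than your $\sign(1/2 - \sum (1-x_i)(1-x_j))$, but these are the same polynomial up to an affine shift, so there is no substantive difference.
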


\begin{proof}
This follows immediately from the fact that every
monotone 2-CNF formula can be expressed as a degree-2 PTF.
To see this, note that if $\Phi = \bigwedge_{i=1}^m (x_{i1} \vee x_{i2})$
where each $x_{ij}$ is a 0/1 variable,
then $\Phi(x)$ is true if and only if
$\sum_{i=1}^m x_{i1} + x_{i2} - x_{i1} \cdot x_{i2} \geq m.$
This finishes the proof.
 \end{proof}

\subsection{Hardness results based on Message Authentication Codes.}
All of the previous hardness results intuitively correspond to the case 
when the second step of our 
``standard approach'' is algorithmically hard. Indeed, consider a 
class $\C$ of functions that has an efficient approximate uniform generation
algorithm. Unless $P \not = NP$ there 
cannot be any Karp reduction from \textsf{CIRCUIT-SAT} to $\mathcal{C}$ 
(this would contradict the NP-completeness of \textsf{CIRCUIT-SAT}) and 
hence Theorem~\ref{thm:signatures} is not applicable in this setting. In 
fact, \new{even for $\eta = 1-1/\poly(n)$} there cannot be any 
$\eta$-almost invertible one-many reduction from 
\textsf{CIRCUIT-SAT} to $\mathcal{C}$ unless 
$P \not = NP$. This makes Theorem~\ref{thm:monotone} inapplicable 
in this setting. Thus, to prove hardness results for classes that
have efficient approximate uniform generation algorithms,
we need some other approach.

In this section we show that Message Authentication Codes (MAC) can be 
used to establish hardness of inverse approximate uniform generation
for such classes. We begin by 
defining MACs. (We remark that we use a restricted definition which is 
sufficient for us; for the most general definition, see \cite{gol-fou02}.)

\begin{definition} A \emph{Message Authentication Code (MAC)} is a triple 
$(G,T,V)$ of \new{polynomial-time} algorithms with the following properties :

 \begin{itemize}

\item {\bf (Key generation algorithm)}
$G(\cdot)$ is a randomized algorithm which on input $1^n$ 
produces a secret key $sk$;

\item {\bf (Tagging algorithm)}
$T$ is a randomized algorithm which takes as input message $m$, 
secret key $sk$ and randomness $r$ and outputs $\sigma \leftarrow 
T(m,sk,r)$;

\item {\bf (Verification algorithm)}
$V$ is a deterministic algorithm which takes as input message 
$m$, secret key $sk$ and $\sigma.$  If $\sigma = T(m,sk,r)$
for some $r$ then $V(m,sk,\sigma)=1$.

 \end{itemize}

 \end{definition}

For the purposes of our hardness results we require MACs with some 
special properties. While our hardness results can be derived from 
slightly more general MACs than those we specify below, 
we forsake some generality for the sake of 
clarity. For a MAC $(G,T,V)$ and a choice of secret key $sk$, we say 
$\sigma$ is a \emph{valid tag} for message $m$ if there exists $r$ such that 
$\sigma = T(m,sk,r)$. Likewise, we say that $\sigma$ is a 
\emph{potential tag} for 
message $m$ if $V(m,sk, \sigma)=1$.

 \begin{definition}\label{def:MAC-s}

 A Message Authentication Code $(G,T,V)$ over a message space 
$\mathcal{M}$ is said to be \emph{special} if the following conditions hold : 
For any secret key $sk$,

 \begin{itemize}

 \item For every message $m \in \mathcal{M}$, the set of valid tags is 
identical to the set of potential \new{tags}..

 \item For every two messages
$m_1 \neq m_2$ and every $\sigma_1,\sigma_2$ such that $\sigma_i$ is a valid
tag for $m_i$, we have
$\Pr_r[T(m_1,sk,r)=\sigma_1]
=\Pr_r[T(m_2,sk,r)=\sigma_2].$. In particular, the cardinality of the set of valid tags for $m$ is the same
for all $m$.

 \end{itemize}

 \end{definition}

 We next define the standard notion of security under Random Message 
attacks for MACs.  As before, from now onwards, we will assume implicitly that $\mathcal{M}$ is the message space. 

\begin{definition}\label{def:RMA}
A special MAC $(G,T,V)$ is said to be \emph{$(t,\epsilon)$-RMA secure} 
if the following holds : Let $sk \leftarrow G(1^n)$. Let $(m_1, 
\ldots, m_t)$ be chosen uniformly at random from $\mathcal{M}$. Let 
$\sigma_i \leftarrow T(m_i,sk,r) $. Then for any probabilistic 
algorithm $A$ running in time $t$, $$ \mathop{\Pr}_{sk , (m_1, \ldots, 
m_t) , (\sigma_1, \ldots, \sigma_t)} [A( m_1, \ldots, m_t, \sigma_1, 
\ldots, \sigma_t) = (m', \sigma')] \le \epsilon $$ where 
$V(m',sk, \sigma')=1$ and
$m' \not = m_i$ for all $i=1,\dots,t.$
 \end{definition}

It is known how to construct MACs meeting the requirements 
in Definition~\ref{def:RMA} under standard cryptographic assumptions 
(see \cite{gol-fou02}). 

\subsubsection{A general hardness result based on Message
Authentication Codes.}
The next theorem 
shows that special MACs yield hardness results for
inverse approximate uniform generation.

 \begin{theorem}\label{thm:hardness-MAC}
 Let $c>0$ and $(G,T,V)$ be a $(t,\epsilon)$-RMA secure special MAC for 
some $t= 2^{n^c}$ and $\epsilon = 1/t$ with a message space 
$\mathcal{M}$ of size $2^{\Omega(n)}$. Let $V_{sk}$ denote the 
function $V_{sk} : (m, \sigma) \mapsto V(m,sk,\sigma)$. If 
$V_{sk} \in \mathcal{C}$ for every $s_k$, 
then there exists $\delta>0$ such that 
$\mathcal{C}$ is $(t_1, \kappa,\eta)$-hard for inverse approximate 
uniform generation for $t_1 = 2^{n^{\delta}}$ and 
\new{$\kappa = \eta= 1-2^{-n^{\delta}}$.} \end{theorem}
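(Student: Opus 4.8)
The plan is to mirror the structure of the proof of Theorem~\ref{thm:signatures}, but replacing the public-key signature scheme with the special MAC $(G,T,V)$ and dispensing with the Levin reduction entirely (since by hypothesis $V_{sk}\in\mathcal{C}$ for every secret key $sk$, there is no need to transform instances). First I would assume toward a contradiction that there is an inverse approximate uniform generation algorithm $A_{\inv}$ for $\mathcal{C}$ running in time $t_1=2^{n^\delta}$ which, with probability $1-\eta$, outputs a $\kappa$-sampler for the target distribution, where $\kappa=\eta=1-2^{-n^\delta}$. I would then build an RMA adversary $A$ against the MAC as follows. Given random message-tag pairs $(m_1,\sigma_1),\dots,(m_{t_1},\sigma_{t_1})$ with each $m_i$ drawn uniformly from $\mathcal{M}$ and $\sigma_i\leftarrow T(m_i,sk,r)$, the adversary regards each pair $(m_i,\sigma_i)$ as a point in $V_{sk}^{-1}(1)$ and feeds these to $A_{\inv}$.

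The key step is to argue that the distribution of $(m_i,\sigma_i)$ is statistically close to $\mathcal{U}_{V_{sk}^{-1}(1)}$. Here the ``special'' properties of the MAC (Definition~\ref{def:MAC-s}) do the work: because the set of valid tags equals the set of potential tags, $V_{sk}^{-1}(1)$ is exactly $\{(m,\sigma):\sigma\text{ is a valid tag for }m\}$; because every message has the same number of valid tags and the tagging algorithm induces the uniform distribution on valid tags for each message, the process ``pick $m\in_U\mathcal{M}$, then $\sigma\leftarrow T(m,sk,r)$'' produces exactly the uniform distribution over $V_{sk}^{-1}(1)$ — so in fact the statistical distance is $0$ (in contrast to the signature case where a small $\delta,\eta$ slack had to be tracked). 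Thus $A_{\inv}$ is run on genuine i.i.d.\ samples from $\mathcal{U}_{V_{sk}^{-1}(1)}$, and with probability $1-\eta$ it returns a sampler $C$ whose output distribution $Z$ satisfies $d_{TV}(Z,\mathcal{U}_{V_{sk}^{-1}(1)})\le\kappa$. The adversary draws one sample $(m,\sigma)\sim Z$ and outputs it as its forgery.

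Next I would verify that this forgery is valid and ``new'' with non-negligible probability. Since $|\mathcal{M}|=2^{\Omega(n)}$ and $t_1=2^{n^\delta}$, choosing $\delta$ small enough compared to the exponent in $|\mathcal{M}|$ guarantees $t_1/|\mathcal{M}|\le 2^{-\Omega(n)}$, so $\Pr_{(m,\sigma)\sim\mathcal{U}_{V_{sk}^{-1}(1)}}[m\neq m_i\ \forall i]\ge 1-t_1/|\mathcal{M}|$ is very close to $1$ (using that each message contributes equally many valid tags to $V_{sk}^{-1}(1)$). Transferring to $Z$ costs an extra $\kappa$ in probability, which is fine because the relevant ``success probability budget'' is additive and we only need the final product $(1-\kappa)(1-\eta)/\text{const}$ to exceed $\epsilon=1/t=2^{-n^c}$; choosing $\delta<c$ makes $2^{-n^\delta}$ large compared to $2^{-n^c}$, so this holds. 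The total running time of the adversary is $t_1+\poly(n)$ plus one sampler invocation, which is $2^{O(n^\delta)}<t=2^{n^c}$ for $\delta<c$. This contradicts the $(t,\epsilon)$-RMA security of the MAC.

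The main obstacle — and the only real subtlety — is getting the quantitative chain of inequalities to close: one must pick the constant $\delta$ small enough (relative to $c$ and to the constant hidden in $|\mathcal{M}|=2^{\Omega(n)}$) so that simultaneously (i) the running time $t_1+\poly(n)$ stays below $t$, (ii) $t_1/|\mathcal{M}|$ is negligible, and (iii) the forgery probability, roughly $(1-\kappa)(1-\eta)/4=2^{-n^\delta-n^\delta}/\Theta(1)$, still exceeds $\epsilon=2^{-n^c}$. Since $\kappa$ and $\eta$ are allowed to be as large as $1-2^{-n^\delta}$, the only constraint is that $2^{-2n^\delta}$ dominates $2^{-n^c}$, i.e.\ $2\delta<c$; with that choice all three conditions are met and the contradiction goes through. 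Everything else is a routine adaptation of the argument in the proof of Theorem~\ref{thm:signatures}, simplified by the fact that there is no reduction to compose with and no statistical slack from the key generation.
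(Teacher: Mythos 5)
Your proposal is correct and follows essentially the same argument as the paper's own proof: assume $A_{\inv}$ exists, observe that the special MAC properties guarantee that random message-tag pairs are exactly uniform over $V_{sk}^{-1}(1)$ (so there is no statistical slack to track, unlike the signature case), feed them to $A_{\inv}$, sample once from the resulting sampler to produce a forgery, and close the quantitative chain by choosing $\delta$ small relative to $c$ and to the exponent in $|\mathcal{M}|$. The only cosmetic difference is that you state the sufficient condition as $2\delta < c$ whereas $\delta < c$ already suffices since $2n^{\delta} < n^{c}$ eventually for any $\delta < c$; this does not affect correctness.
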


 \begin{proof}
 Towards a contradiction, let us assume that there is an algorithm 
$A_{\inv}$ for inverse approximate uniform generation of $\mathcal{C}$ 
which runs in time $t_1$ and with probability
$1-\eta$ outputs a sampler whose statistical distance is at most $\kappa$
from the target distribution. (We will set $t_1$, $\kappa$ and 
$\eta$ later in the proof.)  
We will use $A_{\inv}$ to contradict the security of the 
MAC. Let $sk$ be a secret key chosen according to $G(1^n)$. 
Now, the adversary receives message-tag 
pairs $(m_1, \sigma_1), \ldots, (m_{t_1}, \sigma_{t_1})$ where $m_1, 
\ldots, m_{t_1}$ are chosen independently at random from $\mathcal{M}$. 
Because the MAC is special, for each $i$ we have that $\sigma_i$ is a 
uniformly random valid tag for the message $m_i$. Hence each $(m_i, 
\sigma_i)$ is an independent and uniformly random satisfying assignment 
of $V_{sk}$.

We can thus run $A_\inv$ on the samples
$(m_1, \sigma_1), \ldots, (m_{t_1}, \sigma_{t_1})$ 
with its accuracy parameter set to $\kappa$ and its confidence
parameter set to $1-\eta.$  Taking $\kappa = \eta = 1-2^{-n^\delta}$,
we can choose $\delta$ small enough 
compared to $c$, and with $t_1 = 2^{n^{\delta}}$ we get that the total
running time of $A_\inv$ is at most
$2^{n^c}/2$. By the definition of inverse approximate uniform generation,
with probability at least $1-\eta = 2^{-n^\delta}$
the algorithm $A_\inv$ outputs a sampler for a distribution $Z$
that is $\kappa=(1-2^{-n^\delta})$-close to the uniform distribution over 
the satisfying assignments of $V_{sk}$. Now, observe that

$$
\Pr_{(m,\sigma) \sim \mathcal{U}_{V_{sk}^{-1}(1)}} [ 
m_i \not = m \text{~for all~} i \in [t_1]] \ge 1- \frac{t_1}{|\mathcal{M}|}.
 $$

Thus,
 $$
 \Pr_{z \sim Z} [z = (m, \sigma) \textrm{ and }  
m_i \not = m \text{~for all~}i \in [t_1]] \ge 
\new{(1-\kappa)} - \frac{t_1}{|\mathcal{M}|}.
 $$

This means that with probability
\new{ $(1-\eta) \cdot ( (1-\kappa) - 
\frac{t_1}{|\mathcal{M}|} )$,} the adversary can output a forgery. It is 
clear that \new{for a suitable choice of $\delta$ relative to $c$,}
recalling that $\kappa = \eta = 1-2^{-n^{\delta}}$, 
the probability of outputting a forgery is greater than
$2^{-n^c},$ which contradicts the security of the MAC.
  \end{proof}

Unlike signature schemes, which permitted intricate reductions (cf. 
Theorem~\ref{thm:signatures}), in the case of MACs we get a hardness 
result for complexity class $\mathcal{C}$ only if $V_{sk}$ itself
belongs to 
$\mathcal{C}$. While special MACs are known to exist
\nnew{assuming the existence of one-way functions~\cite{gol-fou02}}, 
the constructions are rather involved and  
rely on constructions of pseudorandom functions (PRFs) as an intermediate
step. As a result, the verification algorithm $V$ also involves computing PRFs;
this means that using these standard constructions, one can only get 
hardness results for a class $\mathcal{C}$ if PRFs can be computed in 
$\mathcal{C}$. As a result, the class $\mathcal{C}$ tends to be fairly 
complex, making the corresponding hardness result for inverse 
approximate uniform generation for $\mathcal{C}$ somewhat uninteresting.

  One way to bypass this is to use construction of MACs which do not 
involve use of PRFs as an intermediate step. In recent years there has been 
significant progress in this area~\cite{Jain:11,Dodis:12}.  While both these
papers describe several MACs which do not require PRFs, the one 
most relevant for us is the MAC construction of \cite{Jain:11} based 
on the hardness of the ``Learning Parity with Noise'' (LPN) problem. 

\subsubsection{Some specific hardness assumptions, and a corresponding
specific hardness result.}

We first state a ``decision" version of LPN. To do this, we need the 
following notation:

  \begin{itemize}

  \item Let $\mB_{\tau}$ denote the following distribution over $GF(2)$ 
: If $x \leftarrow \mB_{\tau}$, then $\Pr[x=1] =\tau$.

  \item For $x \in GF(2)^n$, we use $\Lambda(x, \tau,\cdot)$ to denote 
the distribution $(r, x \cdot r \oplus e)$ 
over $GF(2)^n \times GF(2)$ 
where $r \sim GF(2)^n$ and $e 
\sim \mB_{\tau}$ and $x \cdot r = \oplus_{i} x_i r_i \ (\mod \ 2)$.

\end{itemize}

  \begin{assumption}\label{assumption:LPN}
Let $\tau \in (0,1/2)$ and let $\mathcal{O}_{x, \tau}$ be an oracle 
which, each time it is invoked,
returns an independent uniformly random sample from $\Lambda(x, \tau,\cdot)$. 
The LPN assumption states that for any $\poly(n)$-time algorithm 
$\mathcal{A}$,

  $$
  \left| \left[ \Pr_{x \in GF(2)^n} [ \mathcal{A}^{\mathcal{O}_{x,\tau}} 
=1 \right] - \left[ \Pr_{x \in GF(2)^n} [ 
\mathcal{A}^{\mathcal{O}_{x,1/2}} =1 \right] \right| \le \epsilon
  $$

  for some $\epsilon$ which is negligible in $n$.
  \end{assumption}

LPN is a well-studied problem; despite intensive research effort,
the fastest known algorithm 
for this problem takes time $2^{O(n/\log n)}$~\cite{blukalwas03}. For 
our applications, we will need a variant of the above LPN assumption. To 
define the assumption, let $\Lambda(x, \ell, \tau, \cdot)$ 
denote the distribution over $(A, A \cdot x \oplus e)$ where $A$ is 
uniformly random in $GF(2)^{\ell \times n}$ and $e$ is uniformly random 
over the set $\{z \in GF(2)^{\ell} : wt(z) \le \lceil \tau \ell \rceil 
\}$. The vector $e$ is usually referred to as the \emph{noise vector.}

  \begin{assumption}\label{assumption:LPNvar}
  Let $\tau \in (0,1/2)$, $\ell = c \cdot n$ for some $0 < c < 1/2$ and 
let $\mathcal{O}_{x,\ell, \tau}$ be an oracle which returns a uniformly 
random sample from $\Lambda(x, \ell, \tau, \cdot)$. Then the 
\emph{$(t,\epsilon)$ exact LPN assumption} states that for any algorithm 
$\mathcal{A}$ running in time $t$,

  $$
  \left| \Pr_{x \in GF(2)^n} \left[ 
\mathcal{A}^{\mathcal{O}_{x,\ell,\tau}} =1 \right] - \Pr_{x \in GF(2)^n} 
\left[ \mathcal{A}^{\mathcal{O}_{x,\ell,1/2}} =1 \right] \right| \le 
\epsilon
  $$

For the sake of brevity, we henceforth refer to this assumption by
saying ``the exact $(n, \ell, \tau)$ LPN problem is $(t,\epsilon)$-hard.''
  \end{assumption}

The above conjecture seems to be very closely related to 
Assumption~\ref{assumption:LPN}, but it is not known whether
Assumption~\ref{assumption:LPN} formally reduces to 
Assumption~\ref{assumption:LPNvar}. Assumption~\ref{assumption:LPNvar} 
has previously been suggested in the cryptographic literature~\cite{KSS10} 
in the context of getting perfect completeness in LPN-based protocols. 
We note that Arora and Ge~\cite{AG11} have investigated 
the complexity of this problem and gave an algorithm which runs 
in time $n^{O(\ell)}$. We believe that the proximity 
of Assumption~\ref{assumption:LPNvar} to the well-studied 
Assumption~\ref{assumption:LPN}, as well as the failure to find 
algorithms for Assumption~\ref{assumption:LPNvar}, make 
it a plausible conjecture. For the rest of this section we use 
Assumption~\ref{assumption:LPNvar} with $t = 2^{n^{\beta}}$ and 
$\epsilon= 2^{-n^{\beta}}$ for some fixed $\beta>0$.

We next define a seemingly stronger variant of 
Assumption~\ref{assumption:LPNvar} which we call \emph{subset exact LPN}. 
This requires the following definitions:
For $x,v \in GF(2)^n$, $\ell,d \le n$ and $\tau \in (0,1/2)$, we 
define the distribution $\Lambda^a(x,v, \ell, \tau,\new{ \cdot)}$ as follows :

 $$ \Lambda^a(x,v, \ell, \tau,\new{\cdot)} =
\left\{
	\begin{array}{ll}
		\Lambda(x \cdot v, \ell, 1/2,\new{\cdot)}  & \mbox{if } wt(v) < d \\
		\Lambda(x \cdot v, \ell, \tau,\new{\cdot)} & \mbox{if } wt(v) \ge d
	\end{array}
\right.
$$
 where $x \cdot v \in GF(2)^n$ is defined by $(x\cdot v)_i = x_i \cdot 
v_i$.  In other words, if $wt(v) \ge d$, then the distribution 
$\Lambda^a(x,v, \ell, \tau)$ projects $x$ into the non-zero coordinates 
of $v$ and then outputs samples corresponding to exact LPN for the 
projected vector. We define the oracle $\mathcal{O}_{x,\ell, d, 
\tau}^a (\cdot)$ which takes an input $v \in GF(2)^n$ and outputs a 
random sample from $\Lambda^a(x,v, \ell, \tau,\cdot)$. The subset 
exact LPN assumption states the following:

 \begin{assumption}\label{assumption:exactsubset}
  Let $\tau \in (0,1/2)$, $\ell = c \cdot n$ and $d =c' \cdot n$ for 
some $0 < c,c' < 1/2$. The \emph{$(t,\epsilon)$-subset exact LPN 
assumption} says that for any algorithm $\mathcal{A}$ running in time 
$t$,
$$
\left| \Pr_{x \in GF(2)^n}  \left[ \mathcal{A}^{\mathcal{O}_{x,\ell,d,\tau}^a} =1 \right] -  \Pr_{x \in GF(2)^n}  \left[ \mathcal{A}^{\mathcal{O}_{x,\ell,d,1/2}^a} =1 \right] \right| \le \epsilon.
  $$
For the sake of brevity, we henceforth refer to this assumption by
saying ``the subset exact $(n, \ell, d,\tau)$ 
LPN problem is $(t,\epsilon)$-hard.''
 \end{assumption}

 Assumption~\ref{assumption:exactsubset} is very similar to the 
\emph{subset LPN assumption} 
used in \cite{Jain:11} and previously considered in \cite{Pie:12}. 
The subset LPN assumption is the same as 
Assumption~\ref{assumption:exactsubset} but with $\ell=1$ and the 
\new{coordinates of the noise vector $e$ being drawn 
independently from $\mB_{\tau}$.}
Pietrzak~\cite{Pie:12} 
showed that the subset LPN assumption is implied by the standard LPN 
assumption (Assumption~\ref{assumption:LPN}) with a minor change in the 
security parameters.
 Along the same lines, the next lemma shows that 
Assumption~\ref{assumption:LPNvar} implies 
Assumption~\ref{assumption:exactsubset} with a minor change in parameters. 
The proof is identical to the proof of Theorem~1 in \cite{Pie:12} 
and hence we do not repeat it here.

 \begin{lemma}\label{lem:pie}
 If the exact $(n, \ell,\tau)$ LPN problem is $(t, \epsilon)$ hard, \nnew{then for any $g \in \mathbb{N}$}, 
the subset exact $(n', \ell, n + g, \tau)$ LPN problem is $(t', 
\epsilon')$ hard for $n' \ge n+g$, $t' = t/2$ and $\epsilon' = \epsilon 
+ \frac{2t}{2^{g+1}}$.
 \end{lemma}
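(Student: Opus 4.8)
The plan is to follow Pietrzak's reduction from LPN to subset~LPN (\cite{Pie:12}, Theorem~1), adapted to the exact-noise variants, and to check that the parameters come out as claimed. Arguing by contraposition, I would suppose there is an algorithm $\mathcal{A}$ running in time $t'=t/2$ that distinguishes the subset exact $(n',\ell,n+g,\tau)$ LPN oracle $\mathcal{O}^a_{x,\ell,n+g,\tau}$ from $\mathcal{O}^a_{x,\ell,n+g,1/2}$ (for $x\in_U GF(2)^{n'}$) with advantage exceeding $\epsilon'$, and build from it an algorithm $\mathcal{B}$ that, given oracle access to either $\mathcal{O}_{s,\ell,\tau}$ or $\mathcal{O}_{s,\ell,1/2}$ for $s\in_U GF(2)^n$, distinguishes these with advantage exceeding $\epsilon$ in time at most $t$, contradicting Assumption~\ref{assumption:LPNvar}.

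The heart of the construction is an embedding of the unknown $n$-dimensional secret $s$ into a virtual $n'$-dimensional secret $x$ that stays consistent across all of $\mathcal{A}$'s (adaptive, possibly overlapping or repeated) queries. At setup $\mathcal{B}$ would sample a uniformly random $L\in GF(2)^{n'\times n}$ and a uniformly random $x^{\ast}\in GF(2)^{n'}$, and implicitly work with $x:=Ls\oplus x^{\ast}$; since $x^{\ast}$ is uniform and independent, $x$ is uniform over $GF(2)^{n'}$ regardless of $L$ and regardless of which world we are in. To answer a query $v$ with support $V$, $|V|=w=wt(v)\ge n+g$: $\mathcal{B}$ draws one fresh sample $(A',A's\oplus e)$ from its own oracle (with $A'\in GF(2)^{\ell\times n}$) and, provided the row-submatrix $L_V\in GF(2)^{w\times n}$ has full column rank $n$, picks $A\in GF(2)^{\ell\times n'}$ uniformly subject to $A_V L_V=A'$ (where $A_V$ denotes the columns of $A$ indexed by $V$) and outputs $\bigl(A,\ (A's\oplus e)\oplus A(x^{\ast}\cdot v)\bigr)$. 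One checks that $A(x\cdot v)\oplus e=A's\oplus e\oplus A(x^{\ast}\cdot v)$ and that the $A$ so produced is exactly uniform and independent of $e$, so this response is distributed precisely as $\mathcal{O}^a_{x,\ell,n+g,\tau}(v)$ or $\mathcal{O}^a_{x,\ell,n+g,1/2}(v)$ according to $\mathcal{B}$'s oracle, with the noise vector passed through untouched. Queries with $wt(v)<n+g$ are answered by $\mathcal{B}$ directly, exactly as in \cite{Pie:12}, since there the oracle is the $1/2$-oracle in both worlds. Finally $\mathcal{B}$ echoes $\mathcal{A}$'s output bit.

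To finish I would bound the two quantities in the statement. The simulation is perfect unless $L_V$ is column-rank-deficient for some queried $v$ with $wt(v)\ge n+g$; for a fixed such $v$ the rows of $L_V$ are $w\ge n+g$ i.i.d.\ uniform vectors of $GF(2)^n$ and hence span except with probability at most $2^{n-w}\le 2^{-g}$, so a union bound over the at most $t'$ queries caps the total failure probability by $t'\,2^{-g}=t/2^{g+1}\le 2t/2^{g+1}$, and $\mathcal{B}$'s advantage is therefore at least $\epsilon'-2t/2^{g+1}=\epsilon$. For the running time, $\mathcal{B}$ runs $\mathcal{A}$ once and performs only $\poly(n',\ell)$ extra work (Gaussian elimination) per query, which under the accounting conventions of \cite{Pie:12} yields $t'=t/2$.

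The step I expect to be the main obstacle is the distributional bookkeeping for the embedding: one must verify that the single global pair $(L,x^{\ast})$ induces answers that are simultaneously (i) mutually consistent with one fixed virtual secret $x$ across arbitrary overlapping or repeated queries, and (ii) conditioned on $L_V$ having full rank, exactly correctly distributed --- $A$ exactly uniform, independent of the noise, and the noise distribution exactly that of $\mathcal{O}^a$ in each of the two worlds. The random-matrix rank estimate and the union bound are routine by comparison.
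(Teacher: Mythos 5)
Your proof is correct and takes essentially the same route as the paper, whose entire proof is a one-line citation to Theorem~1 of \cite{Pie:12} together with the observation that the reduction there passes the noise vector through untouched, hence is oblivious to the noise distribution and applies verbatim to the exact-weight variant. Your reconstruction --- the embedding $x=Ls\oplus x^{\ast}$ with $L\in GF(2)^{n'\times n}$, resampling $A$ uniformly subject to $A_V L_V = A'$, the column-rank failure bound $2^{n-w}\le 2^{-g}$ per query, and the union bound $t'\cdot 2^{-g}=t/2^{g+1}\le 2t/2^{g+1}$ --- is exactly Pietrzak's argument instantiated with the lemma's parameters, and the key point (that $e$ is forwarded unchanged, so whatever weight distribution it has is preserved) is stated explicitly. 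The one spot worth a sentence in a polished write-up is the small-weight branch: you assert $\mathcal{B}$ answers queries with $wt(v)<n+g$ ``directly,'' but under the paper's literal definition $\Lambda(x\cdot v,\ell,1/2,\cdot)$ is \emph{not} secret-independent (the exact noise $e$ is confined to a Hamming ball, so $A(x\cdot v)\oplus e$ is not uniform and $\mathcal{B}$ does not know $x\cdot v$). This is really a wrinkle in Assumption~\ref{assumption:exactsubset} rather than in your argument --- under the usual convention that the $\tau=1/2$ case means a fully uniform second coordinate (which is what \cite{Pie:12}'s reduction presupposes, and is surely the intent here), your handling is immediate.
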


\begin{proof}
The proof of this lemma follows verbatim from the proof of Theorem~1 in
\cite{Pie:12}. The key observation is that the reduction from subset 
LPN to LPN in Theorem~1 in \cite{Pie:12} is independent of the noise 
distribution. 
\end{proof}

From Lemma~\ref{lem:pie}, we get that Assumption~\ref{assumption:LPNvar} 
implies Assumption~\ref{assumption:exactsubset}. In particular, we can 
set $\ell = n/5$ and $g = n/10$, $n' \ge 11n/10$. Then we get that if 
the exact $(n, \ell,\tau)$ problem is $(2^{n^{\beta}}, 2^{-n^{\beta}})$ 
hard for some $\beta>0$, then the subset exact $(n', \ell, 11n/10, 
\tau)$ is also $(2^{n^{\beta'}}, 2^{-n^{\beta'}})$ hard for some other 
$\beta'>0$.  For the rest of this section, we set the value of $\ell$ 
and $g$ as above and we assume that the subset exact $(n', \ell, 
11n/10, \tau)$ is $(2^{n^{\beta'}}, 2^{-n^{\beta'}})$ hard for some 
$\beta'>0$. 

Now we are ready to define the following Message Authentication Code 
(MAC) $(G,S,V)$, which we refer to as
\textsf{LPN-MAC}:
 
\begin{itemize}

\item The key generation algorithm $G$ chooses a random matrix $X \in 
GF(2)^{\lambda \times n}$ and a string 
$x' \in GF(2)^{\lambda }$, where $\lambda = 2n$.

\item The tagging algorithm samples $R \in GF(2)^{\ell \times \lambda}$ 
and $e \in GF(2)^\ell$ where $e$ is a randomly chosen vector in 
$GF(2)^{\ell}$ with at most $\lceil \tau \ell \rceil$ 
ones. The algorithm outputs $(R, R^T \cdot (X \cdot m + x')+e)$.

\item The verification algorithm, given tag $(R,Z)$ for message $m$, 
computes $y = Z + R^T \cdot (X \cdot m + x')$ and accepts if and only if 
the total number of ones in $y$ is at most $\lceil \tau \ell\rceil $.

\end{itemize}

Note that all arithmetic operations in the description of the above MAC
are done over $GF(2)$. The following theorem shows that under suitable
assumptions the above MAC is special and secure as desired:

\begin{theorem}\label{thm:unforge-mac} Assuming that the exact $(n, 
\ell, \tau)$ problem is $(t, \epsilon)$ hard for $t= 2^{n^{\beta}}$ and 
$\epsilon = 2^{-n^{\beta}}$ for $\beta>0$, \textsf{LPN-MAC} described 
above is a $(t', \epsilon')$-RMA-secure special MAC for $t ' = 
2^{n^{\beta'}}$ and $\epsilon' = 2^{-n^{\beta'}}$ for some $\beta'>0$.  
\end{theorem}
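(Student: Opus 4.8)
The plan is to prove two things about \textsf{LPN-MAC}: that it is \emph{special} in the sense of Definition~\ref{def:MAC-s}, and that it is $(t',\epsilon')$-RMA secure. The second claim will come from a reduction to the subset exact $(n',\ell,11n/10,\tau)$ LPN problem, which — via Lemma~\ref{lem:pie} and the parameter choices fixed above ($\ell=n/5$, $g=n/10$, $n'\ge 11n/10$) — is as hard as exact $(n,\ell,\tau)$ LPN. So the whole argument is: (i) verify specialness, (ii) reduce RMA-forgery to subset-exact-LPN, (iii) chase parameters.

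First I would verify specialness, which is a direct calculation. Fix a key $(X,x')$ and a message $m$, and write $s_m=Xm+x'$. A valid tag for $m$ is a pair $(R,R^T s_m+e)$ with $R$ arbitrary and $wt(e)\le\lceil\tau\ell\rceil$, while a potential tag is any $(R,Z)$ with $wt(Z+R^T s_m)\le\lceil\tau\ell\rceil$; given such a $(R,Z)$, the vector $e:=Z+R^T s_m$ witnesses it as a valid tag, so the two sets coincide. (This is exactly the place where using bounded-weight ``exact'' noise rather than Bernoulli noise is essential: it gives the verification algorithm perfect completeness, which Bernoulli noise would not.) Since the map $(R,e)\mapsto(R,R^T s_m+e)$ is injective and $R$ is uniform, every valid tag of every message is output by the tagging algorithm with the same probability $2^{-\ell\lambda}N^{-1}$, where $N=|\{z\in GF(2)^{\ell}:wt(z)\le\lceil\tau\ell\rceil\}|$; both bullets of Definition~\ref{def:MAC-s} follow at once.

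For RMA security I would argue by contradiction. Suppose an adversary $\mathcal{A}$ running in time $t'$ forges with probability more than $\epsilon'$. I would build a distinguisher $\mathcal{B}$ for the subset exact $(n',\ell,11n/10,\tau)$ LPN problem. $\mathcal{B}$ has oracle access to $\mathcal{O}^a_{z,\ell,11n/10,\cdot}$ for an unknown secret $z$ (the oracle uses either exact noise of rate $\tau$ or fully random noise), and simulates the MAC to $\mathcal{A}$: it samples the ``public'' part of the key honestly with its own coins, implicitly identifies the rest with $z$, and chooses the matrix $X$ (and $x'$) so that for every message $m$ that $\mathcal{A}$ might query, the combined secret $Xm+x'$ can be extracted from an $\ell$-sample batch returned by the oracle on a \emph{selector} vector $v_m$ of Hamming weight at least $11n/10$; the ``blow-up'' dimension $\lambda=2n$ is precisely what gives enough room to route distinct messages to distinct, sufficiently large supports. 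Each tagging query on $m_i$ is answered by calling the oracle at $v_{m_i}$ and repackaging the returned samples into a tag $(R_i,Z_i)$, which is distributed exactly as an honest tag in the exact-noise world and as an independent uniform pair in the random world. When $\mathcal{A}$ outputs $(m^\ast,\sigma^\ast)$ with $m^\ast$ fresh, $\mathcal{B}$ outputs $1$ iff $\sigma^\ast$ verifies. The precise dimension bookkeeping here follows the template of the LPN-based MAC of Jain et al.~\cite{Jain:11}.

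The hard part — the step I expect to require the most care — is showing that the selectors can be arranged so that simultaneously (i) every query selector $v_{m_i}$ has weight at least $11n/10$, so the simulation is faithful in the exact-noise world, and (ii) for any fresh $m^\ast$, with high probability over $\mathcal{B}$'s coins the support of $v_{m^\ast}$ hits a coordinate of $z$ that no query selector ever touched, so that in the random world a verifying forgery occurs with probability at most $N\cdot2^{-\ell}=2^{-\Omega(n)}$, while in the real world it occurs with probability more than $\epsilon'$. The resulting probability gap gives $\mathcal{B}$ advantage at least $\epsilon'-2^{-\Omega(n)}-(\text{failure of the freshness event})$, which is non-negligible once $t'=2^{n^{\beta'}}$ and $\epsilon'=2^{-n^{\beta'}}$ are taken with $\beta'$ small enough relative to $\beta$ and to the bound $|\mathcal{M}|=2^{\Omega(n)}$ (so that also $t'/|\mathcal{M}|=2^{-\Omega(n)}$ and $t'\le t/2$). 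This contradicts the $(2^{n^\beta}/2,\,2^{-n^\beta}+2^{-\Omega(n)})$-hardness of subset exact $(n',\ell,11n/10,\tau)$ LPN guaranteed by Lemma~\ref{lem:pie} from the assumed $(2^{n^\beta},2^{-n^\beta})$-hardness of exact $(n,\ell,\tau)$ LPN, completing the proof. I would stress that the bounded-weight noise distribution is carried through the reduction verbatim, exactly as the noise-distribution-oblivious reduction of Lemma~\ref{lem:pie} already is, so no new difficulty arises from replacing Bernoulli noise by exact noise.
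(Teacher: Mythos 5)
Your proposal takes essentially the same route as the paper: you verify specialness directly (the paper dismisses this as trivial, but your spelled-out argument --- that bounded-weight noise gives perfect completeness and that the tagging map $(R,e)\mapsto(R,R^T s_m+e)$ is a uniform bijection onto the valid-tag set --- is precisely why it is trivial), and you obtain RMA security by inheriting the Jain et al.~\cite{Jain:11}/Dodis et al.~\cite{Dodis:12} reduction after replacing subset LPN with subset exact LPN via Lemma~\ref{lem:pie}, observing as the paper does that the reduction is oblivious to the noise distribution. The only real difference is one of exposition: the paper simply cites the Jain et al.\ security proof verbatim and notes the substitution, whereas you attempt to sketch the shape of the underlying embedding argument (selectors, blow-up dimension, freshness event) before deferring the ``precise dimension bookkeeping'' to that same citation, which is exactly the deferral the paper makes.
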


\begin{proof} First, it is trivial to observe that the MAC described 
above is a special MAC. Thus, we are only left with the task of proving 
the security of this construction.  In \cite{Jain:11} (Theorem~5), the 
authors show that the above MAC is secure with the above parameters 
under Assumption~\ref{assumption:LPN} provided the vector $e$ in the 
description of \textsf{LPN-MAC} is drawn from a distribution where every 
coordinate of $e$ is an independent draw from $\mB_{\tau}$. 
(We note that the MAC of Theorem~5 in \cite{Jain:11} is described in a slightly 
different way, but Dodis \etal \cite{Dodis:12} show that the above MAC and 
the MAC of Theorem~5 in \cite{Jain:11} are exactly the same).  
Follow the same proof verbatim except whenever \cite{Jain:11} use the 
subset LPN assumption, we use the subset exact LPN assumption 
(i.e.~Assumption~\ref{assumption:exactsubset}), we obtain a
proof of Theorem~\ref{thm:unforge-mac}. \end{proof}

\subsubsection{A problem for which inverse approximate uniform generation
is hard but approximate uniform generation is easy.}

Given Theorem~\ref{thm:hardness-MAC}, in order to come up with
a problem where inverse approximate uniform generation is hard
but approximate uniform generation is easy,
it remains only to show 
that the verification algorithm for \textsf{LPN-MAC} can be 
implemented in a class of functions for which approximate uniform 
generation is easy. Towards this, we have the following definition.

\begin{definition}\label{def:BI} 
\textsf{BILINEAR-MAJORITY$_{\ell,n,\lambda,\tau}$} is a class of Boolean 
functions such that every 
$f \in$\textsf{BILINEAR-MAJORITY$_{\ell,n,\lambda,\tau}$},
$f : GF(2)^{\ell \times \lambda} \times 
GF(2)^{\ell} \times GF(2)^{n} \rightarrow \{0,1\}$ is parameterized by 
subsets $S_1, \ldots, S_{\lambda} \subseteq [n] $ and $x^0 \in 
GF(2)^{\lambda}$ and is defined as follows : On input $(R,Z,m) \in 
GF(2)^{\ell \times \lambda} \times GF(2)^{\ell} \times GF(2)^{n}$, 
define

$$ y_i = Z_i + \sum_{j=1}^{\lambda} R_{ij} \cdot (\sum_{\ell \in S_j} 
m_{\ell} + x^0_j) $$

where all the additions and multiplications are in $GF(2)$. Then 
$f(R,Z,m)=1$ if and only if at most $\lceil \tau \ell \rceil$ coordinates
$y_1,\dots,y_\ell$ are $1$. \end{definition} 

\begin{claim} For the \textsf{LPN-MAC} 
with parameters $\ell$, $n$, $\lambda$ and $\tau$ 
described earlier, the verification algorithm $V$ can be implemented in 
the class \textsf{BILINEAR-MAJORITY$_{\ell,n,\lambda,\tau}$}. 
\end{claim}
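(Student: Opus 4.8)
The plan is to show that the verification algorithm $V$ of \textsf{LPN-MAC} is syntactically just an evaluation of a function in \textsf{BILINEAR-MAJORITY$_{\ell,n,\lambda,\tau}$}, by matching up the parameters. Recall that $V$, on input a message $m \in GF(2)^n$ and a tag $(R,Z) \in GF(2)^{\ell \times \lambda} \times GF(2)^\ell$, computes $y = Z + R^T \cdot (X \cdot m + x')$ over $GF(2)$ and accepts iff $y$ has at most $\lceil \tau \ell \rceil$ ones. So the only work is to rewrite the bilinear form $Z + R^T(Xm + x')$ coordinate-by-coordinate and observe that it has exactly the shape prescribed in Definition~\ref{def:BI}.

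Concretely, first I would fix the secret key $sk = (X, x')$ of \textsf{LPN-MAC}, where $X \in GF(2)^{\lambda \times n}$ and $x' \in GF(2)^\lambda$ with $\lambda = 2n$. For each $j \in [\lambda]$, the $j$-th coordinate of $Xm + x'$ is $\sum_{k=1}^n X_{jk} m_k + x'_j = \sum_{k \in S_j} m_k + x'_j$ over $GF(2)$, where I set $S_j \eqdef \{ k \in [n] : X_{jk} = 1\}$ and $x^0_j \eqdef x'_j$. Then the $i$-th coordinate of $Z + R^T(Xm+x')$ is $Z_i + \sum_{j=1}^\lambda R_{ij} \cdot (\sum_{k \in S_j} m_k + x^0_j)$, which is exactly the expression $y_i$ appearing in Definition~\ref{def:BI}. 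Hence with this choice of $S_1,\dots,S_\lambda$ and $x^0$, the function $f \in$\textsf{BILINEAR-MAJORITY$_{\ell,n,\lambda,\tau}$} agrees with $(R,Z,m) \mapsto V(m, sk, (R,Z))$ on all inputs (after reconciling the trivial difference in the order of the input coordinates, which is a cosmetic relabeling). This establishes the claim.

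I do not expect any real obstacle here — the statement is essentially a definitional unwinding, chosen precisely so that Definition~\ref{def:BI} captures the verification circuit. The only thing worth a sentence of care is the bookkeeping: making sure that the ``$+$'' and ``$\cdot$'' in both the MAC description and in Definition~\ref{def:BI} are understood over $GF(2)$, that $\lambda = 2n$ matches, and that the acceptance threshold ``at most $\lceil \tau \ell \rceil$ ones in $y$'' is literally the same predicate in both places. Once the claim is in hand, the intended payoff (to be spelled out after) is that approximate uniform generation over \textsf{BILINEAR-MAJORITY} is easy — intuitively because fixing $R$ and $Z$ leaves a linear/affine structure in $m$, and the majority-style acceptance condition on the $\ell = \Theta(n)$ bits $y_i$ is amenable to the same kind of approximate counting/sampling arguments available for threshold-type constraints — whereas Theorems~\ref{thm:hardness-MAC} and~\ref{thm:unforge-mac} together give that inverse approximate uniform generation for this class is hard under Assumption~\ref{assumption:LPNvar}.
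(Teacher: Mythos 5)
Your proposal is correct and follows exactly the paper's argument: fix $sk=(X,x')$, set $S_j = \{k : X_{jk}=1\}$ and $x^0 = x'$, and observe coordinate-by-coordinate that the resulting $f \in$\textsf{BILINEAR-MAJORITY$_{\ell,n,\lambda,\tau}$} computes the same predicate as $V(\cdot,sk,\cdot)$. The only difference is that you spell out the coordinate bookkeeping that the paper dismisses as ``easy to check,'' which is fine.
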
 

\begin{proof} Consider the \textsf{LPN-MAC} with parameters 
$\ell$, $n$, $\lambda$ and $\tau$ and secret key $X$ and $x'$. Now
define a function $f$ in 
\textsf{BILINEAR-MAJORITY$_{\ell,n,\lambda,\tau}$} where $x^0 = x'$ and 
the subset $S_j = \{ i : X_{ji} =1 \}$. It is easy to check that 
the corresponding $f(R,Z,m)=1$ if and only if $(R,Z)$ is a valid tag for 
message $m$. \end{proof} 

The next and final claim says that there is an efficient approximate
uniform generation algorithm for
\textsf{BILINEAR-MAJORITY$_{\ell,n,\lambda,\tau}$}:

\begin{claim} 
There is an algorithm which given any $f 
\in$\textsf{BILINEAR-MAJORITY$_{\ell,n,\lambda,\tau}$} (with parameters 
$S_1, \ldots, S_{\lambda} \subseteq [n] $ and $x^0 \in GF(2)^{\lambda}$) 
and an input parameter $\delta>0$, runs in time $\poly(n, \ell, \lambda, 
\log (1/\delta))$ and outputs a distribution which is $\delta$-close to 
being uniform on $f^{-1}(1)$. \end{claim}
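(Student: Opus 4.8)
The plan is to exploit the simple algebraic structure of the functions in \textsf{BILINEAR-MAJORITY}. Fix $f \in$ \textsf{BILINEAR-MAJORITY$_{\ell,n,\lambda,\tau}$} with parameters $S_1,\dots,S_\lambda \subseteq [n]$ and $x^0 \in GF(2)^\lambda$. First I would observe that if, for $m \in GF(2)^n$, we define $v(m) \in GF(2)^\lambda$ by $v(m)_j = \big(\sum_{i \in S_j} m_i\big) + x^0_j \pmod 2$, then for every input $(R,Z,m)$ the vector $y$ in Definition~\ref{def:BI} is exactly $y = Z + R\, v(m)$ (matrix--vector product over $GF(2)$). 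Hence $f(R,Z,m)=1$ if and only if $wt\big(Z + R\,v(m)\big) \le \lceil \tau \ell\rceil$. The key point is that for \emph{every} fixed pair $(R,m)$ the map $Z \mapsto Z + R\,v(m)$ is a bijection of $GF(2)^\ell$, so the number of $Z$ with $f(R,Z,m)=1$ equals $N \eqdef \sum_{k=0}^{\lceil \tau \ell \rceil} \binom{\ell}{k}$, a number independent of $(R,m)$ (and always at least $1$, so $f^{-1}(1) \ne \emptyset$). Therefore $\U_{f^{-1}(1)}$ can be generated by first drawing $(R,m)$ uniformly from $GF(2)^{\ell\times\lambda}\times GF(2)^n$ and then drawing $Z$ uniformly from $\{Z : wt(Z + R\,v(m))\le \lceil\tau\ell\rceil\}$.

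Next I would note that the second stage is equivalent to drawing a ``noise'' vector $e$ uniformly from the Hamming ball $\{e \in GF(2)^\ell : wt(e)\le \lceil\tau\ell\rceil\}$ and outputting $Z = R\,v(m) + e$, and that sampling uniformly from this ball is easy: draw a weight $k \in \{0,1,\dots,\lceil\tau\ell\rceil\}$ with probability $\binom{\ell}{k}/N$, then output a uniformly random weight-$k$ vector in $GF(2)^\ell$ (i.e.\ the indicator of a uniformly random size-$k$ subset of $[\ell]$). Putting the pieces together, the generation algorithm is: (i) sample $m \in_U GF(2)^n$ and $R \in_U GF(2)^{\ell\times\lambda}$; (ii) compute $v(m)$ and $w = R\,v(m) \in GF(2)^\ell$; (iii) sample $k$ with probability $\binom{\ell}{k}/N$ and a uniform weight-$k$ vector $e \in GF(2)^\ell$; (iv) output $(R,\, w+e,\, m)$.

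Finally I would verify that this procedure outputs a point distributed \emph{exactly} according to $\U_{f^{-1}(1)}$ (so one actually gets a $0$-sampler and the parameter $\delta$ in the statement is not even needed) and that it runs in time $\poly(n,\ell,\lambda)$; the only non-elementary arithmetic is computing the $O(\ell)$-bit integers $\binom{\ell}{k}$ for $k\le\lceil\tau\ell\rceil$ together with their sum $N$ and then sampling $k$ from the resulting explicit distribution over $\{0,\dots,\lceil\tau\ell\rceil\}$, all of which is routine. I do not expect a genuine obstacle here: the only step needing (minor) care is the exact integer arithmetic with the binomial coefficients in step (iii), together with the structural observation that the per-$(R,m)$ count $N$ is independent of $(R,m)$ --- it is precisely this independence that makes the two-stage sampling uniform on $f^{-1}(1)$.
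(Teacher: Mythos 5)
Your proof is correct, and it rests on exactly the same structural observation as the paper's: for every fixed $(R,m)$ the map $Z \mapsto Z + R\,v(m)$ is a bijection of $GF(2)^\ell$, so $|\{Z : f(R,Z,m)=1\}|$ is the same number $N=\sum_{k \le \lceil\tau\ell\rceil}\binom{\ell}{k}$ regardless of $(R,m)$, and hence sampling $(R,m)$ uniformly and then $Z$ uniformly conditioned on acceptance yields the uniform distribution on $f^{-1}(1)$. Where you diverge is in how the conditional $Z$-sample is produced. The paper observes that the set of acceptable ``noise'' vectors $y=Z+R\,v(m)$ is exactly the set of $0/1$ points satisfying the halfspace $\sum_i y_i \le \lceil\tau\ell\rceil$, and appeals to Dyer's approximate uniform generation algorithm for LTFs (Theorem~\ref{thm:dyer-gen}) as a black box, so the $\delta$ in the statement is inherited from that subroutine. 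You instead sample the noise vector directly: pick a weight $k$ with probability $\binom{\ell}{k}/N$ using exact $O(\ell)$-bit integer arithmetic, then a uniform weight-$k$ vector. Your route is more elementary and self-contained, avoids the external machinery, and in principle yields an (essentially) exact sampler rather than a $\delta$-approximate one. One small caveat against your parenthetical remark that ``$\delta$ is not even needed'': a circuit-style sampler (Definition~\ref{def:sampler}) has a fixed number of random input bits, and since $N$ need not be a power of two, sampling $k$ with probability \emph{exactly} $\binom{\ell}{k}/N$ requires either unbounded rejection sampling or accepting total variation error $\delta$ after $O(\log(1/\delta))$ rejection rounds. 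So the $\delta$ and the $\log(1/\delta)$ in the running time are still mildly relevant, but this is a triviality and does not affect the correctness of your argument.
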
 

\begin{proof} The crucial observation is that for any $(R,m)$, the set 
$\mathcal{A}_{R,m} = \{z : f(R,Z,m)=1 \}$ has cardinality independent of 
$R$ and $m$. This is because after we fix $R$ and $m$, if we define $b_i 
=\sum_{j=1}^{\lambda} R_{ij} \cdot (\sum_{\ell \in S_j} m_{\ell} + 
x^0_j)$, then $y_i = Z_i + b_i$. Thus, for every fixing of $R$ and $m$, 
since $b_i$ is fixed, the set of those $Z$ such that the number of 
$y_i$'s which are $1$ is bounded by $\tau \ell$ is independent of $R$ 
and $m$. This implies that the following sampling algorithm 
returns a uniformly random element of $f^{-1}(1)$: 

\begin{itemize} 

\item Randomly sample $R$ and $m$. Compute $b_i$ as defined earlier. 

\item Let 
$a = \lceil \tau \ell \rceil$ and consider the halfspace $g(y) = sign 
(a - \sum_{i=1}^\ell y_i )$. Now, we use Theorem~\ref{thm:dyer-gen} to 
sample uniformly at random from $g^{-1}(1)$ and hence draw a uniformly 
random $y$ from the set $ \{ y \in \{0,1\}^{\ell} : \sum_{i=1}^{\ell} 
y_i \le a \}$. \item We set $Z_i = y_i + b_i$. Output $(R,Z,m)$. 
\end{itemize} The guarantee on the running time of the procedure follows 
simply by using the running time of Theorem~\ref{thm:dyer-gen}. 
Similarly, the statistical distance of the output from the uniform 
distribution on $f^{-1}(1)$ is at most $\delta$. 
\end{proof}

\section{Efficient inverse approximate
uniform generation when approximate uniform generation is
infeasible}\label{sec:graph-auto}

In Section~\ref{sec:LTF} we gave an efficient algorithm for the inverse
approximate uniform generation problem for halfspaces, and in
Section~\ref{sec:DNF} we gave a quasi-polynomial time algorithm for the
inverse approximate uniform generation problem for DNFs. Since
both these algorithms follow the standard approach, both crucially use
efficient algorithms for the corresponding uniform generation
problems~\cite{KLM89, MorrisSinclair:04}. In this context, it is
natural to ask the following question: \emph{Is inverse approximate uniform
generation easy only if the corresponding approximate uniform
generation problem is easy?}

In this section we show that the answer to this question is ``no''
(for at least two reasons).
First, we point out that a negative answer follows easily from the
well-known fact that it is computationally hard to ``detect unique
solutions.''
In more detail, we recall the definition of
the \textsf{UNIQUE-SAT} problem.  \textsf{UNIQUE-SAT} is a promise problem
where given a CNF $\Phi$, the task is to distinguish between the
following two cases:

\begin{itemize}

\item $\Phi$ has no satisfying assignment; versus

\item $\Phi$ has \emph{exactly} one satisfying assignment.

\end{itemize}

In a famous result, Valiant and Vazirani~\cite{valvaz} showed the following.

\begin{theorem}\cite{valvaz}
There is a randomized polynomial time reduction from \textsf{CNF-SAT} to
\textsf{UNIQUE-SAT}.

\end{theorem}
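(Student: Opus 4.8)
The plan is to prove the classical Valiant--Vazirani isolation result: given a CNF formula $\Phi$ on $n$ variables, produce in randomized polynomial time a CNF formula $\Psi$ such that (i) if $\Phi$ is unsatisfiable then $\Psi$ is \emph{always} unsatisfiable, and (ii) if $\Phi$ is satisfiable then with probability at least $1/\poly(n)$ the formula $\Psi$ has \emph{exactly one} satisfying assignment. This is precisely a randomized reduction from \textsf{CNF-SAT} to \textsf{UNIQUE-SAT} in the standard sense (one-sided on no-instances, inverse-polynomial success on yes-instances; to get a decision procedure one then runs the reduction $\poly(n)$ times independently and queries the \textsf{UNIQUE-SAT} oracle on each output, accepting iff some answer is yes).

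First I would set up the hashing construction. Let $S = \Phi^{-1}(1) \subseteq \{0,1\}^n$. Draw $k$ uniformly from $\{2,\dots,n+1\}$, draw a uniformly random matrix $A \in \{0,1\}^{k \times n}$ and vector $b \in \{0,1\}^k$ over $GF(2)$, and set $h(x) = Ax \oplus b$; this is a pairwise independent affine hash family mapping $\{0,1\}^n \to \{0,1\}^k$. Let $T = \{x \in S : h(x) = 0^k\}$. The heart of the argument is the second-moment ``isolation lemma'': if $k$ is such that $2^{k-2} \le |S| \le 2^{k-1}$ (so that $\E[\,|T|\,] = |S|\cdot 2^{-k} =: p \in [1/4, 1/2]$), then by Bonferroni/inclusion--exclusion and pairwise independence $\Pr[\,|T| \ge 1\,] \ge p - \binom{|S|}{2} 2^{-2k} \ge p - p^2/2 \ge 3/16$, while $\Pr[\,|T| \ge 2\,] \le \binom{|S|}{2} 2^{-2k} \le p^2/2 \le 1/8$, hence $\Pr[\,|T| = 1\,] \ge 3/16 - 1/8 = 1/16$. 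If $\Phi$ is satisfiable then $1 \le |S| \le 2^n$, so some $k \in \{2,\dots,n+1\}$ satisfies $2^{k-2} \le |S| \le 2^{k-1}$, and that $k$ is drawn with probability $\ge 1/n$; combining, $\Pr[\,|T| = 1\,] \ge 1/(16 n)$.

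Next I would encode the predicate ``$\Phi(x) = 1$ and $Ax \oplus b = 0^k$'' as a CNF formula $\Psi$. The original clauses of $\Phi$ are already CNF; each of the $k$ parity constraints $\bigoplus_{i : A_{ji} = 1} x_i = b_j$ is converted to a constant-width CNF by introducing a chain of $O(n)$ fresh auxiliary variables computing the running XOR, each governed by a constant-size CNF gadget, with a final clause forcing the last auxiliary variable to equal $b_j$. The crucial point is that the auxiliary variables are functionally determined by $x$, so the satisfying assignments of $\Psi$ are in bijection with $T$; in particular $|\Psi^{-1}(1)| = |T|$. Thus $\Psi$ is unsatisfiable whenever $\Phi$ is, and has a unique satisfying assignment with probability $\ge 1/(16n)$ when $\Phi$ is satisfiable, giving the claimed reduction.

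The part needing genuine care — as opposed to bookkeeping — is the probabilistic analysis of the isolation step: verifying that pairwise independence of the affine hash family suffices to control the variance of $|T|$, and extracting $\Pr[\,|T| = 1\,]$ from the two Bonferroni bounds while tracking which range of $k$ makes $\E[\,|T|\,]$ land in the window $[1/4, 1/2]$. The CNF encoding and the ``unsatisfiable maps to unsatisfiable'' direction are routine once the count-preservation property of the auxiliary variables is stated.
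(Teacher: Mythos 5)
Your proposal is correct. Note, however, that the paper does not give a proof of this statement at all; it simply cites Valiant--Vazirani \cite{valvaz} as a known result, so there is nothing in the paper to compare your argument against. What you have written is the standard isolation-by-pairwise-independent-hashing proof: the affine hash family $h(x) = Ax \oplus b$ is pairwise independent, the Bonferroni inequalities give $\Pr[|T|\ge 1] \ge p - p^2/2$ and $\Pr[|T|\ge 2] \le p^2/2$ with $p = |S|2^{-k}$, the guess $k \in \{2,\dots,n+1\}$ hits the window $2^{k-2} \le |S| \le 2^{k-1}$ with probability at least $1/n$, and in that window $p\in[1/4,1/2]$ so $\Pr[|T|=1] \ge 1/16$, yielding overall isolation probability $\ge 1/(16n)$; the Tseitin-style CNF encoding of the parity constraints introduces auxiliary variables that are functionally determined by $x$, so $|\Psi^{-1}(1)| = |T|$ exactly. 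The bookkeeping in your bounds checks out (in fact $p - p^2$ is at least $3/16$ on the whole window, so your $1/16$ is a bit loose but certainly valid), the range of $k$ always contains a good value for any $1 \le |S| \le 2^n$, and the one-sided guarantee on unsatisfiable instances is immediate since $\Psi$ is a conjunction of $\Phi$ with additional constraints. This is the correct theorem and a correct proof of it.
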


Let $\C$ denote the class of all $n$-variable CNF formulas that have exactly
one satisfying assignment.
As an immediate corollary of Theorem~\cite{valvaz} we have the following:

\begin{corollary}\label{cor:valvaz}
There is a constant $c>0$ such that unless \textsf{SAT} $\in$
\textsf{BPTIME$(t(n))$}, there is no approximate uniform generation
algorithm for $\C$ which runs in time \textsf{BPTIME$(t(n^c))$}
\new{even for variation distance $\eps = 1/2$}.
\end{corollary}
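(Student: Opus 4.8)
The plan is to reduce \textsf{SAT} to the approximate uniform generation problem for $\C$ via the Valiant--Vazirani reduction, exploiting the observation that when $f\in\C$ the distribution $\U_{f^{-1}(1)}$ is a point mass: a distribution within total variation distance $1/2$ of the point mass at $z^{\ast}$ assigns probability at least $1/2$ to $z^{\ast}$, since $\dtv(D, \text{point mass at }z^{\ast}) = 1 - D(\{z^{\ast}\})$. Thus a fast approximate uniform generator for $\C$ would let us solve \textsf{UNIQUE-SAT}, and hence \textsf{SAT}, quickly, even though the sampler is only required to be accurate to variation distance $\eps = 1/2$.

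Arguing the contrapositive, I would fix a putative approximate uniform generation algorithm $A$ for $\C$ that on an $m$-variable input $f\in\C$ runs in randomized time $t(m^{c})$ and outputs a sample within total variation distance $1/2$ of $\U_{f^{-1}(1)}$; the constant $c>0$ is chosen at the end as a small positive constant, essentially the reciprocal of the polynomial blow-up of the reduction below, shrunk a bit to absorb verification and amplification overheads. Given a \textsf{SAT} instance $\Phi$ on $n$ variables, the \textsf{SAT} algorithm repeats the following $k=\Theta(n)$ times: run the randomized polynomial-time reduction from \textsf{CNF-SAT} to \textsf{UNIQUE-SAT} (the theorem of \cite{valvaz} quoted above) on $\Phi$ to obtain a CNF $\Psi$ on $m \le n^{a}$ variables, where (i) if $\Phi$ is unsatisfiable then $\Psi$ is unsatisfiable (deterministically, since the reduction only adds constraints), while (ii) if $\Phi$ is satisfiable then with probability $\Omega(1)$ the formula $\Psi$ has exactly one satisfying assignment (so $\Psi\in\C$), and any satisfying assignment of $\Psi$ yields one of $\Phi$; then run $A$ on $\Psi$ with $\eps=1/2$, cutting it off after its promised time bound $t(m^{c})$ and letting $z$ be its output, if any (the cutoff is needed because $A$'s running-time guarantee applies only on genuine $\C$-instances); then test whether $z$ satisfies $\Psi$. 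Accept iff some repetition produces a verified satisfying assignment.

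Correctness is then immediate, and the verification step makes the argument robust to $A$ being run on out-of-promise inputs. If $\Phi$ is unsatisfiable, every $\Psi$ produced is unsatisfiable, so no $z$ is ever verified and the algorithm rejects with probability $1$. If $\Phi$ is satisfiable, then in each repetition we have $\Psi\in\C$ with probability $\Omega(1)$, and conditioned on that event the point-mass bound gives $\Pr[z = z^{\ast}] \ge 1/2$ where $z^{\ast}$ is the unique satisfying assignment of $\Psi$, after which $z$ is verified; so each repetition succeeds with probability $\Omega(1)$, and over $k=\Theta(n)$ repetitions the algorithm accepts with probability at least $2/3$ (amplifying $A$'s internal randomness as needed is routine). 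The running time is $k$ times the sum of the $\poly(n)$ cost of the reduction, the cost $t(m^{c}) = t(n^{ac})$ of running $A$, and the $\poly(n^{a})$ cost of verification, which for $c$ small enough that $ac<1$ and the polynomial factors are dominated is at most $t(n)$; hence $\textsf{SAT} \in \textsf{BPTIME}(t(n))$, as claimed.

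The main (and essentially only) obstacle is the exponent and overhead bookkeeping in the final running-time accounting: fitting the polynomial blow-ups — in the number of variables, in time, and in the number of amplification rounds — inside a single polynomial slack in the exponent so that one lands exactly at ``$\textsf{SAT}\in\textsf{BPTIME}(t(n))$ versus a generator running in time $t(n^{c})$''. This is routine for any ``nice'' time bound $t$ (e.g.\ time-constructible and at least polynomially growing), and in the special case $t(n)=\poly(n)$ it simply recovers the familiar statement that $\C$ has no polynomial-time approximate uniform generation algorithm unless $\textsf{NP}\subseteq\textsf{BPP}$. The two load-bearing ideas — the Valiant--Vazirani reduction and the observation that an $\eps=1/2$ guarantee already pins down the solution of a unique-solution instance, protected by a verification step so that $A$ is only ever trusted on in-promise inputs — are both already available.
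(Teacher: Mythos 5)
Your proof is correct and is exactly the argument the paper leaves implicit by calling this an ``immediate corollary'' of Valiant--Vazirani: reduce \textsf{SAT} to \textsf{UNIQUE-SAT}, observe that a sampler within variation distance $1/2$ of a point mass must output the unique witness with probability at least $1/2$ (since $\dtv(D,\delta_{z^\ast})=1-D(\{z^\ast\})$), and protect against out-of-promise behavior with a time cutoff and verification. One small nit: in the standard Valiant--Vazirani construction the per-trial success probability is $\Omega(1/n)$ rather than $\Omega(1)$ (one must also guess the hash length), so the repetition count should be $\Theta(n^2)$ or $\Theta(n)$ per hash level, but this is absorbed by the same polynomial overhead you already fold into the choice of $c$.
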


\ignore{
for any class of functions $\mathcal{C}$, if for
every $f \in \mathcal{C}$, $|f^{-1}(1)| \le t(n)$, then there is an
obvious algorithm which runs in time $O(t(n) \log (1/\epsilon))$ (Just
observe that after getting $O(t(n) \log (1/\epsilon))$ samples from
$f^{-1}(1)$, with probability $1-\epsilon$, we have every element in the
support of the distribution).  In particular, this means that there is a
}
On the other hand, it is clear that there is a
linear time algorithm for the inverse approximate uniform generation problem
for the class $\C$:  simply draw a single example $x$
and output the trivial distribution supported on that one example.

The above simple argument shows that there indeed exist classes
$\mathcal{C}$ where inverse uniform generation is ``easy" but
approximate uniform generation is ``hard", but this example is somewhat
unsatisfying, as the algorithm for inverse approximate uniform generation
is trivial.  It is natural to ask the following
meta-question: is there a class of functions $\mathcal{C}$ such that
approximation uniform generation is hard, but inverse approximate
generation is easy because of a polynomial-time algorithm
that ``uses its samples in a non-trivial way?''
In the rest of this section we give an example of such a problem.

\medskip

\noindent {\bf Efficient inverse approximate uniform generation for
graph automorphism.}
The following problem is more naturally defined in terms of a
relation over combinatorial objects rather than in terms of
a function and its satisfying assignments.
Let us define $\mathcal{G}_n$ to be the set
of all (simple undirected) graphs over vertex set $[n]$
and $\mathbb{S}_n$ to be the symmetric
group over $[n]$. We define the relation $R_{\mathrm{aut}}(G, \sigma)$ over
$\mathcal{G}_n \times \mathbb{S}_n$ as follows:
$R_{\mathrm{aut}}(G,\sigma)$ holds if
and only if $\sigma$ is an automorphism for the graph $G$. (Recall that
``$\sigma$ is an automorphism for graph $G$'' means that $(x,y)$ is an
edge in $G$ if and only if $(\sigma(x), \sigma(y))$ is also an edge in
$G$.)
The inverse approximate uniform generation problem for the relation
$R_{\mathrm{aut}}$ is then as follows:  There is an unknown $n$-vertex graph $G$.
The algorithm receives uniformly random samples
from the set $\mathrm{Aut}(G) := \{\sigma  \in \mathbb{S}_n \ : \
R_{\mathrm{aut}}(G,\sigma) \textrm{ holds } \}$.  On input
$\eps,\delta$, with probability $1-\delta$ the algorithm must output a
sampler whose output distribution is $\epsilon$-close to the uniform
distribution over $\mathrm{Aut}(G)$.

It is easy to see that $\mathrm{Aut}(G)$ is
a subgroup of $\mathbb{S}_n$, and hence the identity permutation $e_n$
must belong to $\mathrm{Aut}(G)$. To understand the complexity of
this problem we recall the graph isomorphism problem:

\begin{definition}\label{def:auto}
\textsf{GRAPH-ISOMORPHISM} is defined as follows : The input is a pair
of graphs $G_1,G_2 \in \mathcal{G}_n$ and the goal is to determine whether they are isomorphic.
\end{definition}

While it is known that \textsf{GRAPH-ISOMORPHISM} is unlikely to be
NP-complete \cite{Schoning88, BHZ:87}, even after several decades of effort the
fastest known algorithm for
\textsf{ GRAPH-ISOMORPHISM} has a running time
of $2^{\tilde{O}(\sqrt{n})}$~\cite{Babai:1981}. This gives strong
empirical evidence that \textsf{ GRAPH-ISOMORPHISM} is a computationally
hard problem.
The following claim establishes that approximate uniform generation for
$R_{\mathrm{aut}}$ is as hard as \textsf{GRAPH-ISOMORPHISM}:

\begin{claim}
If there is a $t(n)$-time algorithm for approximate uniform generation
for the relation $R_{\mathrm{aut}}$ (with error $1/2$), then
\new{for some absolute constant $c>0$} there is
a \new{$\poly(t(n^c))$-time
randomized} algorithm for \textsf{GRAPH-ISOMORPHISM}.

\end{claim}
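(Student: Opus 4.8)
The plan is to reduce \textsf{GRAPH-ISOMORPHISM} to approximate uniform generation for $R_{\mathrm{aut}}$ via a gadget that ``amplifies'' a coset of index $2$ in an automorphism group into one of index $4$, so that the $1/2$ variation-distance error of the generator can be tolerated. First I would reduce to the case of two \emph{connected} graphs $G_1,G_2$ on $n$ vertices each: if both are disconnected, replace them by their (connected) complements, using that $G_1\cong G_2\iff\overline{G_1}\cong\overline{G_2}$; if exactly one of the two is connected they are automatically non-isomorphic; and if both are connected we proceed directly. The basic combinatorial fact is that for connected $G_1,G_2$ the disjoint union $G_1\sqcup G_2$ has an automorphism exchanging its two connected components if and only if $G_1\cong G_2$, and when such automorphisms exist they form a single coset of index exactly $2$ inside $\mathrm{Aut}(G_1\sqcup G_2)$. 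A generator that is merely within variation distance $1/2$ of $\U_{\mathrm{Aut}(G_1\sqcup G_2)}$ cannot be used to detect this coset directly --- the distribution supported entirely on component-fixing automorphisms is exactly $1/2$-far from uniform --- so amplification is essential.

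\textbf{The gadget.} Next I would build, in polynomial time, a graph $\Gamma$ on $N=\poly(n)$ vertices as follows. Fix two \emph{rigid} graphs $T_1\not\cong T_2$ (graphs with trivial automorphism group), each with a designated vertex, that are non-isomorphic and constructible in polynomial time. For $i\in\{1,2\}$ take fresh copies $A_i$ of $G_1$ and $B_i$ of $G_2$, introduce hub vertices $h_i,h_i'$ with $h_i$ joined to every vertex of $A_i$ and $h_i'$ joined to every vertex of $B_i$, and join both $h_i$ and $h_i'$ to the designated vertex of a fresh copy of $T_i$; let $\Gamma$ be the disjoint union of these two ``tagged pairs''. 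I claim that $\mathrm{Aut}(\Gamma)$ factors as a direct product over the two pairs with no symmetries mixing pair $1$ with pair $2$ (the two tags are rigid and non-isomorphic, so no automorphism can move a copy of $T_i$ off itself), and that within pair $i$ every automorphism either fixes the two ``blocks'' $A_i\cup\{h_i\}$ and $B_i\cup\{h_i'\}$ setwise or swaps them, the swapping automorphisms existing precisely when $G_1\cong G_2$. Consequently the subgroup $N\le\mathrm{Aut}(\Gamma)$ of automorphisms that swap \emph{no} pair has index $4$ in $\mathrm{Aut}(\Gamma)$ when $G_1\cong G_2$ and index $1$ otherwise.

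\textbf{The algorithm.} Define the event $E(\pi)$: ``$\pi$ is an automorphism of $\Gamma$ and $\pi$ swaps the two blocks of at least one of the two pairs''; this is checkable in $\poly(N)$ time given $\pi$ and $\Gamma$. Under $\U_{\mathrm{Aut}(\Gamma)}$ we have $\Pr[E]=1-(1/2)^2=3/4$ if $G_1\cong G_2$ and $\Pr[E]=0$ otherwise. Now run the assumed $t(\cdot)$-time generator on $\Gamma$ independently $m=\Theta(\log(1/\delta))$ times, obtaining $\pi_1,\dots,\pi_m$; since the generator's output distribution $D'$ satisfies $\dtv(D',\U_{\mathrm{Aut}(\Gamma)})\le 1/2$, in the isomorphic case $\Pr_{D'}[E]\ge 3/4-1/2=1/4$, while in the non-isomorphic case $\Pr_{D'}[E]=0$ because no permutation satisfying $E$ exists and we verify each $\pi_j$. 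Output ``isomorphic'' iff some $\pi_j$ satisfies $E$. This errs with probability at most $(3/4)^m\le\delta$, and the running time is $m\cdot t(N)+\poly(N)=\poly\big(t(n^c)\big)$ for an absolute constant $c$ (arising from $N=\poly(n)$); taking $\delta$ a small constant gives the claimed randomized algorithm.

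\textbf{Main obstacle.} The delicate part is the exact analysis of $\mathrm{Aut}(\Gamma)$: I must verify that the hub/tag attachments (i) rule out \emph{all} spurious automorphisms --- nothing may move a tag, send a hub into an $A_i$ (here a degree count for the hubs is used), or mix the two pairs --- (ii) make ``$A_i$ together with its hub and tag'' isomorphic to ``$B_i$ together with its hub and tag'' if and only if $G_1\cong G_2$, which forces the attachment to be canonical (not depending on any choice of vertex) and to neither create nor destroy symmetry, and (iii) leave $E$ efficiently decidable. The reduction to connected graphs and the polynomial-time construction of two non-isomorphic rigid tag graphs are routine, as is the standard observation that using $k>2$ tagged pairs tolerates any variation-distance error bounded away from $1$.
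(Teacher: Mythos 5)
Your proof is correct in spirit but takes a genuinely different and more self-contained route than the paper's. The paper disposes of the claim in three lines: it observes that a $1/2$-error uniform generator for $\mathrm{Aut}(G)$ lets one decide (with high constant probability) whether $|\mathrm{Aut}(G)|>1$, and then cites Hoffmann's classical polynomial-time reduction from \textsf{GRAPH-ISOMORPHISM} to that decision problem. Your approach instead reproves a version of that reduction from scratch with an explicit gadget, \emph{and} wraps it in an amplification (two tagged pairs, so the relevant subgroup has index $4$ rather than $2$) precisely so that the bound $\Pr_{D'}[E]\ge 3/4-1/2=1/4$ is strictly positive. This is a real contribution: the paper's assertion that a single invocation distinguishes $|\mathrm{Aut}(G)|=1$ from $|\mathrm{Aut}(G)|\ge2$ is actually on the boundary if ``error $1/2$'' is read as total variation distance, since a generator that always outputs the identity is at TV distance exactly $1/2$ from uniform over a two-element group; your amplification sidesteps this, and also gives one-sided error. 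Two small remarks on your write-up. First, the preliminary reduction to connected $G_1,G_2$ is unnecessary: your hub vertex $h_i$ is joined to \emph{every} vertex of $A_i$, so each pair is automatically connected regardless of the connectivity of $G_1,G_2$, and the warm-up observation about $G_1\sqcup G_2$ is only motivational. Second, the gadget analysis you flag as the ``main obstacle'' does need to be written out (identify hubs by degree assuming $|T_i|=O(1)\ll n$; identify the designated tag vertex as the unique common neighbor of the two hubs; partition the remaining vertices by which hub they are adjacent to; require each $T_i$ connected, rigid, with $|T_i|\ge 2$, and $T_1\not\cong T_2$), but the sketch you give is sound and the details are routine. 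So the paper buys brevity by citation; you buy self-containedness and a cleaner handling of the $1/2$-error boundary at the cost of a gadget verification.
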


\begin{proof}
\new{
Let $A$ be the hypothesized $t(n)$-time algorithm,
so $A$, run on input $(G,1/2)$ \new{where $G$ is an
$n$-node graph}, returns an element $\sigma \in \mathrm{Aut}(G)$
drawn from a distribution $D$ that has $\dtv(D,\U_{\mathrm{Aut}(G)}) \leq 1/2.$
Given such an algorithm $A$, it is easy in $O(t(n))$ time
to determine (with high constant probability of
correctness) whether or not $|\mathrm{Aut}(G)|>1$.
Now the claim follows from the known fact \cite{Hoffmann:82}
that there is a polynomial-time reduction from \textsf{GRAPH-ISOMORPHISM} to
the problem of determining whether an input graph has $|\mathrm{Aut}(G)|>1.$
}
\ignore{
It is known that the problem of
producing a set of generators for $\mathrm{Aut}(G)$ given $G$
is polynomial time Turing reducible to the graph isomorphism problem
\cite{Hoffmann:82}. It is also known that given generators
of a permutation group, it is possible to generate \new{approximately uniform}
random elements of the group in polynomial time~\cite{FHL80}.
Hence, generating random
elements of the permutation group is Turing equivalent to
\textsf{GRAPH-AUTOMORPHISM}. On the other hand, it is obvious that if
one can generate a distribution which is $1/3$ close to
$\mathcal{U}_{\mathrm{Aut}(G)}$, then one can decide if
$|\mathrm{Aut}(G)|>1$.
} \end{proof}

While approximate uniform generation for $R_{\mathrm{aut}}$ is hard,
the next theorem shows that the \emph{inverse}
approximate uniform generation problem for $R_{\mathrm{aut}}$ is in fact easy:

\begin{theorem}\label{thm:graph-auto-easy}
There is a randomized algorithm ${A}_{\inv}^{\mathrm{aut}}$
with the following property:
The algorithm takes as input $\eps,\delta>0$.  Given
access to uniform random samples from $\mathrm{Aut}(G)$
(where $G$ is an unknown $n$-node graph), $A_\inv^{\mathrm{aut}}$
runs in time $\poly(n,\new{\log(1/\eps),\log(1/\delta)})$
and with probability $1-\delta$ outputs
a sampler $C_{\mathrm{aut}}$ with the
following property : The running time of $C_{\mathrm{aut}}$ is $O(n
\log n + \log (1/\epsilon))$ and the output distribution of
$C_{\mathrm{aut}}$ is $\epsilon$-close to the uniform distribution over
$\mathrm{Aut}(G)$.
 \end{theorem}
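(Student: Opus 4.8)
The plan is to exploit the fact that $\mathrm{Aut}(G)$ is a group and that, in the inverse setting, we are simply \emph{handed} uniform random elements of $H := \mathrm{Aut}(G) \le \mathbb{S}_n$ --- whereas in the forward setting even producing one nontrivial automorphism is essentially as hard as \textsf{GRAPH-ISOMORPHISM}. A small number of random elements of $H$ generate $H$ with high probability, and from generators of a permutation group, classical polynomial-time algorithms (Schreier--Sims, in the form of \cite{FHL80}) let us both describe $H$ by a base and strong generating set and sample (essentially) uniformly from it. So the algorithm $A_{\inv}^{\mathrm{aut}}$ is: draw enough random samples, run Schreier--Sims on them, and output a small sampler based on the resulting ``sifting'' structure.

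First I would bound the number of samples needed so that, except with probability $\delta$, they generate all of $H$. If $\sigma_1,\dots,\sigma_N$ are i.i.d.\ uniform in $H$, then $\langle\sigma_1,\dots,\sigma_N\rangle \ne H$ only if all $\sigma_i$ lie in a common maximal subgroup $M < H$; for any fixed proper subgroup $M$ we have $\Pr[\sigma_1,\dots,\sigma_N \in M] = (|M|/|H|)^N \le 2^{-N}$, and since every subgroup of $H$ is generated by at most $\lceil\log_2|H|\rceil \le \lceil\log_2 n!\rceil = O(n\log n)$ elements, $H$ has at most $2^{O(n^2\log^2 n)}$ subgroups. A union bound shows that $N = O(n^2\log^2 n + \log(1/\delta))$ samples suffice, which is $\poly(n,\log(1/\delta))$; no verification of the generation property is needed, since failure is absorbed into the allowed $\delta$ error. (Any sharper bound on the number of random generators of a finite group would only shrink $N$.)

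Next, given a generating set for $H$, I would invoke the polynomial-time Schreier--Sims machinery to compute, in time $\poly(n,N)=\poly(n,\log(1/\delta))$, a base $(\beta_1,\dots,\beta_m)$ with $m \le n$, the corresponding chain of pointwise stabilizers $H = H^{(0)} \ge H^{(1)} \ge \cdots \ge H^{(m)} = \{e_n\}$, and for each $i$ a transversal $U_i$ of $H^{(i)}$ in $H^{(i-1)}$ with $|U_i| \le n$ and $\prod_i |U_i| = |H|$. Every $\pi \in H$ has a unique factorization $\pi = u_1 u_2\cdots u_m$ with $u_i \in U_i$, so drawing each $u_i$ independently and uniformly from $U_i$ and outputting the product gives a point distributed exactly uniformly over $\mathrm{Aut}(G)$.

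Finally, to produce an actual bounded-randomness sampler $C_{\mathrm{aut}}$ that is $\eps$-close to uniform (rather than a process using an unbounded number of coins), I would replace each ``uniform draw from $U_i$'' by an approximate one: with $b = O(\log n + \log(m/\eps))$ fair bits one can get a distribution on $\{1,\dots,|U_i|\}$ within total variation distance $\eps/m$ of uniform (reduce a uniform $b$-bit integer modulo $|U_i|$), so across the $m \le n$ levels the output is $\eps$-close to $\U_{\mathrm{Aut}(G)}$; bundling all the draws into one uniform $O(n\log n + \log(1/\eps))$-bit integer and converting it to the mixed-radix digits $(u_1,\dots,u_m)$ keeps the randomness and the work of $C_{\mathrm{aut}}$ within the stated bounds. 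The one place that actually requires an argument is the probabilistic claim that the samples generate $H$; the rest is a direct appeal to standard computational group theory. There is no deep obstacle here --- the substantive point is conceptual, namely that the inverse problem for free supplies exactly the random group elements that make the forward problem hard, and that one genuinely must allow the $\eps$ since a group of non-power-of-two order cannot be sampled exactly uniformly with finitely many coins.
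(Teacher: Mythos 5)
Your proposal is correct, but it takes a genuinely different route from the paper.  The paper draws $k = O(n\log n + \log(1/\delta))$ random elements of $H = \mathrm{Aut}(G)$, forms the symmetrized set $S$, and invokes the Alon--Roichman theorem to conclude that with probability $1-\delta$ the Cayley graph $(H,S)$ is an expander with $\lambda_2 \le 1/2$; the sampler $C_{\mathrm{aut}}$ is then simply a lazy random walk of length $T = O(n\log n + \log(1/\eps))$ on this Cayley graph started at $e_n$.  You instead argue that $\poly(n,\log(1/\delta))$ random elements generate $H$ with probability $1-\delta$, feed them to the polynomial-time Schreier--Sims machinery of \cite{FHL80} to compute a base and strong generating set, and build the sampler by drawing one near-uniform element per coset transversal.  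Both arguments are sound.  A few points of comparison: the Alon--Roichman route is more lightweight --- no group-theoretic structure needs to be computed, and the sampler is just a short product of sampled generators --- while your route leans on the heavier but very standard BSGS machinery and then produces what is essentially an exact uniform sample up to a single rounding loss.  Your crude subgroup-counting bound gives sample complexity $O(n^2\log^2 n + \log(1/\delta))$, polynomially worse than the paper's $O(n\log n + \log(1/\delta))$ (the sharper bound $O(\log|H|+\log(1/\delta))$ is also classical), but this only affects the preprocessing stage of $A^{\mathrm{aut}}_{\inv}$, not $C_{\mathrm{aut}}$, so the theorem as stated is still met.  Finally, your one-big-integer trick for getting $O(n\log n + \log(1/\eps))$ random bits is clean; note that in both your sampler and the paper's, the literal bit-operation cost of $C_{\mathrm{aut}}$ is $\poly(n)\cdot(n\log n + \log(1/\eps))$ because each step involves multiplying permutations in $\mathbb{S}_n$, so the ``$O(n\log n+\log(1/\eps))$'' in the theorem statement should be read as the number of random bits (equivalently, walk steps), not the total running time --- this is a minor imprecision in the statement, not a flaw in either proof.
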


\begin{proof}
The central tool in the proof is the following theorem of Alon and
Roichman~\cite{AR94}:

\begin{theorem}\label{ALON}\cite{AR94}
Let $H$ be any group and let $h_1, \ldots, h_k$ be chosen uniformly at
random from $H$. Consider the set $S = \cup_{i=1}^k \{ h_i ,
h_i^{-1}\}$. Then, for $k = O(\log |H| + \log (1/\delta))$,
with probability at least $1- \delta$
the Cayley graph $(H,S)$ has its second largest eigenvalue at most $1/2$.
 \end{theorem}

We now describe our algorithm $A^{\mathrm{aut}}_\inv$.
On input $\eps,\delta$ it draws
$k=O(n \log n + \log(1/\delta))$
permutations $g_1, \ldots, g_k$ from $\mathrm{Aut}(G)$.
It computes $g_1^{-1}, \ldots, g_k^{-1}$ and sets $S =
\cup_{i=1}^k \{ g_i , g_i^{-1}\}$.
The sampler $C_{\mathrm{aut}}$ is defined as follows:
It uses its input random bits to perform a random
walk on the Cayley graph $(\mathrm{Aut}(G),S)$, starting at $e_n$, for $T=O(n
\log n + \log(1/\eps))$ steps; it outputs the element of $H$
which it reaches at the end of the walk.  (Note that in order
to perform this random walk it is not necessary to have $\mathrm{Aut}(G)$ explicitly --
it suffices to explicitly have the set $S$.)

The analysis is simple:  we first observe that every graph $G$ has
an automorphism group of size $|\mathrm{Aut}(G)| \leq n!$.
Theorem~\ref{ALON} then guarantees that with probability at least
$1 - \delta$ the Cayley graph $(\mathrm{Aut}(G),S)$ has
its second eigenvalue bounded by $1/2$.  Assuming that the second eigenvalue
is indeed at most $1/2$, standard results in the theory of random
walks on graphs imply that the
distribution of the location reached at the end of the walk
has variation distance at most $\eps$ from the uniform distribution
over $\mathrm{Aut}(G).$ This concludes the proof.
 \end{proof}

\section{Conclusion and future work} \label{sec:conc}

\ifnum\confversion=0
We have considered inverse problems in approximate uniform
generation for a range of interesting and well-studied classes of functions
including LTFs, DNFs, CNFs, polynomial threshold functions,
and more.  While our findings have determined the computational complexity
of inverse approximate uniform generation for these classes,
several interesting questions and directions remain to be pursued.
We outline some of these directions below.

One natural goal is to extend our results (both positive and negative)
to a wider range of function classes; we list several specific
classes that seem particularly worthy of investigation.\ignore{While
we give very conclusive answers for number of classes of functions
which are interesting from the perspective of learning theory, status of
several classes of functions remain open. We list a couple of them here.}
The first of these is the class of intersections
of two monotone LTFs.  We note that
Morris and Sinclair~\cite{MorrisSinclair:04} gave efficient
approximate uniform generation / counting algorithms
for intersections of two monotone LTFs, but on
the other hand, no distribution independent PAC or $\sq$ learning algorithm is
known for this class {(although quasipoly$(n)$-time algorithms
are known if both LTFs have integer weights that are at most
$\poly(n)$ \cite{KOS:04}).}
The second class is that of poly$(n)$-size decision trees.  Our
DNF result gives a quasipoly$(n/\eps)$-time inverse approximate uniform
generation algorithm for this class; can this be improved to
$\poly(n,1/\eps)$?  We note that in order to obtain such a result one would
presumably have to bypass the ``standard approach,''
since decision trees are not known to be PAC learnable
faster than quasipoly$(n/\eps)$-time under the uniform distribution
on $\{-1,1\}^n$.
{(We further note that while \cite{FOS:08} gives a
reduction from learning the uniform distribution over satisfying assignments
of a decision tree to the problem of PAC learning decision trees under
the uniform distribution, this reduction relies crucially on
the assumption --- implicit in the \cite{FOS:08} framework --- that the
probability mass function of the
hypothesis distribution can be efficiently evaluated on any input $x
\in \{-1,1\}^n$.  In our framework this assumption need not hold so the
\cite{FOS:08} reduction does not apply.)
\new{
Still other natural classes to investigate are context free languages (for which
quasi-polynomial time uniform generation algorithms are known
\cite{GJKSM97}) and various classes of branching programs.  It may also
be of interest to consider similar problems
when the underlying measure is (say) Gaussian or log-concave.
}
}
\else
We have determined the computational complexity
of inverse approximate uniform generation for a wide range of well-studied
classes of functions including LTFs, DNFs, CNFs, polynomial threshold functions,
and more.  However, many directions remain for future work.

One natural goal is to extend our results to a wider range of function classes.
In particular, it would be interesting to understand the complexity
of inverse approximate uniform generation for classes such
as decision trees and intersections of two monotone LTFs.
(We note that approximate counting and uniform generation are trivial for the
former, and have polynomial-time algorithms for the latter by a result
of Morris and Sinclair~\cite{MorrisSinclair:04}.)
\new{
Other natural classes to investigate are context free languages (for which 
quasi-polynomial time uniform generation algorithms are known
\cite{GJKSM97}) and various classes of read-once branching programs.  It may also
be of interest to consider similar problems
when the underlying measure is (say) Gaussian or log-concave.
}
\fi

\ignore{ 
class consisting of intersection of two monotone LTFs. In
particular, there are 
complexity of this question, making even a reasonable guess extremely
tricky. 
algorithms for intersections of two monotone LTFs over the
hypercube~\cite{MorrisSinclair:04} as well as SQ-learning algorithms for
intersections of constant number of LTFs over the uniform
distribution on the hypercube~\cite{KOS:04}. On the other hand,
PAC-learning of intersection of two LTFs over arbitrary
distributions is known to be cryptographically
hard~\cite{KlivansSherstov:06}. All in all, we feel a resolution of this
question will be very interesting.
}

Another interesting direction to pursue is to study
inverse approximate uniform generation for combinatorial problems like
matching and coloring as opposed to the ``boolean function
satisfying assignment''--type
problems that have been the main focus of this paper.  We note
that preliminary arguments suggest that there is a simple
efficient algorithm for inverse approximate uniform generation of perfect
matchings in bipartite graphs.  Similarly, preliminary arguments
suggest that for the range of
parameters for which the ``forward''
approximate uniform generation problem for colorings
is known to be easy (namely, the number $q$ of allowable
colors satisfies $q > 11\Delta/6$ where
$\Delta$ is the degree~\cite{Vigoda99}),
the inverse approximate uniform generation
problem also admits an efficient algorithm.  These preliminary
results give rise to the question
of whether there are similar \emph{combinatorial} problems for which
the complexity of the ``forward''
approximate uniform generation problem is not known and yet we can
determine the complexity of inverse approximate 
\ifnum\confversion=0
uniform generation (like
the group theoretic setting of Section~\ref{sec:graph-auto}).
\else
uniform generation.
\fi

Finally, for many combinatorial problems, the approximate uniform
generation algorithm is to run a Markov chain on the state space. In the
regimes where the uniform generation problem is hard, the Markov chain
does not mix rapidly which is in turn equivalent to
the existence of sparse cuts in the
state space. However, an intriguing possibility arises here:  If one can
show that the state space can be partitioned into a small number of
components such that each component has no sparse cuts, then given
access to a small number of random samples from the state space
(with at least one such example belonging to each component), one may
be able to easily perform approximate uniform generation.
Since the inverse approximate uniform generation algorithms that we consider
have access to random samples, this opens
the possibility of efficient approximate uniform generation algorithms
in such cases. To conclude, we give an example of a natural
combinatorial problem (from statistical physics) where it seems that
this is essentially the situation (although we do not have a formal proof).
This is the 2-D Ising model, for which
the natural Glauber dynamics is known to have exponential mixing
time beyond the critical temperature \cite{Mar98-short}. On the other hand,
it was recently shown that even beyond the critical temperature, if one
fixes the boundary to have the same spin {(all positive
or all negative)} then the mixing time comes
down from exponential to quasipolynomial~\cite{Lubetzky:2009}. While
we do not know of a formal reduction, the fact that fixing the boundary
to the same spin brings down the mixing time of the Glauber dynamics
from exponential to quasipolynomial is ``morally equivalent'' to the
existence of only a single sparse cut in the state space of the
graph~\cite{Sinclair:12}. Finding other such natural examples
is an intriguing goal.

\ignore{ 
of whether inverse approximate uniform generation can be easy even when
approximate uniform generation is not (known to be) easy. It is
particularly interesting to ask this question in context of
combinatorial problems like graph coloring~\cite{Vigoda99}. 
particular, for many combinatorial problems, there are Markov chain
based approximate uniform generation algorithms which do not (or are not
known to) converge in polynomial time for certain values of parameters.
For e.g., when the number of colors $q < 11 \Delta/6$ where $\Delta$ is
the degree of the graph, then the Glauber dynamics for graph
coloring~\cite{Vigoda99} is not known to converge in polynomial time.
The reason for this slow mixing is the existence of sparse cuts in the
state space. However, in case, the number of such sparse cuts is ``few",
then the inverse approximate uniform generation problem is likely to be
easy even when the natural Markov chain based approach for approximate
uniform generation will not mix in polynomial time. Thus, it is
interesting to find combinatorial problems where the ``natural" Markov
chain does not mix in polynomial time and yet one can show that the
number of (highly dissimilar) sparse cuts is few. While we do not know
of a natural problem where this can be formally shown to be the case, it
is known that in the case of 2-D Ising Model~\cite{Mar98-short}, the natural
Glauber dynamics has an exponential mixing time. On the other hand, it
is now known that fixing the boundary condition to a single spin makes
the mixing time only mildly quasipolynomial~\cite{Lubetzky:2009}.
}

\medskip

\ifnum\confversion=0
\noindent {\bf Acknowledgements.}  We thank Alistair Sinclair for
helpful conversations about approximate counting and approximate uniform
generation.  We thank Luca Trevisan for suggesting the use of
``blow-up'' constructions for hardness results. We thank Jonathan Katz,
Tal Malkin, Krzysztof Pietrzak and Salil Vadhan for answering several
questions about MACs and signature schemes. We are thankful to Kousha Etessami, Elchanan
Mossel, Li-Yang Tan and Thomas Watson for helpful conversations about this work. 
\else
\noindent {\bf Acknowledgements.}  We thank Alistair Sinclair,  
Luca Trevisan, Jonathan Katz,
Tal Malkin, Krzysztof Pietrzak, Salil Vadhan, Kousha Etessami, Elchanan
Mossel, Li-Yang Tan and Thomas Watson for helpful conversations.
\fi

\bibliography{allrefs}
\bibliographystyle{alpha}



\end{document}